\newcommand{\limp}[0]{\ensuremath{\rightarrow}}
\newcommand{\mimp}[0]{\ensuremath{{-\!*\;}}}
\newcommand{\simp}[0]{\ensuremath{\triangleright}}
\newcommand{\mand}[0]{\ensuremath{*}}
\newcommand{\lsbbi}[0]{\ensuremath{\mathit{LS}_{\mathit{BBI}}}}
\newcommand{\fvlsbbi}[0]{\ensuremath{\mathit{FVLS}_{\mathit{BBI}}}}
\newcommand{\ilsbbi}[0]{\ensuremath{\mathit{LS}^e_{\mathit{BBI}}}}
\newcommand{\iilsbbi}[0]{\ensuremath{\mathit{LS}^{\mathit{sf}}_{\mathit{BBI}}}}
\def\Ccal{\mathcal{C}}
\def\Gcal{\mathcal{G}}
\def\Scal{\mathcal{S}}
\def\subst{\mathit{subst}}
\def\Rel{\mathit{Rel}}
\def\Cbb{\mathbb{C}}
\def\fvu{\mathbf{u}}
\def\fvv{\mathbf{v}}
\def\fvw{\mathbf{w}}
\def\fvx{\mathbf{x}}
\def\fvy{\mathbf{y}}
\def\fvz{\mathbf{z}}
\def\cfr{\mathfrak{c}}
\begin{document}

\title{A Labelled Sequent Calculus for BBI: Proof Theory and Proof Search}

\author{Zh\'e H\'ou, Alwen Tiu and Rajeev Gor\'e}
\institute{
Research School of Computer Science\\ The Australian National University\\ Canberra, ACT 0200, Australia}

\maketitle

\begin{abstract}
We present a labelled sequent calculus for Boolean BI (BBI), a classical variant of the logic of Bunched Implication. The calculus is simple, sound, complete, and enjoys cut-elimination. We show that all the structural rules in the calculus, i.e., those rules that manipulate labels and ternary relations, can be localised around applications of certain logical rules, thereby localising the handling of these rules in proof search. Based on this, we demonstrate a free variable calculus that deals with the structural rules lazily in a constraint system. We propose a heuristic method to quickly solve certain constraints, and show some experimental results to confirm that our approach is feasible for proof search. Additionally, we show that different semantics for BBI and some axioms in concrete models can be captured by adding extra structural rules.
\end{abstract}

\section{Introduction}
\label{sec:intro}
The logic of bunched implications (BI) 
was introduced %as a tool 
to reason about
resources
using
%The original BI contains 
additive connectives $\land$, $\lor$, $\limp$, $\top$, $\bot$, and
multiplicative connectives $\top^*$, $\mand$, $\mimp$~\cite{Ohearn1999}. Both parts are
intuitionistic so BI is also Intuitionistic logic (IL) plus Lambek
multiplicative logic (LM). Changing the additive part to classical
logic gives Boolean BI (BBI). Replacing LM by multiplicative classical
linear logic gives Classical BI (CBI).  BI logics are closely related
to separation logic~\cite{Reynolds2002LICS}, a logic for proving
properties of programs. Thus, the semantics and proof theory of BI-logics,
particularly for proof search, are important in computer
science.

The ternary relational Kripke semantics of 
BBI-logics come in at least
three different flavours: non-deterministic (ND), partial
deterministic (PD),
and total deterministic (TD)~\cite{Wendling2010}. These semantics give
different logics w.r.t.\  validity, i.e., $BBI_{ND},BBI_{PD},BBI_{TD}$ respectively, and all are
undecidable~\cite{BrotherstonK10,Wendling2010}.
The purely syntactic proof theory of BBI also comes in three flavours:
Hilbert calculi~\cite{Pym2002,Galmiche2006}, display
calculi~\cite{Brotherston2010} and nested sequent
calculi~\cite{Park2013}. All are sound and complete w.r.t.\ the
%non-deterministic
ND-semantics~\cite{Galmiche2006,Brotherston2010,Park2013}.

In between the relational semantics and the purely syntactic proof
theory are the labelled tableaux of
Larchey-Wendling and Galmiche which are sound and complete
w.r.t. the 
%partial deterministic
PD-semantics~\cite{Wendling2009,Wendling2012}.
They remark that ``the adaptation of this tableaux system to
$BBI_{TD}$ should be straightforward (contrary to
$BBI_{ND}$)''~\cite{Wendling2013}.
%  where TD stands for ``total
% deterministic'', and ND stands for ``non-deterministic''. 
We return to
these issues in Section~\ref{sec:conc}.

%For a proof-search perspective, 
The %display postulates and other
structural rules of display calculi, especially the contraction rule
on structures, are impractical for backward proof search. Nested
sequents also face similar problems, and although Park et al. showed
the admissibility of contraction in an improved nested sequent
calculus, it contains other rules that explicitly contract structures.
Their iterative deepening automated theorem prover for BBI based on
nested sequents is terminating and incomplete for bounded depths, but
complete and potentially non-terminating for an unbounded
depth~\cite{Park2013}. The labelled tableaux of
Larchey-Wendling and Galmiche compile all structural rules into PD-monoidal
constraints, and are cut-fee complete for BBI$_{\rm
  PD}$ using a potentially infinite
counter-model construction~\cite{Wendling2012}. But effective proof search is only a ``perspective'' and is
left as further work~\cite[page~2]{Wendling2012}.

Surprisingly, many applications of BBI do not directly correspond to
it's widely used non-deterministic semantics. For example, separation
logic models are instances of partial deterministic
models~\cite{Wendling2010} while ``memory models'' for BBI are
restricted to have \textit{indivisible units}: ``the empty memory
cannot be split into non-empty pieces''~\cite{BrotherstonK10}. Our
goal is to give a labelled proof system for BBI based upon the
ND-semantics which easily extend to the PD- and TD-semantics, and also these other, more ``practical'', semantics.
%% which can be extended to handle these
%% other, more ``practical'', semantics without too much difficulty.

Our labelled sequent calculus $\lsbbi$ for BBI adopts some features
from existing labelled tableaux for BBI~\cite{Wendling2009} and
existing labelled sequent calculi for modal logics~\cite{Negri2005}. 
Unlike these calculi, some $\lsbbi$-rules contain substitutions on labels.
%in inference rules.
%(see Section~\ref{sec:bbi}).  
% Our ultimate goal\marginpar{Zhe: does this ultimate goal sound repetitive after the ``goal'' in the previous paragraph?} is
% to design a sequent calculus more amenable to proof-search. 
From a proof-search perspective, labelled calculi
are no better
% in no easier to deal with 
than display calculi since they
%labelled calculi 
%encode the underlying (Kripke) semantics using special 
%predicates that apply to labels (representing accessibility relations 
%in the semantics) and 
require extra-logical rules to explicitly encode the frame conditions
of the underlying (Kripke) semantics.
%and for . 
Such rules, which we refer
to simply as structural rules, are just as bad as display postulates
for
%as far as 
proof search 
%is concerned 
since we may be forced to explore all potential models.  
As a step towards our goal,
we show that 
%we can restrict localise
the applications of these structural
rules 
%so that they are
can be localised 
around logical rules. 
Thus 
%proof search in $\lsbbi$ can be structured so that 
%applications of 
these structural rules are only triggered by 
applications of logical
rules, leading to
%allowing 
a purely syntax-driven proof search procedure for $\lsbbi.$

Our work is novel from two perspectives. Compared to the labelled
tableaux of Larchey-Wendling and Galmiche,
we deal with the non-deterministic semantics of BBI, which
they have flagged as a difficulty,
and obtain a constructive cut-elimination procedure.
%for their labelled tableaux. 
Compared to the nested sequent calculus of Park et al., our
calculus is much simpler, and generally gives much shorter
proofs. Note that Park et al. actually gave a labelled variant of
their nested sequent calculus, with the same logical rules as
ours. However, their structural rules are still just notational
variants of the original ones, which are lengthy and do not use
ternary relations. 
We also show that adding certain structural rules
to $\lsbbi$ allows us to 
obtain cut-free labelled calculi for all the other semantics mentioned above.

The rest of the paper is organized as follows. In Section~\ref{sec:bbi}, we present the semantics
of BBI, following~\cite{Wendling2010}, and our labelled sequent
calculus $\lsbbi$. We show that $\lsbbi$ is sound with respect to the semantics, and it is complete indirectly
via a Hilbert system for BBI~\cite{Galmiche2006} which is already shown complete.
In Section~\ref{sec:cutelim}, we prove some important proof theoretic properties of $\lsbbi$: invertibility
of inference rules, admissibility of contraction, and more importantly, cut-elimination. 
In Section~\ref{sec:local-struct-rules}, we discuss a permutation result for inference rules, allowing us to
isolate applications of structural rules during proof
search. 
In Section~\ref{sec:mapping-proof-search}, we describe how to reduce proof
search to constraint solving in a free-variable sequent calculus.
We give a heuristic method
for solving the resulting constraint problem in Section~\ref{sec:heur-proof-search} and report on experimental
results in Section~\ref{sec:experiment}.
Section~\ref{sec:conc} concludes the paper.  
%Due to space constraints, some proofs are omitted, but more 
Detailed proofs are available in the appendices.

%%% Local Variables: 
%%% mode: latex
%%% TeX-master: "main"
%%% End: 

\section{The Labelled Sequent Calculus for BBI}
\label{sec:bbi}
%We briefly describe the basics of BBI, 
The semantics
of BBI is in Section~\ref{sec:sem_bbi}, then we present
our labelled calculus in Section~\ref{sec:lsbbi}. The soundness proof is outlined in Section~\ref{sec:sound}, followed by the completeness proof in Section~\ref{sec:comp}, in which the Hilbert system of BBI is used.

\subsection{Syntax and Semantics of BBI}
\label{sec:sem_bbi}
BBI formulae are defined inductively as follows, where $p$ is an
atomic proposition, $\top^*,\mand,\mimp$ are the multiplicative unit,
conjunction, and implication respectively:
\begin{center}
$A ::= p\mid \top\mid \bot\mid\lnot A\mid A\lor A\mid A\land A\mid A\limp A\mid \top^*\mid A\mand A\mid A\mimp A$
\end{center}

The labelled sequent calculus for BBI employs a ternary relation of
worlds that is based on a non-deterministic monoid structure, a l\`a Galmiche et
al.~\cite{Galmiche2006}.

A non-deterministic monoid is a triple ($\mathcal{M},\circ,\epsilon$) where
$\mathcal{M}$ is a non-empty set, $\epsilon \in \mathcal{M}$ and
$\circ : \mathcal{M} \times \mathcal{M} \rightarrow
\mathcal{P}(\mathcal{M})$. The extension of  $\circ$ to 
$\mathcal{P}(\mathcal{M})$ uses $X \circ Y = \bigcup \{x \circ y : x \in
X,y \in Y\}$. The following conditions hold in this monoid:
\begin{itemize}
\item{\em Identity:} $\forall a \in \mathcal{M}. \epsilon \circ a = \{a\}$ 
\item{\em Commutativity:} $\forall a,b \in \mathcal{M}. a \circ b = b \circ a$ 
\item{\em Associativity:} $\forall a,b,c \in \mathcal{M}. a \circ (b \circ c) = (a \circ b) \circ c$.
\end{itemize} 

The ternary relation over worlds is defined by $\simp \subseteq
\mathcal{M} \times \mathcal{M} \times \mathcal{M}$ such that $\simp(a,
b, c)$ if and only if $c \in a \circ b$. Following Galmiche et al., we
write $a,b \simp c$ instead of $\simp(a, b, c)$. We therefore have the
following conditions for all $a,b,c,d \in \mathcal{M}$:
\begin{itemize}
\item {\em Identity:} $\epsilon, a \simp b$ iff $a = b$ 
\item {\em Commutativity:} $a,b \simp c$ iff $b,a \simp c$ 
\item {\em Associativity:} If there exists $k$ s.t. $(a,k \simp d)$ and  $(b,c \simp k)$ 
then there exists $l$ s.t. $(a,b \simp l)$ and $(l,c \simp d)$.
\end{itemize}

Intuitively, 
the relation $x,y \simp z$ means that $z$ can be partitioned into two
parts: $x$ and $y$. The identity condition can be read as every world
can be partitioned into an empty world and itself. Commutativity
captures that partitioning $z$ into $x$ and $y$ is the same as
partitioning $z$ into $y$ and $x$. Finally, associativity means that
if $z$ can be partitioned into $x$ and $y$, and $x$ can further be
partitioned into $u$ and $v$, then all together $z$ consists of
$u$, $v$ and $y$. Therefore there must exist an element $w$ which is
the combination of $v$ and $y$, such that $w$ and $u$ form $z$.
Note that since we do not restrict this monoid to be cancellable, $(x, y \simp x)$ does not imply $y = \epsilon$.

Let $(\mathcal{M}, \simp, \epsilon)$ be a relational frame and $v:Var
\rightarrow \mathcal{P}(\mathcal{M})$ be a \textit{valuation}. A
\textit{forcing relation} ``$\Vdash$'' between elements of
$\mathcal{M}$ and BBI-formulae is defined 
%by induction on formula size 
as follows~\cite{Galmiche2006}:

\begin{center}
\begin{tabular}{lll}
$m \Vdash \top^*$ iff $m = \epsilon$ & & $m \Vdash P$ iff $P \in Var$ and $m \in v(P)$\\
$m \Vdash \bot$ iff never & & $m \Vdash A\lor B$ iff $m \Vdash A$ or $m \Vdash B$\\
$m \Vdash \top$ iff always & & $m \Vdash A\land B$ iff $m \Vdash A$ and $m \Vdash B$\\
$m \Vdash \lnot A$ iff $m \not \Vdash A$ & & $m \Vdash A\limp B$ iff $m\not \Vdash A$ or $m\Vdash B$\\
\multicolumn{3}{l}{$m \Vdash A\mand B$ iff $\exists a,b.(a,b\simp m$ and $a\Vdash A$ and $b \Vdash B)$}\\
\multicolumn{3}{l}{$m \Vdash A \mimp B$ iff $\forall a,b.((m,a \simp b$ and $a \Vdash A)$ implies $b \Vdash B)$}\\
\end{tabular}
\end{center}
%A formula $A$ is {\em valid} iff $m \Vdash A$ for every model
%$(\mathcal{M}, \simp, \epsilon, v)$.
% A formula $A$ is {\em valid} iff for every model $(\mathcal{M}, \simp, \epsilon)$
% and every $m \in \mathcal{M}$, we have $m \Vdash A.$
A formula $A$ is true at $m \in \mathcal{M}$ if $m\Vdash A$ and is
{\em valid} if $m \Vdash A$ for every $m\in \mathcal{M}$ in every
model $(\mathcal{M}, \simp, \epsilon, v)$.

\subsection{The Labelled Sequent Calculus}
\label{sec:lsbbi}
The inference rules of our labelled system $\lsbbi$ are shown in
Figure~\ref{fig:LS_BBI}, where we  use $A$ to denote a formula, $w,x,y,z$ are in the set $LVar$
of label variables, and $\epsilon$ is the label constant. We define a
mapping $\rho: \{\epsilon\}\cup LVar \rightarrow \mathcal{M}$ from
labels to worlds. Note that we overload the notation so that the empty
world in the semantics and the label constant are both referred to as
$\epsilon$, the ternary relation in the semantics and the calculus are
both referred to as $\simp$. Therefore we impose the following condition
on mappings from labels to worlds: 
$\forall \rho. \rho(\epsilon) = \epsilon$. We shall assume this condition implicitly
in what follows.

A labelled formula $w:A$ means formula $A$ is true in world
$\rho(w)$. A relational atom $(x,y\simp z)$ is interpreted as
$\rho(x),\rho(y) \simp \rho(z)$ in the semantics. That is, a labelled formula $w:A$ is
true iff $\rho(w) \Vdash A$, and a relational atom $(x,y\simp z)$ is true iff
$\rho(x),\rho(y)\simp \rho(z)$ holds.

A sequent is of the form $\Gamma \vdash \Delta$, where $\Gamma$ and
$\Delta$ are \textit{structures}, the empty structure
is $\emptyset_a$ and $\Gamma$ and $\Delta$ are multisets of
labelled formulae and relational atoms, defined formally via:
\begin{align*}
\Gamma &::= \emptyset_a|w:A|(x,y \simp z)|\Gamma;\Gamma\\
\Delta &::= w:A|\Delta;\Delta
\end{align*}

\begin{definition}[Sequent Validity]
A sequent $\Gamma \vdash \Delta$ in $\lsbbi$ is valid if for all
$(\mathcal{M}, \simp, \epsilon)$, $v$ and $\rho$, if every member of $\Gamma$ is true then so is at least one
member of $\Delta$.
% A sequent $\Gamma \vdash \Delta$ in $\lsbbi$ is valid if for all
% $(\mathcal{M}, \simp, \epsilon)$, $v$ and $\rho$, if every relational
% atom and labelled formula in $\Gamma$ is true then so is at least one
% labelled formula in $\Delta$ is true.
\end{definition}
Note that BBI-validity of a formula $A$ corresponds to validity of 
the sequent $ \vdash x : A$, where $x$ is an arbitrary label. This notion
of validity is also adopted in other work for BBI\cite{Wendling2010,Park2013} and CBI~\cite{BrostherstonCal10}. but  
is stronger than that used for BI~\cite{Pym2002},
where a valid sequent is defined as one with a multiplicative unit 
on the left hand side.
For example, the formula $\top^*$ is valid in BI, but it is
not a valid in our definition because the  sequent $\vdash x :
\top^*$ is not provable (although the
sequent $\vdash \epsilon : \top^*$ is provable). 
Translated to our setting, this would correspond to defining a valid formula
as one which is true in the world $\epsilon$.

In our definition of sequents, the structural connective ``$;$'' in
the antecedent means (additive) ``and'' whereas in the succedent it
means (additive) ``or''. This is slightly different from the
traditional sequent notation where ``$,$'' is used as the structural
connective. Our notation is consistent with sequent systems for the
family of Bunched Implication (BI) logics, where ``$;$'' is the
additive structural connective, and ``$,$'' is used to denote the
multiplicative structural connective. The multiplicative structural
connective is not explicitly presented in our sequent notation, but
as we shall see later, it is encoded implicitly in the relational
atoms.
%% Also with respect to this, the ``$,$'' in the relational
%% atom $(x,y\simp z)$ has a multiplicative flavour. As we capture
%% multiplicative structures in the relations, we do not explicitly have
%% multiplicative structures anymore. 

%% The fact that weakening and contraction are forbidden in the
%% multiplicative fragment of BBI is reflected in our calculus as 
%% follows. The rules $\mand L$ and $\mimp R$ create new relations in
%% moving from conclusions to premises, so $x:(p\mand p) \limp p$ is not
%% derivable. In backward cut-free proof search, a relation atom of the form
%% $(w,w\simp w)$ where $w\not = \epsilon$ can never be created, so
%% $x:p\limp (p\mand p)$ is not derivable.

%We refer to 
The formula introduced in the conclusion of each rule
is the \textit{principal formula}, and the relational atom 
introduced in the conclusion of each rule is the \textit{principal
  relational atom}.

The semantics of $\mand$ involves an existential condition, so rules
$\mand L$ and $\mand R$ incorporate existential and universal
quantifiers respectively. Similarly, $\mimp L$ and $\mimp R$
incorporate universal and existential quantifiers
respectively.
Therefore, rules $\mand L$ and $\mimp R$ create a premise containing new
relations, and the labels in the created relation must
be fresh (except for the label of the principal formula). Rules $\mand R$
and $\mimp L$ create a premise using already existing relations from
the conclusion. Further, in rules
$A$ and $A_C$, the label $w$ must be fresh in the premise, as it
represents a new partition of the original world.

In the rule $\top^* L$, there is an operation of global substitution
$[\epsilon/x]$ in the premise. A substitution $\Gamma[y/x]$ is defined
in the usual way: replace every occurrence of $x$ in $\Gamma$ by $y$. 
%the structure $\Gamma$.

\begin{figure*}[!t]
\centering
%\footnotesize
\begin{tabular}{cc}
\multicolumn{2}{l}{\textbf{Identity and Cut:}}\\[10px]
\AxiomC{$$}
\RightLabel{\tiny $id$}
\UnaryInfC{$\Gamma;w:P \vdash w:P;\Delta$}
\DisplayProof
&
\AxiomC{$\Gamma \vdash x:A;\Delta$}
\AxiomC{$\Gamma';x:A \vdash \Delta'$}
\RightLabel{$cut$}
\BinaryInfC{$\Gamma;\Gamma' \vdash \Delta;\Delta'$}
\DisplayProof
\\[15px]
\multicolumn{2}{l}{\textbf{Logical Rules:}}\\[10px]
\AxiomC{$$}
\RightLabel{\tiny $\bot L$}
\UnaryInfC{$\Gamma; w:\bot \vdash \Delta$}
\DisplayProof
$\qquad$
\AxiomC{$\Gamma[\epsilon/w] \vdash \Delta[\epsilon/w]$}
\RightLabel{\tiny $\top^* L$}
\UnaryInfC{$\Gamma;w:\top^* \vdash \Delta$}
\DisplayProof
&
\AxiomC{$$}
\RightLabel{\tiny $\top R$}
\UnaryInfC{$\Gamma \vdash w:\top;\Delta$}
\DisplayProof
$\qquad$
\AxiomC{$$}
\RightLabel{\tiny $\top^* R$}
\UnaryInfC{$\Gamma \vdash \epsilon:\top^*;\Delta$}
\DisplayProof\\[15px]
\AxiomC{$\Gamma;w:A;w:B \vdash \Delta$}
\RightLabel{\tiny $\land L$}
\UnaryInfC{$\Gamma;w:A\land B \vdash \Delta$}
\DisplayProof
&
\AxiomC{$\Gamma \vdash w:A;\Delta$}
\AxiomC{$\Gamma \vdash w:B;\Delta$}
\RightLabel{\tiny $\land R$}
\BinaryInfC{$\Gamma \vdash w:A\land B;\Delta$}
\DisplayProof\\[15px]
\AxiomC{$\Gamma \vdash w:A;\Delta$}
\AxiomC{$\Gamma; w:B \vdash \Delta$}
\RightLabel{\tiny $\limp L$}
\BinaryInfC{$\Gamma;w:A\limp B \vdash \Delta$}
\DisplayProof
&
\AxiomC{$\Gamma;w:A \vdash w:B;\Delta$}
\RightLabel{\tiny $\limp R$}
\UnaryInfC{$\Gamma \vdash w:A\limp B; \Delta$}
\DisplayProof\\[15px]
\AxiomC{$(x,y \simp z);\Gamma;x:A;y:B \vdash \Delta$}
\RightLabel{\tiny $\mand L$}
\UnaryInfC{$\Gamma;z:A\mand B \vdash \Delta$}
\DisplayProof
&
\AxiomC{$(x,y \simp z);\Gamma;x:A \vdash z:B;\Delta$}
\RightLabel{\tiny $\mimp R$}
\UnaryInfC{$\Gamma \vdash y:A\mimp B;\Delta$}
\DisplayProof\\[15px]
\multicolumn{2}{c}{
\AxiomC{$(x,y \simp z);\Gamma \vdash x:A;z:A\mand B;\Delta$}
\AxiomC{$(x,y \simp z);\Gamma \vdash y:B;z:A\mand B;\Delta$}
\RightLabel{\tiny $\mand R$}
\BinaryInfC{$(x,y \simp z);\Gamma \vdash z:A\mand B;\Delta$}
\DisplayProof
}\\[15px]
\multicolumn{2}{c}{
\AxiomC{$(x,y \simp z);\Gamma;y:A\mimp B \vdash x:A;\Delta$}
\AxiomC{$(x,y \simp z);\Gamma;y:A\mimp B; z:B \vdash \Delta$}
\RightLabel{\tiny $\mimp L$}
\BinaryInfC{$(x,y \simp z);\Gamma;y:A\mimp B \vdash \Delta$}
\DisplayProof
}
\\[15px]
\multicolumn{2}{l}{\textbf{Structural Rules:}}\\[15px]
\AxiomC{$(y,x \simp z);(x,y \simp z); \Gamma \vdash \Delta$}
\RightLabel{\tiny $E$}
\UnaryInfC{$(x,y \simp z); \Gamma \vdash \Delta$}
\DisplayProof
&
\AxiomC{$(u,w \simp z);(y,v \simp w);(x,y \simp z);(u,v \simp x);\Gamma \vdash \Delta$}
\RightLabel{\tiny $A$}
\UnaryInfC{$(x,y \simp z);(u,v \simp x);\Gamma \vdash \Delta$}
\DisplayProof\\[15px]
\AxiomC{$(x,\epsilon \simp x); \Gamma \vdash \Delta$}
\RightLabel{\tiny $U$}
\UnaryInfC{$ \Gamma \vdash \Delta$}
\DisplayProof
&
\AxiomC{$(x,w\simp x);(y,y\simp w);(x,y\simp x);\Gamma \vdash \Delta$}
\RightLabel{\tiny $A_C$}
\UnaryInfC{$(x,y\simp x);\Gamma \vdash \Delta$}
\DisplayProof\\[15px]
\AxiomC{$(\epsilon,w'\simp w');\Gamma[w'/w] \vdash \Delta[w'/w]$}
\RightLabel{\tiny $Eq_1$}
\UnaryInfC{$(\epsilon,w\simp w');\Gamma \vdash \Delta$}
\DisplayProof
&
\AxiomC{$(\epsilon, w' \simp w');\Gamma[w'/w] \vdash \Delta[w'/w]$}
\RightLabel{\tiny $Eq_2$}
\UnaryInfC{$(\epsilon,w'\simp w);\Gamma \vdash \Delta$}
\DisplayProof\\ \\ \\
\multicolumn{2}{l}{\textbf{Side conditions:}} \\
\multicolumn{2}{l}{
In $\top^* L$, $Eq_1$ and $Eq_2$,  $w \not = \epsilon$. 
}\\
\multicolumn{2}{l}{
In $\mand L$ and $\mimp R$, the labels $x$ and $y$
do not occur in the conclusion.}\\
\multicolumn{2}{l}{
In $A$ and $A_C$, the label $w$ does not occur in the conclusion.}
%% \\ \\ \\
%% \multicolumn{1}{l}{$\clubsuit$: $w\not = \epsilon$ in the rule $\top^* L$}&
%% \multicolumn{1}{l}{$\dag$: $x$ and $y$ must be fresh in rule $\mand L$}\\
%% \multicolumn{1}{l}{$\ddag$: $x$ and $z$ must be fresh in rule $\mimp R$}&
%% \multicolumn{1}{l}{$\star$: $w$ must be fresh in the rule $A$ and $A_C$}\\
%% \multicolumn{1}{l}{$\natural$: $w' \not = \epsilon$ in the rule $Eq_1$}&
%% \multicolumn{1}{l}{$\sharp$: $w \not = \epsilon$ in the rule $Eq_2$}
\end{tabular}

\caption{The (cut-free) labelled sequent calculus $\lsbbi$ for Boolean BI.}
\label{fig:LS_BBI}
\end{figure*}

%% Besides these rules, we also have a cut rule:
%Figure \ref{fig:LS_BBI} does not contaion the cut rule. 
%% The cut rule in the labelled calculus is a straightforward
%% generalisation of that in the traditional sequent calculus:
%% \begin{center}
%% \AxiomC{$\Gamma \vdash x:A;\Delta$}
%% \AxiomC{$\Gamma';x:A \vdash \Delta'$}
%% \RightLabel{$cut$}
%% \BinaryInfC{$\Gamma;\Gamma' \vdash \Delta;\Delta'$}
%% \DisplayProof
%% \end{center}
%% We shall refer to $\lsbbi$ extended with the cut rule as 
%% $\lsbbic$. We shall see in Section~\ref{sec:cutelim} 
%% that $\lsbbi$ and $\lsbbic$
%% are equivalent, i.e., the cut rule is admissible in $\lsbbi.$

The \textit{additive rules} ($\bot L$, $\top R$, $\land L$, $\land R$,
$\limp L$, $\limp R$) and the \textit{multiplicative rules} ($\top^*
L$, $\top^* R$, $\mand L$, $\mand R$, $\mimp L$, $\mimp R$)
respectively deal with the additive/ multiplicative connectives.  The
\textit{zero-premise rules} are those with no premise ($id$, $\bot L$,
$\top R$, $\top^* R$). Figure~\ref{fig:ex} shows an example derivation
of $\lsbbi$.
% We refer to 
% Those rules that deal with additive connectives ($\bot L$,
% $\top R$, $\land L$, $\land R$, $\limp L$, $\limp R$) as
% \textit{additive rules}, and those that deal with multiplicative
% connectives ($\top^* L$, $\top^* R$, $\mand L$, $\mand R$, $\mimp L$,
% $\mimp R$) as \textit{multiplicative rules}. Those rules with no
% premise ($id$, $\bot L$, $\top R$, $\top^* R$) are referred to as
% \textit{zero-premise rules}.
% Figure~\ref{fig:ex} shows an example derivation of $\lsbbi.$

\begin{figure*}[!t]
\centering
\AxiomC{}
\RightLabel{\tiny $\top^* R$}
\UnaryInfC{$(\epsilon,a\simp a);(a,\epsilon\simp a);a:A \vdash \epsilon:\top^* $}
\AxiomC{}
\RightLabel{\tiny $id$}
\UnaryInfC{$(\epsilon,a\simp a);(a,\epsilon\simp a);a:A \vdash a:A$}
\RightLabel{\tiny $\mand R$}
\BinaryInfC{$(\epsilon,a\simp a);(a,\epsilon\simp a);a:A \vdash a:\top^*  \mand   A$}
\RightLabel{\tiny $E$}
\UnaryInfC{$(a,\epsilon\simp a);a:A \vdash a:\top^*  \mand   A$}
\RightLabel{\tiny $U$}
\UnaryInfC{$a:A \vdash a:\top^*  \mand   A$}
\RightLabel{\tiny $\limp R$}
\UnaryInfC{$ \vdash a:A \limp   (\top^*  \mand   A)$}
\DisplayProof
\caption{An example derivation in $\lsbbi$.}
\label{fig:ex}
\end{figure*}

Note that we start (at the bottom) by labelling the formula with an
arbitrary world $a$. Since provability is preserved by 
substitutions of labels (Lemma~\ref{subs}), provability of $\vdash a : F$ implies 
provability of $\vdash w : F$, for any world $w.$ 
%This corresponds to the idea  that 
Thus, if a formula is provable, then it is true in every world. 
%% Since the $\top^* R$ rule requires the $\top^*$ to be labelled
%% with $\epsilon$, it is essential to use $U$ to generate the ternary
%% relation $a,\epsilon\simp a$, and then use $E$ to get the
%% commuting relation.

\subsection{Soundness}
\label{sec:sound}
The soundness proof reasons about the falsifiability of sequents, which is defined as follows.

\begin{definition}[Sequent Falsifiability]
A sequent $\Gamma \vdash \Delta$ in $\lsbbi$ is falsifiable if there
exist some $(\mathcal{M}, \simp, \epsilon)$, $v$ and $\rho$, such that
every relational atom and labelled formula in $\Gamma$ is true and
every labelled formula in $\Delta$ is false, where:
\begin{quote}
$w:A$ is true iff $\rho(w) \Vdash A$\\
$w:A$ is false iff $\rho(w) \not \Vdash A$\\
$(x,y\simp z)$ is true iff $\rho(x),\rho(y)\simp \rho(z)$ holds
\end{quote} 
\end{definition} 

\begin{theorem}[Soundness]
\label{thm:soundness}
The labelled sequent calculus $\lsbbi$ is sound
w.r.t. the Kripke semantics for BBI.
\end{theorem}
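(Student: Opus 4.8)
The plan is to prove soundness contrapositively: rather than showing directly that every rule preserves validity (propagating downward from premises to conclusion), I would show that every rule preserves \emph{falsifiability} upward, i.e., if the conclusion of a rule is falsifiable, then at least one premise is falsifiable. This is the standard soundness-via-falsifiability argument, and it is naturally suited to this calculus because the falsifiability of a sequent $\Gamma \vdash \Delta$ (Definition on Sequent Falsifiability) asserts the existence of a model $(\mathcal{M},\simp,\epsilon)$, $v$, and $\rho$ making all of $\Gamma$ true and all of $\Delta$ false. The overall structure is an induction on the height of a derivation: the zero-premise rules ($id$, $\bot L$, $\top R$, $\top^* R$) must be shown \emph{not} falsifiable, and every other rule must be shown to push a falsifying model of its conclusion up to a falsifying model of one of its premises. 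Combining these gives that an unfalsifiable (hence valid) sequent cannot have a falsifiable premise propagated up from the leaves, and since the axioms are unfalsifiable, the root of any derivation is valid.

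\textbf{Key steps.} First I would dispatch the zero-premise rules, checking directly from the forcing clauses that no model can falsify them: e.g. for $id$, a model making $w:P$ true on the left forces $\rho(w)\in v(P)$, which makes $w:P$ on the right true, not false; for $\top^* R$, the clause $m\Vdash\top^*$ iff $m=\epsilon$ together with the constraint $\rho(\epsilon)=\epsilon$ makes $\epsilon:\top^*$ true, so it cannot be falsified on the right. Second, for each \emph{logical rule} I would take an arbitrary falsifying model of the conclusion and exhibit a falsifying model of the indicated premise. The purely additive rules are routine propositional reasoning against the forcing clauses. The interesting cases are the multiplicative rules whose premises introduce fresh labels ($\mand L$, $\mimp R$): here falsifiability of the conclusion supplies an existential witness from the semantic clause (for $\mand L$, the clause for $z:A\mand B$ on the left gives worlds $a,b$ with $a,b\simp\rho(z)$, $a\Vdash A$, $b\Vdash B$), and I would extend $\rho$ by setting $\rho(x)=a$, $\rho(y)=b$; the side condition that $x,y$ are fresh in the conclusion is exactly what makes this extension well-defined and non-interfering. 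For the two-premise multiplicative rules ($\mand R$, $\mimp L$) I would use the universal semantic clause to show that falsifying the conclusion forces falsification of at least one premise. Third, for the \emph{structural rules} ($E$, $A$, $U$, $A_C$, $Eq_1$, $Eq_2$, and $\top^* L$) I would verify that each premise is falsifiable whenever the conclusion is, by appealing to the corresponding frame condition (commutativity, associativity, identity) on $\simp$; for the rules carrying substitutions ($\top^* L$, $Eq_1$, $Eq_2$), I would use the identity condition ($\epsilon,a\simp b$ iff $a=b$) to justify that the substituted label denotes the same world, so the model transported along the substitution still falsifies the premise.

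\textbf{Main obstacle.} I expect the substitution-bearing rules to be the delicate point, since they are the feature the authors flag as distinguishing $\lsbbi$ from prior labelled calculi. The issue is that a rule like $Eq_1$ replaces $w$ by $w'$ everywhere, so I must argue that a model falsifying the conclusion can be converted into one falsifying the premise $\Gamma[w'/w]$; the clean way is to observe that the relational atom $(\epsilon,w\simp w')$ being true forces $\rho(w)=\rho(w')$ by the identity frame condition, and then to define a mapping $\rho'$ for the premise that simply reuses this common world for all occurrences of $w'$. Getting this bookkeeping exactly right — so that truth of every labelled formula and relational atom is preserved across the substitution — is where the argument requires care, and it is essentially the semantic counterpart of a substitution lemma for $\rho$. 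Once that is in place, the induction closes and soundness follows.
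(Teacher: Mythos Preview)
Your proposal is correct and follows essentially the same approach as the paper: prove soundness by showing that each rule preserves falsifiability upward, handling the fresh-label rules ($\mand L$, $\mimp R$) by extending $\rho$ using the semantic witnesses, and handling the substitution-bearing rules ($\top^* L$, $Eq_1$, $Eq_2$) via the identity frame condition $\epsilon,a\simp b$ iff $a=b$. The paper's proof sketch and appendix treat exactly these cases in the same way, so there is nothing to add.
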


\begin{proof}
To prove the soundness of $\lsbbi$, we show that each rule preserves
falsifiability upwards, as this is a more natural direction in terms
of backward proof search. Therefore to prove that a rule is sound, we
need to show that if the conclusion is falsifiable, then at least one
of the premises is falsifiable (usually in the same choice of $v$,
$\rho$, and $\mathcal{M}$). As the rules in $\lsbbi$ are designed based on the semantics, this is easy to verify. The details are in Appendix~\ref{app:sound_lsbbi}.\qed
\end{proof}

\subsection{Completeness}
\label{sec:comp}
% To prove completeness, we rely on the completeness result that has
% been proved for a Hilbert system for BBI~\cite{Galmiche2006}. 
We prove the completeness of $\lsbbi$ by showing that every 
derivation of a 
%valid 
formula 
in the Hilbert system for BBI~\cite{Galmiche2006} can be 
mimicked 
%translated to a derivation 
in
$\lsbbi$,
% of the same formula, 
possibly using cuts. 
%% We do this indirectly by first translating
%% the Hilbert system derivation to $\lsbbic$, and then utilise the
%% cut-elimination result in Section~\ref{sec:cutelim} to get
%% the cut-free derivation in $\lsbbi.$

The Hilbert system for BBI consists of the axioms and rules for
classical propositional logic for the additive fragment and 
additional axioms and rules for the 
multiplicative fragment. For the latter, we use the axiomatisation
given in \cite{Galmiche2006}, and listed in Figure~\ref{fig:hilb}.
We omit the axioms for classical propositional logic as they are standard,
and can be found in, e.g., \cite{Troelstra96}.
%% Note that the most of references about BBI do not give axioms and
%% rules for the additive part, so we took them from a reference about
%% Hilbert system for classical logic. However, there are many ways to
%% axiomatise classical Propositional Logic, so our choice is not
%% unique.\\[5px]
\begin{figure}[t!]
\textbf{Axioms}
\begin{enumerate}
%% \item $A \limp A$
%% \item $A \limp (B \limp A)$
%% \item $(A \limp (B \limp C)) \limp ((A \limp B) \limp (A \limp C))$
%% \item $(\lnot A \limp \lnot B) \limp (B \limp A)$
\item $A \limp (\top^* \mand A)$
\item $(\top^* \mand A) \limp A$
\item $(A \mand B) \limp (B \mand A)$
\item $(A \mand (B \mand C)) \limp ((A \mand B) \mand C)$
\end{enumerate}

\textbf{Deduction Rules}
\begin{center}
\begin{tabular}{cc}
\AxiomC{$\vdash A$}
\AxiomC{$\vdash A \limp B$}
\RightLabel{$MP$}
\BinaryInfC{$\vdash B$}
\DisplayProof
&
\AxiomC{$\vdash A \limp C$}
\AxiomC{$\vdash B \limp D$}
\RightLabel{$\mand$}
\BinaryInfC{$\vdash (A \mand B) \limp (C \mand D)$}
\DisplayProof\\[10px]
\AxiomC{$\vdash A \limp (B \mimp C)$}
\RightLabel{$\mimp 1$}
\UnaryInfC{$\vdash (A \mand B) \limp C$}
\DisplayProof
&
\AxiomC{$\vdash (A \mand B) \limp C$}
\RightLabel{$\mimp 2$}
\UnaryInfC{$\vdash A \limp (B \mimp C)$}
\DisplayProof
\end{tabular}
\end{center}
\caption{Some axioms and rules for the Hilbert system for BBI.}
\label{fig:hilb}
\end{figure}

\begin{theorem}[Completeness]
\label{thm:completeness}
The labelled sequent calculus $\lsbbi$ is complete
w.r.t. the Kripke semantics for BBI.
\end{theorem}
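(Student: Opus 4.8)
The plan is to prove completeness indirectly, by showing that every theorem of the Hilbert system for BBI (Figure~\ref{fig:hilb}) can be derived in $\lsbbi$, possibly using the $\mathit{cut}$ rule. Since the Hilbert system is already known to be complete with respect to the Kripke semantics~\cite{Galmiche2006}, establishing that $\lsbbi$ simulates Hilbert derivations yields completeness of $\lsbbi$. Concretely, for a BBI-theorem $A$ we must exhibit an $\lsbbi$-derivation of $\vdash x:A$ for an arbitrary label $x$; by Lemma~\ref{subs} (provability preserved under label substitution) it suffices to produce a derivation for some fixed label. I would proceed by induction on the structure of the Hilbert derivation of $A$.

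The base cases are the axioms. For the classical propositional axioms, the additive rules of $\lsbbi$ are the standard labelled/Gentzen rules, so these go through routinely with all formulae sharing a single label. The genuinely new work is in the four multiplicative axioms. For axiom~(1), $A \limp (\top^* \mand A)$, the derivation is essentially the one already displayed in Figure~\ref{fig:ex}: apply $\limp R$, then use $U$ to introduce $(a,\epsilon \simp a)$, then $E$ for commutativity, then $\mand R$ splitting into a $\top^* R$ branch (closed because the left component is $\epsilon$) and an $id$ branch. Axiom~(2) is the converse and uses $\mand L$ followed by $\top^* L$, whose substitution $[\epsilon/\cdot]$ collapses the relational atom via the identity condition so that $id$ applies. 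Axiom~(3), commutativity of $\mand$, is handled by $\mand L$ to decompose the antecedent, then $\mand R$ together with the structural rule $E$ to swap the two components. Axiom~(4), associativity, is the most delicate of the axioms: after $\mand L$ decompositions one has relational atoms $(x,y \simp z)$ and $(u,v \simp x)$ in the antecedent, and one applies the associativity rule $A$ to produce the reassociated atoms $(u,w \simp z)$ and $(y,v \simp w)$, after which repeated $\mand R$ applications close the branches against the decomposed antecedent via $id$.

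For the inductive step I would treat each deduction rule of Figure~\ref{fig:hilb}. Modus ponens ($MP$) is exactly where $\mathit{cut}$ is needed: given $\lsbbi$-derivations of $\vdash x:A$ and $\vdash x:A\limp B$, I invert the latter (or $\limp L$ it against a fresh premise) and cut on $x:A$ to obtain $\vdash x:B$. The congruence rule $\mand$ and the two adjunction rules $\mimp 1$, $\mimp 2$ require combining the induction hypotheses for the premises with $\mand L$/$\mand R$ and $\mimp L$/$\mimp R$ respectively; the interplay of the fresh-label side conditions on $\mand L$ and $\mimp R$ with the reuse of relational atoms in $\mand R$ and $\mimp L$ must be arranged so that the required relational atom is present exactly when the non-fresh rule fires, and again $\mathit{cut}$ is used to paste derivations of the premises into the derivation of the conclusion.

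The main obstacle, and the step deserving the most care, is the correct orchestration of labels and relational atoms in the multiplicative cases, especially associativity (axiom~4) and the adjunction rules. Unlike the additive fragment, where a single label suffices, these cases require introducing fresh worlds (via $\mand L$, $\mimp R$, and the structural rules $A$, $U$) and then ensuring that precisely the relational atom consumed by $\mand R$ or $\mimp L$ has already been generated. Verifying that the freshness side conditions never block the construction, and that the structural rules $E$, $A$, $U$ supply the ternary facts demanded by the monoid conditions used implicitly in the Hilbert axioms, is the crux; once this bookkeeping is set up, each individual derivation is routine. The full details are deferred to the appendix.
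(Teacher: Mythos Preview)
Your proposal is correct and follows essentially the same route as the paper: both prove completeness indirectly by simulating the Hilbert system for BBI inside $\lsbbi$ (with cut), inducting on the Hilbert derivation, deriving each multiplicative axiom via the appropriate combination of $\mand L/\mand R$, $\mimp L/\mimp R$, and the structural rules $E$, $U$, $A$, and handling the inference rules (notably $MP$ and the adjunction rules) via cut. The only minor imprecision is in your sketch of axiom~(2): after $\top^* L$ you still need an application of $Eq_1$ (not just the $[\epsilon/\cdot]$ substitution) to identify the two remaining labels before $id$ fires, but this is a routine detail.
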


\begin{proof}
Given a derivation $\Pi$ of a formula $A$ in the Hilbert system for
BBI, we show that one can construct an $\lsbbi$ derivation $\Pi'$ of the sequent
$\emptyset_a \vdash w : A$, for any label $w \not = \epsilon$.  It is enough to show that each axiom and each rule of the
Hilbert system can be derived. The derivations of the axioms in
$\lsbbi$ are straightforward; we show here a non-trivial case in the
derivation of the rules of the Hilbert system. Consider the rule
$\mimp 1$: Suppose $\Pi$ is the derivation:
\begin{prooftree}
\AxiomC{$\Pi_1$}
\noLine
\UnaryInfC{$A \limp (B \mimp C)$}
\RightLabel{$\mimp 1$}
\UnaryInfC{$(A \mand B) \limp C$}
\end{prooftree}
The $\lsbbi$ derivation $\Pi'$ is in Figure~\ref{fig:completeness}, 
%% \begin{prooftree}
%% \AxiomC{$\Pi_1'$}
%% \noLine
%% \UnaryInfC{$\vdash w_1 : A \limp (B \mimp C)$}
%% %
%% \AxiomC{$\Pi_2$}
%% \noLine
%% \UnaryInfC{$(w_1,w_2 \simp w) ; w_1 : A \limp (B \mimp C) ; 
%%          w_1 : A ; w_2 : B \vdash w : C$}
%% %
%% \RightLabel{$cut$}
%% \BinaryInfC{$(w_1,w_2 \simp w) ; w_1 : A ; w_2 : B \vdash w : C$}
%% \UnaryInfC{$w : A \mand B \vdash w : C$}
%% \RightLabel{$\limp R$}
%% \UnaryInfC{$\vdash w : (A \mand B) \limp C$}
%% \end{prooftree}
where $\Pi_1'$ comes from $\Pi_1$ via the induction hypothesis, $\Pi_2$ is the upper derivation in Figure~\ref{fig:completeness}, 
and $\Gamma = \{(w_1,w_2 \simp w) ; w_1 : A ; w_2 : B \}$).
%% \begin{prooftree}
%% %
%% \AxiomC{$~$}
%% \RightLabel{$id$}
%% \UnaryInfC{$\Gamma \vdash w_1 : A$}
%% %
%%   \AxiomC{$~$}
%%   \RightLabel{$id$}
%%   \UnaryInfC{$\Gamma ; w_1 : B \mimp C \vdash w_2 : B$}
%% %
%%   \AxiomC{$~$}
%%   \RightLabel{$id$}
%%   \UnaryInfC{$\Gamma ; w_1 : B \mimp C ; w : C \vdash w : C$}
%% %
%% \RightLabel{$\mimp L$}
%% \BinaryInfC{$\Gamma ; w_1 : B \mimp C \vdash w : C$}
%% %
%% \RightLabel{$\limp L$}
%% \BinaryInfC{$\Gamma ; w_1 : A \limp (B \mimp C) \vdash w : C$}
%% \end{prooftree}
%% \end{proof}
\begin{figure*}
\begin{prooftree}
\AxiomC{$~$}
\RightLabel{$id$}
\UnaryInfC{$\Gamma \vdash w_1 : A$}
  \AxiomC{$~$}
  \RightLabel{$id$}
  \UnaryInfC{$\Gamma ; w_1 : B \mimp C \vdash w_2 : B$}
  \AxiomC{$~$}
  \RightLabel{$id$}
  \UnaryInfC{$\Gamma ; w_1 : B \mimp C ; w : C \vdash w : C$}
\RightLabel{$\mimp L$}
\BinaryInfC{$\Gamma ; w_1 : B \mimp C \vdash w : C$}
\RightLabel{$\limp L$}
\BinaryInfC{$\Gamma ; w_1 : A \limp (B \mimp C) \vdash w : C$}
\end{prooftree}

\begin{prooftree}
\AxiomC{$\Pi_1'$}
\noLine
\UnaryInfC{$\vdash w_1 : A \limp (B \mimp C)$}
\AxiomC{$\Pi_2$}
\noLine
\UnaryInfC{$(w_1,w_2 \simp w) ; w_1 : A \limp (B \mimp C) ; 
         w_1 : A ; w_2 : B \vdash w : C$}
\RightLabel{$cut$}
\BinaryInfC{$(w_1,w_2 \simp w) ; w_1 : A ; w_2 : B \vdash w : C$}
\UnaryInfC{$w : A \mand B \vdash w : C$}
\RightLabel{$\limp R$}
\UnaryInfC{$\vdash w : (A \mand B) \limp C$}
\end{prooftree}
\caption{A derivation of the rule $\mimp 1$ putting $\Gamma = \{(w_1,w_2 \simp w) ; w_1 : A ; w_2 : B \}$).}
\label{fig:completeness}
\end{figure*}
\qed
\end{proof}

\begin{corollary}[Formula validity]
A BBI formula $A$ is valid iff $\emptyset_a \vdash w:A$ is derivable in $\lsbbi$, for any arbitrary $w$.
\end{corollary}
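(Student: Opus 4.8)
The plan is to prove the corollary by assembling the soundness and completeness theorems together with the substitution lemma that was already announced. The statement has two directions, and each corresponds to one of the two main metatheorems established earlier.

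For the right-to-left direction, suppose $\emptyset_a \vdash w:A$ is derivable in $\lsbbi$ for some arbitrary label $w$. By soundness (Theorem~\ref{thm:soundness}), this sequent is valid, so for every model $(\mathcal{M}, \simp, \epsilon, v)$ and every $\rho$, since the antecedent $\emptyset_a$ is vacuously true, the succedent $w:A$ must be true, i.e.\ $\rho(w) \Vdash A$. Because $w$ is a label variable and $\rho$ ranges over all mappings (subject only to $\rho(\epsilon)=\epsilon$), the world $\rho(w)$ ranges over all of $\mathcal{M}$. Hence $m \Vdash A$ for every $m \in \mathcal{M}$ in every model, which is exactly BBI-validity of $A$.

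For the left-to-right direction, suppose $A$ is a valid BBI formula. The plan is to invoke completeness (Theorem~\ref{thm:completeness}): since the Hilbert system for BBI is complete with respect to the Kripke semantics, $A$ has a Hilbert derivation, and the completeness construction yields an $\lsbbi$ derivation of $\emptyset_a \vdash w:A$ for any label $w \neq \epsilon$. To cover the remaining case where the chosen $w$ is $\epsilon$, I would appeal to the substitution lemma (Lemma~\ref{subs}), which preserves provability under relabelling: from a derivation of $\emptyset_a \vdash x:A$ with $x$ fresh one obtains a derivation of $\emptyset_a \vdash w:A$ for the desired $w$. This matches the informal remark in the text that provability of $\vdash a:F$ implies provability of $\vdash w:F$ for any world $w$.

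I expect the main subtlety, rather than a genuine obstacle, to be handling the quantification over labels carefully in both directions: ensuring that ``for any arbitrary $w$'' in the corollary is correctly reconciled with the side condition $w \neq \epsilon$ appearing in the completeness proof, and that the soundness direction genuinely yields truth at \emph{every} world and not merely at the single world named by one fixed $\rho$. Both points are resolved by the freeness of the label together with Lemma~\ref{subs}, so the corollary follows directly by chaining Theorems~\ref{thm:soundness} and~\ref{thm:completeness}.
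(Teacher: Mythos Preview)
Your proposal is correct and follows essentially the same approach as the paper: the corollary is derived by combining soundness (Theorem~\ref{thm:soundness}) and completeness (Theorem~\ref{thm:completeness}), exactly as the paper's two-sentence proof does. Your treatment is in fact more careful than the paper's, since you explicitly reconcile the side condition $w \neq \epsilon$ in the completeness theorem with the ``any arbitrary $w$'' of the corollary by invoking the substitution lemma---a point the paper's proof leaves implicit (it is gestured at earlier in Section~\ref{sec:lsbbi} but not restated here).
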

\begin{proof}
Follows from the soundness and completeness proof. Since $w$ is
arbitrary, $A$ is true at any world for any valuation $v$, mapping
$\rho$, and monoid structure $(\mathcal{M}, \simp, \epsilon)$.\qed
\end{proof}

%%% Local Variables: 
%%% mode: latex
%%% TeX-master: "main"
%%% End: 

\section{Cut-elimination}
\label{sec:cutelim}

This section proves the cut-elimination theorem for our labelled
sequent calculus. The general proof outlined here is similar to the cut-elimination proof for
labelled systems for modal logic~\cite{Negri2005}, i.e., we start by proving a substitution lemma for
labels, followed by proving the
invertibility of inference rules, weakening admissibility, and
contraction admissibility, before proceeding to the main cut-elimination proof. As there are many case analyses in these proofs, we only outline the important parts here. More details are available in Appendix~\ref{app:proofs}

%In the following, 
Given a derivation $\Pi$, its {\em height}
%, denoted by 
$ht(\Pi)$ is
defined as the length of a longest branch in the derivation tree of $\Pi.$

%\subsection{Substitution Lemma}

The substitution lemma shows that provability is preserved under arbitrary substitutions of
labels. 

\begin{lemma}[Substitution]
\label{subs}
If
%Suppose 
$\Pi$ is an $\lsbbi$ derivation for the sequent $\Gamma \vdash
\Delta$
%in $\lsbbi.$ 
then there is an $\lsbbi$ derivation $\Pi'$ of 
the sequent $\Gamma[y/x] \vdash \Delta[y/x]$ 
%in $\lsbbi$, 
where every occurrence
of label $x$ ($x \not = \epsilon$) is replaced by label $y$, such that
$ht(\Pi') \leq ht(\Pi)$.
\end{lemma}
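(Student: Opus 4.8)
The plan is to prove the statement by induction on the height $ht(\Pi)$, taking as induction hypothesis the stronger claim that for \emph{every} derivation of height strictly less than $ht(\Pi)$ and \emph{every} label substitution $[v/u]$ with $u \neq \epsilon$, there is a derivation of the substituted sequent of no greater height. Quantifying over all substitutions in the hypothesis is the crucial move: it is what lets us, in the inductive step, re-route a substitution through the premises when the variable being replaced interacts with a substitution built into the last rule.

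For the base case the last (and only) rule is a zero-premise rule ($id$, $\bot L$, $\top R$, or $\top^* R$). Here I would simply observe that applying $[y/x]$ to the conclusion yields another instance of the same rule: for $id$ the two displayed occurrences of the principal label coincide again after substitution, and for $\top^* R$ the principal atom $\epsilon:\top^*$ is untouched because $x \neq \epsilon$. This use of the hypothesis $x \neq \epsilon$, which guarantees that the label constant is fixed by every substitution the lemma considers, recurs in every case where a rule mentions $\epsilon$ in a principal position or a side condition.

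For the inductive step I would split on the last rule $R$ and distinguish three kinds. \textbf{(i) Generic rules} (neither an eigenvariable nor a built-in substitution, e.g.\ $\land L$, $\limp R$, $\mand R$, $\mimp L$, $E$, $U$): here $[y/x]$ commutes with $R$, so I apply the induction hypothesis to each premise derivation with $[y/x]$ and reapply $R$, any $\epsilon$-side-condition surviving since $x \neq \epsilon$. \textbf{(ii) Eigenvariable rules} ($\mand L$, $\mimp R$, $A$, $A_C$): the fresh label created by $R$ must not be captured, so if it lies in $\{x,y\}$ I first use the induction hypothesis to rename it to a genuinely fresh label (a height-preserving renaming on the shorter premise derivation), then apply $[y/x]$ and reapply $R$; freshness of the renamed label in the substituted conclusion is preserved because $[y/x]$ cannot introduce it. \textbf{(iii) Substitution-carrying rules} ($\top^* L$, $Eq_1$, $Eq_2$): the premise already carries a substitution, and I verify the substitution-commutation identity needed to re-derive the conclusion. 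For $\top^* L$ with principal $w:\top^*$ and premise $\Gamma[\epsilon/w] \vdash \Delta[\epsilon/w]$, when $w \neq x$ one checks that $[\epsilon/w]$ and $[y/x]$ commute and reapplies $\top^* L$ (its side condition holding since $w \neq \epsilon$ and $x \neq \epsilon$); when $w = x$ the two substitutions would collide, so instead I apply the induction hypothesis to the premise with the re-routed substitution $[\epsilon/y]$, which yields exactly the premise required by an application of $\top^* L$ with principal $y:\top^*$. The analogous commutation/re-routing analysis, according to whether $x$ or $y$ coincides with a distinguished label of the rule, handles $Eq_1$ and $Eq_2$.

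The main obstacle will be exactly case (iii): getting the bookkeeping of composed substitutions right in every subcase, confirming both the relevant substitution-commutation identity and the $\neq \epsilon$ side conditions. The height bound $ht(\Pi') \le ht(\Pi)$ then follows uniformly: the renaming and re-routing steps invoke the induction hypothesis only on strictly shorter derivations and introduce no new inferences, while reapplying the last rule adds exactly one step, matching the height of the original derivation.
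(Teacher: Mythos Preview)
Your overall strategy---induction on height with the hypothesis quantified over all substitutions, and a case split into generic, eigenvariable, and substitution-carrying rules---is exactly the paper's approach. However, your sketch of case (iii) contains two concrete errors that the paper's proof has to work around.

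First, for $\top^* L$ with principal label $w$, your claim that $[\epsilon/w]$ and $[y/x]$ commute whenever $w \neq x$ is false when $y = w$: e.g.\ with $\Gamma = \{a{:}F\}$, $w = b$, $x = a$, $y = b$ one gets $\{b{:}F\}$ versus $\{\epsilon{:}F\}$. In that subcase the re-routing is to apply the induction hypothesis with $[\epsilon/x]$ (not $[y/x]$) to the premise $\Gamma[\epsilon/w] \vdash \Delta[\epsilon/w]$, since $\Gamma[y/x][\epsilon/y] = \Gamma[\epsilon/y][\epsilon/x]$. Second, when $w = x$ and $y = \epsilon$, you cannot reapply $\top^* L$ at all, because its side condition forbids the principal label $\epsilon$; here the target sequent is the original premise with $\epsilon{:}\top^*$ weakened in, so a separate (height-preserving) weakening argument is needed. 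The same phenomena recur in $Eq_1$ and $Eq_2$: in several subcases (e.g.\ $y = w'$ with $y = \epsilon$) the paper has to \emph{switch} from $Eq_1$ to $Eq_2$ or vice versa, and in one subcase applies the induction hypothesis twice in succession. Your phrase ``analogous commutation/re-routing analysis'' hides precisely these rule-swaps and double applications, which are where the proof actually earns the height bound.
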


\begin{proof}
By induction on $ht(\Pi).$
We do case analyses on the last rule of $\Pi$. Most of the cases are
similar to Negri's labelled calculus for modal logic~\cite{Negri2005},
the only non-trivial cases are when the last rule is either
$\top^* L$, $Eq_1$ or $Eq_2$, and the labels $x$ or $y$ are used in the
principal formula/relational atom. The full proof is in Appendix~\ref{app:subs_labels}.\qed
\end{proof}

%\subsection{Admissibility of Weakening}
Admissibility of weakening is proved by a simple induction on the
height of derivations so 
we state the lemma
% here 
sans
%without 
proof. 

\begin{lemma}[Weakening admissibility]
\label{lm:weak}
If $\Gamma \vdash \Delta$ is derivable in $\lsbbi$, then for all structures $\Gamma'$ and $\Delta'$, 
the sequent 
$\Gamma;\Gamma' \vdash \Delta;\Delta'$ is derivable with the same
height in $\lsbbi$.
\end{lemma}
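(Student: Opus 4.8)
The plan is to prove weakening admissibility by a straightforward induction on $ht(\Pi)$, where $\Pi$ is the given derivation of $\Gamma \vdash \Delta$. The statement is height-preserving, so I will show that from $\Pi$ one can build a derivation $\Pi'$ of $\Gamma;\Gamma' \vdash \Delta;\Delta'$ with $ht(\Pi') = ht(\Pi)$; it suffices by iteration to treat the case where $\Gamma'$ and $\Delta'$ each consist of a single labelled formula or relational atom, since the general case follows by repeated application.

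For the base case, $ht(\Pi) = 1$ means $\Pi$ ends in a zero-premise rule ($id$, $\bot L$, $\top R$, or $\top^* R$). In each of these the conclusion remains an instance of the same rule after adding arbitrary extra material to the antecedent and succedent: for $id$ the pair $w:P \vdash w:P$ is still present, for $\bot L$ the formula $w:\bot$ is still on the left, and similarly for $\top R$ and $\top^* R$. Hence the weakened sequent is derivable in one step. For the inductive step, I would do a case analysis on the last rule $r$ of $\Pi$. The key observation is that every rule of $\lsbbi$ is \emph{context-preserving}: the side structures $\Gamma$ and $\Delta$ that are not principal are copied unchanged from conclusion to premise(s). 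So given the premise derivation(s), I apply the induction hypothesis to weaken each premise by $\Gamma';\Delta'$, then reapply $r$. Since the induction hypothesis preserves height and $r$ adds exactly one to the height of its tallest premise, the resulting derivation has the same height as $\Pi$.

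The one point requiring genuine care, and the main obstacle, is the interaction of weakening with the rules carrying side conditions on fresh labels, namely $\mand L$, $\mimp R$, $A$, and $A_C$. For these rules the premise introduces labels ($x$, $y$, or $w$) that must not occur in the conclusion. When I weaken the conclusion by $\Gamma';\Delta'$, the freshly added material might mention one of these otherwise-fresh labels, which would invalidate the side condition upon reapplying the rule. The remedy is to first rename the offending fresh labels in the premise derivation to genuinely new labels not occurring anywhere in $\Gamma', \Delta'$, or the conclusion, using the Substitution Lemma (Lemma~\ref{subs}); this renaming preserves derivability and does not increase height. After renaming, the side condition is restored, the induction hypothesis can be applied to weaken the premise, and $r$ can be reapplied. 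The rules $\top^* L$, $Eq_1$, and $Eq_2$ carry a global substitution in the premise but impose no freshness constraint incompatible with added context, so they are handled by the routine context-preserving argument, merely noting that the substitution acts on the added structures $\Gamma'[\cdot/\cdot]$ and $\Delta'[\cdot/\cdot]$ as well. All remaining cases are routine.
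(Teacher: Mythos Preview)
Your proposal is correct and follows essentially the same approach as the paper: a height-preserving induction on $ht(\Pi)$, with the zero-premise rules handled by their built-in weakening and the inductive cases by the context-preserving shape of every rule, using the Substitution Lemma (Lemma~\ref{subs}) to rename eigenvariables that might clash with labels in the added structures. You are in fact slightly more thorough than the paper's own sketch, which singles out only $\mand L$ and $\mimp R$ as nontrivial, whereas you correctly note that $A$ and $A_C$ also carry a freshness side condition on $w$ and therefore require the same renaming argument.
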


Combining Lemma~\ref{subs} and Lemma~\ref{lm:weak}, 
we can replace a formula that is never used in a derivation by any structure. More supplementary lemmas related to weakening are listed in Appendix~\ref{app:weak_lsbbi}.

% If we combine the above lemma and the admissibility of weakening, 
% then we can replace a formula that is never used in a derivation by any structure.

%% \begin{note}
%% The admissibility of general weakening shows that if $\Gamma \vdash
%% \Delta$ is derivable, then $\Gamma;\Gamma' \vdash \Delta;\Delta'$ is
%% derivable. A stronger argument here is that in the derivation of the
%% latter sequent, $\Gamma'$ and $\Delta'$ are never changed except that
%% some labels might be changed. This is similar as in Lemma~\ref{subs_str}.
%% \end{note}

\begin{lemma}[Invertibility of rules]
\label{lm:invert}
If $\Pi$ is a cut-free $\lsbbi$ derivation of the conclusion of a rule,
then there is a cut-free $\lsbbi$ derivation for each premise, with
height  at most $ht(\Pi).$
% In $\lsbbi$, if there is a cut-free derivation $\Pi$ of the conclusion of the rule,
% then there is a cut-free derivation for each premise, with height less than or equal to $ht(\Pi).$
\end{lemma}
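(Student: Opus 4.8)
The plan is to prove invertibility by induction on $ht(\Pi)$, treating each rule of $\lsbbi$ separately. For a fixed rule with conclusion $\Gamma\vdash\Delta$, I would examine the last rule applied in $\Pi$. The argument splits naturally into cases according to how this last rule interacts with the principal formula (or principal relational atom) of the rule whose invertibility we are establishing. The cleanest strategy is to first dispose of the easy cases and then isolate the genuinely delicate one.

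First I would handle the two trivial situations. If the principal formula of the rule under consideration is itself the result of the zero-premise rules $id$, $\bot L$, $\top R$, or $\top^* R$ applied at the root of $\Pi$, then each required premise is derivable directly by the corresponding zero-premise rule (using weakening, Lemma~\ref{lm:weak}, to supply the extra context introduced in the premise). Next, if the last rule of $\Pi$ is \emph{different} from, and does not touch, the principal formula whose inversion we want, then the principal formula sits passively in the context of $\Pi$'s last rule. Here I would apply the induction hypothesis to the immediate subderivation(s) of $\Pi$ to invert the principal formula in each, and then reapply the last rule of $\Pi$; since the induction hypothesis does not increase height and reapplying the last rule adds at most the same step that was already present, the height bound $ht(\Pi)$ is maintained. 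Care is needed with the freshness side conditions of $\mand L$, $\mimp R$, $A$, and $A_C$: when commuting the inversion past such a rule I would, if necessary, invoke the substitution lemma (Lemma~\ref{subs}) to rename the fresh labels so that the two rules' eigenvariables do not clash, which preserves height.

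The principal case is when the last rule of $\Pi$ is exactly the rule being inverted, acting on exactly the principal formula of interest. For the single-premise logical rules ($\land L$, $\limp R$, $\mand L$, $\mimp R$) this is immediate: the desired premise \emph{is} the immediate subderivation of $\Pi$, which has height $ht(\Pi)-1\le ht(\Pi)$. For the two-premise rules ($\land R$, $\limp L$, $\mand R$, $\mimp L$) each of the two premises is directly an immediate subderivation, again with strictly smaller height. The only rules requiring extra thought are those carrying substitutions, namely $\top^* L$, $Eq_1$, and $Eq_2$; but since we are inverting the purely logical rules, these substitution-bearing rules only ever arise in the passive commuting case above, where the substitution lemma already supplies what we need.

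The main obstacle I expect is the bookkeeping in the commuting case when the last rule of $\Pi$ performs a substitution or introduces fresh labels that overlap with the labels appearing in the premise of the rule being inverted. Concretely, inverting (say) $\mand L$ past an application of $A_C$ or $\top^* L$ requires that the fresh labels $x,y$ introduced by the inverted $\mand L$ not be captured or erased by the substitution, and this is exactly where Lemma~\ref{subs} and its height-preservation clause are indispensable. Once the renaming discipline is fixed, every case reduces either to reading off an immediate subderivation or to one application of the induction hypothesis followed by reapplication of a single rule, so the height bound follows uniformly. The routine but numerous remaining cases are relegated to the appendix.
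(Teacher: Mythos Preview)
Your approach is essentially the same as the paper's: height induction, commute past the last rule using the induction hypothesis plus Lemma~\ref{subs} for freshness, and read off the immediate subderivation in the principal case. One small but real gap: the lemma is stated for \emph{every} rule of $\lsbbi$, not only the ``purely logical'' ones, so you also owe invertibility of $\top^*L$, $Eq_1$, $Eq_2$ and the structural rules $E,U,A,A_C$ themselves. The structural rules and $\mand R,\mimp L$ are trivial because each premise contains the conclusion as a sub-multiset (so height-preserving weakening, Lemma~\ref{lm:weak}, suffices). For $Eq_1$ and $Eq_2$ the inverse direction is a single instance of Lemma~\ref{subs}. For $\top^*L$ the paper runs a separate induction on $ht(\Pi)$: given a derivation of $\Gamma;w{:}\top^*\vdash\Delta$, one must produce a derivation of $\Gamma[\epsilon/w]\vdash\Delta[\epsilon/w]$, and this needs the substitution lemma together with the observation that $\epsilon{:}\top^*$ on the left is inert (Lemma~\ref{topstar}). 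Your sentence ``since we are inverting the purely logical rules, these substitution-bearing rules only ever arise in the passive commuting case'' therefore undersells what is needed; once you add these three short arguments, your outline matches the paper's proof.
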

\begin{proof}
  Most of the rules are trivially invertible.  The proofs for
  the additive rules are similar to those
  for the additive rules from labelled calculi for modal logic or $G3c$
  (cf.~\cite{Negri2001}) since the rules are the same.
  The slightly non-trivial cases for the rules
  involving substitutions of labels follow from Lemma~\ref{subs}. The proof is detailed in Appendix~\ref{app:invert}.\qed
\end{proof}

The proof of the admissibility of contraction on additive formulae is similar to that
for classical sequent calculus since the  $\lsbbi$ rules for these
connectives are the same. In the multiplicative
rules, the principal formula is retained in the premise, so admissibility of
contraction on multiplicative formulae follows trivially. 
We need to prove that contraction on relational atoms is admissible,
as stated in the next lemma.
 
\begin{lemma}
\label{ctr_r}
For all structures $\Gamma,\Delta$, and ternary relations $(x,y\simp z)$:
%\begin{quote}
if \;$\Pi$ a cut-free $\lsbbi$ derivation of 
$(x,y\simp z);(x,y\simp z);\Gamma \vdash \Delta$, then there is a cut-free
$\lsbbi$ derivation $\Pi'$ of $(x,y\simp z);\Gamma \vdash \Delta$ with $ht(\Pi') \leq ht(\Pi).$
% For all structures $\Gamma,\Delta$, and ternary relations $(x,y\simp z)$, the following holds in $\lsbbi$:
% \begin{quote}
% If there is a cut-free derivation $\Pi$ of 
% $(x,y\simp z);(x,y\simp z);\Gamma \vdash \Delta$, then there is a cut-free
% derivation $\Pi'$ of $(x,y\simp z);\Gamma \vdash \Delta$ with $ht(\Pi') \leq ht(\Pi).$
%    \end{quote}
\end{lemma}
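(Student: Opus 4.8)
The plan is to argue by induction on the height $ht(\Pi)$ of the given cut-free derivation, doing a case analysis on the last rule $r$ of $\Pi$, and in each case constructing a cut-free derivation of the contracted sequent $(x,y\simp z);\Gamma \vdash \Delta$ whose height does not exceed $ht(\Pi)$. For the base case $r$ is one of the zero-premise rules $id$, $\bot L$, $\top R$, $\top^* R$; since each of these is an axiom purely by virtue of the labelled formulae it contains and never by virtue of a relational atom, deleting one of the two copies of $(x,y\simp z)$ leaves an instance of the same axiom, so the contracted sequent is itself an axiom and is derivable with height not exceeding $ht(\Pi)$.

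For the inductive step, the routine case is when neither copy of $(x,y\simp z)$ is the principal relational atom of $r$. Then both copies survive into every premise of $r$ -- possibly after a substitution, as in $\top^* L$, or after the addition of a fresh atom, as in $U$, $\mand L$ and $\mimp R$. Because any substitution carried by a rule acts uniformly, the two copies are transformed identically and remain a genuine duplicate in the premise. I would then apply the induction hypothesis to each premise (each of strictly smaller height) to remove that duplicate, and reapply $r$. The freshness side conditions of $\mand L$, $\mimp R$, $A$ and $A_C$ remain satisfied, since contraction only deletes a relational atom and so cannot create an occurrence of a label required to be fresh. The height bound is immediate: the contracted premises have height at most that of the originals, and reapplying $r$ adds the same single step as before.

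The interesting case is when one copy of $(x,y\simp z)$ is principal for $r$, i.e.\ $r$ is a structural rule. For $E$, $A$, $A_C$, $\mand R$ and $\mimp L$ the principal relational atom is \emph{retained} in every premise, so the situation reduces to the routine one: the premise still contains the second copy alongside the retained principal copy, the induction hypothesis contracts them, and reapplying $r$ recovers the conclusion. The genuinely delicate rules -- and the main obstacle -- are $Eq_1$ and $Eq_2$, where the principal atom is not retained but is consumed while a renaming substitution is applied to the whole premise. Here one must check that the substitution sends the \emph{other}, non-principal copy of $(x,y\simp z)$ to exactly the new relational atom that the rule inserts. Concretely, if the principal atom is $(\epsilon,w\simp w')$, then the $Eq_1$ premise carries $(\epsilon,w'\simp w')$ together with $[(\epsilon,w\simp w');\Gamma'][w'/w]$, and since $(\epsilon,w\simp w')[w'/w] = (\epsilon,w'\simp w')$ the premise once again contains a duplicate of $(\epsilon,w'\simp w')$. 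The induction hypothesis contracts this duplicate, and a single reapplication of $Eq_1$ yields the contracted conclusion with no increase in height; $Eq_2$ is symmetric.

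Since every case uses only the induction hypothesis (which returns a derivation of height at most that of the premise) followed by one reapplication of $r$, the resulting $\Pi'$ satisfies $ht(\Pi') \leq ht(\Pi)$ throughout, which establishes the height-preserving admissibility claimed. I do not expect to need invertibility (Lemma~\ref{lm:invert}) here, since a relational atom is never the principal \emph{formula} of a logical rule; and should any explicit height-preserving renaming of labels be required to line up the $Eq$ cases, it is supplied by the Substitution Lemma (Lemma~\ref{subs}).
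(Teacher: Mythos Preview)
Your overall plan is sound, and your treatment of the zero-premise rules, the logical rules, $E$, $U$, $\mand R$, $\mimp L$, $Eq_1$ and $Eq_2$ is essentially correct (indeed, $Eq_1$/$Eq_2$ are not as delicate as you suggest: the principal atom \emph{is} carried to the premise, merely with its labels rewritten, so the paper treats these as obvious). However, there is a genuine gap in your handling of the rule $A$.

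You write that for $A$ ``the principal relational atom is retained in every premise, so the situation reduces to the routine one''. But $A$ has \emph{two} principal relational atoms, $(x,y\simp z)$ and $(u,v\simp x)$, and the case you miss is when the two duplicated atoms are \emph{both} principal for the same application of $A$. This happens precisely when the duplicated atom has the shape $(a,b\simp a)$ (so that it matches both $(x,y\simp z)$ and $(u,v\simp x)$ with $x=z=a$, $y=u=b$... actually $u=a$, $v=b$, so $x=a=z$). Concretely:
\[
\frac{(a,w\simp a);(b,b\simp w);(a,b\simp a);(a,b\simp a);\Gamma \vdash \Delta}{(a,b\simp a);(a,b\simp a);\Gamma \vdash \Delta}\;A
\]
After applying the induction hypothesis to the premise you obtain $(a,w\simp a);(b,b\simp w);(a,b\simp a);\Gamma \vdash \Delta$. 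But now you cannot ``reapply $r$'': the conclusion you want, $(a,b\simp a);\Gamma \vdash \Delta$, contains only \emph{one} copy of the atom, whereas an application of $A$ requires two separate multiset occurrences as its principal relational atoms. The fix---and this is exactly why the rule $A_C$ is present in the calculus---is to close this case with $A_C$ instead of $A$:
\[
\frac{(a,w\simp a);(b,b\simp w);(a,b\simp a);\Gamma \vdash \Delta}{(a,b\simp a);\Gamma \vdash \Delta}\;A_C
\]
which has a single principal atom of the required shape and produces the correct premise. The paper singles this out as the only non-routine case of the whole lemma; you should add it explicitly.
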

\begin{proof}{\em (Outline)}
Let $n = ht(\Pi)$. The proof is by induction on $n$. Most of structural rules only has one principal relational atom, so it is easy to show that contraction can permute through them.

%% \noindent (Base case) If $n = 0$, then the rule must be one of $id$, $\bot L$,
%% $\top R$, and $\top^* R$. All of those rules are independent of
%% relational atoms, so the case with only one presence of contracted relational atom
%% holds as well.

%% \noindent (Inductive case) If $n > 0$, consider the bottom-most (last)
%% rule.

%% If the last rule does not involve relations ($\top^* L$, $\land L$,
%% $\land R$, $\limp L$, $\limp R$), we use the induction hypothesis on
%% the premise(s) of this rule and then use that same rule to derive the sequent with only
%% one relational atom.  
%% %
%% For each multiplicative connective rule, the situation is similar, as
%% the relevant relational atom appears in the premises of these rules. 

%% The structural rules with only one principal relational atoms are easily proved to be closed under contraction.

The case for $A$ needs more care, as it involves two principal
relations. If the two principal relations are different, then the
admissibility of contraction follows similarly as above. But if the
principal relations are a pair of identical relations, the situation is a bit tricky. The original derivation runs as follows.
\begin{center}
\alwaysNoLine
\AxiomC{$\Pi$}
\UnaryInfC{$(x,w\simp x);(y,y\simp w);(x,y\simp x);(x,y\simp x);\Gamma \vdash \Delta$}
\alwaysSingleLine
\RightLabel{$A$}
\UnaryInfC{$(x,y\simp x);(x,y\simp x);\Gamma \vdash \Delta$}
\DisplayProof
\end{center}

There is no obvious way to make this case admissible, and this is the
reason we have a special case of the rule $A$, namely $A_C$. In the
rule $A_C$, contraction is absorbed so that there is only one
principal relation. The new derivation is as follows.
\begin{center}
\alwaysNoLine
\AxiomC{$\Pi'$}
\UnaryInfC{$(x,w\simp x);(y,y\simp w);(x,y\simp x);\Gamma \vdash \Delta$}
\alwaysSingleLine
\RightLabel{$A_C$}
\UnaryInfC{$(x,y\simp x);\Gamma \vdash \Delta$}
\DisplayProof
\end{center}

For $Eq_1$ and $Eq_2$, as the principal relation is carried to the
premise (although some labels may be changed), so admissibility of
contraction on those relations is obvious.
\qed
\end{proof}

The admissibility of contraction on formulae are straightforward, the most of cases are analogous to the ones in Negri's labelled calculus for modal logic~\cite{Negri2005}. For details please see Appendix~\ref{app:ctr_f}.

\begin{lemma}[Contraction admissibility]
\label{lm:ctr}
If $\Gamma;\Gamma \vdash \Delta;\Delta$ is derivable in $\lsbbi$, 
then $\Gamma \vdash \Delta$ is derivable with the same height in $\lsbbi$.
\end{lemma}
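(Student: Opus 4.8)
The plan is to derive the general two-sided statement from a single-contraction lemma and then dispatch the latter by induction on derivation height, reusing the results already established. Since $\Gamma$ and $\Delta$ are multisets, the hypothesis $\Gamma;\Gamma \vdash \Delta;\Delta$ merely carries two occurrences of each element, so it suffices to prove that one duplicated object can be removed height-preservingly: if $\Gamma;c;c \vdash \Delta$ (resp. $\Gamma \vdash c;c;\Delta$) has a cut-free derivation of height $n$, then so does $\Gamma;c \vdash \Delta$ (resp. $\Gamma \vdash c;\Delta$). The full lemma then follows by iterating this step over the finitely many duplicated occurrences; each step leaves the height non-increasing, so the resulting derivation of $\Gamma \vdash \Delta$ has height at most that of the original. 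Note that, by the grammar of sequents, $\Delta$ contains only labelled formulae, so succedent contraction never involves relational atoms.

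For the single-contraction statement I would argue by induction on $n$ with a case analysis on the last rule $r$, splitting on the kind of the contracted object $c$. When $c$ is a relational atom, the claim is exactly Lemma~\ref{ctr_r}, so nothing further is needed. When $c$ is a labelled formula $w:A$ that is not principal in $r$, the standard move applies: push the contraction into the premises by the induction hypothesis and re-apply $r$; for the rules that substitute labels ($\top^* L$, $Eq_1$, $Eq_2$) the Substitution Lemma~\ref{subs} keeps the two occurrences aligned, and freshness of the side-condition labels of $\mand L$, $\mimp R$, $A$ and $A_C$ guarantees no clash with $w$. When $c$ is an additive formula that is principal in $r$, I would follow the classical $G3c$ recipe: invert $r$ on the second occurrence of $c$ by Lemma~\ref{lm:invert} and contract the resulting duplicated subformulae with the induction hypothesis, the height bound coming from Lemmas~\ref{lm:invert} and~\ref{subs}.

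The remaining, and genuinely delicate, case is a principal multiplicative formula. For $\mand R$ and $\mimp L$ the rule retains its principal formula in every premise, so the premise already contains one copy of $c$ alongside the duplicate; applying the induction hypothesis there contracts the pair and re-applying $r$ yields the desired conclusion with no surviving duplication. For $\mand L$ and $\mimp R$, however, the principal formula is consumed and fresh labels together with a new relational atom are introduced in the premise, so the contraction cannot simply be absorbed. Here I would invert $r$ on the surviving copy of $c$ (Lemma~\ref{lm:invert}), which exposes a second decomposition with its own fresh labels and relational atom; I then rename these labels via the Substitution Lemma~\ref{subs} so that the two relational atoms and the two pairs of subformulae coincide, and finally contract them -- the relational atom by Lemma~\ref{ctr_r} and the subformulae by the induction hypothesis. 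I expect this last case to be the main obstacle, precisely because the non-retaining multiplicative rules force us to reconcile two independently generated sets of fresh labels before any contraction becomes possible; the whole argument remains height-preserving only because each invoked lemma (\ref{subs}, \ref{lm:invert}, \ref{ctr_r}) carries a non-increasing height bound.
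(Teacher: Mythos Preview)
Your proposal is correct and follows essentially the same route as the paper: iterate single contraction, handle relational atoms via Lemma~\ref{ctr_r}, use invertibility plus the induction hypothesis for additive principals, exploit retention of the principal formula for $\mand R$/$\mimp L$, and for $\mand L$/$\mimp R$ invert the surviving copy, unify the fresh labels via Lemma~\ref{subs}, then contract the resulting duplicate relational atom and subformulae. The one case you did not mention explicitly is $\top^* L$ with principal formula $w{:}\top^*$; the paper dispatches it separately (the premise already has the substitution $[\epsilon/w]$ applied, so the remaining $\epsilon{:}\top^*$ is inert and can be dropped by Lemma~\ref{topstar}), but this is a minor bookkeeping point rather than a gap in your argument.
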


\subsection*{\bf Cut Elimination Theorem}

%% We shall use Lexicographic Orderings to compute the complexity of an application of the $cut$ rule. The definition of this ordering is as follows.
%% \begin{definition}
%% If $\succ_1$ is a strict ordering on a set $S_1$ and $\succ_2$ a strict ordering on a set $S_2$, then their lexicographic combination $\succ_{lex}$ is defined by:
%% \begin{quote}
%% $(x,y) \succ_{lex} (x',y')$ if and only if either $x \succ_1 x'$ or else $x = x'$ and $y \succ_2 y'$.
%% \end{quote}
%% This lexicographic combination is a strict ordering on the set $S_1\times S_2$.
%% \end{definition}

%% \begin{theorem}
%% If $\succ_1$ and $\succ_2$ are well-founded orderings, so is their lexicographic combination $\succ_{lex}$.
%% \end{theorem}

We define the complexity of an application of the $cut$ rule as
$(|f|,ht(\Pi_1)+ht(\Pi_2))$, where $|f|$ denotes the size of the cut
formula (i.e., the number of connectives in the formula), and
$ht(\Pi_1)$, $ht(\Pi_2)$ are the heights of the derivations above the
$cut$ rule, the sum of them is call the \textit{cut height}. If there
are multiple branches in $\Pi_1$, then $ht(\Pi_1)$ shall be the height
of the longest branch, similarly for $ht(\Pi_2)$. The strict ordering
for both parts of the pair is $>$ on natural numbers.

\begin{theorem}[Cut-elimination]
\label{cut}
% If a sequent $\Gamma \vdash \Delta$ is derivable in $\lsbbi$, then it is also derivable without using the $cut$ rule.
If $\Gamma \vdash \Delta$ is derivable in $\lsbbi$, then it is also derivable without using the $cut$ rule.
\end{theorem}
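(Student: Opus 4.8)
The plan is to prove cut-elimination by a standard double induction, following the approach of Negri for labelled modal calculi. The principal measure is the lexicographic pair $(|f|, ht(\Pi_1)+ht(\Pi_2))$ defined just above the statement, where $|f|$ is the size of the cut formula and the second component is the cut height. I would show that any topmost application of $cut$ in a derivation can be either eliminated outright or replaced by one or more applications of $cut$ each of strictly smaller complexity in this ordering; iterating this drives out all cuts. By invertibility (Lemma~\ref{lm:invert}), admissibility of weakening (Lemma~\ref{lm:weak}), admissibility of contraction (Lemma~\ref{lm:ctr}), and the substitution lemma (Lemma~\ref{subs}), I may freely assume the two premises of the cut are themselves cut-free and may reshape them as needed without increasing height.

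The core of the argument is the usual case analysis on the two derivations $\Pi_1$ (deriving $\Gamma \vdash x:A;\Delta$) and $\Pi_2$ (deriving $\Gamma';x:A \vdash \Delta'$) immediately above the cut. First I would handle the cases where the cut formula $x:A$ is \emph{not} principal in the last rule of at least one premise: here the strategy is to permute the cut upward past that last rule, so the cut height strictly decreases while $|f|$ stays fixed. This uses weakening to repair contexts and, for rules that introduce fresh labels (such as $\mand L$, $\mimp R$, $A$, $A_C$) or perform substitutions ($\top^* L$, $Eq_1$, $Eq_2$), it uses Lemma~\ref{subs} to rename labels apart so that the freshness side conditions are not violated when the cut is pushed up. The second, essential group of cases is where $x:A$ is principal on both sides simultaneously; there I would reduce the cut on $A$ to cuts on the immediate subformulae of $A$, which strictly decreases $|f|$. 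For the multiplicative connectives this reduction is where the relational atoms do real work: for a cut on $A\mand B$ introduced by $\mand R$ on the left premise and $\mand L$ on the right, I would use the relational atom $(x,y\simp z)$ supplied by $\mand L$ together with the two premises of $\mand R$ and the two smaller subformula cuts on $A$ and on $B$, stitched together with weakening and contraction on the shared context; the $\mimp$ case is dual and relies symmetrically on the relational atom introduced by $\mimp R$.

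The step I expect to be the main obstacle is precisely the interaction between cut and the label-manipulating rules, namely $\top^* L$, $Eq_1$, $Eq_2$, $A$, and $A_C$, because these change the labels of the very formulas one is trying to cut on. When such a rule sits below a cut and its substitution $[\epsilon/w]$ or $[w'/w]$ touches the cut label $x$, a naive permutation produces a cut whose two premises no longer share the same label on the cut formula. The remedy is to apply the same substitution to the other premise first, invoking Lemma~\ref{subs} to obtain a matching derivation of no greater height, and only then perform the (now well-formed) cut at reduced height. The subtlety concentrated in $A_C$ — the contraction-absorbing variant of associativity that was introduced in Lemma~\ref{ctr_r} exactly to make contraction admissible — must be checked to commute with cut without reintroducing a duplicated relational atom that would block the induction; here the height-preserving contraction admissibility (Lemma~\ref{lm:ctr}) is what keeps the cut-height bookkeeping sound.

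Finally, since the entire reduction step replaces a cut by cuts of strictly smaller measure (and uses only height-non-increasing transformations on the surrounding derivation), the lexicographic ordering on $(|f|, \text{cut height})$ is well-founded and the procedure terminates, yielding a cut-free derivation of $\Gamma \vdash \Delta$. The detailed verification of each rule permutation is routine but voluminous, and I would relegate the exhaustive case list to the appendix, presenting in the main text only the representative principal cases for $\mand/\mimp$ and the label-substitution cases that exhibit the genuinely novel behaviour of $\lsbbi$.
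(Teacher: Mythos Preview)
Your proposal is correct and follows essentially the same approach as the paper: the same lexicographic measure on $(|f|,\text{cut height})$, the same reliance on the substitution, weakening, invertibility, and contraction lemmas, and the same case split into non-principal permutations versus principal reductions. Two minor imprecisions worth flagging: in the principal $\mand$ case the relational atom that survives in the conclusion is the one already present on the $\mand R$ side (the fresh labels $x',y'$ from $\mand L$ get substituted via Lemma~\ref{subs} to match it, not the other way round), and because $\mand R$ retains $z:A\mand B$ in its premises you need two intermediate height-reducing cuts on $A\mand B$ before the subformula cuts on $A$ and $B$ --- the paper uses four cuts in total here. Your concern about $A_C$ interacting specially with cut is unnecessary: structural rules have no principal \emph{formula}, so the cut formula is never principal in them and the permutation is routine; $A_C$ was introduced to make contraction admissible, and its role in cut-elimination is only indirect, through Lemma~\ref{lm:ctr}.
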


\begin{proof}
By induction on the complexity of the proof in $\lsbbi$. We show that
each application of $cut$ can either be eliminated, or be replaced by
one or more $cut$ rules of less complexity. The argument for
termination is similar to the cut-elimination proof for
$G3ip$~\cite{Negri2001}. We start to eliminate the topmost $cut$
first, and repeat this procedure until there is no $cut$ in the
derivation. We first show that $cut$ can be eliminated when the
\textit{cut height} is the lowest, i.e., at least one premise is of
height 1. Then we show that the \textit{cut height} is reduced in all
cases in which the cut formula is not principal in both premises of
cut. If the cut formula is principal in both premises, then the $cut$
is reduced to one or more $cut$s on smaller formulae or shorter
derivations. Since atoms cannot be principal in logical rules, finally
we can either reduce all $cut$s to the case where the cut formula is
not principal in both premises, or reduce those $cut$s on compound
formulae until their \textit{cut height}s are minimal and then
eliminate those $cut$s. The case analyses are shown in
Appendix~\ref{app:cut_elim}. Here we only present one interesting case where the
cut formula is principal in both premises, and the rules applied on each
premise are $\mand R$ and $\mand L$ respectively.

\begin{figure}[htp!]
\small
\alwaysNoLine
\AxiomC{$\Pi_1$}
\UnaryInfC{$(x,y\simp z);\Gamma \vdash x:A;z:A\mand B;\Delta$}
\AxiomC{$\Pi_2$}
\UnaryInfC{$(x,y\simp z);\Gamma \vdash y:B;z:A\mand B;\Delta$}
\alwaysSingleLine
\RightLabel{\tiny $\mand R$}
\BinaryInfC{$(x,y\simp z);\Gamma \vdash z:A\mand B;\Delta$}
\alwaysNoLine
\AxiomC{$\Pi_3$}
\UnaryInfC{$(x',y'\simp z);\Gamma';x':A;y':B \vdash \Delta'$}
\alwaysSingleLine
\RightLabel{\tiny $\mand L$}
\UnaryInfC{$\Gamma';z:A\mand B \vdash \Delta'$}
\RightLabel{$cut$}
\BinaryInfC{$(x,y\simp z);\Gamma;\Gamma' \vdash \Delta;\Delta'$}
\DisplayProof
$\leadsto$\\[10px]
\alwaysNoLine
\AxiomC{$\Pi_1'$}
\UnaryInfC{$(x,y\simp z);\Gamma;\Gamma' \vdash y:B;\Delta;\Delta'$}
\AxiomC{$\Pi_2'$}
\UnaryInfC{$(x,y\simp z);\Gamma;\Gamma' \vdash x:A;\Delta;\Delta'$}
\alwaysNoLine
\AxiomC{$\Pi_3'$}
\UnaryInfC{$(x,y\simp z);\Gamma';x:A;y:B \vdash \Delta'$}
\alwaysSingleLine
\RightLabel{$cut$}
\BinaryInfC{$(x,y\simp z);(x,y\simp z);\Gamma;\Gamma';\Gamma';y:B \vdash \Delta;\Delta';\Delta'$}
\RightLabel{$cut$}
\BinaryInfC{$(x,y\simp z);(x,y\simp z);(x,y\simp z);\Gamma;\Gamma;\Gamma';\Gamma';\Gamma' \vdash \Delta;\Delta;\Delta';\Delta';\Delta'$}
\dashedLine
\RightLabel{Lemma~\ref{lm:ctr}}
\UnaryInfC{$(x,y\simp z);\Gamma;\Gamma' \vdash \Delta;\Delta'$}
\DisplayProof
\caption{The cut reduction for $\mand$ where the cut formula is
  principal in both premises.}
\label{fig:cut-red}
\end{figure}

The cut transformation in this case is given in Figure~\ref{fig:cut-red}. There, the derivation
$\Pi_1'$ (likewise, $\Pi_2'$) is obtained by applying a cut to $\Pi_1$ (resp., $\Pi_2$) and the right 
premise of the original cut. We must also apply Lemma~\ref{lm:ctr} to remove
excess contexts. 
\qed
\end{proof}

%%% Local Variables: 
%%% mode: latex
%%% TeX-master: "main"
%%% End: 

%\input{proofsearch}
\section{Localising structural rules}
\label{sec:local-struct-rules}

As a first step towards designing an effective proof search procedure for $\lsbbi$,
we need to restrict the use of structural rules. 

We remark the fact that the structural rules in
$\lsbbi$ can permute through all other rules except for $id$, $\top^*
R$, $\mand R$, and $\mimp L$. We refer to these four rules as
\textit{positive rules}, and the rest logical rules in $\lsbbi$ as
\textit{negative rules}. The main reason is,
all negative rules do not rely on the relational
atoms. This is formalised in the following lemma, and proved in
Appendix~\ref{app:str_permute}.

\begin{lemma}
\label{lem:str_permute}
The structural rules in $\lsbbi$ can permute upwards through any
negative rules in $\lsbbi$.
\end{lemma}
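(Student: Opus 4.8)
The plan is to prove Lemma~\ref{lem:str_permute} by a case analysis on all pairs (structural rule $S$, negative rule $N$), showing that whenever $S$ is applied immediately below $N$ in a derivation, the two applications can be swapped so that $N$ is applied first (lower) and $S$ afterwards (higher). Concretely, I would take a derivation fragment in which the conclusion of an application of a structural rule $S \in \{E, A, U, A_C, Eq_1, Eq_2\}$ is the premise of an application of a negative rule $N \in \{\bot L, \top^* L, \top^* R, \land L, \land R, \limp L, \limp R, \mand L, \mimp R\}$, and transform it into one where $N$ is applied below $S$. The key observation, already flagged in the paragraph preceding the lemma, is that \emph{no negative rule inspects, consumes, or depends on any relational atom}: the additive rules act purely on labelled formulae, and even the multiplicative negative rules $\mand L$ and $\mimp R$ only \emph{introduce} fresh relational atoms rather than reading existing ones. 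Since every structural rule acts solely on relational atoms (possibly performing a label substitution, in the cases $Eq_1$, $Eq_2$, and via $\epsilon$ in $U$), its principal relational atom(s) are untouched by $N$ and sit in the shared context $\Gamma$ of $N$.

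The routine bulk of the argument, which I would only sketch, handles the structural rules that have a single principal relational atom, namely $E$, $U$, $Eq_1$, $Eq_2$, together with the two-atom rules $A$ and $A_C$ when the negative rule is additive or is $\mand L$/$\mimp R$ but introduces labels disjoint from those touched by $S$. In each such case the principal formula of $N$ lives in the formula part of the sequent while the principal atom(s) of $S$ live in the relational part; the two sets of active components are disjoint, so I can simply reorder the inferences: apply $N$ to the lower sequent (the conclusion of the original $S$-step with $N$'s principal formula reinstated) and then apply $S$ to each resulting premise. For the branching negative rules $\land R$, $\limp L$, $\mimp L$ is excluded (it is positive), but $\land R$ and $\limp L$ duplicate the context including the relational atoms, so after permuting $N$ below, the rule $S$ must be applied once in each branch; this multiplies the number of $S$-applications but preserves provability, which is all the lemma asserts.

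The main obstacle is the interaction between the label-substituting structural rules $\top^* L$, $Eq_1$, $Eq_2$ (and the $\epsilon$-substitution hidden in $\top^* L$) and the freshness side conditions of $\mand L$ and $\mimp R$. When $S$ substitutes a label (say $[\epsilon/w]$ or $[w'/w]$) and $N$ is $\mand L$ or $\mimp R$ introducing fresh labels $x,y$, I must check that permuting the rules does not violate either rule's eigenvariable condition. Here I would lean on Lemma~\ref{subs}: since provability is preserved under arbitrary label substitution without increasing height, I can first rename the fresh labels of $N$ apart from the labels mentioned in $S$'s substitution, then perform $N$ below $S$, and finally apply the substitution of $S$ to both the relational and formula parts of the premise. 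Concretely, if $S$ performs $[\rho]$ and $N$ introduces $(x,y \simp z)$ with $x,y$ fresh, I choose $x,y$ outside the domain and range of $\rho$ so that $[\rho]$ commutes with the introduction; the substitution lemma guarantees the resulting upper derivation exists with no increase in height. A symmetric care is needed for the fresh label $w$ of $A$ and $A_C$ when $S$ is one of the substituting rules, which is again discharged by choosing $w$ fresh for the substitution and invoking Lemma~\ref{subs}.

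Finally I would remark that the direction of the lemma is exactly the one needed for proof search: structural rules, which merely manipulate the relational part, can always be pushed upward past negative rules, so that in any derivation they cluster immediately below the positive rules $id$, $\top^* R$, $\mand R$, $\mimp L$, which are the only rules that actually read relational atoms. This justifies treating structural-rule applications as triggered by, and localised around, the positive rules. The full case analysis, including the explicit substitution bookkeeping, is deferred to Appendix~\ref{app:str_permute}.
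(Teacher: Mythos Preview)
Your overall strategy---case analysis on pairs $(S,N)$, exploiting that negative rules never inspect relational atoms while structural rules act only on relational atoms---is exactly the paper's approach, and the invocation of the substitution lemma to handle fresh-label clashes is appropriate. However, your writeup contains several classification errors that you must fix before the argument stands.

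First, your list of negative rules is wrong: $\top^* R$ is one of the four \emph{positive} rules (along with $id$, $\mand R$, $\mimp L$); you presumably meant $\top R$. Second, $\top^* L$ is a negative \emph{logical} rule, not a structural rule; it does perform a global substitution $[\epsilon/w]$, but it sits on the $N$ side of the permutation, not the $S$ side. Third, $U$ performs no substitution; it merely adds the atom $(x,\epsilon\simp x)$. Fourth, your remark about ``the fresh label $w$ of $A$ and $A_C$ when $S$ is one of the substituting rules'' is confused: $A$ and $A_C$ are themselves structural rules, so they never play the role of $N$ in this lemma. Finally, your two descriptions of the initial configuration contradict each other: ``$S$ applied immediately below $N$'' means $S$'s premise is $N$'s conclusion, whereas ``the conclusion of $S$ is the premise of $N$'' says the opposite. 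The paper (and your first sentence) have it right: $S$ is below, $N$ above, and we push $S$ upward.

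One subtlety you do not mention, which the paper's proof handles explicitly, is that permuting $Eq_1$ upward through $\top^* L$ may turn it into an $Eq_2$ (and vice versa): after $\top^* L$ replaces the label $w'$ by $\epsilon$, the principal atom $(\epsilon,w'\simp w)$ becomes $(\epsilon,\epsilon\simp w)$, on which only $Eq_2$ is applicable. This is harmless---the result is still a structural rule above a negative rule---but it shows the swap is not always literal, and is why the paper organises the argument as an induction on derivation height rather than a bare one-step swap.
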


Then we design a more compact proof system  where applications of structural rules are
separated into a special entailment relation for relational atoms. We shall see in the next section that
proof search in this proof system can be separated into two phases: 
guessing the shape of the proof tree, and deriving
the relational atoms needed. The latter will be phrased in terms of a constraint system. 

In this section we localise the structural rules in two steps: we first 
deal with $Eq_1$ and $Eq_2$, and then the other structural rules. 

%% We first prove a useful permutation result for structural rules in $\lsbbi$: 
%% we show that structural rules can be permuted  
%% upwards until the application of $\mand R$, $\mimp L$, $id$, or
%% $\top^* R$ is necessary. We refer to these four rules as
%% \textit{positive} rules, the rest of the logical rules are \textit{negative}
%% rules. We say a $\lsbbi$ derivation is \textit{structurally local} if there are 
%% no negative rule applications between any structural rule applications
%% and positive rule applications.
%% \begin{lemma}
%% \label{lem_abs_s1}
%% In $\lsbbi$, any cut-free derivation can be transformed into a structurally local cut-free derivation.
%% \end{lemma}

\subsection{Localising $Eq_1$ and $Eq_2$}

Allowing substitutions in a proof rule simplifies the cut-elimination proof for $\lsbbi$. 
However, for proof search, this creates a problem as $Eq_1$ and $Eq_2$ do not permute
over certain rules
%, i.e., those 
that require matching of two labels (e.g., $\mand R$ or $\mimp L$).
Our first intermediate proof system $\ilsbbi$ aims to remove substitutions from $\lsbbi$. Instead, the equality
between labels is captured via a special entailment relation. 
% We shall refer to this intermediate
% system as $\ilsbbi$. 
To define its inference rules, we first need a few preliminary definitions. 

Let $r$ be an instance of a structural rule. We can view $r$ as a function that takes
a set of relational atoms (in the conclusion of the rule) 
and outputs another set (in the premise). We shall write $r(\Gcal,\theta)$, where $\Gcal$
is the set of principal relational atoms and $\theta$ is a substitution, 
to denote the set of relational atoms introduced in the premise 
of an instance of $r$ with conclusion containing $\Gcal$,
and where the substitution used in the rule is $\theta$, which is 
the identity substitution in all cases except when $r$ is $Eq_1$
or $Eq_2$. 
Let $\sigma$ be a sequence of instances of structural rules $[r_1(\Gcal_1,\theta_1);\cdots;r_n(\Gcal_n,\theta_n)]$.
Given a set of relation atoms $\Gcal$, the result of the application of $\sigma$ to $\Gcal$,
denoted by $\Scal(\Gcal, \sigma)$, 
is defined inductively as follows:
\begin{eqnarray*}
\Scal(\Gcal, \sigma)
 & = &
 \left\{
     \begin{tabular}{l@{\extracolsep{0.5cm}}cl}
        $\Gcal$ & if & $\sigma = [~]$
     \\
        $\Scal(\Gcal \theta \cup r(\Gcal',\theta), \sigma')$
            & if 
            & $\Gcal' \subseteq \Gcal \mathrm{~and~} 
               \sigma = [r(\Gcal',\theta);\sigma']$
     \\
        undefined & & otherwise 
     \end{tabular}
 \right.
\end{eqnarray*}
% \begin{itemize}
% \item If $\sigma = [~]$ then $\Scal(\Gcal, \sigma) = \Gcal$, and
% \item If $\sigma = [r(\Gcal',\theta);\sigma']$ and $\Gcal' \subseteq \Gcal$, then
% $\Scal(\Gcal, \sigma) = \Scal(\Gcal \theta \cup r(\Gcal',\theta), \sigma').$ Otherwise, it is undefined. 
% \end{itemize}
% Given a $\sigma = [r_1(\Gcal_1,\theta_1);\cdots;r_n(\Gcal_n,\theta_n)]$, we denote
% with $\subst(\sigma)$ the composite substitution $\theta_1 \circ
% \cdots \circ \theta_n.$
Given a $\sigma = [r_1(\Gcal_1,\theta_1);\cdots;r_n(\Gcal_n,\theta_n)]$, we denote
with $\subst(\sigma)$ the composite substitution $\theta_1 \circ
\cdots \circ \theta_n$, where $t(\theta_1\circ \theta_2)$ means $(t\theta_1)\theta_2$.

\begin{definition}
Let $\Gcal$ be a set of relational atoms. The entailment relation $\Gcal \vdash_E u = v$
holds iff there exists a sequence $\sigma$ of 
$Eq_1$ or $Eq_2$ structural rules
%each rule is either $Eq_1$ or $Eq_2$, 
such that $\Scal(\Gcal, \sigma)$ is defined, and 
$u \theta = v\theta$, where $\theta = \subst(\sigma).$
\end{definition}

We now define the proof system $\ilsbbi$ as $\lsbbi\setminus\{Eq_1,Eq_2\}$ (i.e., $\lsbbi$ without rules $Eq_1,Eq_2$) where certain rules
modified according to Figure~\ref{fig:ILS_BBI}.

%% To make the proof search easier, we modify the labelled system
%% $\lsbbi$ to obtain an intermediate system which does not involve any
%% explicit substitutions. In this system we use an ``equality''
%% turnstile relation $\mathcal{G} \vdash_E (w_1 = w_2)$ to express that
%% the set relational atoms $\mathcal{G}$ implies the equality relation
%% $(w_1 = w_2)$. To accommodate this notation, we use $\mathcal{G}$ to
%% denote the set of relational atoms on the left hand side of a
%% sequent. We refer to this system as $\ilsbbi$. In this system, the
%% rules $Eq_1$ and $Eq_2$ are omitted, and the following rules are
%% changed.

\begin{figure*}[ht!]
\small
\begin{center}
\begin{tabular}{cc}
\multicolumn{2}{c}{
\AxiomC{$\mathcal{G}\vdash_E (w_1 = w_2)$}
\RightLabel{\tiny $id$}
\UnaryInfC{$\Gamma;w_1:P \vdash w_2:P;\Delta$}
\DisplayProof
$\qquad$
\AxiomC{$(\epsilon,w\simp \epsilon);\Gamma \vdash \Delta$}
\RightLabel{\tiny $\top^* L$}
\UnaryInfC{$\Gamma;w:\top^* \vdash \Delta$}
\DisplayProof
$\qquad$
\AxiomC{$\mathcal{G}\vdash_E (w = \epsilon)$}
\RightLabel{\tiny $\top^* R$}
\UnaryInfC{$\Gamma \vdash w:\top^*;\Delta$}
\DisplayProof
}\\[15px]
\multicolumn{2}{c}{
\AxiomC{$(x,y \simp z');\Gamma \vdash x:A;z:A\mand B;\Delta$}
\AxiomC{$(x,y \simp z');\Gamma \vdash y:B;z:A\mand B;\Delta$}
\AxiomC{$\mathcal{G} \vdash_E (z = z')$}
\RightLabel{\tiny $\mand R$}
\TrinaryInfC{$(x,y \simp z');\Gamma \vdash z:A\mand B;\Delta$}
\DisplayProof
}\\[15px]
\multicolumn{2}{c}{
\AxiomC{$(x,y' \simp z);\Gamma;y:A\mimp B \vdash x:A;\Delta$}
\AxiomC{$(x,y' \simp z);\Gamma;y:A\mimp B; z:B \vdash \Delta$}
\AxiomC{$\mathcal{G} \vdash_E (y = y')$}
\RightLabel{\tiny $\mimp L$}
\TrinaryInfC{$(x,y' \simp z);\Gamma;y:A\mimp B \vdash \Delta$}
\DisplayProof
}\\[15px]
\multicolumn{2}{c}{
\AxiomC{$(u,w \simp z);(y,v \simp w);(x,y \simp z);(u,v \simp x');\Gamma \vdash \Delta$}
\AxiomC{$\mathcal{G}\vdash_E (x = x')$}
\RightLabel{\tiny $A$}
\BinaryInfC{$(x,y \simp z);(u,v \simp x');\Gamma \vdash \Delta$}
\DisplayProof
}\\[15px]
\multicolumn{2}{c}{
\AxiomC{$(x,w\simp x');(y,y\simp w);(x,y\simp x');\Gamma \vdash \Delta$}
\AxiomC{$\mathcal{G}\vdash_E (x = x')$}
\RightLabel{\tiny $A_C$}
\BinaryInfC{$(x,y\simp x');\Gamma \vdash \Delta$}
\DisplayProof
}
\end{tabular}\\[15px]
$\mathcal{G}$ is the set of relational atoms on the left hand side of the conclusion sequent.
\end{center}
\caption{The changed rules in $\ilsbbi$.}
\label{fig:ILS_BBI}
\end{figure*} 

Note that the new $\top^* L$ rule does not modify any labels, instead,
the relational atom $(\epsilon,w\simp \epsilon)$ in the premise
ensures that the derivability of $(w = \epsilon)$ is
preserved. 
%% Also, the equality entailment does not fully capture the
%% reflexivity, transitivity, and symmetry of
%% equality. Rather, the structural rule $E$ is used  when
%% symmetry is required to derive an equality.
The point of this intermediate step is to
avoid label substitutions in the proof system.

\begin{theorem}
A sequent $\Gamma \vdash \Delta$ is derivable in $\lsbbi$ if and only
if it is derivable in $\ilsbbi.$
\end{theorem}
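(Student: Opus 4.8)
The plan is to prove the equivalence $\Gamma \vdash \Delta$ is derivable in $\lsbbi$ iff it is derivable in $\ilsbbi$ by establishing each direction as a simulation of one calculus by the other. Since $\ilsbbi$ differs from $\lsbbi$ only by deleting $Eq_1, Eq_2$ and modifying $id$, $\top^* R$, $\top^* L$, $\mand R$, $\mimp L$, $A$, $A_C$ so that label-matching is delegated to the entailment relation $\vdash_E$, the core of the argument is to show that applications of $Eq_1$ and $Eq_2$ in $\lsbbi$ correspond exactly to uses of $\vdash_E$ in $\ilsbbi$, and vice versa.

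First I would prove the harder direction, $\lsbbi \Rightarrow \ilsbbi$, by induction on the height of the $\lsbbi$ derivation. The key lemma to isolate is a correspondence between label substitutions induced by sequences of $Eq_1/Eq_2$ and the entailment $\Gcal \vdash_E u = v$: namely, whenever $\Gcal \vdash_E u = v$ holds via a sequence $\sigma$ with $\theta = \subst(\sigma)$, then in $\lsbbi$ one can apply the corresponding $Eq_1/Eq_2$ steps to force $u$ and $v$ to become the same label, so that a matching-based rule like $\mand R$ (which in $\lsbbi$ requires the principal atom to literally be $(x,y\simp z)$ with the succedent formula at $z$) can fire. For the inductive step, when the last $\lsbbi$ rule is one of $Eq_1, Eq_2$, I must show it can be absorbed: the substitution it performs is recorded, via the Substitution Lemma (Lemma~\ref{subs}), as part of a pending equality that will later be discharged by $\vdash_E$ in the $\ilsbbi$ derivation. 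When the last rule is a positive or matching rule of $\lsbbi$, I translate it to its $\ilsbbi$ variant, discharging the side premise $\Gcal \vdash_E (\cdots)$ using the sequence of $Eq$-applications that had been deferred; here the crucial point is that $\Gcal$ is precisely the set of relational atoms on the left, and the relational atom $(\epsilon, w \simp \epsilon)$ introduced by the new $\top^* L$ rule is exactly what lets $\vdash_E$ later derive $w = \epsilon$ without performing the substitution eagerly.

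For the reverse direction, $\ilsbbi \Rightarrow \lsbbi$, I would again induct on derivation height and translate each modified rule back. Whenever an $\ilsbbi$ rule discharges a premise $\Gcal \vdash_E u = v$, by definition of $\vdash_E$ there is a sequence $\sigma$ of $Eq_1/Eq_2$ instances with $\Scal(\Gcal, \sigma)$ defined and $u\theta = v\theta$ for $\theta = \subst(\sigma)$. I reconstruct these $Eq_1/Eq_2$ applications in the $\lsbbi$ derivation immediately below the translated rule; after applying $\theta$ (justified by Lemma~\ref{subs} and weakening, Lemma~\ref{lm:weak}), the labels $u$ and $v$ coincide, so the original unmodified $\lsbbi$ rule ($id$, $\top^* R$, $\mand R$, $\mimp L$, $A$, or $A_C$) becomes directly applicable. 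For the $\top^* L$ case I check that the $\ilsbbi$ version, which adds $(\epsilon, w \simp \epsilon)$ rather than substituting, yields the same derivable sequents as the substituting $\lsbbi$ version, using that $(\epsilon, w \simp \epsilon)$ together with the identity condition forces $w = \epsilon$.

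The main obstacle I anticipate is the bookkeeping around deferred substitutions in the $\lsbbi \Rightarrow \ilsbbi$ direction: in $\lsbbi$ a substitution from $Eq_1/Eq_2$ acts globally and eagerly on the whole sequent, whereas in $\ilsbbi$ the equality is resolved only locally at the matching rule via $\vdash_E$, so I must verify that postponing the substitution does not interfere with intervening rule applications. Lemma~\ref{lem:str_permute} (that structural rules permute upward through negative rules) and the Substitution Lemma together give the needed flexibility to reorder the $Eq$-steps so that they sit adjacent to the positive rule whose side condition they discharge; making this reordering precise, and checking that the accumulated composite substitution $\subst(\sigma)$ matches the substitution actually performed in the $\lsbbi$ derivation, is the technically delicate part of the proof.
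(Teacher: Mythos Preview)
Your proposal is broadly on the right track and the easy direction ($\ilsbbi \Rightarrow \lsbbi$) matches the paper exactly: unfold each $\vdash_E$ side condition into the witnessing sequence of $Eq_1/Eq_2$ steps, then apply the original $\lsbbi$ rule after the induced substitution (using Lemma~\ref{subs}).

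For the harder direction ($\lsbbi \Rightarrow \ilsbbi$), your plan and the paper's diverge in organisation. You attempt a single induction on the $\lsbbi$ derivation, handling the case where the last rule is $Eq_1$ or $Eq_2$ by ``recording'' the substitution for later discharge. The paper instead factors the argument into two separate lemmas: first, a trivial embedding of $\lsbbi$ into $\ilsbbi + Eq_1 + Eq_2$ (only the $\top^* L$ case needs any work); second, a proof that $Eq_1$ and $Eq_2$ are \emph{admissible} in $\ilsbbi$, established by a fresh induction on the height of the $\ilsbbi$ derivation of the premise, permuting the $Eq$-step upward through every rule until it is absorbed at a zero-premise rule. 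This decomposition is cleaner because it isolates the admissibility of $Eq_1/Eq_2$ as a standalone fact about $\ilsbbi$, rather than entangling it with the translation.

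The concrete gap in your sketch is the lemma that makes the absorption step work. When $Eq_1$ sits directly below a positive rule in $\ilsbbi$ (say $id$ or $\mand R$), the side condition of that rule is of the form $\Gcal[w'/w];(\epsilon,w'\simp w') \vdash_E (u[w'/w] = v[w'/w])$, and you must conclude $\Gcal;(\epsilon,w\simp w') \vdash_E (u = v)$ so that the rule applies \emph{without} the $Eq_1$. This ``anti-substitution'' closure of $\vdash_E$ is not Lemma~\ref{subs} (which is about $\lsbbi$ derivations, not about $\vdash_E$), nor does it follow from Lemma~\ref{lem:str_permute} (which only permutes structural rules past negative logical rules). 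The paper states and proves it as a separate lemma: from $\Gcal[x/y];(\epsilon,x\simp x) \vdash_E (w_1[x/y] = w_2[x/y])$ one gets $\Gcal;(\epsilon,y\simp x) \vdash_E (w_1 = w_2)$, simply by prepending one more $Eq_1$ step to the witnessing sequence. You should isolate this; once you have it, the permutation of $Eq_1/Eq_2$ through \emph{all} $\ilsbbi$ rules (including the positive ones and the modified $A$, $A_C$) goes through, and your approach and the paper's coincide.
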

\begin{proof}
{\em (Outline) \ }
One direction, from $\ilsbbi$ to $\lsbbi$ is straightforward, as $\vdash_E$
is essentially just a sequence of applications of $Eq_1$ and $Eq_2$. The other direction
can be proved by showing that $Eq_1$ and $Eq_2$ are admissible in $\ilsbbi.$ A more 
detailed proof is given in
Appendix~\ref{app:sound_ilsbbi},~\ref{app:comp_ilsbbi}, for soundness
and completeness respectively.\qed
\end{proof}

\subsection{Localising the rest of the structural rules}

%% The key to absorbing the structural rules $E$, $U$, $A$ and $A_C$ of $\ilsbbi$ is 
%% to show that these structural rules can be permuted up over other rules
%% so that they are localised just below the rules $id$, $\top^* R$, $\mand R$ and $\mimp L.$
As a second step, we isolate the rest structural rules into a separate entailment relation, 
%just 
as we did with $Eq_1$ and $Eq_2.$

%% Now that there is no explicit $Eq_1$ and $Eq_2$ rules in $\ilsbbi$,
%% given a \textit{strategic} $\lsbbi$ derivation, any sequence $\delta$
%% of $E$, $U$, $A$ and $A_C$ applications can only occur blow $id$,
%% $\top^*R$, $\mand R$ and $\mimp L$. Thus by defining an extension of
%% the equality entailment to handle the rest of structural rules, we
%% obtain a system in which the structural rules are absorbed.

\begin{definition}[Relation Entailment $\vdash_R$]
The entailment relation $\vdash_R$ has the following two forms:
\begin{enumerate}
\item $\Gcal \vdash_R (w_1 = w_2)$ is true 
iff there is a sequence $\sigma$ of $E$, $U$, $A$, $A_C$ applications so that 
$\Scal(\mathcal{G},\sigma) \vdash_E (w_1 = w_2)$. 
\item $\mathcal{G} \vdash_R (w_1,w_2\simp w_3)$ is true iff there is a
  sequence $\sigma$ of $E$, $U$, $A$, $A_C$ applications so that $(w_1',w_2'\simp w_3')\in \mathcal{S}(\mathcal{G},\sigma)$ 
and the following hold:
$\mathcal{S}(\mathcal{G},\sigma) \vdash_E (w_1 = w_1')$, 
$\mathcal{S}(\mathcal{G},\sigma) \vdash_E (w_2 = w_2')$, and 
$\mathcal{S}(\mathcal{G},\sigma) \vdash_E (w_3 = w_3').$
\end{enumerate}
%% We define the solution to a relation entailment as a two-tuple $(\mathcal{G},\sigma)$, where $\sigma$ is the sequence of structural rule applications.
\end{definition}

The entailment $\vdash_R$ is stronger than $\vdash_E$. For example, if $\Gcal$ only contains $(x,\epsilon\simp y)$, then $\Gcal\not\vdash_E (x = y)$; but $\Gcal\vdash_R (x = y)$ by applying $E$ to obtain $(\epsilon,x\simp y)$, then apply $Eq_1$ or $Eq_2$ on the new relational atom.

\begin{figure}[ht!]
\small
\begin{center}
\begin{tabular}{cc}
\AxiomC{$\mathcal{G} \vdash_R (w_1 = w_2)$}
\RightLabel{\tiny $id$}
\UnaryInfC{$\mathcal{G}||\Gamma;w_1:P \vdash w_2:P;\Delta$}
\DisplayProof
&
\AxiomC{$\mathcal{G} \vdash_R (w = \epsilon)$}
\RightLabel{\tiny $\top^* R$}
\UnaryInfC{$\mathcal{G}||\Gamma \vdash w:\top^*;\Delta$}
\DisplayProof
\\[15px]
\multicolumn{2}{c}{
\AxiomC{$\mathcal{S}(\mathcal{G},\sigma)||\Gamma \vdash x:A;w:A\mand B;\Delta$}
\AxiomC{$\mathcal{S}(\mathcal{G},\sigma)||\Gamma \vdash y:B;w:A\mand B;\Delta$}
\AxiomC{$\mathcal{G} \vdash_R (x,y\simp w)$}
\RightLabel{\tiny $\mand R^\sharp$}
\TrinaryInfC{$\mathcal{G}||\Gamma \vdash w:A\mand B;\Delta$}
\DisplayProof
}\\[15px]
\multicolumn{2}{c}{
\AxiomC{$\mathcal{S}(\mathcal{G},\sigma)||\Gamma;w:A\mimp B \vdash x:A;\Delta$}
\AxiomC{$\mathcal{S}(\mathcal{G},\sigma)||\Gamma;w:A\mimp B; z:B \vdash \Delta$}
\AxiomC{$\mathcal{G} \vdash_R (x,w\simp z)$}
\RightLabel{\tiny $\mimp L^\natural$}
\TrinaryInfC{$\mathcal{G}||\Gamma;w:A\mimp B \vdash \Delta$}
\DisplayProof
}\\[15px]
$\sharp$: $\sigma$ is the derivation of $\mathcal{G}
\vdash_R (x,y\simp w)$
&
$\natural$: $\sigma$ is the derivation of $\mathcal{G} \vdash_R (x,w\simp z)$
\end{tabular}
\end{center}
\caption{Changed rules in $\iilsbbi$.}
\label{fig:IILS_BBI}
\end{figure}

The changed rules in the second intermediate system $\iilsbbi$ is given in
Figure~\ref{fig:IILS_BBI}
where 
%In the figure, 
we use a slightly different notation for sequents. We write 
$
\Gcal || \Gamma \vdash \Delta
$
to emphasize 
%the fact 
that the left hand side of a sequent
is partitioned into two parts: $\Gcal$, which contains only relational atoms, and $\Gamma$, which 
contains only labelled formulae.

%% every $\vdash_E$ by $\vdash_R$, and omit the rest of structural
%% rules. In this system, we isolate the relational atoms in a sequent
%% into a set $\mathcal{G}$, and separate this set from the rest by $||$,
%% so that all the operations on this set can be dealt with
%% individually. This can be seen as just a notational variant from the
%% previous system, it is obviously harmless. The rules in $\iilsbbi$ are
%% shown in Figure~\ref{fig:IILS_BBI}.

The following is an immediate result, the proof is divided in two
parts for soundness and completeness, detailed in
Appendix~\ref{app:sound_iilsbbi} and~\ref{app:comp_iilsbbi} respectively.

\begin{theorem}
A sequent $\Gamma \vdash \Delta$ is derivable in $\ilsbbi$ if and only
if it is derivable in $\iilsbbi.$
\end{theorem}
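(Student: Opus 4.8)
The plan is to prove the equivalence of $\ilsbbi$ and $\iilsbbi$ by establishing soundness and completeness separately, following the same two-direction strategy used for the $\lsbbi \leftrightarrow \ilsbbi$ equivalence, but now tracking how the structural rules $E$, $U$, $A$, $A_C$ are absorbed into the entailment $\vdash_R$. The conceptual content of the passage from $\ilsbbi$ to $\iilsbbi$ is that, instead of firing the remaining structural rules explicitly as inference steps, we defer them into the side condition $\vdash_R$ attached to the positive rules ($id$, $\top^* R$, $\mand R^\sharp$, $\mimp L^\natural$), exactly as Lemma~\ref{lem:str_permute} licenses: since the structural rules permute upward through all negative rules, every application can be pushed up until it is adjacent to a positive rule, and there it can be packaged into the sequence $\sigma$ witnessing the relevant $\vdash_R$ judgement.

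For the direction from $\iilsbbi$ to $\ilsbbi$ (soundness of the compact system), I would proceed by induction on the height of the $\iilsbbi$ derivation and simulate each changed rule. The key observation is that whenever a positive rule of $\iilsbbi$ fires with a side condition $\Gcal \vdash_R (\cdots)$, the definition of $\vdash_R$ supplies a concrete sequence $\sigma$ of $E$, $U$, $A$, $A_C$ applications such that $\Scal(\Gcal,\sigma)$ is defined and an $\vdash_E$ judgement holds on the resulting atoms. I would replay $\sigma$ as genuine structural-rule applications in $\ilsbbi$, obtaining the enriched set of relational atoms in the antecedent, and then apply the corresponding $\ilsbbi$ positive rule whose own side condition is the $\vdash_E$ judgement. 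Weakening admissibility (Lemma~\ref{lm:weak}) lets me carry the extra relational atoms introduced by $\sigma$ into the premises, and the fact that $\Gcal'\subseteq\Gcal$ at each step of $\Scal$ guarantees the principal atoms of each structural rule are present when needed.

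For the direction from $\ilsbbi$ to $\iilsbbi$ (completeness of the compact system), I would again induct on derivation height, but now the work is to show that an explicit application of a structural rule in $\ilsbbi$ can be eliminated in favour of absorbing it into the $\vdash_R$ side condition of the positive rule higher up the branch. Concretely, given an $\ilsbbi$ derivation, I would use the permutation Lemma~\ref{lem:str_permute} to commute every structural-rule instance upward through the negative rules until it sits immediately below a positive rule, and then observe that its effect is precisely to extend the sequence $\sigma$ already recorded (or to be recorded) in that positive rule's $\vdash_R$ premise. The definition of $\vdash_R$ is designed so that any $\vdash_E$ judgement available after some structural-rule applications is exactly a $\vdash_R$ judgement on the original $\Gcal$; this is the content of the remark that $\vdash_R$ is strictly stronger than $\vdash_E$.

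The main obstacle I anticipate is the bookkeeping around label freshness and the interaction between $\sigma$ and the two premises of $\mand R^\sharp$ and $\mimp L^\natural$. In those rules the premises must carry $\Scal(\Gcal,\sigma)$ rather than $\Gcal$, so I must check that the relational atoms generated along $\sigma$ (which contain fresh labels from $U$, $A$, $A_C$) are correctly threaded into both branches and do not clash with labels introduced by the logical rules above. Verifying that the freshness side conditions of $A$ and $A_C$ are respected when $\sigma$ is replayed, and that the permutation of structural rules past the positive rules genuinely fails only where expected (so that nothing is lost by the deferral), is where the care is needed; the rest of the argument is a routine structural induction. I would relegate the exhaustive case analysis to the appendix, as the theorem statement indicates.
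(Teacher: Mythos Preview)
Your proposal is correct and follows essentially the same approach as the paper. The soundness direction ($\iilsbbi \to \ilsbbi$) is exactly what the paper does: unfold the witnessing sequence $\sigma$ from the $\vdash_R$ side condition as explicit structural-rule applications, then fire the corresponding $\ilsbbi$ positive rule using the residual $\vdash_E$ judgement. The completeness direction is organised slightly differently---the paper first observes that $\iilsbbi$ augmented with $E,U,A,A_C$ trivially simulates $\ilsbbi$ (since $\vdash_E$ implies $\vdash_R$ with empty $\sigma$), and then proves each structural rule admissible in $\iilsbbi$ by showing it permutes through the positive rules and is absorbed into their $\vdash_R$ premise---but this is the same content as your ``permute up to a positive rule, then prepend to $\sigma$'' description.

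One point worth sharpening in your soundness direction: in the $\mand R^\sharp$ and $\mimp L^\natural$ cases, the relational atom supplied by $\vdash_R$ is some $(x',y'\simp w') \in \Scal(\Gcal,\sigma)$ with $\Scal(\Gcal,\sigma) \vdash_E x=x'$, $y=y'$, $w=w'$, so the $\ilsbbi$ rule you fire produces premises labelled $x',y'$ rather than the $x,y$ appearing in the $\iilsbbi$ premises. The paper closes this gap with the Substitution Lemma for $\ilsbbi$ (Lemma~\ref{lem:subs_ilsbbi}), applying the substitution $\theta = \subst(\sigma')$ from the equality-witnessing sequence $\sigma'$ and then replaying $\sigma'$ below. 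Your ``bookkeeping'' paragraph gestures at this, but the substitution lemma is the precise tool you need, not just freshness management.
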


\section{Mapping proof search to constraint solving}
\label{sec:mapping-proof-search}

%% As discussed in the previous sections, we can permute the applications of structural rules in $\lsbbi$ derivations so that the derivation is ``strategic''. But the absorption of structural rules just encapsulates them all together. The core idea of the free variable method is that instead of generating the required relational atoms in the relation entailment $\vdash_R$, we can create ternary relations with ``free variables'' in $\mand R$ and $\mimp L$. As long as the free variables in the derivation can be matched to some labels in a ternary relation that is derivable from the existing ones, the validity of the rules is preserved. We refer to the labels generated by $\mand L$ and $\mimp R$ as \textbf{eigen variables}, and those generated by $\mand R$ and $\mimp L$ as \textbf{free variables.} 

We now consider a proof search strategy for $\iilsbbi$. 
As we have isolated all the structural rules into the entailment relation $\vdash_R$, proof search in
$\iilsbbi$ consists of guessing the shape of the derivation tree, and then checking that each
entailment $\vdash_R$ can be proved. The latter involves guessing a splitting of labels
in the $\mand R$ and $\mimp L$ rules which also satisfies the equality
constraints in the $id$ and
$\top^* R$ rules. We formalise this via a symbolic proof system, where splitting
and equality are handled lazily, via the introduction of {\em free variables} which are essentially existential variables (or logic
variables) that must be instantiated to concrete labels satisfying all the entailment constraints in
the proof tree, for a symbolic derivation to be sound. 

Free variables are denoted by $\fvx$, $\fvy$ and $\fvz$.
%, possibly with subscripts or primes.
We use $\fvu, \fvv, \fvw$ to denote either labels or free variables, and $a,b,c$ are ordinary labels. 
A {\em symbolic sequent} is just a sequent but possibly with occurrences of free variables in place of
labels. We shall sometimes refer to the normal (non-symbolic) sequent as a {\em ground sequent}
to emphasize the fact that it contains no free variables. 
The symbolic proof system $\fvlsbbi$ is given in Figure~\ref{fig:fvlsbbi}. 
The rules are mostly similar to $\iilsbbi$, but lacking the entailment relations $\vdash_R.$
Instead, new free variables are introduced when applying $\mand R$ and $\mimp L$ backward. 
Notice also that in $\fvlsbbi$, the $\mand R$ and $\mimp L$ rules do not compute the set
$\Scal(\Gcal,\sigma)$. So the relational atoms in $\fvlsbbi$ are those that are created by $\mand L, \mimp R, \top^* L$.
In the following, given a derivation in $\fvlsbbi$, we shall assume that the free variables
that are created in different branches of the derivation are pairwise distinct. 
We shall sometimes refer to a derivation in $\fvlsbbi$ simply as a {\em symbolic derivation}.

\begin{figure*}[ht!]
\small
\begin{center}
\begin{tabular}{cc}
\multicolumn{2}{l}{
\textbf{Initial Sequent:}
\qquad\qquad\qquad
\AxiomC{$$}
\RightLabel{\tiny $id$}
\UnaryInfC{$\mathcal{G}||\Gamma;\fvw_1:P \vdash \fvw_2:P;\Delta$}
\DisplayProof
}
\\[15px]
\multicolumn{2}{l}{\textbf{Logical Rules:}}\\%[5px]
\AxiomC{$$}
\RightLabel{\tiny $\bot L$}
\UnaryInfC{$\mathcal{G}||\Gamma; \fvw:\bot \vdash \Delta$}
\DisplayProof
&
\AxiomC{$$}
\RightLabel{\tiny $\top R$}
\UnaryInfC{$\mathcal{G}||\Gamma \vdash \fvw:\top;\Delta$}
\DisplayProof
\\[15px]
\AxiomC{$\mathcal{G};(\epsilon,\fvw\simp\epsilon)||\Gamma \vdash \Delta$}
\RightLabel{\tiny $\top^* L$}
\UnaryInfC{$\mathcal{G}||\Gamma; \fvw:\top^* \vdash \Delta$}
\DisplayProof
&
\AxiomC{$$}
\RightLabel{\tiny $\top^* R$}
\UnaryInfC{$\mathcal{G}||\Gamma \vdash \fvw:\top^*;\Delta$}
\DisplayProof\\[15px]
\AxiomC{$\mathcal{G}||\Gamma; \fvw:A; \fvw:B \vdash \Delta$}
\RightLabel{\tiny $\land L$}
\UnaryInfC{$\mathcal{G}||\Gamma; \fvw:A\land B \vdash \Delta$}
\DisplayProof 
&
\AxiomC{$\mathcal{G}||\Gamma \vdash \fvw:A;\Delta$}
\AxiomC{$\mathcal{G}||\Gamma \vdash \fvw:B;\Delta$}
\RightLabel{\tiny $\land R$}
\BinaryInfC{$\mathcal{G}||\Gamma \vdash \fvw:A\land B;\Delta$}
\DisplayProof\\[15px]
\AxiomC{$\mathcal{G}||\Gamma \vdash \fvw:A;\Delta$}
\AxiomC{$\mathcal{G}||\Gamma; \fvw:B \vdash \Delta$}
\RightLabel{\tiny $\limp L$}
\BinaryInfC{$\mathcal{G}||\Gamma; \fvw:A\limp B \vdash \Delta$}
\DisplayProof
&
\AxiomC{$\mathcal{G}||\Gamma; \fvw:A \vdash \fvw:B;\Delta$}
\RightLabel{\tiny $\limp R$}
\UnaryInfC{$\mathcal{G}||\Gamma \vdash \fvw:A\limp B; \Delta$}
\DisplayProof\\[15px]
\AxiomC{$\mathcal{G};(a,b\simp \fvw)||\Gamma;a:A;b:B \vdash \Delta$}
\RightLabel{\tiny $\mand L^\dag$}
\UnaryInfC{$\mathcal{G}||\Gamma; \fvw:A\mand B \vdash \Delta$}
\DisplayProof
&
\AxiomC{$\mathcal{G};(a, \fvw \simp c)||\Gamma;a:A \vdash c:B;\Delta$}
\RightLabel{\tiny $\mimp R^\ddag$}
\UnaryInfC{$\mathcal{G}||\Gamma \vdash \fvw:A\mimp B;\Delta$}
\DisplayProof\\[15px]
\multicolumn{2}{c}{
\AxiomC{$\mathcal{G}||\Gamma \vdash \fvx:A; \fvw:A\mand B;\Delta$}
\AxiomC{$\mathcal{G}||\Gamma \vdash \fvy:B; \fvw:A\mand B;\Delta$}
\RightLabel{\tiny $\mand R^\sharp$}
\BinaryInfC{$\mathcal{G}||\Gamma \vdash \fvw:A\mand B;\Delta$}
\DisplayProof
}\\[15px]
\multicolumn{2}{c}{
\AxiomC{$\mathcal{G}||\Gamma;\fvw:A\mimp B \vdash \fvx:A;\Delta$}
\AxiomC{$\mathcal{G}||\Gamma;\fvw:A\mimp B; \fvz:B \vdash \Delta$}
\RightLabel{\tiny $\mimp L^\natural$}
\BinaryInfC{$\mathcal{G}||\Gamma;\fvw:A\mimp B \vdash \Delta$}
\DisplayProof
}\\[15px]
\noindent $\dag$: $a$ and $b$ must be fresh in $\mand L$ &
\noindent $\ddag$: $a$ and $c$ must be fresh in $\mimp R$\\
\noindent $\sharp$: $\fvx$ and $\fvy$ are new free variables in $\mand
R$ &
\noindent $\natural$: $\fvx$ and $\fvz$ are new free variables in
$\mimp L$
\end{tabular}
\end{center}
\caption{Labelled Sequent Calculus $\fvlsbbi$ for Boolean BI.}
\label{fig:fvlsbbi}
\end{figure*}

An {\em equality constraint} is an expression of the form $\Gcal \vdash_R^? (\fvu = \fvv)$, and
a {\em relational constraint} is an expression of the form $\Gcal \vdash_R^? (\fvu, \fvv \simp \fvw).$
In both cases, we refer to $\Gcal$ as the left hand side of the constraints, and $(\fvu = \fvv)$ and 
$(\fvu, \fvv \simp \fvw)$ as the right hand side. 
Constraints are ranged over by $\cfr, \cfr', \cfr_1,\cfr_2$, etc. Given a constraint $\cfr$, we write 
$\Gcal(\cfr)$ 
%to denote 
for the left hand side of $\cfr.$ A {\em constraint system} is just a set of constraints.
We write $\Gcal \vdash^?_R C$ for either an equality
or relational constraint. We write $fv(\cfr)$ to denote the set of free variables in $\cfr$, and $fv(\Ccal)$ to denote the set of free variables in a set of  constraints $\Ccal$.

% \begin{definition}[Constraint systems]
% A {\em constraint system} is a pair $(\Ccal, \preceq)$ of a set of constraints and
% a well-founded partial order on elements of $\Ccal$ satisfying: 
% \begin{description}
% \item[Monotonicity:] $\cfr_1 \preceq \cfr_2$ implies $\Gcal(\cfr_1) \subseteq \Gcal(\cfr_2).$
% \end{description}
% It is a {\em well-formed constraint system} if it additionally satisfies the following:
% \begin{description}
% \item[Variable origination:] For every free variable $\fvx$ in $\Ccal$, 
% there exists a unique minimum (w.r.t. $\preceq$) constraint $\cfr(\fvx) = \Gcal_x \vdash_R^? (\fvu, \fvv \simp \fvw)$
% such that $\fvx$ occurs on the right hand side, but not on the left hand side, and $\fvx$ does not occur in any other constraint prior to $\cfr(\fvx)$ in $\preceq$.
% \end{description}
% \end{definition}
\begin{definition}[Constraint systems]
A {\em constraint system} is a pair $(\Ccal, \preceq)$ of a set of constraints and
a well-founded partial order on elements of $\Ccal$ satisfying
\textbf{Monotonicity}: $\cfr_1 \preceq \cfr_2$ implies $\Gcal(\cfr_1) \subseteq \Gcal(\cfr_2).$
It is {\em well-formed} if it also satisfies 
\textbf{Unique variable origin}: $\forall \fvx$ in $\Ccal$, 
there exists a unique minimum (w.r.t. $\preceq$) constraint $\cfr(\fvx) = \Gcal_x \vdash_R^? (\fvu, \fvv \simp \fvw)$
s.t. $\fvx$ occurs in $(\fvu, \fvv \simp \fvw)$,
%is one of $\fvu,\fvv,\fvw$,
%occurs on the right hand side, 
but not in $\Gcal_x$, and $\fvx$ does not occur in any $\cfr'$
where $\cfr'\preceq \cfr(\fvx)$. 
%other constraint prior to $\cfr(\fvx)$ in $\preceq$. 
Such a $\cfr(\fvx)$ is the origin of $\fvx$.
\end{definition}

From now on, we shall denote with $\cfr(\fvx)$ the constraint where $\fvx$ originates from, as defined
in the above definition. We use the letter $\Cbb$ to range over constraint systems.

% Note: a partial order is also anti-symmetric.
%% It is clear that the partial order $\preceq^*$ we defined is reflexive and transitive, i.e., $\forall c, c \preceq^* c$ and if $c_1\preceq^* c_2$, $c_2\preceq^* c_3$, then $c_1\preceq^* c_3$. 

We write $\cfr_i \prec \cfr_j$ when $\cfr_i \preceq \cfr_j$ and $\cfr_i \not= \cfr_j$. Further,
we define a direct successor relation $\lessdot$ as follows: $\cfr_i \lessdot \cfr_j$
iff $\cfr_i \prec \cfr_j$ and there does not exist any $\cfr_k$ such that $\cfr_i \prec \cfr_k \prec \cfr_j.$
%% $\forall
%% c_i, c_j \in \Ccal, c_i \prec c_j$ iff $\forall c_k \in \Ccal$
%% s.t. $c_i\preceq^* c_k$ and $c_k \preceq^* c_j$, either $c_i = c_k$ or
%% $c_j = c_k$.

During proof search, associated constraints are generated as follows.
% \begin{definition}
% To a given symbolic derivation $\Pi$, we associate a set of constraints $\Ccal(\Pi)$ as follows:
% \begin{itemize}
% \item If $\Pi$ ends with $id$ or $\top^* R$: \\

% \begin{tabular}{c@{$\qquad$}c}
% \AxiomC{\ }
% \RightLabel{\tiny $id$}
% \UnaryInfC{$\mathcal{G}||\Gamma;\fvw_1:P \vdash \fvw_2:P;\Delta$} 
% \DisplayProof 
% &
% \AxiomC{\ }
% \RightLabel{\tiny $\top^* R$}
% \UnaryInfC{$\mathcal{G}||\Gamma \vdash \fvw:\top^*;\Delta$}
% \DisplayProof
% \end{tabular}\\
% then $\Ccal(\Pi)$ is, respectively, $\{\Gcal \vdash^?_R (w_1 = w_2) \}$ or $\{\Gcal \vdash^?_R (w = \epsilon)\}$.

% \item If $\Pi$ ends with $\mand R$ (resp. $\mimp L$), with principal formula $\fvu : A \mand B$
% (resp. $\fvu : A \mimp B$), 
% left premise derivation $\Pi_1$, right-premise derivation $\Pi_2,$
% and with new labelled formulae
% $\fvx : A$ and $\fvy : B$
% in the left and right premises,
% then $$\Ccal(\Pi) = \Ccal(\Pi_1) \cup \Ccal(\Pi_2) \cup \{ \Gcal
% \vdash^?_R (\fvx, \fvy \simp \fvu) \}$$
% (resp. $\Ccal(\Pi) = \Ccal(\Pi_1) \cup \Ccal(\Pi_2) \cup \{ \Gcal
% \vdash^?_R  (\fvx, \fvu \simp \fvy) \}).$

% \item If $\Pi$ ends with any other rule, with premise derivations $\{ \Pi_1,\dots,\Pi_n\}$, then
% $\Ccal(\Pi) = \bigcup \{\Ccal(\Pi_1),\dots,\Ccal(\Pi_n)\}.$
% \end{itemize}
% \end{definition}
\begin{definition}
\label{def:intro_constr}
To a given symbolic derivation $\Pi$, we associate a set of
constraints $\Ccal(\Pi)$ as follows where the lowest rule
instance of $\Pi$ is:

\begin{tabular}[c]{l@{\extracolsep{2em}}p{0.8\textwidth}}
  $id$ & $\Ccal(\Pi) = \{\Gcal \vdash^?_R (\fvw_1 = \fvw_2)\}$ 
  \\
  $\top^* R$ &
  $\Ccal(\Pi) = \{\Gcal \vdash^?_R (\fvw = \epsilon)\}$
  \\
  $\mand R$ & $\Ccal(\Pi) = \Ccal(\Pi_1) \cup \Ccal(\Pi_2)
  \cup \{ \Gcal \vdash^?_R (\fvx, \fvy \simp \fvw) \}$ 
  where the left
  premise derivation is $\Pi_1$ and the right-premise derivation is
  $\Pi_2$
\\
$\mimp L$ & $\Ccal(\Pi) = \Ccal(\Pi_1) \cup \Ccal(\Pi_2)
  \cup \{ \Gcal \vdash^?_R (\fvx, \fvw \simp \fvy) \}$ where the left
  premise derivation is $\Pi_1$ and the right-premise derivation is
  $\Pi_2$
\\
 -- & If $\Pi$ ends with any other rule, with premise
 derivations
  $\{ \Pi_1,\dots,\Pi_n\}$, then $\Ccal(\Pi) = 
  \Ccal(\Pi_1) \cup \dots \cup \Ccal(\Pi_n).$
\end{tabular}
\end{definition}

Each constraint $\cfr \in \Ccal(\Pi)$ corresponds to a rule instance 
$r(\cfr)$ in $\Pi$ where $\cfr$ is generated. 
The ordering of the rules in the derivation tree of $\Pi$ then naturally
induces a partial order on $\Ccal(\Pi)$. That is, 
let $\preceq^\Pi$ be an ordering on $\Ccal(\Pi)$ defined as follows:
$\cfr_1 \preceq^\Pi \cfr_2$ iff the conclusion of $r(\cfr_1)$ 
appears in the path from the root sequent to the conclusion of $r(\cfr_2).$
Then obviously $\preceq^\Pi$ is a partial order.

The following property of $\Ccal(\Pi)$ is easy to verify.
\begin{lemma}
\label{lm:derivation-constraint}
Let $\Pi$ is a symbolic derivation.
Then $(\Ccal(\Pi), \preceq^\Pi)$ is a constraint system. 
Moreover, if the root sequent is ground, then $(\Ccal(\Pi), \preceq^\Pi)$
is well-formed.
\end{lemma}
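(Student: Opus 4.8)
The plan is to verify each defining condition of Definition~\ref{def:intro_constr}'s target structure separately, working purely by inspection of the rules in Figure~\ref{fig:fvlsbbi} together with the clauses of Definition~\ref{def:intro_constr}. Since a symbolic derivation $\Pi$ is a finite tree, $\Ccal(\Pi)$ is a finite set, so the relation $\preceq^\Pi$ (already observed to be a partial order) is automatically well-founded. It therefore remains only to check \textbf{Monotonicity} for the first claim, and additionally \textbf{Unique variable origin} for the second.

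For Monotonicity, the key invariant I would isolate is that the relational-atom component $\Gcal$ of a sequent never shrinks as one moves upward along a branch. Inspecting Figure~\ref{fig:fvlsbbi}, the only rules that alter the left-hand $\Gcal$ are $\mand L$, $\mimp R$ and $\top^* L$, each of which merely adds a single relational atom in its premise; every other rule copies $\Gcal$ unchanged into all premises, and no rule ever deletes a relational atom. Hence if a sequent $s_1$ lies below a sequent $s_2$ on a common branch, the $\Gcal$-part of $s_1$ is contained in that of $s_2$. Now if $\cfr_1 \preceq^\Pi \cfr_2$, then by definition the conclusion of $r(\cfr_1)$ lies on the path from the root to the conclusion of $r(\cfr_2)$, i.e. $r(\cfr_1)$ is at or below $r(\cfr_2)$. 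As $\Gcal(\cfr_i)$ is precisely the relational-atom part of the conclusion of $r(\cfr_i)$, the invariant yields $\Gcal(\cfr_1) \subseteq \Gcal(\cfr_2)$, which is exactly Monotonicity. This settles the first assertion.

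For the second assertion I would assume the root sequent is ground and verify Unique variable origin. The crucial observation is that free variables enter a symbolic derivation only through the $\mand R$ and $\mimp L$ rules, where by the side conditions $\sharp$ and $\natural$, and the global convention that free variables created in distinct branches are pairwise distinct, the introduced variables are fresh. Consequently each free variable $\fvx$ occurring in $\Ccal(\Pi)$ has a unique rule instance at which it is introduced, and the constraint generated there, call it $\cfr(\fvx)$, carries $\fvx$ on its right-hand side. Since $\fvx$ is fresh at that instance and everything below it is ground, $\fvx$ occurs in no sequent strictly below the introducing instance; in particular it does not occur in $\Gcal_x$, the relational-atom part of the conclusion of $r(\cfr(\fvx))$, nor in any constraint $\cfr'$ with $\cfr' \prec^\Pi \cfr(\fvx)$. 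Conversely every constraint mentioning $\fvx$ is generated in the subtree above the introducing instance, so $\cfr(\fvx)$ is the unique $\preceq^\Pi$-minimum constraint in whose right-hand side $\fvx$ occurs, establishing the origin property.

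The argument is essentially bookkeeping, and the one delicate point, which I would treat as the main obstacle, is pinning down freshness and branch-disjointness of free variables precisely enough to conclude that each $\fvx$ has a single introduction point below which it never appears. This is exactly where the ground-root hypothesis is needed: without it a free variable could already occur in the root, having no introducing rule and hence no well-defined origin, so well-formedness would fail. I would also remark that the clause ``$\fvx$ does not occur in any $\cfr'$ where $\cfr' \preceq \cfr(\fvx)$'' must be read strictly, as $\cfr' \prec \cfr(\fvx)$, since $\fvx$ of course occurs in $\cfr(\fvx)$ itself.
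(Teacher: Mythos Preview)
Your proposal is correct and in fact more detailed than the paper's own treatment: the paper simply remarks that the property ``is easy to verify'' and gives no proof. Your argument supplies exactly the bookkeeping one would expect, correctly isolating the two points that need checking (monotonicity of $\Gcal$ along branches, and the freshness discipline for free variables introduced by $\mand R$ and $\mimp L$), and your closing observation about reading the clause on $\cfr' \preceq \cfr(\fvx)$ strictly is a fair point about a minor imprecision in the definition.
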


Given a symbolic derivation $\Pi$, we define $\Cbb(\Pi)$ as the
constraint system $(\Ccal(\Pi), \preceq^\Pi)$ as defined above.

%% \begin{proof}
%% Each constraint in $\Ccal(\Pi)$ is associated with an instance of a rule in $\Pi$; this induces
%% a partial order on the constraints as follows: $c_1 \preceq c_2$ iff the rule instance where $c_1$
%% originates from appears in the path from the root to the rule instance where $c_2$ originates from, in the derivation
%% tree of $\Pi.$ It is easy to see that this gives us a partial order. The variable-origination property
%% of $\Ccal(\Pi)$ is also satisfied by the fact that new variables can only be created at $\mand R$ 
%% and $\mimp L$. So the minimum constraint for each variable is the constraint generated by the rule
%% instance where these variables are created.
%% \qed
%% \end{proof}

A consequence of Lemma~\ref{lm:derivation-constraint} is 
that if $\Cbb(\Pi) \not = \{~\}$, then there exists a 
minimum constraint $\cfr$, w.r.t. the partial order $\preceq^\Pi$, such that $\Gcal(\cfr)$ is ground. 
%% As we shall see later, this gives
%% us a sound procedure to solve $\Ccal(\Pi).$

We now define what it means for a constraint system to be solvable. This is a bit complicated,
because we need to capture that (ternary) relational atoms created by the solution need to be
accumulated across different constraints, in order to guarantee soundness of $\fvlsbbi.$
A {\em free-variable substitution} $\theta$ is a mapping from free variables to 
free-variables or labels with finite domain. 
We denote with $dom(\theta)$ the domain of $\theta$. 
Given $\theta$ and a set $V$ of free variables, $\theta\uparrow V$ is
the substitution obtained from $\theta$ by restricting the domain to $V$, i.e.,
$$
\fvx(\theta\uparrow V) = 
\left\{
\begin{array}{ll}
\fvx\theta & \mbox{if $\fvx \in V$}\\
\fvx & \mbox{otherwise.}
\end{array}
\right.
$$
Given $\theta$ and $\theta'$ such that $dom(\theta') \subseteq dom(\theta)$,
we define $\theta\setminus \theta'$ as the substitution: 
$$
\fvx(\theta \setminus \theta') = 
\left\{
\begin{array}{ll}
\fvx\theta & \mbox{if $\fvx \not \in dom(\theta')$ } \\
\fvx & \mbox{otherwise.}
\end{array}
\right.
$$

%As label substitution is eliminated since $\ilsbbi$, we are concerned only with free-variable
%substitutions. So from now on, we simply refer to them as substitutions. 

\begin{definition}[Simple constraints and their solutions]
A constraint $\cfr$ is {\em simple}  if its left hand side $\Gcal(\cfr)$  contains no free variables. 
A {\em solution} $(\theta, \sigma)$ to a simple constraint $\cfr$ is a
substitution $\theta$ and 
a sequence $\sigma$ of structural rules such that:
\begin{itemize}
\item If $\cfr$ is $\Gcal \vdash_R^? (\fvu =\fvv)$ then $\sigma$ is a derivation of 
$\Gcal \vdash_R (\fvu\theta = \fvv\theta).$
\item If $\cfr$ is  $\Gcal \vdash_R^? (\fvu, \fvv \simp \fvw)$ then $\sigma$ is a derivation of 
$\Gcal \vdash_R (\fvu\theta, \fvv\theta \simp \fvw\theta).$
\end{itemize}
\end{definition}

The minimum constraints of a well-formed constraint system are simple. 

\begin{definition}[Restricting a constraint system]
\label{def:res-cs}
Let $\Cbb=(\Ccal, \preceq)$ be a well-formed constraint system, and $\cfr$ be a minimum (simple) constraint in $\Cbb$.
%(which by definition must be a simple constraint).
Let $(\theta, \sigma)$ be a solution to $\cfr$ and  
$\Gcal' = \Scal(\Gcal(\cfr), \sigma)$. 
Define a function $f$ on constraints: 
%as follows: 
$$
f(\cfr') = 
\left\{
\begin{array}{ll}
(\Gcal' \cup \Gcal\theta \vdash^?_R C\theta) 
& \mbox{ if $\cfr' = (\Gcal \vdash^?_R C) \in \Ccal\setminus\{\cfr\} \ and \ \cfr \preceq \cfr'$,} \\
\cfr' & \mbox{ otherwise.}
\end{array}
\right.
$$
The {\em restriction of $\Cbb$}
by $(\cfr,\theta,\sigma)$, 
%denoted by 
written $\Cbb \uparrow
(\cfr,\theta,\sigma)$, 
is %defined as
the pair $(\Ccal', \preceq')$, where
(1) $\Ccal' = \{ f(\cfr') \mid \cfr' \in \Ccal\setminus \{\cfr\} \}$ and
(2) $f(\cfr_1) \preceq' f(\cfr_2)$ iff $\cfr_1 \preceq \cfr_2$.
\end{definition}
% \begin{definition}[Restricting a constraint system]
% \label{def:res-cs}
% Let $\Cbb=(\Ccal, \preceq)$ be a well-formed constraint system, and $\cfr$ be a minimum constraint in $\Cbb$
% (which by definition must be a simple constraint).
% Let $(\theta, \sigma)$ be a solution to $\cfr$ and  
% $\Gcal' = \Scal(\Gcal(\cfr), \sigma)$. 
% Define a function $f$ on constraints as follows: 
% $$
% f(\cfr') = 
% \left\{
% \begin{array}{ll}
% (\Gcal' \cup \Gcal\theta \vdash^?_R C\theta) 
% & \mbox{ if $\cfr' = (\Gcal \vdash^?_R C) \in \Ccal\setminus\{\cfr\} \ and \ \cfr \preceq \cfr'$,} \\
% \cfr' & \mbox{ otherwise.}
% \end{array}
% \right.
% $$
% Then the {\em restriction of $\Cbb$}
% by $(\cfr,\theta,\sigma)$, denoted by $\Cbb \uparrow (\cfr,\theta,\sigma)$, is defined as
% the pair $(\Ccal', \preceq')$, where
% \begin{itemize}
% \item $\Ccal' = \{ f(\cfr') \mid \cfr' \in \Ccal\setminus \{\cfr\} \}$ and
% \item $f(\cfr_1) \preceq' f(\cfr_2)$ iff $\cfr_1 \preceq \cfr_2$.
% \end{itemize}

%% \begin{align*}
%% & \Ccal_n = \{ (\Gcal' \cup \Gcal\theta \vdash^?_R C\theta)  \mid c' = (\Gcal \vdash^?_R C) \in \Ccal\setminus\{c\} \ and \ c\preceq c'\}\\
%% & \Ccal_o = \{ c' \mid c'\in\Ccal\setminus\{c\} \mbox{ and } c\not\preceq c' \}\\
%% & (\Ccal, \preceq) \uparrow (c,\theta,\sigma) =  \Ccal_n \cup \Ccal_o.
%% \end{align*}
%% \end{definition}

\begin{lemma}
The pair $\Cbb \uparrow (\cfr,\theta,\sigma)$ as defined
in Definition~\ref{def:res-cs} is a well-formed constraint system.
\end{lemma}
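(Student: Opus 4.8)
The plan is to verify that $\Cbb \uparrow (\cfr,\theta,\sigma) = (\Ccal', \preceq')$ satisfies all three requirements in the definition of a well-formed constraint system: it is a partial order, it satisfies \textbf{Monotonicity}, and it satisfies \textbf{Unique variable origin}. First I would observe that $\preceq'$ is a well-founded partial order: since $f$ acts injectively on $\Ccal\setminus\{\cfr\}$ (it merely augments left-hand sides and applies $\theta$, preserving the identity of each constraint's position in the derivation), the transport of $\preceq$ along $f$ in clause (2) of Definition~\ref{def:res-cs} yields an order isomorphic to $\preceq$ restricted to $\Ccal\setminus\{\cfr\}$. Well-foundedness is thus inherited directly from $\Cbb$.

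Next I would check \textbf{Monotonicity}. Suppose $f(\cfr_1) \preceq' f(\cfr_2)$, i.e.\ $\cfr_1 \preceq \cfr_2$ in $\Cbb$. I must show $\Gcal(f(\cfr_1)) \subseteq \Gcal(f(\cfr_2))$. I would split into cases according to whether $\cfr \preceq \cfr_1$ and $\cfr \preceq \cfr_2$. Since $\cfr$ is a minimum element and $\cfr_1 \preceq \cfr_2$, the only nontrivial case is when $f$ modifies one but not the other. If $\cfr \preceq \cfr_1$ then also $\cfr \preceq \cfr_2$, and both left-hand sides become $\Gcal' \cup \Gcal(\cfr_i)\theta$; monotonicity of the original system gives $\Gcal(\cfr_1) \subseteq \Gcal(\cfr_2)$, hence $\Gcal(\cfr_1)\theta \subseteq \Gcal(\cfr_2)\theta$, and adding the common set $\Gcal'$ preserves the inclusion. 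If $\cfr \not\preceq \cfr_1$ then $f(\cfr_1) = \cfr_1$, and I must argue that $\Gcal(\cfr_1) \subseteq \Gcal(f(\cfr_2))$; here $\Gcal(\cfr_1)$ is already free-variable–related to $\Gcal(\cfr_2)$ by the original monotonicity, and since $\cfr \preceq \cfr_1$ fails while $\cfr$ is minimum, applying $\theta$ and unioning with $\Gcal'$ on the $\cfr_2$ side can only enlarge, so the inclusion survives.

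The main obstacle is \textbf{Unique variable origin}. I would first note that the solved minimum constraint $\cfr$ has been removed, so any free variable $\fvx$ whose origin in $\Cbb$ was $\cfr$ must either be eliminated by $\theta$ or acquire a new origin. The key observation is that $\cfr$, being minimum with ground left-hand side, is \emph{simple}; by the definition of a solution, $\theta$ is defined exactly on the free variables introduced in the right-hand side of $\cfr$ (its originating variables), so after applying $\theta$ these variables no longer occur as fresh origins anywhere. For every remaining free variable $\fvx$ with $\fvx \in fv(\Ccal')$, I would show its origin $\cfr(\fvx) \ne \cfr$ in $\Cbb$, so $f(\cfr(\fvx))$ is well-defined and remains the unique minimum constraint in which $\fvx$ appears on the right but not on the left. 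The delicate point is confirming that augmenting left-hand sides by $\Gcal' \cup {-}\theta$ does not accidentally introduce $\fvx$ into the left side of its own origin or of some smaller constraint: since $\Gcal'$ is \emph{ground} (it is $\Scal$ applied to the ground $\Gcal(\cfr)$), and $\theta$ substitutes only the originating variables of $\cfr$ which have been removed, no new occurrence of a surviving $\fvx$ can appear on any left-hand side below its origin. This preserves the ``occurs on the right but not on the left, and in nothing smaller'' condition.

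Finally I would remark that the ordering isomorphism established in the first step transports the notion of ``minimum constraint witnessing $\fvx$'' faithfully, so uniqueness of the origin is inherited once we have checked that the origin is never the deleted constraint $\cfr$. Assembling the three verifications—partial order, monotonicity, and unique variable origin—completes the proof that $\Cbb \uparrow (\cfr,\theta,\sigma)$ is a well-formed constraint system.
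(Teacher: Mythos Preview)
The paper states this lemma without proof, so there is nothing to compare your argument against; your three-way decomposition (well-founded partial order, Monotonicity, Unique variable origin) is the natural structure any proof would take, and goes well beyond what the paper supplies.

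There is, however, a gap in your Monotonicity case. In the sub-case $\cfr \not\preceq \cfr_1$, $\cfr \preceq \cfr_2$, $\cfr_1 \preceq \cfr_2$, you claim that ``applying $\theta$ and unioning with $\Gcal'$ on the $\cfr_2$ side can only enlarge, so the inclusion survives.'' But $\theta$ is a substitution, not a union: from $\Gcal(\cfr_1) \subseteq \Gcal(\cfr_2)$ you obtain $\Gcal(\cfr_1)\theta \subseteq \Gcal(\cfr_2)\theta$, and this yields $\Gcal(\cfr_1) \subseteq \Gcal(\cfr_2)\theta$ only if $\theta$ acts as the identity on $\Gcal(\cfr_1)$. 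You would need that the variables in $dom(\theta)$ (those originating in $\cfr$) do not occur in $\Gcal(\cfr_1)$, yet the paper's \textbf{Unique variable origin} condition only forbids such variables from appearing in constraints strictly below $\cfr$, not in constraints incomparable to $\cfr$. For constraint systems $\Cbb(\Pi)$ arising from symbolic derivations this sub-case cannot occur, because $\preceq^\Pi$ is a forest order and the predecessors of any node form a chain; but you have not invoked that structure, and the abstract definition by itself does not rule the case out. A related looseness appears in your origin argument: you assert that ``$\theta$ is defined exactly on the free variables introduced in the right-hand side of $\cfr$,'' yet the definition of a solution to a simple constraint imposes no restriction on $dom(\theta)$; that restriction holds only when the solution is obtained via $\theta_i = \theta \uparrow fv(\cfr_i)$ as in Definition~\ref{dfn:sol_wf_cons}, and should be stated as an explicit hypothesis if you rely on it.
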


% \begin{definition}[Solution to a well-formed constraint system]
% \label{dfn:sol_wf_cons}
% Let $\Cbb = (\{\cfr_1,\dots,\cfr_n\}, \preceq)$ be a well-formed constraint system. A
% {\em solution} $(\theta, \{\sigma_1,\dots,\sigma_n\})$ to $\Cbb$ is
% a substitution and a set of sequences of
% structural rules, such that: 
% \begin{itemize}
% \item If $n = 0$ then $(\theta, \{\sigma_1,\dots,\sigma_n\})$ is trivially a solution. 
% \item If $n \geq 1$ then there must exist some minimum constraint, which must also 
% be a simple constraint. 
% For any minimum constraint $\cfr_i$, let $\theta_i$ be $\theta$ with the domain restricted to the free variables
% of $\cfr_i.$ In this case, $(\theta_i, \sigma_i)$ must be a
% solution to $\cfr_i$, 
% and $(\theta\setminus\theta_i, \{\sigma_1,\dots, \sigma_n \}\setminus \sigma_i)$ must be a solution
% to $\Cbb \uparrow (\cfr_i, \theta_i, \sigma_i)$.

% \end{itemize}

% \end{definition}
\begin{definition}[Solution to a well-formed constraint system]
\label{dfn:sol_wf_cons}
Let $\Cbb = (\{\cfr_1,\dots,\cfr_n\}, \preceq)$ be a well-formed constraint system. A
{\em solution} $(\theta, \{\sigma_1,\dots,\sigma_n\})$ to $\Cbb$ is
a substitution and a set of sequences of
structural rules, such that: 
\begin{description}
\item[\rm If $n = 0$] then $(\theta, \{\sigma_1,\dots,\sigma_n\})$ is trivially a solution. 
\item[\rm If $n \geq 1$] then there must exist some minimum (simple) constraint in $\Cbb$.
%  which must also be a simple constraint.  
For any minimum constraint
  $\cfr_i$, let $\theta_i = \theta\uparrow fv(\cfr_i)$, 
%% be $\theta$ with the domain restricted to
%%   the free variables of $\cfr_i$, 
then
%In this case, 
$(\theta_i, \sigma_i)$ is a
solution to $\cfr_i$, 
and $(\theta\setminus\theta_i, \{\sigma_1,\dots, \sigma_n \}\setminus \sigma_i)$ is a solution
to $\Cbb \uparrow (\cfr_i, \theta_i, \sigma_i)$.
\end{description}
\end{definition}

In Definition~\ref{dfn:sol_wf_cons}, suppose a constraint system $\Cbb
= (\{\cfr_1,\cdots,\cfr_n\},\preceq)$ has a solution
$(\theta,\{\sigma_1,\cdots,\sigma_n\})$, then for each constraint
$\cfr_i$ in $\Cbb$, let $\cfr_i'$ be the simple constraint obtained
from $\cfr_i$ in the process of restricting $\Cbb$, there is a
solution $(\theta_i,\sigma_i)$ for $\cfr_i'$, where $\theta_i$ is a
restriction on $\theta$ that contains the free variables in $\cfr_i'$,
and $\sigma_i \in \{\sigma_1,\cdots,\sigma_n\}$. In this case, we will
simply write $\sigma_i = dev(\cfr_i)$ for the mapping between a
(possibly ungrounded) constraint in the system and the corresponding
derivation in the solution.

% A putative $\fvlsbbi$ proof of a formula consists of two parts: a
% derivation and a solution to the list of constraints.  A putative
% $\fvlsbbi$ proof is a proof if: (1) the derivation tree is closed and
% it generates a list of well formed constraints; (2) there is a
% solution to each constraint; and (3) the solutions are consistent.

\begin{theorem}[Soundness]
\label{thm:sound_fvlsbbi}
Let $\Pi$ be a symbolic derivation of a ground sequent 
$\Gcal || \Gamma \vdash \Delta$.
If $\Ccal(\Pi)$ is solvable, then $\Gcal || \Gamma \vdash \Delta$ is derivable in $\iilsbbi.$
\end{theorem}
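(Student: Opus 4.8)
The plan is to prove soundness by induction on the structure of the symbolic derivation $\Pi$, translating each $\fvlsbbi$ rule application into the corresponding $\iilsbbi$ rule application after instantiating the free variables via the solution substitution $\theta$. The central idea is that a symbolic derivation is a ``schematic'' proof whose free variables stand for labels yet to be chosen; a solution to $\Ccal(\Pi)$ tells us exactly how to choose them, and crucially also supplies (via the sequences $\sigma_i$) the derivations of the $\vdash_R$ entailments that the $\iilsbbi$ rules $id$, $\top^* R$, $\mand R^\sharp$, and $\mimp L^\natural$ require as side premises. So the proof is essentially a bookkeeping argument showing that applying $\theta$ to $\Pi$ yields a well-formed $\iilsbbi$ derivation.

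First I would set up the induction carefully, because the statement as given is not quite strong enough to carry the induction through: the sub-derivations of $\Pi$ have \emph{non-ground} root sequents (free variables introduced lower in the tree appear above), and the associated constraint system for a sub-derivation is a restriction of the original. Thus I would strengthen the inductive hypothesis to speak about symbolic derivations with possibly non-ground roots together with a solution to their (restricted) constraint system: namely, if $\Pi$ is a symbolic derivation with constraint system $\Cbb(\Pi)$ and $(\theta, \{\sigma_1,\dots,\sigma_n\})$ is a solution to $\Cbb(\Pi)$, then applying $\theta$ throughout $\Pi$ (and discharging each side entailment using the corresponding $\sigma_i = dev(\cfr_i)$) produces a valid $\iilsbbi$ derivation of $\Gcal\theta \| \Gamma\theta \vdash \Delta\theta$. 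The induction is then on $ht(\Pi)$, and the base cases $id$ and $\top^* R$ are where the solution's equality derivations $\sigma$ directly furnish the $\vdash_R$ premise of the matching $\iilsbbi$ rule.

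The inductive step splits on the lowest rule of $\Pi$. For the additive rules ($\bot L$, $\top R$, $\top^* L$, $\land L$, $\land R$, $\limp L$, $\limp R$) and for $\mand L$, $\mimp R$, the translation is immediate: these rules do not generate constraints and do not introduce free variables in the principal position, so by Definition~\ref{def:intro_constr} the constraint system of $\Pi$ is just the (disjoint) union of those of its premises, and the same $\theta$ restricts appropriately to each sub-derivation. The genuinely delicate cases are $\mand R^\sharp$ and $\mimp L^\natural$, where $\Pi$ introduces fresh free variables $\fvx,\fvy$ (resp.\ $\fvx,\fvz$) and generates a relational constraint. Here I would invoke \textbf{Unique variable origin} to identify this constraint as the origin $\cfr(\fvx)$, verify it is (or becomes, after restriction) the minimum simple constraint whose solution $(\theta_i,\sigma_i)$ Definition~\ref{dfn:sol_wf_cons} guarantees, and then apply the $\iilsbbi$ rule $\mand R^\sharp$ (resp.\ $\mimp L^\natural$) using $\sigma_i$ as the witnessing derivation of $\Gcal\theta \vdash_R (\fvx\theta,\fvy\theta \simp \fvw\theta)$. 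The accumulated relational atoms $\Scal(\Gcal,\sigma)$ produced by this solution must match exactly the context $\Scal(\Gcal(\cfr),\sigma) \| \Gamma$ appearing in the premises of the $\iilsbbi$ rule; this is precisely what the restriction operation in Definition~\ref{def:res-cs} records, and applying the induction hypothesis to the premise derivations against the restricted solution closes the case.

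The main obstacle I anticipate is precisely this matching of contexts: showing that the relational atoms accumulated by solving the minimum constraint (via $\Scal(\Gcal(\cfr),\sigma)$, pushed onto every $\succeq \cfr$ constraint by $f$ in the restriction) coincide with the side formulae the $\iilsbbi$ rules expect in their premises. This requires a careful lemma relating $\Scal$-accumulation under $\sigma$ to the $\vdash_R$ entailment and to the restriction operator, essentially confirming that ``what the solution adds to the left-hand side'' is exactly ``what the $\iilsbbi$ derivation must carry upward.'' Monotonicity of the constraint system ensures these accumulated atoms are never lost as we move up the tree, and well-foundedness of $\preceq$ guarantees the peeling-off of minimum constraints in Definition~\ref{dfn:sol_wf_cons} terminates in lockstep with the induction on $ht(\Pi)$. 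Once this accounting lemma is in place, each rule case reduces to a routine check, and specialising the strengthened statement to a ground root sequent (where $\theta$ acts trivially on the endsequent) yields the theorem as stated.
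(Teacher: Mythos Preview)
Your proposal is correct and identifies all the essential ingredients: induction on the height of $\Pi$, with the constraint-generating rules $\mand R$ and $\mimp L$ handled by peeling off the minimum constraint and using the restriction operation of Definition~\ref{def:res-cs} to prepare the sub-derivations for the inductive hypothesis. The paper's proof follows exactly this strategy.

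There is one packaging difference worth noting. You propose to \emph{strengthen} the inductive hypothesis to symbolic derivations with possibly non-ground roots; the paper instead keeps the hypothesis as stated (ground root) and, at each $\mand R$/$\mimp L$ step, \emph{transforms} the premise sub-derivations so that they again have ground roots: it substitutes $\fvx\mapsto\fvx\theta_i$, $\fvy\mapsto\fvy\theta_i$ throughout $\Pi_1,\Pi_2$ and then weakens every relational context by $\Scal(\Gcal,\sigma_i)$. The resulting $\Pi_1'',\Pi_2''$ are symbolic derivations of ground sequents whose constraint systems are exactly the restricted system $\Cbb\uparrow(\cfr_i,\theta_i,\sigma_i)$, so the original hypothesis applies directly. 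This sidesteps the technical nuisance you anticipated, namely that $\Cbb(\Pi_j)$ for a non-ground-root sub-derivation need not be well-formed (Lemma~\ref{lm:derivation-constraint} requires a ground root), so your strengthened statement would need an auxiliary notion of solvability relative to an ambient grounding. Both setups work; the paper's is slightly more economical because it reuses well-formedness and the solution definition verbatim.
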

The proof is done by induction on the height of symbolic derivations.
The basic idea of the proof is that one progressively ``ground'' a symbolic
derivation, starting from the root of the derivation. At each inductive step
we show that grounding the premises corresponds to restricting the constraint
system induced by the symbolic derivation. 
The detailed proof can be found in Appendix~\ref{app:sound_fvlsbbi}.

%% The soundness proof is based on the fact that we transfer the
%% accumulated relational atoms from sequents to the constraint system,
%% but there is no loss of information.

%% \begin{proof}
%% ({\em Outline.}). By induction on $ht(\Pi).$ The interesting case is when $\Pi$ ends with
%% a rule that creates a constraint $c_i$. In that case, $c_i$ will be the {\em bottom}
%%  element of $\Ccal(\Pi)$. Suppose $(\theta, \{\sigma_1,\dots,\sigma_n\})$ is a solution
%% to $\Ccal(\Pi)$, and suppose that in particular $(\theta_i, \sigma_i)$ is a solution to $c_i$.
%% By Definition~\ref{dfn:sol_wf_cons}, $\theta_i$ is a restriction of $\theta$ and $\sigma_i \in \{\sigma_1,\dots,\sigma_n\}$.
%% Suppose $\Pi_1$ and $\Pi_2$ are premise derivations of the last rule in $\Pi$. 
%% We first apply the substitution $\theta_i$ to $\Pi_1$ and $\Pi_2$, obtaining $\Pi_1'$ and
%% $\Pi_2'$, respectively. Obviously the end-sequents of both derivations are ground. Moreover, 
%% we can show that the restricted constraints $\Ccal(\Pi) \uparrow (c_i, \theta_i, \sigma_i)$
%% correspond to $\Ccal(\Pi_1') \cup \Ccal(\Pi_2')$. So we can apply the induction hypothesis to
%% construct the derivation of $\Gcal || \Gamma \vdash \Delta$ in $\iilsbbi.$ \qed
%% \end{proof}

To prove the completeness of $\fvlsbbi$, we show that for every cut-free derivation 
$\Pi$ of a (ground) sequent
in $\iilsbbi$, there is a symbolic derivation $\Pi'$ of the same sequent
such that $\Ccal(\Pi')$ is solvable. It is quite obvious that $\Pi'$ should
have exactly the same rule applications as $\Pi$; the only difference is that
some relational atoms are omitted in the derivation, but instead are accumulated in the constraint system. Additionally, some (new) labels are replaced with free variables.
This is formalised in the following definition. 

\begin{definition}
Given a sequent in a $\iilsbbi$ derivation, let $\mathcal{G}$ be the
set of its relational atoms, we define $\mathcal{G}_E$ as the subset of
$\mathcal{G}$ that contains those ternary relational atoms created by
$\mand L$, $\mimp R$, and $\top^* L$. We define $\mathcal{G}_S =
\mathcal{G}\setminus \mathcal{G}_E$. We refer to $\mathcal{G}_E$ as
the \textbf{essential} subset of $\mathcal{G}$, and $\mathcal{G}_S$ as
the \textbf{supplementary} subset of $\mathcal{G}$.
\end{definition}

%% Given a sequent $\Gcal_E;\Gcal_S;\Gamma\vdash\Delta$ in an $\iilsbbi$
%% derivation, and a corresponding symbolic sequent
%% $\Gcal_E';\Gamma'\vdash\Delta'$, with an constraint system $\Ccal$ and
%% it's solution $S$, we show a mechanism to compute $\Gcal_E\cup\Gcal_S$
%% based on $\Ccal$, $S$, and the free variables in the symbolic sequent.
%% First of all, we need to define a list of linearly ordered free
%% variables.  

For a list $L$, we denote by $head(L)$ the first element
in the list $L$ and $tail(L)$ the list of $L$ without the first
element, and $end(L)$ the last element in $L$. We denote by $L_1@L_2$
the concatenation of two lists $L_1$ and $L_2$, and $pre(x)$ the
predecessor of $x$ in a list $L$, and $suc(x)$ the successor of $x$ in
$L$.

Given a well-formed constraint system $(\Ccal, \preceq)$, 
we can define a partial order $\preceq^v$ on free variables of $\Ccal$ as follows:
$\fvx \preceq^v \fvy$ iff $\cfr(\fvx) \preceq \cfr(\fvy).$ That is, free variables are ordered
according to their originations.
The relations $\prec^v$ and $\lessdot^v$ are defined analogously to $\prec$ and $\lessdot,$
i.e., as the non-reflexive subset of $\preceq^v$ and the successor relation. 

\begin{definition}[A thread of variables]
\label{dfn:linear_fv}
Let $\Cbb = (\Ccal, \preceq)$ be a well-formed constraint system, and let $X$ be a list of free variables 
$\fvx_1,\ldots,\fvx_n,$ where $n \geq 0.$ Let $\preceq^v$ be the partial order on variables, 
derived from $\preceq.$ 
We say $X$ is a {\em thread of free variables of $\Cbb$} (or simply a {\em thread} of $\Cbb$) 
iff it satisfies the following conditions:
\begin{enumerate}
\item $\forall \fvx \in X$, $\fvx\in fv(\Ccal)$
\item For every $i \in \{1,\dots,n-1\}$, $\fvx_i \lessdot^v \fvx_{i+1}.$ 
\item If $n\geq 1$, then $\fvx_1$ is a minimum element and $\fvx_n$ is a maximum element of $\preceq^v.$
\item If $n\geq 1$, then $\cfr(\fvx_1)$ is a minimum constraint in $\Ccal$.
%% \item If $head(X) = x_0$ exists, then $\cfr(x_0)\in\Ccal$ must be a simple constraint. Moreover, $\forall x\in X$ s.t. $x \not = x_0$, let $x' = pre(x)$, then $\cfr(x)\in\Ccal, \cfr(x')\in\Ccal$, and $\cfr(x')\prec \cfr(x)$. 
\end{enumerate}
\end{definition}

A thread is effectively those variables that are generated along a
certain branch in a $\fvlsbbi$ symbolic derivation. It is not hard to
verify that in a valid symbolic derivation in $\fvlsbbi$ of a ground sequent, the set of
free variables in any symbolic sequent in the derivation can be linearly ordered as a thread.

\begin{definition}
\label{dfn:constraints-ext}
Let $\Cbb = (\Ccal_1, \preceq_1)$ be a well-formed constraint system, let $X$ be a thread of $\Cbb_1$
and let $\Cbb_2 = (\Ccal_2, \preceq_2)$ be a constraint system (but not neccessarily well-formed) such that 
$X$ consists of free variables in $fv(\Ccal_1) \cap fv(\Ccal_2)$. 
Furthermore, assume that every variable $\fvx$ in $\Ccal'$, except for 
those in $X$, satisfies the variable origination property, i.e., $\fvx$ originates from
a constraint in $\Ccal'.$ The {\em composition} of $\Cbb_1$ and $\Cbb_2$
along the thread $X$, written $\Cbb_1 \circ^X \Cbb_2$,  
is the constraint system $(\Ccal, \preceq)$ such that:
\begin{itemize}
\item $\Ccal = \Ccal_1 \cup \Ccal_2$; and
\item For $\cfr_1,\cfr_2 \in \Ccal$, $\cfr_1 \preceq \cfr_2$ iff either one of the following holds:
\begin{itemize}
\item $\cfr_1 \preceq_1 \cfr_2$,
\item $\cfr_1 \preceq_2 \cfr_2$, or
\item $X$ is non-empty, $\fvy = end(X)$, $\cfr_1 = \cfr(\fvy)$ 
and $\cfr_2 \in \Ccal_2.$
\end{itemize}
\end{itemize}
\end{definition}
This definition basically says that the composition of $\Cbb_1$ and $\Cbb_2$ along $X$
is obtained by simply ordering the constraints so that all constraints $\Ccal_2$
are greater than $\cfr(\fvy)$, where $\fvy$ is the last variable in $X$. 
If $X$ is empty, then $\Ccal_1$ and $\Ccal_2$ are independent, and 
$\prec$ is simply the union of $\prec_1$ and $\prec_2.$

\begin{lemma}
Let $(\Ccal, \preceq)$ be as defined in Definition~\ref{dfn:constraints-ext}.
Then $(\Ccal, \preceq)$ is well-formed.
\end{lemma}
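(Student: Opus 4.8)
The plan is to verify directly that $(\Ccal,\preceq) = \Cbb_1 \circ^X \Cbb_2$ satisfies the three defining clauses of a well-formed constraint system: that $\preceq$ is a well-founded partial order, that it is monotone ($\cfr_1 \preceq \cfr_2$ implies $\Gcal(\cfr_1) \subseteq \Gcal(\cfr_2)$), and that it enjoys \textbf{Unique variable origin}. The single structural observation that drives every case is that the composition adds no comparabilities other than the ``bridge'' pairs $(\cfr(\fvy),\cfr')$ with $\fvy = end(X)$ and $\cfr' \in \Ccal_2$: the orders $\preceq_1$ and $\preceq_2$ are retained unchanged, and all bridge edges leave $\Ccal_1$ and enter $\Ccal_2$ from the single constraint $\cfr(\fvy)$. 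I will use throughout that $\Ccal_1$ and $\Ccal_2$ are disjoint (they are the constraint sets of two disjoint parts of a symbolic derivation), and that $X$, being a thread of $\Cbb_1$, is linearly ordered by $\lessdot^v$ with maximum $\fvy$, so that $\cfr(\fvx) \preceq_1 \cfr(\fvy)$ for every $\fvx \in X$.

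First I would confirm that $\preceq$, understood as the least partial order containing the three listed relations (i.e.\ the reflexive--transitive closure of $\preceq_1$, $\preceq_2$ and the bridge edges), is a well-founded partial order. Irreflexivity of the strict part holds because there is no edge from $\Ccal_2$ back into $\Ccal_1$, so no cycle can cross the boundary, while $\prec_1$ and $\prec_2$ are themselves acyclic. Transitivity is then immediate from closure; concretely the only genuinely new comparabilities are $\cfr'' \preceq \cfr'$ for $\cfr'' \preceq_1 \cfr(\fvy)$ and $\cfr' \in \Ccal_2$. Well-foundedness follows because any $\preceq$-descending chain crosses the $\Ccal_1/\Ccal_2$ boundary at most once (edges only go from $\Ccal_1$ to $\Ccal_2$), hence decomposes into a $\preceq_2$-descending segment followed by a $\preceq_1$-descending segment, each finite by well-foundedness of $\Cbb_2$ and $\Cbb_1$.

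The heart of the argument is \textbf{Unique variable origin}, which I would establish by partitioning $fv(\Ccal)$ into $fv(\Ccal_1)\setminus X$, $fv(\Ccal_2)\setminus X$, and $X$. A variable in $fv(\Ccal_1)\setminus X$ does not occur in $\Ccal_2$, so its origin is unchanged: placing $\Ccal_2$ only above $\cfr(\fvy)$ introduces no new constraint below it, nor any new constraint mentioning it. Symmetrically, a variable in $fv(\Ccal_2)\setminus X$ does not occur in $\Ccal_1$ and, by the hypothesis on $\Cbb_2$, already has an origin in $\Ccal_2$; since the added $\Ccal_1$-constraints sit below $\Ccal_2$ and never mention it, that origin survives. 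The interesting case is $\fvx \in X$: here $\cfr(\fvx)$ is its origin in $\Cbb_1$ and satisfies $\cfr(\fvx) \preceq_1 \cfr(\fvy)$, whereas every constraint of $\Ccal_2$ is strictly above $\cfr(\fvy)$; consequently $\cfr(\fvx)$ remains the unique $\preceq$-minimum constraint in which $\fvx$ occurs on the right-hand side but not in its left-hand side, and no $\preceq$-smaller constraint mentions $\fvx$, since the only candidates strictly below $\cfr(\fvx)$ lie in $\Ccal_1$, where this already held. Thus the origin of each variable is preserved.

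I expect the main obstacle to be \textbf{Monotonicity} across the bridge. The cases coming from $\preceq_1$ and $\preceq_2$ inherit monotonicity from $\Cbb_1$ and $\Cbb_2$ directly. For a bridge comparison $\cfr(\fvy) \preceq \cfr'$ with $\cfr' \in \Ccal_2$, one must show $\Gcal(\cfr(\fvy)) \subseteq \Gcal(\cfr')$; using transitivity and monotonicity inside $\Cbb_2$ this reduces to verifying the inclusion only for the $\preceq_2$-minimal constraints of $\Ccal_2$. This is the one inclusion not guaranteed by the abstract data alone, and I would discharge it from the way $\Cbb_2$ arises in the completeness construction: $\Ccal_2$ is the constraint system of the subderivation hanging above the rule instance that generates $\cfr(\fvy)$, so the left-hand side of every constraint in $\Ccal_2$ contains all relational atoms accumulated up to that point, in particular all of $\Gcal(\cfr(\fvy))$. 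With this inclusion in hand, monotonicity holds for every comparison, and the composition $(\Ccal, \preceq)$ is a well-formed constraint system.
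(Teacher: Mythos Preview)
Your approach matches the paper's: the paper's entire proof is the single line ``This follows straightforwardly from the definition,'' so your direct verification of the three defining clauses is exactly what the paper has in mind, only spelled out in full. Your treatment of the partial-order and unique-variable-origin conditions is correct and more careful than anything the paper provides; in particular, taking $\preceq$ to be the reflexive--transitive closure of the three listed relations is the right reading, since the raw union is not transitive.

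Your observation about \textbf{Monotonicity} across the bridge is worth noting separately. Strictly speaking the lemma only asserts \emph{well-formedness}, i.e.\ the unique-variable-origin property; the phrase ``is the constraint system $(\Ccal,\preceq)$'' in Definition~\ref{dfn:constraints-ext} already asserts that the result is a constraint system, hence that monotonicity holds. You are right, however, that the abstract hypotheses of the definition do \emph{not} suffice to establish $\Gcal(\cfr(\fvy)) \subseteq \Gcal(\cfr')$ for the $\preceq_2$-minimal $\cfr' \in \Ccal_2$: nothing in the definition links the left-hand sides across the two systems. Your fix---discharging this inclusion from the concrete setting where $\Ccal_2 = \Ccal(\Pi)$ for a subderivation $\Pi$ hanging above the rule generating $\cfr(\fvy)$---is exactly what is needed, and is essentially the content of the subsequent Lemma~\ref{lem:cons_merge}. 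So you have not left a gap; rather, you have surfaced an implicit assumption that the paper's ``straightforward'' glosses over, and located precisely where it is justified.
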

\begin{proof}
This follows straightforwardly from the definition.\qed
\end{proof}

\begin{lemma}
\label{lem:cons_merge}
Let $\Cbb = (\Ccal, \preceq)$ be a well-formed constraint system and let $X$ be a thread of $\Cbb.$
Let $\Pi$ be a symbolic derivation such that the free variables in its end sequent are exactly
those in $X$.
Then $\Cbb \circ^X \Cbb(\Pi)$ is well-formed.
\end{lemma}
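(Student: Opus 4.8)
The plan is to reduce Lemma~\ref{lem:cons_merge} to the (unlabelled) lemma stated immediately after Definition~\ref{dfn:constraints-ext}, which already establishes that every pair $(\Ccal,\preceq)$ built as in that definition is well-formed. Two of the hypotheses of Definition~\ref{dfn:constraints-ext}, namely that $\Cbb$ is well-formed and that $X$ is a thread of $\Cbb$, are exactly our assumptions, so it suffices to verify the remaining conditions for the choice $\Cbb_2 = \Cbb(\Pi)$: that $\Cbb(\Pi)$ is a constraint system, that the free variables of $X$ all lie in $fv(\Ccal)\cap fv(\Ccal(\Pi))$ and that this intersection contains nothing else, and that every free variable of $\Ccal(\Pi)$ outside $X$ originates from a constraint in $\Ccal(\Pi)$. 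Once these hold, the cited lemma applies to $\Cbb\circ^X\Cbb(\Pi)$ and yields well-formedness.

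First I would invoke Lemma~\ref{lm:derivation-constraint} to get that $\Cbb(\Pi)=(\Ccal(\Pi),\preceq^\Pi)$ is a constraint system. I would stress that its end sequent is \emph{not} ground, as it carries the free variables $X$, so Lemma~\ref{lm:derivation-constraint} does not on its own deliver well-formedness of $\Cbb(\Pi)$; this is precisely the gap that the composition with $\Cbb$ repairs. Next, following the construction of $\Ccal(\Pi)$ in Definition~\ref{def:intro_constr}, I would observe that every free variable occurring in a constraint of $\Ccal(\Pi)$ is either (i) already present in the end sequent of $\Pi$, hence a member of $X$ by hypothesis, or (ii) a variable introduced fresh by an application of $\mand R$ or $\mimp L$ inside $\Pi$. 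By condition~(1) of Definition~\ref{dfn:linear_fv} every member of $X$ lies in $fv(\Ccal)$, while the freshness side conditions of $\fvlsbbi$ guarantee the variables of type (ii) are pairwise distinct and distinct from those of $X$, and in particular occur nowhere in $\Ccal$. Hence $fv(\Ccal)\cap fv(\Ccal(\Pi))=X$, discharging the intersection condition.

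For the origination condition I would reuse the argument behind the well-formed half of Lemma~\ref{lm:derivation-constraint}, relativised to the type-(ii) variables. Each such $\fvx$ is introduced by a unique rule instance, which generates the constraint $\cfr(\fvx)$ of the form $\Gcal\vdash^?_R(\fvx,\fvy\simp\fvw)$ (or its $\mimp L$ analogue) in which $\fvx$ occurs on the right-hand side but not in $\Gcal$; freshness ensures $\fvx$ appears in no constraint strictly below $\cfr(\fvx)$ in $\preceq^\Pi$, so $\cfr(\fvx)$ is indeed its origin inside $\Ccal(\Pi)$. With all the required conditions checked, the lemma following Definition~\ref{dfn:constraints-ext} gives the result.

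The main obstacle is conceptual rather than computational: one must see that the exclusion of $X$ is exactly what makes the origination property hold internally. The members of $X$ come from the root of $\Pi$ and so have no origin within $\Ccal(\Pi)$ at all; it is the composition that supplies their origins, drawn from the outer well-formed system $\Cbb$, once $\circ^X$ orders all of $\Ccal(\Pi)$ above $\cfr(end(X))$. The care required is therefore in confirming that no variable ends up with two candidate origins: type-(ii) variables originate only inside $\Ccal(\Pi)$ (they never touch $\Ccal$), and the $X$-variables originate only inside $\Cbb$, so the unique-origin property of the composite system is unambiguous.
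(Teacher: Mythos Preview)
Your proposal is correct and matches the paper's (implicit) approach: the paper states Lemma~\ref{lem:cons_merge} without proof, treating it as an immediate consequence of the preceding unnamed lemma that any composition built as in Definition~\ref{dfn:constraints-ext} is well-formed. Your argument simply spells out the verification of that definition's hypotheses for the particular choice $\Cbb_2 = \Cbb(\Pi)$, which is exactly what the paper leaves to the reader.
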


\begin{definition}
Let $\Cbb = (\Ccal, \preceq)$ be 
a well-formed constraint system and let $S = (\theta, \{\vec{\sigma}\})$
be its solution. Let $X$ be a thread of $\Cbb$. 
Define a set of relational atoms $\Scal^*(\Cbb, S, X)$ 
inductively as follows:
\begin{itemize}
\item If $n=0$ then $\Scal^*(\Cbb,S,[]) = \emptyset$
\item Suppose $n > 0$. Let $head(X) = \fvx$. Then $\cfr(\fvx)\in\Ccal$ 
is a minimum constraint of $\Cbb$, 
and there exists $\sigma_{\fvx} \in \{\vec \sigma\}$ such that
$(\theta_{\fvx}, \sigma_{\fvx})$ is a solution to $\cfr(\fvx)$, 
where $\theta_{\fvx} = \theta\uparrow fv(\cfr(\fvx)).$ 
% and $\sigma_{\fvx}\in\{\overset{\rightharpoonup}{\sigma}\}$. 
In this case,
$\Scal^*(\Cbb, S, X)$ is defined as follows. 
$$
\Scal^*(\Cbb, S, X) = 
\Scal(\Gcal(\cfr(\fvx)), \sigma_{\fvx}) \cup 
\Scal^*(\Cbb\uparrow (\cfr(\fvx),\theta_{\fvx},\sigma_{\fvx}), S', tail(X))
$$
where $S' = (\theta\setminus \theta_{\fvx}, \{\vec{\sigma}\}\setminus \{\sigma_{\fvx}\})$.

%% $$
%% \Scal^*(\Ccal, S) = 
%% \bigcup
%% \left\{ 
%% \begin{array}{l}
%% \Scal(\Gcal(c_i), \sigma_i) \cup \Scal^*(\Ccal \uparrow(c_i,\theta_i,\sigma_i), S') \mid  \\ 
%% \qquad \mbox{ $c_i$ is a minimum constraint in $\Ccal$, }  \\
%% \qquad \mbox{ $\theta_i = \theta \uparrow fv(c_i)$, and } \\
%% \qquad S' = (\theta\setminus \theta_i, \{\sigma_1,\dots,\sigma_n\}\setminus \{\sigma_i\})
%% \end{array}
%% \right\}
%% $$
%% %% for any minimum element $c_i$ of $\Ccal$,
%% %%  $c_i$ must be a simple constraint, with a solution $(\theta_i, \sigma_i)$, where
%% %% $\theta_i$ is a restriction of $\theta.$ Then 
%% %% $$
%% %% \Scal^*(\Ccal, S)_i = 
%% %% \Scal(\Gcal(c_i), \sigma_i) \cup \Scal^*(\Ccal \uparrow(c_i,\theta_i,\sigma_i), S')
%% %% $$
%% where $S' = (\theta\setminus \theta_i, \{\sigma_1,\dots,\sigma_n\}\setminus \{\sigma_i\}).$
\end{itemize}
\end{definition}

%% The process of computing $\Scal^*(\Ccal,S)$ can be seen as eliminating
%% the minimum constraints on each branch of a symbolic derivation (when $\Ccal$ is obtained from
%% the symbolic derivation), and making their direct
%% successors as the new minimum constraints, meanwhile accumulating all the
%% relational atoms. 

Notice that by
the definition of restriction to a constraint system, every time a minimum
constraint $\cfr_\fvx$ is eliminated in the second clause in the above definition, 
$\Scal(\Gcal(\cfr_\fvx),\sigma_\fvx)$ is also
added to the left hand side of every successor constraints of $\cfr_\fvx$ in $\Ccal.$
Therefore it is straightforward that the following proposition holds.

\begin{proposition}
\label{prop:scalstar}
Let $\Cbb = (\Ccal, \preceq)$ be a well-formed constraint system. 
Let $\Gcal = \Scal^*(\Cbb,S, X)$, for some thread $X$ of $\Cbb$,
let $\fvx_e = end(X)$ and let $S = (\theta,\{\overset{\rightharpoonup}{\sigma}\})$ be a solution to 
$\Cbb$. Let $\cfr = \Gcal_\cfr \vdash^?_R C_\cfr$ be a constraint not in $\Ccal$, such that $\Gcal_\cfr$ only contains free variables that occur in $\Cbb$. Let $\fvx$ be a new variable
occurring only on the right hand side of $\cfr.$
Let $\Cbb' = (\Ccal', \preceq')$ be the following constraint system: 
\begin{itemize}
\item $\Ccal' = \Ccal \cup \{\cfr \}$;
\item $\preceq'$ is the smallest extension of $\preceq$ such that
$\cfr(\fvx_e) \lessdot \cfr.$
\end{itemize}
Let $(\theta_x,\sigma_x)$ be the solution to 
$\cfr' = \Gcal\cup\Gcal_\cfr\theta\vdash^?_R C_\cfr\theta$, $S' = (\theta\cup\theta_x,\{\overset{\rightharpoonup}{\sigma},\sigma_x\})$, and $X' = X@[\fvx]$. 
Then $\Scal^*(\Cbb',S', X') = \Scal(\Gcal\cup \Gcal_\cfr\theta,\sigma_x)$.
\end{proposition}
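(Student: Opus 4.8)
The plan is to prove the identity by induction on the length $n$ of the thread $X = [\fvx_1,\ldots,\fvx_n]$, unfolding $\Scal^*$ at its head at each step. Two standing observations will do most of the work. First, any sequence $\sigma$ witnessing a $\vdash_R$ judgement consists solely of $E$, $U$, $A$, $A_C$ instances, each carrying the identity substitution; so the recursive clause of $\Scal$ gives $\Scal(H,\sigma)\supseteq H$ for every set $H$ of relational atoms. Second, since $\Cbb$ is well-formed, every minimum constraint $\cfr(\fvx_i)$ consumed by $\Scal^*$ is simple, hence has a ground left hand side, and the witnessing sequences introduce only fresh concrete labels; thus each $\Scal(\Gcal(\cfr(\fvx_i)),\sigma_{\fvx_i})$, and therefore the whole accumulation $\Gcal = \Scal^*(\Cbb,S,X)$, is ground and so fixed by every later free-variable substitution.

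For the inductive step I unfold $\Scal^*(\Cbb',S',X')$ at $\fvx_1 = head(X') = head(X)$. Because $\preceq'$ places $\cfr$ strictly above $\cfr(\fvx_e)$, it lies above every $\cfr(\fvx_i)$; adding $\cfr$ to $\Cbb$ therefore changes neither the minimum constraints nor the solution chosen for $\fvx_1$, so the leading summand $\Gcal'_1 := \Scal(\Gcal(\cfr(\fvx_1)),\sigma_{\fvx_1})$ is exactly the leading summand of $\Scal^*(\Cbb,S,X)$ and is ground. The crux is to read the remaining recursive call as a smaller instance of the proposition. Since restriction acts constraint-wise through $f$, the system $\Cbb'\uparrow(\cfr(\fvx_1),\theta_{\fvx_1},\sigma_{\fvx_1})$ is exactly $\Cbb\uparrow(\cfr(\fvx_1),\theta_{\fvx_1},\sigma_{\fvx_1})$ together with the single extra constraint $f(\cfr) = \Gcal'_1 \cup \Gcal_\cfr\theta_{\fvx_1} \vdash^?_R C_\cfr\theta_{\fvx_1}$, which still sits immediately above the end of the restricted thread $tail(X)$, still has $\fvx$ occurring only on its right hand side, and whose left hand side mentions only free variables of the restricted system. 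Hence all hypotheses hold for the restricted data and the induction hypothesis applies to $tail(X)@[\fvx]$.

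Writing $\Gcal_{\mathrm{rest}} = \Scal^*(\Cbb\uparrow(\cfr(\fvx_1),\theta_{\fvx_1},\sigma_{\fvx_1}),S'',tail(X))$ and letting $\theta_{\mathrm{rest}}$ be the solution substitution of the restricted thread, the induction hypothesis gives that the recursive call equals $\Scal\bigl(\Gcal_{\mathrm{rest}} \cup (\Gcal'_1 \cup \Gcal_\cfr\theta_{\fvx_1})\theta_{\mathrm{rest}},\ \sigma_x\bigr)$. I then simplify with the two observations: $\Gcal'_1$ is ground, so $\Gcal'_1\theta_{\mathrm{rest}} = \Gcal'_1$; the thread substitution factors as $\theta = \theta_{\fvx_1}\circ\theta_{\mathrm{rest}}$ by the unique-variable-origin property, and since $\cfr$ extends $X$ all free variables of $\Gcal_\cfr$ lie on $X$, whence $(\Gcal_\cfr\theta_{\fvx_1})\theta_{\mathrm{rest}} = \Gcal_\cfr\theta$; and $\Gcal = \Gcal'_1 \cup \Gcal_{\mathrm{rest}}$ by the defining clause of $\Scal^*$. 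This collapses the recursive call to $\Scal(\Gcal \cup \Gcal_\cfr\theta,\sigma_x)$, and the outer summand $\Gcal'_1$ is absorbed because $\Scal(\Gcal\cup\Gcal_\cfr\theta,\sigma_x)\supseteq \Gcal \supseteq \Gcal'_1$, giving the claim. The base case $n=1$ is the same unfolding with an empty tail.

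The step I expect to be the main obstacle is precisely the substitution bookkeeping of the third paragraph: justifying the factorisation $\theta = \theta_{\fvx_1}\circ\theta_{\mathrm{rest}}$ and the claim that every free variable of $\Gcal_\cfr$ already lies on the thread, so that the thread substitutions alone realise $\Gcal_\cfr\theta$, together with the verification that atoms accumulated from already-consumed minimum constraints are genuinely immune to later substitutions. All three rest on well-formedness, in particular the unique-variable-origin condition, which forces the per-variable substitutions $\theta_{\fvx_i}$ to have pairwise disjoint domains and prevents a variable from being reintroduced once it has been grounded.
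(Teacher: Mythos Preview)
Your inductive argument along the thread is exactly the right way to make this precise, and it is more detailed than what the paper offers: the paper does not give a proof at all, but only the remark preceding the proposition, namely that each time a minimum constraint $\cfr_\fvx$ is eliminated in the recursive clause of $\Scal^*$, the set $\Scal(\Gcal(\cfr_\fvx),\sigma_\fvx)$ is propagated into the left hand side of every successor constraint by the restriction operation, and declares the result ``straightforward'' from this. Your unfolding of $\Scal^*(\Cbb',S',X')$ at the head, identification of the restricted extended system with the extension of the restricted system, and absorption of the leading summand via $\Scal(H,\sigma)\supseteq H$ are precisely the mechanics behind that one-line justification.

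One point to tighten: you assert that ``all free variables of $\Gcal_\cfr$ lie on $X$'' in order to get $(\Gcal_\cfr\theta_{\fvx_1})\theta_{\mathrm{rest}} = \Gcal_\cfr\theta$. The proposition as stated only says the free variables of $\Gcal_\cfr$ occur in $\Cbb$, not that they occur on the particular thread $X$. In the paper's sole use of the proposition (the completeness proof for $\fvlsbbi$), $\Gcal_\cfr$ is the essential relational part $\Gcal_E'$ of a symbolic sequent whose free variables are \emph{exactly} those of $X$, so your assumption is satisfied there; but if you want your argument to match the proposition verbatim you should either add this as an explicit hypothesis or observe that the sequence of restrictions along $X$ applies, cumulatively, the substitution $\theta$ restricted to the variables that actually appear in $\Gcal_\cfr$, which suffices because restriction by a minimum constraint whose variables are disjoint from $\Gcal_\cfr$ acts as the identity on $\Gcal_\cfr$.
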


\begin{theorem}
\label{thm:comp_fvlsbbi}
Let $\Pi$ be a derivation of a sequent in $\iilsbbi$. Then there exists a symbolic
derivation $\Pi'$ of the same sequent such that $\Ccal(\Pi')$ is solvable. 
\end{theorem}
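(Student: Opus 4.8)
The plan is to proceed by induction on the structure (equivalently, the height) of the given $\iilsbbi$ derivation $\Pi$, processing its bottommost rule and appealing to the induction hypothesis on the immediate subderivations. The symbolic derivation $\Pi'$ is built to have exactly the same rule skeleton as $\Pi$: every negative rule ($\bot L$, $\top R$, $\land L$, $\land R$, $\limp L$, $\limp R$, $\mand L$, $\mimp R$, $\top^* L$) is copied verbatim, reusing the fresh labels that $\Pi$ introduces in $\mand L$, $\mimp R$ and the same atom $(\epsilon,\fvw\simp\epsilon)$ for $\top^* L$; each positive rule is replaced by its $\fvlsbbi$ counterpart, discarding the eager $\Scal(\Gcal,\sigma)$ bookkeeping and the explicit $\vdash_R$ premise, and (for $\mand R$ and $\mimp L$) substituting fresh free variables $\fvx,\fvy$ (resp.\ $\fvx,\fvz$) for the concrete witness labels that $\Pi$ chose. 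By construction $\Pi'$ derives the same symbolic end sequent, and by Lemma~\ref{lm:derivation-constraint} its induced constraint system $\Cbb(\Pi')=(\Ccal(\Pi'),\preceq^{\Pi'})$ is well-formed when that end sequent is ground.

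The substantive task is to exhibit a solution $(\theta,\{\sigma_1,\dots,\sigma_n\})$ of $\Ccal(\Pi')$. The substitution $\theta$ is read off from $\Pi$: each free variable introduced in an application of $\mand R$ or $\mimp L$ is mapped by $\theta$ to the concrete witness label used at the corresponding rule instance of $\Pi$. Each $\sigma_i$ is taken to be precisely the sequence of structural rules witnessing the $\vdash_R$ entailment discharged at the matching positive rule of $\Pi$ (recall that in $\iilsbbi$ every instance of $id$, $\top^* R$, $\mand R$, $\mimp L$ carries such a derivation $\sigma$). I would then verify that this pair is a solution by following Definition~\ref{dfn:sol_wf_cons} literally: since $\Cbb(\Pi')$ is well-formed it has a minimum constraint $\cfr$, which is simple because $\Gcal(\cfr)$ is ground; $\cfr$ is exactly the constraint generated by the bottommost positive rule, and $(\theta\uparrow fv(\cfr),\sigma)$ is a solution of $\cfr$ by the very choice of $\theta$ and $\sigma$ from $\Pi$. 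The induction hypothesis then supplies a solution of the restricted system $\Cbb(\Pi')\uparrow(\cfr,\theta\uparrow fv(\cfr),\sigma)$.

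The main obstacle is reconciling the \emph{eager} relational-atom accumulation of $\iilsbbi$ with the \emph{lazy} accumulation encoded in the constraint machinery. In $\iilsbbi$, applying $\mand R$ or $\mimp L$ replaces the relational context $\Gcal$ in the logical premises by $\Scal(\Gcal,\sigma)$, so the supplementary atoms $\mathcal{G}_S$ grow explicitly up the tree, whereas in $\fvlsbbi$ the premises keep only the essential atoms $\mathcal{G}_E$ (structural rules in $\sigma$ never create essential atoms), and the supplementary atoms are recovered only through repeated restriction of the constraint system. Hence the naive ``same skeleton plus induction on subderivations'' does not close directly, because the $\iilsbbi$ premise sequents carry $\Scal(\Gcal,\sigma)$ while the $\fvlsbbi$ premises do not. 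The fix is Proposition~\ref{prop:scalstar}: restricting by a minimum constraint adds $\Scal(\Gcal(\cfr),\sigma)$ to the left-hand sides of all its successors, so that $\Scal^*$ computed along the thread of free variables of a branch reconstructs precisely the $\Scal(\Gcal,\sigma)$ that $\iilsbbi$ carries.

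To make this precise I would strengthen the induction hypothesis so that it records the thread $X$ (Definition~\ref{dfn:linear_fv}) of free variables live at the end sequent of each subderivation and asserts that $\Scal^*$ along $X$ equals the supplementary context $\mathcal{G}_S$ of the corresponding $\iilsbbi$ sequent, while the symbolic premise retains only $\mathcal{G}_E$. Lemma~\ref{lem:cons_merge} (composition along a thread, Definition~\ref{dfn:constraints-ext}) then guarantees that gluing the constraint systems of the two premises of $\mand R$ or $\mimp L$ along the shared thread keeps the whole system well-formed, and Proposition~\ref{prop:scalstar} guarantees that prefixing the new relational constraint for that rule regenerates exactly the atoms of $\Scal(\Gcal,\sigma)$ needed to match the $\iilsbbi$ premises. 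The delicate bookkeeping is therefore the matching of threads across the branching positive rules and the proof that $\circ^X$ preserves both well-formedness and solvability; once that coherence is in place, the remaining cases of the verification of Definition~\ref{dfn:sol_wf_cons} are uniform, and taking the end sequent of $\Pi$ (which is ground) yields the desired $\Pi'$ with $\Ccal(\Pi')$ solvable.
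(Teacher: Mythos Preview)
Your proposal is correct and follows essentially the same approach as the paper: build $\Pi'$ with the same rule skeleton, extract $\theta$ and the $\sigma_i$ from the witness labels and $\vdash_R$ derivations in $\Pi$, and prove a strengthened invariant parameterised by a thread $X$, a well-formed constraint system $\Cbb$ and a solution $S$, using Lemma~\ref{lem:cons_merge} for well-formedness of the composition $\Cbb\circ^X\Cbb(\Psi)$ and Proposition~\ref{prop:scalstar} to reconcile the eager $\Scal(\Gcal,\sigma)$ in $\iilsbbi$ with the lazy restriction mechanism. One small correction: the invariant should assert $\Scal^*(\Cbb,S,X)=\mathcal{G}_E\cup\mathcal{G}_S$, not just $\mathcal{G}_S$, since $\Scal^*$ accumulates the entire left-hand side (including the essential atoms already present in $\Gcal(\cfr)$); with that adjustment the inductive step for $\mand R$ and $\mimp L$ goes through exactly as you outline.
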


The heart of the proof for this theorem is that we can recover the
supplementary subset for each sequent from the constraint
system using $\Scal^*$. Since the constraint system accumulates the relational atoms
in the derivation, this is not a surprising result. The proof is given
in more details in Appendix~\ref{app:comp_fvlsbbi}.

%%% Local Variables: 
%%% mode: latex
%%% TeX-master: "main"
%%% End: 

%\input{comp}
\section{A heuristics for proof search}
\label{sec:heur-proof-search}

Suppose we want to prove $((a \mand b) \mand c) \limp (a \mand (b \mand c))$. Using
$\fvlsbbi$, we build a symbolic derivation as in Figure~\ref{fig:sym} 
(right associativity for connectives is assumed).
\begin{figure*}[t]
Let $\Gamma_1 := \{a2:c~;~a3:a\}$ and $\Gamma_2 := \{a3:a~;~a4:b\}$
in\\[5px]
\centering
\footnotesize
\AxiomC{}
\RightLabel{$id$}
\UnaryInfC{$a2:c;a3:a;a4:b \vdash \fvx5:a$}
%\AxiomC{$\Gamma_1 := \{a2:c;a3:a\}$}
\AxiomC{}
\RightLabel{$id$}
%\UnaryInfC{$a2:c;a3:a;a4:b \vdash x7:b$}
\UnaryInfC{$\Gamma_1; a4:b \vdash \fvx7:b$}
%\AxiomC{$\Gamma_2 := \{a3:a;a4:b\}$}
\AxiomC{}
\RightLabel{$id$}
%\UnaryInfC{$a2:c;a3:a;a4:b \vdash x8:c$}
\UnaryInfC{$a2:c; \Gamma_2 \vdash \fvx8:c$}
\RightLabel{$\mand R$}
\BinaryInfC{$a2:c;a3:a;a4:b \vdash \fvx6:b \mand   c$}
\RightLabel{$\mand R$}
\BinaryInfC{$a2:c;a3:a;a4:b \vdash a0:a \mand   b \mand   c$}
\RightLabel{$\mand L$}
\UnaryInfC{$a1:a \mand   b;a2:c \vdash a0:a \mand   b \mand   c$}
\RightLabel{$\mand L$}
\UnaryInfC{$a0:(a \mand   b) \mand   c \vdash a0:a \mand   b \mand   c$}
\RightLabel{$\limp R$}
\UnaryInfC{$ \vdash a0:(a \mand   b) \mand   c \limp   a \mand   b \mand   c$}
\DisplayProof
\caption{A symbolic derivation for $((a \mand b) \mand c) \limp (a
  \mand (b \mand c))$.}
\label{fig:sym}
\end{figure*}
The following constraints are generated from this derivation:

\begin{tabular}[c]{l@{\extracolsep{2em}}cl}
$(a1,a2\simp a0);(a3,a4\simp a1)$ & $\vdash_R$ & $(a3=\fvx5)$\\
$(a1,a2\simp a0);(a3,a4\simp a1)$ & $ \vdash_R$ & $(a4=\fvx7)$\\
$(a1,a2\simp a0);(a3,a4\simp a1)$ & $ \vdash_R$ & $(a2=\fvx8)$\\
$(a1,a2\simp a0);(a3,a4\simp a1)$ & $ \vdash_R$ & $(\fvx7,\fvx8\simp \fvx6)$\\
$(a1,a2\simp a0);(a3,a4\simp a1)$ & $ \vdash_R$ & $(\fvx5,\fvx6\simp a0)$\\
\end{tabular}

Since the first three constraints are required by the $id$ rule, we
must accept them. Thus we are only left with the last two constraints
with free variables $\fvx5, \fvx7, \fvx8$ assigned. In the following, we shall
write $(a1,a2\simp a0);(a3,a4\simp a1)$ as $\Gcal$, and $(a3,\fvx6\simp
a0);(a2,a4\simp \fvx6)$ as $C$. Now $\fvx6$ is the only remaining free
variable. We can apply the rule $A$ (upward) on the left hand side
$\Gcal$ to obtain $(a3,w\simp a0);(a2,a4\simp w)$, where $w$ is a new
label. Then apply the rule $E$ (upward) to obtain $(a4,a2\simp w)$. 
%In this way, 
The two constraints can be solved by assigning $w$ to $\fvx6$.

But there is a simpler way to see that such a $w$ must exist: 
%if we chain 
the two ternary relational atoms in $\Gcal$ %they 
manifest that
$a0$ can be split into $a2,a3,a4$. This is exactly what $C$ says. 
%In fact, 
For any variant of $\Gcal$ that describes the same splitting of
$a0$ as $C$, the ``internal'' node $\fvx6$ can always be assigned to
either an existing label or a label generated by the associativity
rule. In the example, $\fvx6$ cannot be matched to any existing label,
so we can assign $\fvx6$ to be a fresh label globally, and add $C$ to
the l.h.s.\ 
%left hand side 
of the successor constraints in the partial order
$\prec$. 
Similarly
%This also holds 
for any variant of $C$ with the same
splitting of $a0$. 
%This idea is essence of
% extended in 

We can extend this method to a chain of multiple relational
atoms which forms a labelled binary tree. 
We define a labelled binary tree as a binary tree where each node is
associated with a label. Each node in a labelled binary tree has a
left child and a right child. The minimum labelled binary tree has a root and two leaves, which corresponds to a single relational atom. We define the following function
inductively from a labelled binary tree to a set of relational atoms.
\begin{definition}
Let $tr$ be a labelled binary tree, the set of relational atoms
w.r.t. $tr$, written as $\Rel(tr)$, is defined as follows.
\begin{itemize}
\item (Base case): $tr$ only contains a root node labelled with $r$ and
  two leaves labelled with $a,b$ respectively. Then $\Rel(tr) =
  \{(a,b\simp r)\}$
\item (Inductive case): $tr$ contains a root node labelled with $r$ and
its left and right children labelled with $a$ and $b$ respectively. 
  %The subtree of the
  %left child is $tr_a$, and the subtree of the right child is
  %$tr_b$. 
  Then $\Rel(tr) = \Rel(tr_a)\cup\Rel(tr_b)\cup\{(a,b\simp r)\}$,
where $tr_a$ and $tr_b$ are the subtrees rooted at,
respectively, the left child and the right child of the root node
of $tr.$
\end{itemize}
\end{definition}

The \textit{width} of a labelled binary tree is defined as the number
of leaves in the tree. 
%We say two labelled binary trees are
%\textit{identical} 
A labelled binary tree is a {\em variant} of another labelled binary tree 
if either they are exactly the same, or they differ only in the labels of the internal nodes.

We say that a set $R$ of relational atoms {\em forms a labelled binary tree} $tr$
when $R = \Rel(tr).$
%% We write $R = \Rel(tr)$ when the set $R$ of
%% relational atoms for a labelled binary tree $tr$. 
In this case, the leaves in $tr$ are actually a ``splitting'' of the root node. Commutativity and
associativity guarantee that we can split a node arbitrarily, as long as the
leaves in the tree are the same. Moreover, since all internal nodes are
free variables, we can assign them to either existing labels or fresh
labels (created by $A,A_C$) without having clash with existing relational atoms. This idea
is formalised in the following lemma, and is proved in Appendix~\ref{app:heuristics}.
%formally proved in the sequel.

\begin{lemma}
\label{lem:heuristics}
Given constraints $\cfr_1\lessdot \cdots\lessdot \cfr_n$
with $\Gcal = \Gcal(\cfr_1) = \cdots = \Gcal(\cfr_n)$ and the r.h.s. of these constraints gives the set $R$ of relational atoms. If the following hold:
\begin{enumerate}
\item $R = \Rel(tr)$, for some labelled binary tree $tr$ where
every internal node label is a free variable $\fvx$ which only occurs once in $tr$, and $\cfr_1\preceq\cfr(\fvx)$. 
\item The other node labels in $tr$ are non-$\epsilon$ labels.
\item There exist $\Gcal'\subseteq\Gcal$ and $tr'$ such that $\Gcal' = \Rel(tr')$ and $tr'$ has the same root and leaves as $tr$.
\end{enumerate}
Then $\cfr_1,\cdots,\cfr_n$ are solvable.
\end{lemma}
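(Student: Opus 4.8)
The plan is to discharge all $n$ constraints with a single global substitution $\theta$ and a single sequence $\sigma^*$ of structural rules. Observe first that, since $\Gcal(\cfr_1)=\cdots=\Gcal(\cfr_n)=\Gcal$ is ground and $\cfr_1\lessdot\cdots\lessdot\cfr_n$ is a chain, $\cfr_1$ is the unique minimum of the chain and is simple, and by condition~1 every internal node $\fvx$ of $tr$ has its origin $\cfr(\fvx)$ at or above $\cfr_1$, hence within the chain. Solving the chain via Definition~\ref{dfn:sol_wf_cons} thus reduces to producing from $\Gcal$, using only $E,A,A_C$, a set of relational atoms containing a \emph{ground realisation} $\Rel(tr)\theta$ of $tr$, where $\theta$ maps each internal node of $tr$ to a fresh label (the root and leaves are already non-$\epsilon$ labels by condition~2). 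Given such $\sigma^*$ and $\theta$, I take $\sigma_1=\sigma^*$ as the $\vdash_R$-derivation solving $\cfr_1$ (its right-hand atom lies in $\Rel(tr)\theta\subseteq\Scal(\Gcal,\sigma^*)$) and trivial derivations for $\cfr_2,\dots,\cfr_n$.

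That the trivial derivations suffice rests on \textbf{monotonicity of $\Scal$}: because the rule $A$ retains its two principal atoms in the premise, $\Scal(\Gcal,\sigma)$ only grows along $\sigma$, and $\Gcal'\subseteq\Gcal$ gives $\Scal(\Gcal',\sigma)\subseteq\Scal(\Gcal,\sigma)$ for every such $\sigma$. Hence after solving $\cfr_1$ with $\sigma^*$, the restriction of Definition~\ref{def:res-cs} inserts all of $\Scal(\Gcal,\sigma^*)$ into the left-hand side of every successor $\cfr_i$; since the right-hand atom of each $\cfr_i$ belongs to $R=\Rel(tr)$, we have $\mathrm{rhs}(\cfr_i)\theta\in\Rel(tr)\theta\subseteq\Scal(\Gcal,\sigma^*)$, so that atom is already present and $\cfr_i$ is solved by the empty sequence (up to one $E$ for child order). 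Condition~1, stating that each $\fvx$ occurs once with $\cfr_1\preceq\cfr(\fvx)$, guarantees that the fresh-label assignment $\theta$ is well defined and respects \textbf{unique variable origin}, so the successive restrictions remain well-formed.

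The heart of the argument, and the main obstacle, is the \emph{reshaping} claim behind $\sigma^*$: from $\Gcal'=\Rel(tr')$, whose tree $tr'$ shares root $r$ and leaf multiset with $tr$ (condition~3), the $E,A,A_C$-closure contains an isomorphic copy of $\Rel(tr)$ with internal labels renamed to fresh labels. I would prove this by induction on the width (number of leaves) of $tr$. The base case, width $2$, is immediate: $tr'$ and $tr$ are the single atom $(a,b\simp r)$ up to child order, recovered by at most one $E$. For the inductive step, let the root of $tr$ split $r$ into a left part combining a leaf-subset $S_L$ and a right part combining $S_R$, with $S_L\uplus S_R$ the full leaf multiset. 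The crucial sub-step is to regroup $\Gcal'$, using $A$ (which implements a single tree rotation, $z=(u+v)+y\leadsto u+(v+y)$, introducing a fresh intermediate) together with $E$, until the closure contains an atom $(m_L,m_R\simp r)$ with $m_L,m_R$ fresh, and moreover trees over $S_L$ rooted at $m_L$ and over $S_R$ rooted at $m_R$; the induction hypothesis applied to these two narrower trees then yields $\Rel(tr)\theta$. Here condition~2 (no $\epsilon$ on the non-internal nodes) lets me avoid the unit rule $U$ and identity reasoning, while genuinely non-cancellative re-associations, where a node would reappear as in $(x,y\simp x)$, are precisely what the special rule $A_C$ is designed to handle.

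The delicate point I expect to spend the most effort on is exactly this regrouping sub-step: moving from the arbitrary bracketing witnessed by $tr'$ to the specific bipartition of leaves prescribed by $tr$'s root is a reachability statement between binary trees under rotations, and turning it into a concrete finite $A$/$E$ sequence requires an inner induction, for example first flattening $tr'$ into a right-comb over some leaf order, then permuting leaves via $E$ into the order $S_L$ followed by $S_R$, and finally re-associating to isolate the $S_L$-prefix under one fresh node. Because $\Scal$ is monotone, no intermediate atom is ever lost, so these reshaping phases compose into a single sequence $\sigma^*$. Once $\sigma^*$ and $\theta$ are fixed, assembling the formal solution of $\{\cfr_1,\dots,\cfr_n\}$ by the chain recursion of Definition~\ref{dfn:sol_wf_cons} is routine.
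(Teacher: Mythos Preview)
Your proposal is correct and follows essentially the same approach as the paper: reduce the problem to a tree-reshaping claim (from $\Rel(tr')$ one can reach a variant of $\Rel(tr)$ in the structural closure), prove that claim by induction on the width of the tree, and then solve the constraints by assigning each free internal-node variable to the corresponding label of the variant. The paper isolates the reshaping step as a separate lemma (Lemma~\ref{lem:btree_permute}) and carries out the induction using leaf-set intersections and an explicit $E/A$ script rather than your comb-normalisation idea; notably it establishes that $E$ and $A$ alone suffice, so your invocation of $A_C$ is unnecessary, though harmless.
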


\section{Experiment}
\label{sec:experiment}

\begin{table*}[t!]
\centering
\footnotesize
\begin{tabular}{|l|l|l|l|l|}
\hline
Formula & BBeye  & Naive & $\fvlsbbi$ \\
        & (opt) & (Vamp)& Heuristic \\
\hline
$(a \mimp b) \land (\top \mand (\top^* \land a)) \limp b$ & d(2) 0 & 0.003 & 0.001\\
$(\top^* \mimp \lnot (\lnot a \mand \top^*)) \limp a$ & d(2) 0 & 0.003 & 0.000\\
$\lnot ((a \mimp \lnot (a \mand b)) \land ((\lnot a \mimp \lnot b) \land b))$ & d(2) 0 & 0.004 & 0.001\\
$\top^* \limp ((a \mimp (b \mimp c)) \mimp ((a \mand b) \mimp c))$ & d(2) 0.015 & 0.017 & 0.001\\
$\top^* \limp ((a \mand (b \mand c)) \mimp ((a \mand b) \mand c))$ & d(2) 0.036 & 0.006 & 0.000\\
$\top^* \limp ((a \mand ((b \mimp e) \mand c)) \mimp ((a \mand (b \mimp e)) \mand c))$ & d(2) 0.07 & 0.019 & 0.001\\
$\lnot ((a \mimp \lnot (\lnot (d \mimp \lnot (a \mand (c \mand b))) \mand a)) \land c \mand (d \land (a \mand b)))$ & d(2) 0.036 & 0.037 & 0.001\\
%$\lnot ((c \mand (d \mand e)) \land ((a \mimp \lnot (\lnot (b \mimp \lnot (d \mand (e \mand c))) \mand a)) \mand (b \land (a \mand \top))))$ & d(2) 0.016 & 0.075 & 0.039\\
$\lnot ((c \mand (d \mand e)) \land B)$ where & d(2) 0.016 & 0.075 & 0.039\\
$ B := ((a \mimp \lnot (\lnot (b \mimp \lnot (d \mand (e \mand c)))
\mand a)) \mand (b \land (a \mand \top)))$ & & &\\
$\lnot ( C \mand (d \land (a \mand (b \mand e))))$ where & d(3) 96.639
& 0.089 & 0.038\\
C := $((a \mimp \lnot (\lnot (d \mimp \lnot ((c \mand e) \mand (b
\mand a))) \mand a)) \land c)$ & & & \\
$(a \mand (b \mand (c \mand d))) \limp (d \mand (c \mand (b \mand a)))$ & d(2) 0.009 & 0.048 & 0.001\\
$(a \mand (b \mand (c \mand d))) \limp (d \mand (b \mand (c \mand a)))$ & d(3) 0.03 & 0.07 & 0.001\\
$(a \mand (b \mand (c \mand (d \mand e)))) \limp (e \mand (d \mand (a \mand (b \mand c))))$ & d(3) 1.625 & 1.912 & 0.001\\
$(a \mand (b \mand (c \mand (d \mand e)))) \limp (e \mand (b \mand (a \mand (c \mand d))))$ & d(4) 20.829 & 0.333 & 0.001\\
$\top^* \limp (a \mand ((b \mimp e) \mand (c \mand d)) \mimp ((a \mand d) \mand (c \mand (b \mimp e))))$ & d(3) 6.258 & 0.152 & 0.007\\
\hline
\end{tabular}
\\
\caption{Initial experimental results.}
\label{tab:test_park}
\vspace{-1em}
\end{table*}

%% choose Vampire 2.6~\cite{Vampire2.6} as the underlying first-order
%% solver in the naive translation method and 
We used a Dell Optiplex 790 desktop with Intel CORE i7 2600 @ 3.4 GHz CPU
and 8GB memory as the platform, and tested the following provers on the formulae from~Park et
al.~\cite{Park2013}. 
(1) BBeye: the OCaml prover from Park et al. based upon nested sequents~\cite{Park2013}; (2) Naive (Vamp): translates a BBI formula into a first-order formula using the standard translation, then uses Vampire 2.6~\cite{Vampire2.6} to solve it; (3) $\fvlsbbi$ Heuristic: backward proof search in $\fvlsbbi$, using the heuristic-based method to solve the set of constraints.
%
%% \begin{description}
%% \item[\rm BBeye:] OCaml prover from
%%   Park, Seo and Park based upon nested sequents~\cite{Park2013}.
%% \item[\rm Naive (Vamp):] Naively translates a BBI formula into a
%%   first-order formula using the standard translation, then uses
%%   Vampire 2.6~\cite{Vampire2.6} to solve it.
%% \item[\rm $\fvlsbbi$ Heuristic:] Backward proof search using $\fvlsbbi$, and
%%   solve the set of constraints using the heuristic-based method
%% \end{description}

The results are shown in
Table~\ref{tab:test_park}. The BBeye (opt) column shows the results from Park et al's prover where
the d() indicates the depth of proof search. The other two columns are
for the two methods stated above. We see that naive translation is
comparable with BBeye in most cases, but the latter is not stable. When
the tested formulae involves more interaction between structural
rules, BBeye runs significantly slower. The heuristic method outperforms
all other methods in the tested cases.

Nonetheless, 
%there are examples where 
our prover is slower
than BBeye for formulae which
%These instances typically 
contain many occurrences of the
same atomic formulae, giving (id) instances such as:
%following form:
% have a common characteristic that a
% proposition occurs many times in the formulae. This results in a
% derivation which causes confusion when applying zero-premise
% rules. Consider the following symbolic sequent:
\begin{center}
$\Gamma;w_1:P;w_2:P;\cdots;w_n:P\vdash \fvx:P;\Delta$
\end{center}
%If we want to use the rule $id$ to end this branch, 
We have to choose some $w_i$ to match with $\fvx$ without knowing
which choice satisfies other constraints. In the worst case, we have
to try each using backtracking.
%we can only try each of them in the worst case. 
Multiple branches of this form lead to a combinatorial explosion. 
%% Our $\lsbbi$, and $S_{BBI}$ of Park et al., which 
Determinising the concrete labels (worlds) for formulae in proof search in $\lsbbi$ or BBeye~\cite{Park2013} avoids this
problem. 
Further work is needed to solve this in $\fvlsbbi$.

Even though we do not claim the completeness of our heuristics method,
it appears to be a fast way to solve certain problems. Completeness can be restored by fully implementing $\lsbbi$ or $\fvlsbbi$.
%% If completeness
%% is vital then we can implement a theorem prover based on any of the
%% sound and complete calculi we have presented.
%  our paper , and that can be served as
% the complete solution to BBI theorem proving. 
%The point we make here is that 
%% Labels
%% %The labelled sequent calculus 
%% provides us with a clearer view of
%% certain problems from a semantic perspective which have
%% %. It has 
%% a more direct connection to the intuition of the ``resource'' or
%% ``sharing'' reading of BBI. 
The derivations in $\lsbbi$ are
generally shorter than those in the Display Calculus or Nested Sequent
Calculus for BBI. The reader can verify that most of formulae in
Table~\ref{tab:test_park} can even be proved by hand in a reasonable
time using our labelled system. The optimisations of the
implementation, however, is out of the scope of this paper.

\section{Conclusion and Future Work}
\label{sec:conc}
Our main contribution is a labelled sequent calculus for
%non-deterministic monoidal BBI 
$BBI_{ND}$
that is sound, complete, and enjoys cut-elimination. 
%% To our knowledge
%% this is the first labelled sequent calculus for BBI that satisfies
%% cut-elimination.
%% The distinct feature in our proof system is that it is not based on the display
%% calculus, and there is no explicit contraction rules. 
There are no explicit contraction rules in $\lsbbi$ and all structural rules
can be restricted so that proof search is entirely driven by
logical rules. We further propose a free variable system to restrict the proof search space so that some applications of $\mand R,\mimp L$ rules can be guided by zero-premise rules. 
%% This leads to less operation on structures in the proof search. However, since we
%% use ternary relations to capture the semantics of BBI, the difficulty
%% is shifted to the operations on relational atoms. Consequently, we
%% have inference rules to simulate identity, commutativity, and
%% associativity properties of the relational monoid.
Although we can structure proof search to be more manageable compared to
the unrestricted (labelled or display) calculus,
the undecidability of BBI implies that there is no terminating
proof search strategy for a sound and complete system. The essence of
proof search now resides in 
%guessing the right label splitting (i.e., 
guessing which relational atom
to use in the $\mand R$ and $\mimp L$ rules and whether 
%% there is a need to apply
%% %$\mand R$ and $\mimp L$ 
%% them more than once to a formula. 
they need to be applied more than once to a formula.
Nevertheless, our initial experimental results already raise the hope that a more efficient
proof search strategy can be developed based on our calculus. 

An immediate task is to find a complete and terminating (if possible) constraint solving strategy.
A counter-model construction for $BBI_{PD}$ has been studied by
Larchey-Wendling using labelled tableaux~\cite{Wendling2012}, the
possibility to adapt his method to $BBI_{ND}$ using our calculus is
also a future work.   
\begin{figure}[t]
\small
\centering
\begin{tabular}{c@{\hskip 15px}c}
\AxiomC{$(a,b\simp c);\Gamma[c/d]\vdash\Delta[c/d]$}
\RightLabel{\tiny $P$}
\UnaryInfC{$(a,b\simp c);(a,b\simp d);\Gamma\vdash\Delta$}
\DisplayProof
&
\AxiomC{$(a,b\simp c);\Gamma\vdash\Delta$}
\RightLabel{\tiny $T$}
\UnaryInfC{$\Gamma\vdash\Delta$}
\DisplayProof\\[15px]
\AxiomC{$(\epsilon,\epsilon\simp \epsilon);\Gamma[\epsilon/a][\epsilon/b]\vdash\Delta[\epsilon/a][\epsilon/b]$}
\RightLabel{\tiny $IU$}
\UnaryInfC{$(a,b\simp \epsilon);\Gamma\vdash\Delta$}
\DisplayProof
&
\AxiomC{$(a,b\simp c);\Gamma[b/d]\vdash\Delta[b/d]$}
\RightLabel{\tiny $C$}
\UnaryInfC{$(a,b\simp c);(a,d\simp c);\Gamma\vdash\Delta$}
\DisplayProof
\end{tabular}\\[5px]
\begin{tabular}{l}
In $T$, $a$, $b$ do occur in the conclusion but $c$ does not\\
%In $P,IU,C$, the label being substituted must not be $\epsilon$
In all substitutions $[y/x]$, $x\not= \epsilon$
\end{tabular}
\caption{Some auxiliary structural rules.}
\label{fig:aux_str}
%\vspace{-1em}
\end{figure}

Another interesting topic is to extend our calculus to handle some semantics other
than the non-deterministic monoidal ones. Our design of the structural rules in $\lsbbi$ can be generalised as follows. If there is a semantic condition of the form $(w_{11},w_{12}\simp w_{13})\land\cdots\land(w_{i1},w_{i2}\simp w_{i3}) \Rightarrow (w'_{11},w'_{12}\simp w'_{13})\land\cdots\land(w'_{j1},w'_{j2}\simp w'_{j3})\land (x_{11} = x_{12})\land \cdots\land$ $(x_{k1} = x_{k2})$, we create a rule:
\begin{center}
\small
\AxiomC{$(w'_{11},w'_{12}\simp w'_{13});\cdots;(w'_{j1},w'_{j2}\simp w'_{j3});(w_{11},w_{12}\simp w_{13});\cdots;(w_{i1},w_{i2}\simp w_{i3});\Gamma\vdash\Delta$}
\RightLabel{\tiny $r$}
\UnaryInfC{$(w_{11},w_{12}\simp w_{13});\cdots;(w_{i1},w_{i2}\simp w_{i3});\Gamma\vdash\Delta$}
\DisplayProof
\end{center}   
And apply substitutions $[x_{12}/x_{11}]\cdots [x_{k2}/x_{k1}]$ globally on the premise, where $\epsilon$ is not substituted. Many additional features can be added in this way. 
We summarise the following desirable ones: (1) PD-semantics: the
composition of two elements is either the empty set or a singleton, i.e., $(a,b\simp c)\land (a,b\simp d) \Rightarrow (c = d)$; (2) TD-semantics: the composition of any two elements is always defined as a singleton, i.e., $\forall a,b, \exists c$ s.t. $(a,b\simp c)$; (3) indivisible unit: (cf. Section~\ref{sec:intro}) $(a,b\simp \epsilon) \Rightarrow (a = \epsilon) \land (b = \epsilon)$; and (4) cancellative: if $w\circ w'$ is defined and $w\circ w' = w\circ w''$, then $w' = w''$, i.e., $(a,b\simp c)\land (a,d\simp c)\Rightarrow (b = d)$. Note that (2) and (4) are in addition to (1). The above are formalised in rules $P$, $T$, $IU$, $C$ respectively in Figure~\ref{fig:aux_str}.
%% \begin{tabular}[c]{ll}
%% \textit{PD-semantics}: & $(a,b\simp c)\land (a,b\simp d) \Rightarrow (c = d)$\\
%% \textit{TD-semantics}: & $\forall a,b, \exists c$ s.t. $(a,b\simp c)$\\  
%% \textit{Indivisible unit}: & $(a,b\simp \epsilon) \Rightarrow (a = \epsilon) \land (b = \epsilon)$\\
%% \textit{Cancellative}: & $(a,b\simp c)\land (a,d\simp c)\Rightarrow (b = d)$
%% \end{tabular}

The formula 
%% $(F * F) \limp F$, where $F =
%% \lnot(\top \mimp \lnot \top^*)$, 
\begin{equation}
\label{fm:partial}
(F * F) \limp F \text{, where } F = \lnot(\top \mimp \lnot \top^*), 
\end{equation}
differentiates $BBI_{ND}$ and
$BBI_{PD}$~\cite{Wendling2010} and is provable using $\lsbbi + P$.
Using $\lsbbi+T$, we can prove 
$(\lnot \top^* \mimp \bot) \limp \top^*$, which is valid in $BBI_{TD}$ but not
in $BBI_{PD}$~\cite{Wendling2010}, and
also  
$(\top^* \land ((p \mand q)
\mimp \bot)) \limp ((p \mimp \bot) \lor (q \mimp \bot))$, which is valid in 
separation models iff the composition is total~\cite{Brotherston13}. 
%It can be verified that adding these rules do not break cut-elimination.
These additional rules do not break cut-elimination.

Oddly,
%On the other hand, 
the formula $\lnot (\top^* \land A \land (B \mand \lnot (C \mimp
(\top^* \limp A))))$, which is valid in $BBI_{ND}$, is 
%interestingly 
very hard to prove in the display calculus and Park et al.'s
method. We ran this formula using Park et al.'s prover for a week on a
CORE i7 2600 processor, without success. Very short proofs of this
formula exist in $\lsbbi$ or Larchey-Wendling and Galmiche's labelled
tableaux (this formula must also be valid in $BBI_{PD}$). We are
currently investigating this phenomenon. 

Furthermore, we point out that Park et al.'s following observations on $BBI_{ND}$~\cite[page 2]{Park2013} are incorrect:
%% \footnote{The latest version of their technical report rephrased the text, but the meaning remains the same.}:
%% \begin{quote}
%% \begin{enumerate}
%% \item \it{A world can have multiple sibling worlds, but each sibling world determines a unique parent world. Hence no world can have two sibling worlds with the same parent world.}
%% \item \it{A world can have multiple parent worlds, but each parent world determines a unique sibling world. Hence no world can have two parent worlds with the same sibling world.}
%% \item \it{A world can have multiple child nodes, but each child world determines another unique child world. Hence we can divide all child nodes into groups of two sibling worlds. }
%% \end{enumerate}
%% \end{quote}
\begin{quote}
\begin{enumerate}
\item A node can have multiple parent nodes, but each parent node determines a unique sibling node. Hence no node can have two parent nodes with the same sibling node.
\item A node can have multiple child nodes, but each child node determines another unique child node. Hence we can divide all child nodes into groups of two sibling nodes.
\end{enumerate}
\end{quote}
%% The first sentence of Observation 1 is about partial-determinism,
%% while the second sentence is about cancellativity. The first sentence
%% of Observation 2 is about cancellativity, but the second sentence is
%% about partial-determinism. Observation 3 is also about cancellativity.
%%
%% Observation 1 is about partial-determinism, and Observation 2 is about
%% cancellativity. None of these properties hold for non-deterministic
%% BBI, which is what their nested sequent calculus is designed for. An
%% earlier version of Park et al.'s technical report also said that the
%% two observations imply each other, which, by Larchey-Wendling and
%% Galmiche's work~\cite{larcheywendling2014}, is only correct about
%% partial-determinism implying cancellativity. The other direction,
%% however, is so far not proved. Interestingly, 
%% %as will be shown in Section~\ref{sec:exp_pasl}, 
%% Park et al.'s prover BBeye could not prove
%% %within timeout 
%% Formula~\ref{fm:partial} which is valid in $BBI_{PD}$ where
%% Observation 1 is true.
%% %We conjecture that their prover may be functioning correctly, but
%% %not as what they observed.
%% We believe that their prover is actually behaving correctly since
%% Observation 1 is false in $BBI_{ND}$, so their prover does not accord
%% with their Observation 1 above.
Observation 1 is the claim that partial-determinism holds and
Observation 2 is the claim that cancellativity holds, but neither of
these properties hold for $BBI_{ND}$, which is what their nested
sequent calculus is designed for. An earlier version of Park et al.'s
technical report said that the two observations also imply each other,
which, by Larchey-Wendling and Galmiche's
work~\cite{larcheywendling2014}, is only correct about
partial-determinism (Observation 1) implying cancellativity
(Observation 2). The other direction, however, is so far not known,
and this claim does not appear in the latest version of Park et
al.'s paper~\cite{Park2013}. Interestingly, 
%as will be shown in Section~\ref{sec:exp_pasl}, 
Park et al.'s prover BBeye could not prove
%within timeout 
%the formula (18) 
Formula~\ref{fm:partial}
which is valid if partial-determinism (Observation 1)
holds.
%% in $BBI_{PD}$ where
%% Observation 1 is true.
%We conjecture that their prover may be functioning correctly, but
%not as what they observed.
We believe that their prover for $BBI_{ND}$ is actually behaving
correctly since Observation 1 is false in $BBI_{ND}$, so their own
prover does not support their Observation 1 above.

The proofs for the formulae
in this section can be found in Appendix~\ref{app:formulae_conc}.

%%% Local Variables: 
%%% mode: latex
%%% TeX-master: "main"
%%% End: 

\bibliographystyle{plain}
\bibliography{main}

%\begin{thebibliography}{10}
%\end{thebibliography}

\appendix

\section{Appendix}
\label{app:proofs}

This section provides the details of the proofs in this paper.

\subsection{Soundness of $\lsbbi$}
\label{app:sound_lsbbi}

Proof for Theorem~\ref{thm:soundness}.

\begin{proof}
To prove the soundness of $\lsbbi$, we show that each rule preserves
falsifiability upwards, as this is a more natural direction in terms
of backward proof search. Therefore to prove that a rule is sound, we
need to show that if the conclusion is falsifiable, then at least one
of the premises is falsifiable (usually in the same choice of $v$,
$\rho$, and $\mathcal{M}$). Most of the cases are easy, we show some
samples here.

%\begin{itemize}
\begin{description}
\item[$id$] Since there is no premise in this rule, we simply need 
to show that the conclusion is not falsifiable.

Suppose the sequent $\Gamma;w:P \vdash w:P;\Delta$ is falsifiable,
then $\Gamma$ must be true and $\rho(w)\Vdash A$ and $\rho(w) \not
\Vdash A$ and $\Delta$ must be false. However, $\rho(w)\Vdash A$ and
$\rho(w) \not \Vdash A$ cannot hold at the same time for any
$(\mathcal{M},\simp,\epsilon)$, $v$ and $\rho$, so we have a
contradiction, thus this sequent is not falsifiable.

\item[$\top^* L$] Assume $\Gamma;w:\top^* \vdash \Delta$ is falsifiable, then $\Gamma$ is true and $\rho(w)\Vdash \top^*$ and $\Delta$ is false.

From the semantics of $\top^*$ we know that $\rho(w)\Vdash \top^*$ iff
$\rho(w) = \epsilon$. Therefore by choosing the same $\rho$, $v$, and
$\mathcal{M}$ for the premise, replacing every $w$ by $\epsilon$ in
$\Gamma$ and $\Delta$ preserves their valuations, as we know that
$\rho(\epsilon) = \epsilon$. That is, $\Gamma[\epsilon/w]$ must be
true and $\Delta[\epsilon/w]$ must be false. So the premise is
falsifiable.

\item[$\mand L$] Assume the conclusion is falsifiable, so under some $v$, $\rho$, $\mathcal{M}$, we have that $\Gamma$ is true and $\rho(z)\Vdash A\mand B$ and $\Delta$ is false.

From the semantics of $A\mand B$, we know that $\exists a,b$
s.t. $a,b\simp \rho(z)$ and $a \Vdash A$ and $b \Vdash B$. So we can
choose a mapping $\rho'$ with $\rho' = (x\mapsto a)\cup (y\mapsto b)
\cup \rho$. Since $x$ and $y$ are fresh, they should not affect
anything in $\rho$. Then, under $\rho'$, the following hold: $(x,y\simp z)$ is true
and $\Gamma$ is true and $\rho'(x)\Vdash A$ and $\rho'(y) \Vdash B$
and $\Delta$ is false. Thus the premise is falsifiable in $v$,
$\rho'$, and $\mathcal{M}$.

\item[$\mand R$] Assume under some $v$, $\rho$, and $\mathcal{M}$, $(x,y\simp z)$ is true and $\Gamma$ is true and $\rho(z)\not \Vdash A\mand B$ and $\Delta$ is false.

The semantics of $A\mand B$ yields the following:
\begin{align*}
\rho(z) \not \Vdash A\mand B & \Leftrightarrow \lnot (\exists a,b. \ (a,b\simp \rho(z) \ and \  a \Vdash A \  and \ b \Vdash B))\\
& \Leftrightarrow \forall a,b. \ (a,b\simp \rho(z) \ doesn't \ hold \ or \ a\not \Vdash A \ or \ b \not \Vdash B) 
\end{align*}

If we pick the same set of $v$, $\rho$, $\mathcal{M}$ for the
premises, however, in both premises the relational atom $(x,y\simp z)$
already exists, which means $\rho(x),\rho(y) \simp \rho(z)$ holds. So
the possibility is only that either $\rho(x)\not\Vdash A$ or
$\rho(y)\not\Vdash B$. Assume the former one holds, then the left
premise is falsifiable, otherwise the right premise is falsifiable.

%\end{itemize}
\end{description}

Rules for additive connectives are straightforward, the cases for
$\mimp$ can be proved similarly as for $\mand$ above. Structural rules
$E$, $A$ (and $A_C$), $Eq_1$ (and $Eq_2$ and $U$) can be proved by
using the commutativity, associativity, and identity properties of the
monoid structure respectively.\qed
\end{proof}

\subsection{Substitution for labels}
\label{app:subs_labels}

The proof for Lemma~\ref{subs}.

\begin{proof}
By induction on $ht(\Pi).$

\noindent (Base case) If $ht(\Pi) = 0$, then the only applicable rules are
$id$, $\bot L$, $\top R$ and $\top^* R$. If the label $x \not =
\epsilon$ being substituted is not on the principal formula, then the
substitution does not affect the original derivation. Note that since
we do not allow to substitute for the label $\epsilon$, the proof for
$\top^* R$ can only be this case. Otherwise we obtain the new
derivation by simply replacing the label of the principal formula.

\noindent (Inductive case) If $ht(\Pi) > 0$, then consider the last rule applied in the derivation. We consider three main cases.
\begin{enumerate}
\item Neither $x$ nor $y$ is the label of the principal formula. 
\begin{enumerate}
\item Suppose the last rule applied is $\top^* L$, and $x \not = w$ and $y \not = w$, and $\Pi$ is the following derivation:
\begin{center}
\alwaysNoLine
\AxiomC{$\Pi_1$}
\UnaryInfC{$\Gamma'[\epsilon/w] \vdash \Delta[\epsilon/w]$}
\alwaysSingleLine
\RightLabel{$\top^* L$}
\UnaryInfC{$\Gamma'; w:\top^* \vdash \Delta$}
\DisplayProof
\end{center}

By the induction hypothesis, there is a derivation $\Pi_1'$ of
$\Gamma'[\epsilon/w][y/x] \vdash \Delta[\epsilon/w][y/x]$ with $ht(\Pi_1') \leq ht(\Pi_1)$. 
Since $x$ and $y$ are different from $w$, this sequent is equal to
$\Gamma'[y/x][\epsilon/w] \vdash \Delta[y/x][\epsilon/w]$. Therefore $\Pi'$ 
is constructed as follows.
\begin{center}
\alwaysNoLine
\AxiomC{$\Pi_1'$}
\UnaryInfC{$\Gamma'[y/x][\epsilon/w] \vdash \Delta[y/x][\epsilon/w]$}
\alwaysSingleLine
\RightLabel{$\top^* L$}
\UnaryInfC{$\Gamma'[y/x];w:\top^* \vdash \Delta[y/x]$}
\DisplayProof
\end{center}
Obviously $ht(\Pi') \leq ht(\Pi).$

\item If the last rule applied is $Eq_1$, we distinguish the following cases: $x$ is not $w$ or $w'$; $x = w$; $x = w'$.
\begin{enumerate}
\item $x \not = w$ and $x \not = w'$. The original derivation is as
  follows.
\begin{center}
\alwaysNoLine
\AxiomC{$\Pi_1$}
\UnaryInfC{$(\epsilon,w\simp w);\Gamma'[w/w'] \vdash \Delta[w/w']$}
\alwaysSingleLine
\RightLabel{$Eq_1$}
\UnaryInfC{$(\epsilon,w'\simp w);\Gamma' \vdash \Delta$}
\DisplayProof
\end{center}

\begin{enumerate}
\item If $y \not = w$ and $y \not = w'$, by the induction hypothesis,
  there is a derivation $\Pi_1'$ of $(\epsilon,w\simp w);\Gamma'[w/w'][y/x]
  \vdash \Delta[w/w'][y/x]$ with $ht(\Pi_1') \leq ht(\Pi_1)$. 
  Since $x$, $y$, $w$, $w'$ are different
  labels, this sequent is equal to $(\epsilon,w\simp w);\Gamma'[y/x][w/w'] \vdash \Delta[y/x][w/w']$. 
  Thus the derivation $\Pi'$ is constructed as follows.
\begin{center}
\alwaysNoLine
\AxiomC{$\Pi_1'$}
\UnaryInfC{$(\epsilon,w\simp w);\Gamma'[y/x][w/w'] \vdash \Delta[y/x][w/w']$}
\alwaysSingleLine
\RightLabel{$Eq_1$}
\UnaryInfC{$(\epsilon,w'\simp w);\Gamma'[y/x] \vdash \Delta[y/x]$}
\DisplayProof
\end{center}

\item If $y = w$, this case is similar to Case 1.(b).i.A.

\item Suppose $y = w'$.  Then we need to derive $(\epsilon,y\simp
  w);\Gamma'[y/x] \vdash \Delta[y/x]$. 
  If $y \not = \epsilon$, we construct $\Pi'$ by first
applying $Eq_1$ bottom-up:
%  proof search on this sequent, we have:
\begin{center}
\AxiomC{$(\epsilon,w\simp w);\Gamma'[y/x][w/y] \vdash \Delta[y/x][w/y]$}
\RightLabel{$Eq_1$}
\UnaryInfC{$(\epsilon,y\simp w);\Gamma'[y/x] \vdash \Delta[y/x]$}
\DisplayProof
\end{center}
Now the premise is equal to $(\epsilon,w\simp w);\Gamma'[w/y][w/x]
\vdash \Delta[w/y][w/x]$, and by the induction hypothesis, there is a
derivation $\Pi_1'$ of this sequent, with $ht(\Pi_1') \leq ht(\Pi_1)$.

If $y = \epsilon$, then we need to apply $Eq_2$, instead of $Eq_1$: 
\begin{prooftree}
\AxiomC{$(\epsilon,\epsilon \simp \epsilon);\Gamma'[\epsilon/x][\epsilon/w] 
   \vdash \Delta[\epsilon/x][\epsilon/w]$}
\RightLabel{$Eq_2$}
\UnaryInfC{$(\epsilon,\epsilon \simp w);\Gamma'[\epsilon/x] \vdash \Delta[\epsilon/x]$}
\end{prooftree}
Note that the sequent 
$(\epsilon,\epsilon \simp \epsilon);\Gamma'[\epsilon/x][\epsilon/w] \vdash \Delta[\epsilon/x][\epsilon/w]$
is the same as 
$$(\epsilon,\epsilon \simp \epsilon);\Gamma'[w/w'][\epsilon/w][\epsilon/x] 
\vdash \Delta[w/w'][\epsilon/w][\epsilon/x].$$
So the premise can be proved by two successive applications of the induction hypothesis to $\Pi_1$, one using
substitution $[\epsilon/w]$ and the other using substitution $[\epsilon/x].$ Here we can apply 
the induction hypothesis twice to $\Pi_1$ because substitution does not increase the height of derivations. 

\end{enumerate}

\item $x = w$ (so $w$ cannot be $\epsilon$). 

\begin{enumerate}
\item If $y \not = w'$, then $\Pi$ has the form: 
\begin{center}
\alwaysNoLine
\AxiomC{$\Pi_1$}
\UnaryInfC{$(\epsilon,x\simp x);\Gamma'[x/w'] \vdash \Delta[x/w']$}
\alwaysSingleLine
\RightLabel{$Eq_1$}
\UnaryInfC{$(\epsilon,w'\simp x);\Gamma' \vdash \Delta$}
\DisplayProof
\end{center}

By the induction hypothesis we have the folowing derivation:  
\begin{center}
\alwaysNoLine
\AxiomC{$\Pi_1'$}
\UnaryInfC{$(\epsilon,y\simp y);\Gamma'[x/w'][y/x] \vdash \Delta[x/w'][y/x]$}
\DisplayProof
\end{center}

The end sequent is equal to the following: 
\begin{center}
$(\epsilon,y\simp y);\Gamma'[y/x][y/w'] \vdash
  \Delta[y/x][y/w']$. 
\end{center}
Then by using $Eq_1$, we construct $\Pi'$ as follows.
\begin{center}
\alwaysNoLine
\AxiomC{$\Pi_1'$}
\UnaryInfC{$(\epsilon,y\simp y);\Gamma'[y/x][y/w'] \vdash \Delta[y/x][y/w']$}
\alwaysSingleLine
\RightLabel{$Eq_1$}
\UnaryInfC{$(\epsilon,w'\simp y);\Gamma'[y/x] \vdash \Delta[y/x]$}
\DisplayProof
\end{center}

\item If $y = w'$, then $\Pi$ has the form:
\begin{center}
\alwaysNoLine
\AxiomC{$\Pi_1$}
\UnaryInfC{$(\epsilon,x\simp x);\Gamma'[x/y] \vdash \Delta[x/y]$}
\alwaysSingleLine
\RightLabel{$Eq_1$}
\UnaryInfC{$(\epsilon,y\simp x);\Gamma' \vdash \Delta$}
\DisplayProof
\end{center}

By the induction hypothesis, we have the following derivation:
\begin{center}
\alwaysNoLine
\AxiomC{$\Pi_1'$}
\UnaryInfC{$(\epsilon,y\simp y);\Gamma'[x/y][y/x] \vdash \Delta[x/y][y/x]$}
\DisplayProof
\end{center}

Since in the end sequent, we replace every $y$ by $x$, and then
change every $x$ back to $y$, the effect is the same as just keeping
every $y$ unchanged and only replace every $x$ by $y$. Thus the end
sequent is equal to:
\begin{center}
$(\epsilon,y\simp y);\Gamma'[y/x] \vdash \Delta[y/x]$
\end{center}
which is exactly what we need to derive. Therefore we let $\Pi' = \Pi_1'$.
Notice that in this case $ht(\Pi') < ht(\Pi).$
\end{enumerate}

\item $x = w'$.

\begin{enumerate}
\item If $y \not = w$ and $y \not = \epsilon$, the original derivation is as follows.
\begin{center}
\alwaysNoLine
\AxiomC{$\Pi_1$}
\UnaryInfC{$(\epsilon,w\simp w);\Gamma'[w/x] \vdash \Delta[w/x]$}
\alwaysSingleLine
\RightLabel{$Eq_1$}
\UnaryInfC{$(\epsilon,x\simp w);\Gamma' \vdash \Delta$}
\DisplayProof
\end{center}

By the induction hypothesis (instead of replacing every $x$ by $y$, we now
replace every $y$ by $w$), we have the following derivation:
\begin{center}
\alwaysNoLine
\AxiomC{$\Pi_1'$}
\UnaryInfC{$(\epsilon,w\simp w);\Gamma'[w/x][w/y] \vdash
  \Delta[w/x][w/y]$}
\DisplayProof
\end{center}

The end sequent is equal to:

\begin{center}
$(\epsilon,w\simp w);\Gamma'[y/x][w/y] \vdash
  \Delta[y/x][w/y]$
\end{center}

Thus $\Pi'$ is constructed as follows.
\begin{center}
\alwaysNoLine
\AxiomC{$\Pi_1'$}
\UnaryInfC{$(\epsilon,w\simp w);\Gamma'[y/x][w/y] \vdash
  \Delta[y/x][w/y]$}
\alwaysSingleLine
\RightLabel{$Eq_1$}
\UnaryInfC{$(\epsilon,y\simp w);\Gamma'[y/x] \vdash \Delta[y/x]$}
\DisplayProof
\end{center}

\item If $y = \epsilon$ and $w \not = \epsilon$, we need to derive the following sequent:
\begin{center}
$(\epsilon,\epsilon\simp w);\Gamma'[\epsilon/x] \vdash \Delta[\epsilon/x]$
\end{center}

By induction hypothesis, replacing every $w$ by $\epsilon$ in $\Pi_1$, then using
the rule $Eq_2$, we get the
new derivation:
\begin{center}
\alwaysNoLine
\AxiomC{$\Pi_1'$}
\UnaryInfC{$(\epsilon,\epsilon\simp \epsilon);\Gamma'[\epsilon/x][\epsilon/w] \vdash
  \Delta[\epsilon/x][\epsilon/w]$}
\alwaysSingleLine
\RightLabel{$Eq_2$}
\UnaryInfC{$(\epsilon, \epsilon\simp w);\Gamma'[\epsilon/x] \vdash \Delta[\epsilon/x]$}
\DisplayProof
\end{center}

\item If $y = w$, then the premise of the last rule is
  exactly what we need to derive.

\end{enumerate}
\end{enumerate}

\item If the last rule applied is $Eq_2$, we consider three cases: $x
  \not = w$ and $y \not = w$; $x = w$; and $y = w$.
These are symmetric to the case where the last rule is $Eq_1$, already discussed above. 

\end{enumerate}

\item $y$ is the label of the principal formula. Most of the cases
  follow similarly as above, except for $\top^* L$. In this case the
  original derivation is as follows.
\begin{center}
\alwaysNoLine
\AxiomC{$\Pi_1$}
\UnaryInfC{$\Gamma'[\epsilon / y] \vdash \Delta[\epsilon /y]$}
\alwaysSingleLine
\RightLabel{$\top^* L$}
\UnaryInfC{$\Gamma';y:\top^* \vdash \Delta$}
\DisplayProof
\end{center}

Our goal is to derive $\Gamma'[y/x]; y: \top^* \vdash \Delta[y/x]$. Applying $\top^* L$ as in backward proof search, we get 
\begin{center}
$\Gamma'[y/x][\epsilon / y] \vdash \Delta [y/x][\epsilon /y]$
\end{center}

Note that this sequent is equal to $\Gamma'[\epsilon/y][\epsilon/x] \vdash \Delta [\epsilon/y][\epsilon/x]$, 
and from induction hypothesis we know that there is a derivation of this sequent of height 
less than or equal to $ht(\Pi)$.

\item $x$ is the label of the principal formula.
\begin{enumerate}
\item For the additive rules, since the labels stay the same
  in the premises and conclusions of the rules, even if the label of the principal formula is replaced by
  some other label, we can still apply the induction hypothesis on the
  premise, then use the rule to derive the conclusion.

For $\land L$,
\begin{center}
\alwaysNoLine
\AxiomC{$\Pi_1$}
\UnaryInfC{$\Gamma';x:A;x:B \vdash \Delta$}
\alwaysSingleLine
\RightLabel{$\land L$}
\UnaryInfC{$\Gamma';x:A\land B \vdash \Delta$}
\DisplayProof \qquad $\leadsto$ \qquad
\alwaysNoLine
\AxiomC{$\Pi_1'$}
\UnaryInfC{$\Gamma'[y/x];y:A;y:B \vdash \Delta[y/x]$}
\alwaysSingleLine
\RightLabel{$\land L$}
\UnaryInfC{$\Gamma'[y/x];y:A\land B \vdash \Delta[y/x]$}
\DisplayProof
\end{center}

For $\land R$,
\begin{center}
\alwaysNoLine
\AxiomC{$\Pi_1$}
\UnaryInfC{$\Gamma' \vdash x:A;\Delta$}
\AxiomC{$\Pi_2$}
\UnaryInfC{$\Gamma' \vdash x:B;\Delta$}
\alwaysSingleLine
\RightLabel{$\land R$}
\BinaryInfC{$\Gamma' \vdash x:A\land B;\Delta$}
\DisplayProof $\leadsto$\\[10px]
\alwaysNoLine
\AxiomC{$\Pi'_1$}
\UnaryInfC{$\Gamma'[y/x] \vdash y:A;\Delta[y/x]$}
\AxiomC{$\Pi'_2$}
\UnaryInfC{$\Gamma'[y/x] \vdash y:B;\Delta[y/x]$}
\alwaysSingleLine
\RightLabel{$\land R$}
\BinaryInfC{$\Gamma'[y/x] \vdash y:A\land B;\Delta[y/x]$}
\DisplayProof
\end{center}

For $\limp L$,
\begin{center}
\alwaysNoLine
\AxiomC{$\Pi_1$}
\UnaryInfC{$\Gamma' \vdash x:A;\Delta$}
\AxiomC{$\Pi_2$}
\UnaryInfC{$\Gamma';x:B \vdash \Delta$}
\alwaysSingleLine
\RightLabel{$\limp L$}
\BinaryInfC{$\Gamma'; x:A\limp B \vdash \Delta$}
\DisplayProof $\leadsto$\\[10px]
\alwaysNoLine
\AxiomC{$\Pi'_1$}
\UnaryInfC{$\Gamma'[y/x] \vdash y:A;\Delta[y/x]$}
\AxiomC{$\Pi'_2$}
\UnaryInfC{$\Gamma'[y/x];y:B \vdash \Delta[y/x]$}
\alwaysSingleLine
\RightLabel{$\limp L$}
\BinaryInfC{$\Gamma'[y/x]; y:A\limp B \vdash \Delta[y/x]$}
\DisplayProof
\end{center}

For $\limp R$,
\begin{center}
\alwaysNoLine
\AxiomC{$\Pi_1$}
\UnaryInfC{$\Gamma';x:A \vdash x:B;\Delta$}
\alwaysSingleLine
\RightLabel{$\limp R$}
\UnaryInfC{$\Gamma' \vdash x:A\limp B;\Delta$}
\DisplayProof \qquad $\leadsto$ \qquad
\alwaysNoLine
\AxiomC{$\Pi_1'$}
\UnaryInfC{$\Gamma'[y/x];y:A \vdash y:B;\Delta[y/x]$}
\alwaysSingleLine
\RightLabel{$\limp R$}
\UnaryInfC{$\Gamma'[y/x] \vdash y:A\limp B;\Delta[y/x]$}
\DisplayProof
\end{center}

\item For multiplicative rules that do not produce eigenvariables ($\mand
  R, \mimp L, \top^* L$), we can proceed similarly as in the additive
  cases, except for the $\top^* L$ rule. For the $\top^* L$ rule, if
  the label $x$ of the principal formula is replaced by some (other)
  label $y$, i.e., $\Pi$ is 
\begin{prooftree}
\AxiomC{$\Pi_1$}
\noLine
\UnaryInfC{$\Gamma'[\epsilon/x] \vdash \Delta[\epsilon/x]$}
\RightLabel{$\top^* L$}
\UnaryInfC{$\Gamma' ; x : \top^* \vdash \Delta$}
\end{prooftree}
then we then need a derivation of the sequent 
$\Gamma'[y/x];y:\top^* \vdash \Delta [y/x]$. Using $\top^* L$ rule we have:
\begin{center}
\AxiomC{$\Gamma [y/x][\epsilon/y] \vdash \Delta[y/x][\epsilon/y]$}
\RightLabel{$\top^* L$}
\UnaryInfC{$\Gamma [y/x];y:\top^* \vdash \Delta [y/x]$}
\DisplayProof
\end{center}
Note that the premise now is equal to $\Gamma [\epsilon/x][\epsilon/y]
\vdash \Delta[\epsilon/x][\epsilon/y]$, and can be proved using the induction
hypothesis on $\Pi_1$. 

If $y = \epsilon$, then $\Pi'$ is obtained by applying  
Lemma~\ref{weak_fr} to $\Pi_1.$ 

\item For the multiplicative rules that have eigenvariables ($\mand L$
  and $\mimp R$), if the label of the principal formula is replaced by
  a label other than the newly created labels in the rules, then we
  proceed similarly as in additive cases. If the label of the
  principal formula is replaced by one of the newly created labels,
  then we just need to create a different new label in the new
  relation.

For $\mand L$, we have the derivation:
\begin{center}
\alwaysNoLine
\AxiomC{$\Pi_1$}
\UnaryInfC{$(y,z \simp x);\Gamma';y:A;z:B \vdash \Delta$}
\alwaysSingleLine
\RightLabel{$\mand L$}
\UnaryInfC{$\Gamma';x:A\mand B \vdash \Delta$}
\DisplayProof
\end{center}

If $x$ is substituted by $y$ (the case for substituting to $z$ is
symmetric), then we need a derivation of $\Gamma'[y/x];y:A\mand B
\vdash \Delta[y/x]$. Note that since the $\mand L$ rule requires the
relation $(y,z\simp x)$ to be fresh, so in the original derivation $y$
and $z$ cannot be in $\Gamma$ or $\Delta$. Therefore by induction
hypothesis we must have a derivation $\Pi_1'$ for 
$$(y',z' \simp x);\Gamma';y':A;z':B \vdash \Delta,$$ 
where $y'$ and $z'$ are new
labels, such that $ht(\Pi_1') \leq ht(\Pi_1)$. 
Applying the induction hypothesis again to $\Pi_1'$, 
we have a derivation $\Pi_1''$ 
$(y',z' \simp y);\Gamma'[y/x];y':A;z':B \vdash \Delta[y/x]$,
with $ht(\Pi_1'') \leq ht(\Pi_1)$. Thus the derivation $\Pi'$ is constructed
as follows.
\begin{center}
\alwaysNoLine
\AxiomC{$\Pi_1''$}
\UnaryInfC{$(y',z' \simp y);\Gamma'[y/x];y':A;z':B \vdash \Delta[y/x]$}
\alwaysSingleLine
\RightLabel{$\mand L$}
\UnaryInfC{$\Gamma'[y/x];y:A\mand B \vdash \Delta[y/x]$}
\DisplayProof
\end{center}

The case for $\mimp R$ is similar. suppose $\Pi$ is:
\begin{center}
\alwaysNoLine
\AxiomC{$\Pi_1$}
\UnaryInfC{$(y,x \simp z);\Gamma;y:A \vdash z:B;\Delta'$}
\alwaysSingleLine
\RightLabel{$\mimp R$}
\UnaryInfC{$\Gamma \vdash x:A\mimp B;\Delta'$}
\DisplayProof
\end{center}

If $x$ is replaced by $y$, then we have the following derivation.
\begin{center}
\alwaysNoLine
\AxiomC{$\Pi_1'$}
\UnaryInfC{$(y',y \simp z');\Gamma[y/x];y':A \vdash z':B;\Delta'[y/x]$}
\alwaysSingleLine
\RightLabel{$\mimp R$}
\UnaryInfC{$\Gamma[y/x] \vdash y:A\mimp B;\Delta'[y/x]$}
\DisplayProof
\end{center}

If $x$ is replaced by $z$, then we have the following derivation.
\begin{center}
\alwaysNoLine
\AxiomC{$\Pi_1'$}
\UnaryInfC{$(y',z \simp z');\Gamma[z/x];y':A \vdash z':B;\Delta'[z/x]$}
\alwaysSingleLine
\RightLabel{$\mimp R$}
\UnaryInfC{$\Gamma[z/x] \vdash z:A\mimp B;\Delta'[z/x]$}
\DisplayProof
\end{center}

\end{enumerate}
\end{enumerate}
\qed
\end{proof}

\subsection{Weakening admissibility of $\lsbbi$}
\label{app:weak_lsbbi}
\begin{lemma}
\label{weak_fr}
For all structures $\Gamma,\Delta$, labelled formula $w:A$, and ternary relation $(x,y\simp z)$, 
if $\Gamma \vdash \Delta$ is derivable, then there exists a derivation of the same height for each of 
the following sequents:
$$
\Gamma;w:A \vdash \Delta
\qquad
\Gamma \vdash w:A;\Delta
\qquad
(x,y\simp z);\Gamma \vdash \Delta.
$$
\end{lemma}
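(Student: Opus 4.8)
The plan is to prove all three weakenings by a single induction on $ht(\Pi)$, where $\Pi$ is the given derivation of $\Gamma \vdash \Delta$; I will carry out the case of adding a left formula $w:A$ in detail, the right formula and the relational atom being entirely analogous. In the base case $ht(\Pi)=0$ the last (and only) rule is one of the zero-premise rules $id$, $\bot L$, $\top R$, $\top^* R$. Since each of these is insensitive to the side context — $id$, for instance, concludes $\Gamma';w:P \vdash w:P;\Delta'$ for arbitrary $\Gamma',\Delta'$ — the sequent obtained by inserting $w:A$ into the antecedent is again an instance of the very same zero-premise rule, so it is derivable at height $0$.

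For the inductive step I would distinguish the last rule $r$ of $\Pi$ according to whether it carries an eigenvariable (freshness) side condition. If $r$ is any rule without such a condition (the propositional rules $\land L$, $\land R$, $\limp L$, $\limp R$, the multiplicative rules $\mand R$ and $\mimp L$ that reuse existing relational atoms, and the structural rules $E$, $U$), then I apply the induction hypothesis to each premise to insert $w:A$, and reapply $r$: the principal formula or principal relational atom and all existing side context are untouched, the inserted formula is simply carried along in every premise, and the height does not increase. The rules $\top^* L$, $Eq_1$, $Eq_2$ are handled the same way, with one observation: their premises are obtained by a global substitution $\theta$ (namely $[\epsilon/w]$ for $\top^* L$ and the relevant label renaming for $Eq_1,Eq_2$), so to weaken the conclusion by $w:A$ I apply the induction hypothesis to insert $(w:A)\theta$ into the premise, and then $r$ reintroduces the principal formula unchanged; since $\theta$ fixes $\epsilon$ and merely renames labels, this causes no difficulty.

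The genuine obstacle is the rules with eigenvariable conditions — $\mand L$ and $\mimp R$, which introduce fresh labels in a new relational atom, and $A$, $A_C$, which introduce a fresh label. Here the label being inserted might clash with an eigenvariable, so that naively reapplying $r$ would violate its side condition. I resolve this exactly as in the modal-logic precedent: before applying the induction hypothesis I use the substitution lemma (Lemma~\ref{subs}) to rename the eigenvariables of the topmost instance of $r$ to fresh labels distinct from every label occurring in the weakening structure $w:A$ (respectively $(x,y\simp z)$). By Lemma~\ref{subs} this renaming does not increase the height of the subderivation, and since a renaming by a globally fresh label is invertible it in fact preserves the height exactly. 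After the renaming the freshness side condition of $r$ is compatible with the inserted material, so I apply the induction hypothesis to the renamed premise and reapply $r$, obtaining a derivation of the weakened conclusion of height at most $ht(\Pi)$.

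Finally, weakening on the right by $w:A$ and weakening on the left by a relational atom $(x,y\simp z)$ follow from the same case analysis; the only points worth rechecking are again the eigenvariable rules, where the identical renaming via Lemma~\ref{subs} applies, and the substitution rules $\top^* L$, $Eq_1$, $Eq_2$, where the inserted relational atom is carried through the substitution $\theta$ just as the formula $w:A$ was. This completes the induction and establishes the lemma. I expect the eigenvariable/freshness clash to be the sole nontrivial issue, with everything else reducing to a routine ``apply the induction hypothesis to the premises and reapply the rule'' argument.
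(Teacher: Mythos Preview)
Your proof is correct and follows exactly the paper's approach: induction on height, with the base case handled by the built-in weakening of the zero-premise rules and the only nontrivial inductive cases being those with eigenvariable side conditions, resolved by renaming the fresh labels. You are in fact more thorough than the paper's brief sketch, explicitly noting that $A$ and $A_C$ also carry freshness conditions and that the substitution rules $\top^* L$, $Eq_1$, $Eq_2$ require the inserted material to be carried through the global renaming.
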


\begin{proof}
By induction on $ht(\Pi)$.  
Since $id$, $\bot L$, $\top
R$, and $\top^* R$ all have weakening built in, the base case
trivially holds. For the inductive cases, the only nontrivial case is for
$\mand L$ and $\mimp R$, where new labels have to be
introduced. These labels can be systematically renamed to make sure that they 
do not clash with the labels in the weakened formula/relational atom.\qed
\end{proof}

This yields the proof for Lemma~\ref{lm:weak} in the paper. Furthermore, we can prove more useful lemmas based on the weakening property.

The next lemma shows that the assumption $\epsilon:\top^*$ in the antecedent of
a sequent is not used in any derivation, and since there is no rule that can be applied to 
it, so it can be removed without affecting provability. 
%% , this formula does not contribute to the derivation, and will
%% eventually be propagated up into an instance of zero-premise rules.

\begin{lemma}
\label{topstar}
If $\Gamma; \epsilon:\top^* \vdash \Delta$ is derivable, then $\Gamma \vdash \Delta$
is derivable with the same series of rule applications.
\end{lemma}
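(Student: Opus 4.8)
The plan is to induct on the height $ht(\Pi)$ of the given derivation $\Pi$ of $\Gamma;\epsilon:\top^* \vdash \Delta$, with a case analysis on the last rule applied. The crucial observation driving the whole argument is that the labelled formula $\epsilon:\top^*$ can never be the principal formula of any rule. The only rule acting on an antecedent formula of the form $w:\top^*$ is $\top^* L$, whose side condition demands $w \neq \epsilon$; and since $\top^*$ is not an atomic proposition, $\epsilon:\top^*$ cannot be principal in $id$ either. Hence in every derivation $\epsilon:\top^*$ is merely carried along as a passive member of the context, and the lemma should reduce to the standard ``deletion of a passive context formula'' argument.

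For the base cases ($id$, $\bot L$, $\top R$, $\top^* R$) the principal formula or relational atom lies entirely in $\Gamma$ or $\Delta$, so deleting $\epsilon:\top^*$ from the antecedent still yields an instance of the same zero-premise rule. For the inductive step, I would observe that in each rule the occurrence of $\epsilon:\top^*$ appears unchanged in the context of every premise; I then apply the induction hypothesis to each premise derivation to remove it, and reapply the same rule. This yields a derivation of $\Gamma \vdash \Delta$ using exactly the same series of rule applications (and of no greater height), which is what the lemma asserts.

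The one point requiring genuine care is the family of rules performing a global label substitution, namely $\top^* L$ applied to some \emph{other} formula $w':\top^*$ with $w'\neq\epsilon$, together with $Eq_1$ and $Eq_2$. Here I must confirm that the passive formula $\epsilon:\top^*$ survives the substitution intact, so that the induction hypothesis can be invoked on the substituted premise. This holds because every substitution $\theta$ occurring in $\lsbbi$ has the form $[\epsilon/w']$ or $[w'/w]$ where, by the side conditions in Figure~\ref{fig:LS_BBI}, the variable being replaced is distinct from $\epsilon$; since the label of $\epsilon:\top^*$ is precisely $\epsilon$, we have $(\epsilon:\top^*)\theta = \epsilon:\top^*$. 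Thus the substituted premise still contains $\epsilon:\top^*$ as a context formula, the induction hypothesis applies, and reapplying the substitution rule reconstructs the original conclusion with $\epsilon:\top^*$ removed.

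I expect this verification for the substitution rules to be the main (indeed essentially the only) obstacle; for all remaining rules the argument is the routine context-permutation pattern already exploited in Lemma~\ref{subs} and Lemma~\ref{weak_fr}, so no new difficulty arises there.
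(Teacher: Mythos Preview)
Your proposal is correct and follows essentially the same approach as the paper: induction on the height of the derivation, resting on the observation that $\epsilon:\top^*$ can never be principal (the side condition $w\neq\epsilon$ on $\top^*L$ being the key point), so it is a passive context formula that can be dropped. If anything, you are more careful than the paper, which dispatches the inductive step in one sentence without singling out the substitution rules; your explicit check that $(\epsilon:\top^*)\theta=\epsilon:\top^*$ for every substitution $\theta$ arising in $\lsbbi$ is exactly the right justification for that glossed-over step.
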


\begin{proof}
By a straightforward induction on the height of derivation $n$.

\noindent (Base case) If $n = 0$, then $\Gamma; \epsilon:\top^* \vdash
\Delta$ must be the conclusion of one of $id$, $\bot L$, $\top R$,
$\top^* R$. Note that $\epsilon:\top^*$ in the antecedent cannot be
the principal formula of any of those rules, therefore those rules are
applicable to $\Gamma \vdash \Delta$ as well.

\noindent (Inductive case) If $n > 0$, consider the last rule in the
derivation. It is obvious that $\epsilon:\top^*$ in the antecedent of
a sequent cannot be the principal formula of any rules, therefore it
has to appear in the premise(s) of the last rule. Thus we can apply
the induction hypothesis on the premise(s) and then use the corresponding
rule to derive $\Gamma \vdash \Delta$.\qed
\end{proof}

In general, if a formula is never principal in a derivation, it can obviously
be omitted.

\begin{lemma}
\label{r_weak_f} 
If $w:A$ is not the principal formula of any rule application in the derivation 
of $\Gamma;w:A\vdash \Delta$ ($\Gamma \vdash w:A;\Delta$ resp.), then there is a 
derivation of $\Gamma \vdash \Delta$ with the same series of rule applications.
\end{lemma}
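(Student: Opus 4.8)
The plan is to induct on the height $ht(\Pi)$ of the given derivation, treating the antecedent case $\Gamma; w:A \vdash \Delta$ in detail; the succedent case $\Gamma \vdash w:A;\Delta$ is entirely symmetric. Throughout, I would track the distinguished occurrence of $w:A$ as it is copied (and possibly has its label rewritten by a substitution rule) from the conclusion into the premises, and simply delete that occurrence from every sequent in the tree while keeping each rule application at the same place. The content of the lemma is then exactly that every such deletion leaves a valid instance of the same rule, so that the resulting tree is a genuine $\lsbbi$ derivation of $\Gamma \vdash \Delta$ with the same series of rule applications.

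For the base case, $\Pi$ ends in one of the zero-premise rules $id$, $\bot L$, $\top R$, $\top^* R$. By hypothesis $w:A$ is not principal, hence it is a genuine side formula: in $id$ the matching pair $w':P$ and $w':P$ survives the deletion, in $\bot L$ the principal $w':\bot$ survives, and similarly for $\top R$ and $\top^* R$. Thus deleting $w:A$ yields an instance of the same zero-premise rule, establishing the base case.

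For the inductive step, let $r$ be the last rule of $\Pi$. Since $w:A$ is not principal in $r$ (it is not principal anywhere in $\Pi$), the marked occurrence is carried unchanged into each premise, except when $r$ is one of the substitution rules ($\top^* L$, $Eq_1$, $Eq_2$), in which case it appears in the premise as $(w:A)\theta$ for the substitution $\theta$ used by $r$; in either situation the occurrence in the premise is still not principal anywhere above it. Applying the induction hypothesis to each subderivation removes that occurrence, and I would then reapply $r$ at the root. The only side conditions to re-check are the freshness requirements of $\mand L$, $\mimp R$, $A$, and $A_C$: deleting a side formula can only shrink the set of labels occurring in the conclusion, so any label that was fresh before remains fresh, and the rules reapply verbatim.

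I expect the only delicate point to be the bookkeeping for the substitution rules $\top^* L$, $Eq_1$, $Eq_2$, where the label of the tracked occurrence may change between conclusion and premise; one must be careful that it is the substituted image $(w:A)\theta$ that is deleted from the premise, so that deletion commutes with the substitution and the reapplied rule still matches. Once that is observed, every case reduces to ``side formula carried upward, delete it, reapply the rule'', and since the hypothesis forbids $w:A$ from ever becoming principal, no case can create a premise in which the deletion would disturb a principal formula. This also subsumes Lemma~\ref{topstar} as the special instance $A = \top^*$, $w = \epsilon$.
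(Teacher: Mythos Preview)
Your proposal is correct and follows exactly the approach the paper takes (the paper states this lemma without proof, calling it obvious after Lemma~\ref{topstar}, and your induction on height with case analysis on the last rule is the intended straightforward argument, matching the detailed proofs given for the surrounding Lemmas~\ref{topstar} and~\ref{subs_str}). Your careful handling of the substitution rules, where the tracked occurrence becomes $(w{:}A)\theta$ in the premise, is precisely the bookkeeping the paper spells out in the proof of Lemma~\ref{subs_str}.
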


%% \begin{proof}
%% The argument is similar to the proof for Lemma~\ref{topstar}. Note that the label $w$ can be any label, therefore in the original derivation it might be replaced by some other label when one of the rules $\top^* L$, $Eq_1$, and $Eq_2$ is applied. However, we only mark on the formula $A$ to ensure that it is not the principal formula of any rule applications.
%% \end{proof}

If we combine the above lemma and the admissibility of weakening, then we can replace a formula that is never used in a derivation by any structure.

\begin{lemma}
\label{subs_str}
If $w:A$ is not the principal formula of any rule application (even
though the label might be changed) in the
derivation of $\Gamma;w:A\vdash \Delta$ ($\Gamma \vdash w:A;\Delta$
resp.), then there is a derivation of $\Gamma;\Gamma' \vdash \Delta$
($\Gamma \vdash \Delta';\Delta$ resp.), and in the new derivation, the
structure $\Gamma$ ($\Delta$ resp.) is not altered except that certain
labels in $\Gamma$ ($\Delta$ resp.) are changed.
\end{lemma}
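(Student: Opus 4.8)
The plan is to obtain this statement as a direct corollary of Lemma~\ref{r_weak_f} together with the admissibility of weakening (Lemma~\ref{lm:weak}), exactly as the paragraph preceding the lemma advertises (``combine the above lemma and the admissibility of weakening''). I will treat the antecedent case $\Gamma; w:A \vdash \Delta \leadsto \Gamma; \Gamma' \vdash \Delta$ in full; the succedent case, producing $\Gamma \vdash \Delta'; \Delta$ from $\Gamma \vdash w:A; \Delta$, is entirely symmetric and needs no separate argument.

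The construction has two steps. First I delete the inert formula: since $w:A$ is never principal in the given derivation $\Pi$ of $\Gamma; w:A \vdash \Delta$ --- and the parenthetical hypothesis guarantees this remains so even though the global substitutions arising from $\top^* L$, $Eq_1$, and $Eq_2$ may rewrite the tracked label $w$ as a branch is traversed --- the hypothesis of Lemma~\ref{r_weak_f} is met, and that lemma yields a derivation $\Pi_0$ of $\Gamma \vdash \Delta$ using the same series of rule applications as $\Pi$. Crucially, every substitution rule of $\Pi$ that was \emph{not} triggered by $w:A$ (there are no others, since $w:A$ is never principal) still fires in $\Pi_0$, so the labels occurring within $\Gamma$ are rewritten at the internal nodes of $\Pi_0$ exactly as they were in $\Pi$; this is precisely what justifies the concluding clause that $\Gamma$ is preserved up to the changing of certain labels, its multiset of formulae being otherwise untouched. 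The second step weakens the arbitrary structure $\Gamma'$ back in: applying Lemma~\ref{lm:weak} (equivalently, iterating Lemma~\ref{weak_fr} over the finitely many members of $\Gamma'$) turns $\Pi_0$ into a height-preserving derivation of $\Gamma; \Gamma' \vdash \Delta$, as required.

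The one point demanding care --- and the main potential obstacle --- is eigenvariable freshness. Because $\Gamma'$ is an arbitrary structure, its labels may collide with the fresh eigenvariables introduced by the $\mand L$ and $\mimp R$ instances occurring in the skeleton $\Pi_0$, which would violate the side conditions on those rules. This is exactly the clash already handled inside the proof of Lemma~\ref{weak_fr}, where the offending eigenvariables are systematically renamed apart from the weakened-in material before each such rule is reapplied; invoking that renaming here discharges the difficulty while leaving $\Gamma$ itself intact. The only remaining bookkeeping is to confirm, when appealing to Lemma~\ref{r_weak_f}, that the tracked occurrence of $A$ genuinely stays non-principal under every global label substitution along each branch, but this is furnished directly by the hypothesis, so no further difficulty arises and the lemma follows.
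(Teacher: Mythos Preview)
Your two-step argument (delete the inert $w:A$ via Lemma~\ref{r_weak_f}, then weaken $\Gamma'$ in via Lemma~\ref{lm:weak}) is correct for the existence of the derivation and is exactly what the paragraph preceding the lemma advertises. The paper's own proof, however, does \emph{not} take this route: it runs a single direct induction on the height of the derivation, replacing $w:A$ by $\Gamma'$ in one pass through each rule instance, with the only non-trivial case being $\top^* L$, where the Substitution Lemma is invoked to push the substitution $[\epsilon/w]$ into the freshly inserted $\Gamma'$.

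One substantive point you missed: the concluding clause of the lemma is about the \emph{inserted} structure $\Gamma'$ (resp.\ $\Delta'$), not about $\Gamma$ --- the statement as printed contains a typo, and the paper's proof and the Note immediately following make the intended reading clear (``the only change to $\Gamma'$ in the new derivation is that some of its labels might be changed by the rules $\top^* L$, $Eq_1$, or $Eq_2$''). You argued instead that $\Gamma$ is preserved up to relabelling. For the intended $\Gamma'$ claim the direct induction delivers it automatically, since $\Gamma'$ is carried passively except through the global substitutions; your two-step approach needs a strengthened form of weakening asserting that the weakened-in material is only affected by label substitutions, which the paper states only informally in the Note \emph{after} this lemma, not as part of Lemma~\ref{lm:weak} itself.
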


\begin{proof}
By induction on the height of derivation $n$.

\noindent (Base case) If $n = 0$, since $w:A$ is not the principal formula, the
substituted sequent is also the conlcusion of rules $id$, $\bot L$,
$\top R$, $\top^* R$. This is the same as the base case of the proof
for Lemma~\ref{topstar}.

\noindent (Inductive case) If $n > 0$, consider the last rule in the
derivation. Since $w:A$ is not the principal formula, for all rules
except $\top^* L$, the original derivation has $w:A$ in the premise(s)
of the last rule, therefore we can apply the induction hypothesis on the
premise(s) and then use the rule to get the desired derivation. We
give an example here.

For $\land L$, suppose $w:A$ is in the antecedent, the original
derivation is converted as follows.
\begin{center}
\alwaysNoLine
\AxiomC{$\Pi$}
\UnaryInfC{$\Gamma;w:A;x:B;x:C \vdash \Delta$}
\alwaysSingleLine
\RightLabel{$\land L$}
\UnaryInfC{$\Gamma;w:A;x:B\land C \vdash \Delta$}
\DisplayProof \qquad $\leadsto$ \qquad
\alwaysNoLine
\AxiomC{$\Pi'$}
\UnaryInfC{$\Gamma;\Gamma';x:B;x:C \vdash \Delta$}
\alwaysSingleLine
\RightLabel{$\land L$}
\UnaryInfC{$\Gamma;\Gamma';x:B\land C \vdash \Delta$}
\DisplayProof 
\end{center}

Other cases except $\top^* L$ are similar.

If the last rule is $\top^* L$, then we convert the derivation as
follows.
\begin{center}
\alwaysNoLine
\AxiomC{$\Pi$}
\UnaryInfC{$\Gamma[\epsilon/w];\epsilon:A \vdash \Delta[\epsilon/w]$}
\alwaysSingleLine
\RightLabel{$\top^* L$}
\UnaryInfC{$\Gamma;w:A;w:\top^* \vdash \Delta$}
\DisplayProof \qquad $\leadsto$ \qquad
\alwaysNoLine
\AxiomC{$\Pi'$}
\UnaryInfC{$\Gamma[\epsilon/w];\Gamma'[\epsilon/w] \vdash \Delta[\epsilon/w]$}
\alwaysSingleLine
\RightLabel{$\top^* L$}
\UnaryInfC{$\Gamma;\Gamma';w:\top^* \vdash \Delta$}
\DisplayProof 
\end{center}

Note that we incorporate two steps here. First, by induction
hypothesis, we have a derivation of $\Gamma[\epsilon/w];\Gamma' \vdash
\Delta[\epsilon/w]$. Then by the Substitution Lemma, there is a derivation $\Pi'$
of $\Gamma[\epsilon/w];\Gamma'[\epsilon/w] \vdash
\Delta[\epsilon/w]$, from which we can derive the final sequent.

Therefore the only change to $\Gamma'$ in the new derivation is that
some of its labels might be changed by the rules $\top^* L$, $Eq_1$,
or $Eq_2$.\qed
\end{proof}

\begin{note}
The admissibility of general weakening shows that if $\Gamma \vdash
\Delta$ is derivable, then $\Gamma;\Gamma' \vdash \Delta;\Delta'$ is
derivable. A stronger argument here is that in the derivation of the
latter sequent, $\Gamma'$ and $\Delta'$ are never changed except that
some labels might be changed. This is similar as in Lemma~\ref{subs_str}.
\end{note} 

\subsection{Invertibility of rules in $\lsbbi$}
\label{app:invert}

Proof for Lemma~\ref{lm:invert}.

\begin{proof}
As the additive rules in $\lsbbi$ are exactly the same as those in
Negri's labelled system for Modal logic or $G3c$
(cf.~\cite{Negri2001}), the proof for them is similar. The main
difference is that the rest of our rules are of different
forms. However, as most of our rules do not modify the side
structures, simply by applying the induction hypothesis and then using
the corresponding rule, we get the new derivation. The cases where the
last rule applied is $\top^* L$, $Eq_1$, or $Eq_2$ follow essentially
the same, except a global substitution needs to be considered, but
that is of no harm.

Rules $E$, $A$, $U$, $A_C$, $\mand R$ and $\mimp L$ are trivially
invertible as the conclusion is a subset of the premise, and weakening
is height-preserving admissible.

To prove the cases for $\mand L$ and $\mimp R$, we do inductions on the
height $n$ of the derivation. In each case below, it is obvious that 
each premise is always cut-free derivable with less or same height as the conclusion.

The case for $\mand L$ is as follows.

\noindent (Base case) If $n = 0$, then the conclusion of $\mand L$
is one of the conlucsions of $id$, $\bot L$, $\top R$, $\top^* R$,
notice that the identity rule is restricted to propositions, therefore
the
premise of $\mand L$ is also the conclusions of the
corresponding axiom rule.

\noindent (Inductive case) If $n > 0$, and the last rule applied is not $\mand
L$ or $\mimp R$, then no fresh labels are involved, so we can safely
apply the induction hypothesis on the premise of the last rule and then
use the rule to get the derivation. If the last rule is $\mand L$ or
$\mimp R$, but the principal formula is in $\Gamma$ or $\Delta$, we
proceed similarly, and use the Substitution Lemma to ensure that the
eigenvariables are new. If the principal formula is $z:A\mand B$, then the
premise of the last rule yields the desired conclusion.

The case for $\mimp R$ follows similarly.

For $\top^* L$, again, we do an induction on the height $n$ of the derivation.

\noindent (Base case) If $n = 0$, then $\Gamma; x:\top^* \vdash
\Delta$ is the conclusion of one of $id$, $\bot L$, $\top R$, $\top^*
R$, and $x:\top^*$ cannot be the principal formula. Note that in the
first three cases the principal formulae can be labelled with
anything. Since, in the sequent $\Gamma[\epsilon/x] \vdash
\Delta[\epsilon/x]$, the label $x$ is uniformly replaced by
$\epsilon$, this sequent can be the conclusion of the corresponding
rule as well. For $\top^* R$, since $\top^*$ on the right hand side
can only be labelled with $\epsilon$, so replacing $x$ to $\epsilon$
does not change its label. Thus this case is not broken either.

\noindent (Inductive case) If $n > 0$, consider the last rule applied in the derivation. 
\begin{enumerate}
\item If the principal formula or relation does not involve the label
  $x$, then we can apply the induction hypothesis directly on the
  premise of the last rule, then use the last rule to get the
  derivation.

\item Otherwise, if the principal formula or relation has label $x$,
  and the last rule is not $\top^* L$, we proceed similarly, except
  replacing the label in the principal relation or formula. The detail
  is exemplified using $\mand L$.

For $\mand L$, we have the following derivation:
\begin{center}
\alwaysNoLine
\AxiomC{$\Pi$}
\UnaryInfC{$(y,z\simp x);\Gamma;x:\top^*;y:A;z:B \vdash \Delta$}
\alwaysSingleLine
\RightLabel{$\mand L$}
\UnaryInfC{$\Gamma;x:\top^*;x:A\mand B \vdash \Delta$}
\DisplayProof
\end{center}

The condition of the rule $\mand L$ guarantees that $y$ and $z$ cannot be in $\Gamma$ and $\Delta$, so we do not have to worry if they are identical to $x$. By applying the induction hypothesis and then using the rule, we get the following derivation:
\begin{center}
\alwaysNoLine
\AxiomC{$\Pi'$}
\UnaryInfC{$(y,z\simp \epsilon);\Gamma[\epsilon/x];y:A;z:B \vdash \Delta[\epsilon/x]$}
\alwaysSingleLine
\RightLabel{$\mand L$}
\UnaryInfC{$\Gamma[\epsilon/x];\epsilon:A\mand B \vdash \Delta[\epsilon/x]$}
\DisplayProof
\end{center}

Another way to do this is by using the Substitution Lemma, replacing $x$ by $\epsilon$, we get a derivation to the premise that has a redundant $\epsilon:\top^*$, since we know that this labelled formula on the left hand side does not contribute to the derivation, we can safely derive the sequent without it using the same inference, cf. Lemma~\ref{r_weak_f}.

The case where the last rule is $\mimp R$ is similar. 

If the last rule is $Eq_1$, we consider the following cases:
\begin{enumerate}
\item The label of $\top^*$ is not in the principal relation (i.e., $x
  \not = w$ and $x \not = w'$). The
  original derivation is as follows.
\begin{center}
\alwaysNoLine
\AxiomC{$\Pi$}
\UnaryInfC{$(\epsilon,w\simp w);\Gamma[w/w'];x:\top^* \vdash
  \Delta[w/w']$}
\alwaysSingleLine
\RightLabel{$Eq_1$}
\UnaryInfC{$(\epsilon,w'\simp w);\Gamma;x:\top^* \vdash
  \Delta$}
\DisplayProof
\end{center}

By the induction hypothesis, we have the following derivation:
\begin{center}
\alwaysNoLine
\AxiomC{$\Pi'$}
\UnaryInfC{$(\epsilon,w\simp w);\Gamma[w/w'][\epsilon/x] \vdash
  \Delta[w/w'][\epsilon/x]$}
\DisplayProof
\end{center}

Note that since $x$, $w$, $w'$ are all different, the end sequent is
equal to the following:
\begin{center}
$(\epsilon,w\simp w);\Gamma [\epsilon/x][w/w'] \vdash
  \Delta [\epsilon/x][w/w']$
\end{center}

from which we can use the rule $Eq_1$ and derive $(\epsilon,w'\simp w);\Gamma [\epsilon/x] \vdash
  \Delta [\epsilon/x]$.

\item $x = w$. The original derivation is as follows.
\begin{center}
\alwaysNoLine
\AxiomC{$\Pi$}
\UnaryInfC{$(\epsilon,x\simp x);\Gamma[x/w'];x:\top^* \vdash
  \Delta[x/w']$}
\alwaysSingleLine
\RightLabel{$Eq_1$}
\UnaryInfC{$(\epsilon,w'\simp x);\Gamma;x:\top^* \vdash
  \Delta$}
\DisplayProof
\end{center}

By the substitution lemma, replacing every $x$ by $\epsilon$ in the
premise of the last rule, we get the following derivation:
\begin{center}
\alwaysNoLine
\AxiomC{$\Pi'$}
\UnaryInfC{$(\epsilon,\epsilon\simp \epsilon);\Gamma[x/w'][\epsilon/x];\epsilon:\top^* \vdash
  \Delta[x/w'][\epsilon/x]$}
\DisplayProof
\end{center}

The end sequent is equal to:
\begin{center}
$(\epsilon,\epsilon\simp \epsilon);\Gamma [\epsilon/x][\epsilon/w'];\epsilon:\top^* \vdash
  \Delta [\epsilon/x][\epsilon/w']$
\end{center}

By Lemma~\ref{topstar}, $\epsilon:\top^*$ in the antecedent can be
omitted. Apply the $Eq_1$ rule on this sequent without $\epsilon:\top^*$, we finally get $(\epsilon,w'\simp \epsilon);\Gamma [\epsilon/x] \vdash
  \Delta [\epsilon/x]$.

\item $x = w'$. The original derivation is as follows.
\begin{center}
\alwaysNoLine
\AxiomC{$\Pi$}
\UnaryInfC{$(\epsilon,w\simp w);\Gamma[w/x];w:\top^* \vdash
  \Delta[w/x]$}
\alwaysSingleLine
\RightLabel{$Eq_1$}
\UnaryInfC{$(\epsilon,x\simp w);\Gamma;x:\top^* \vdash
  \Delta$}
\DisplayProof
\end{center}

By the induction hypothesis, we have the following derivation:
\begin{center}
\alwaysNoLine
\AxiomC{$\Pi'$}
\UnaryInfC{$(\epsilon,\epsilon\simp \epsilon);\Gamma[w/x][\epsilon/w] \vdash
  \Delta[w/x][\epsilon/w]$}
\DisplayProof
\end{center}

Now the end sequent is equal to:
\begin{center}
$(\epsilon,\epsilon\simp \epsilon);\Gamma[\epsilon/x][\epsilon/w] \vdash
  \Delta[\epsilon/x][\epsilon/w]$
\end{center} 

By using the rule $Eq_2$ on this sequent, we derive
$(\epsilon,\epsilon\simp w);\Gamma[\epsilon/x] \vdash
  \Delta[\epsilon/x]$.
\end{enumerate} 

The case where the last rule is $Eq_2$ is similar to the case for $Eq_1.$

%% If the last rule is $Eq_2$, we consider the following cases:
%% \begin{enumerate}
%% \item $x \not = w$, this case is similar as the first case for $Eq_1$.

%% \item $x = w$, we have the following derivation:
%% \begin{center}
%% \alwaysNoLine
%% \AxiomC{$\Pi$}
%% \UnaryInfC{$(\epsilon,\epsilon\simp \epsilon);\Gamma[\epsilon/x];\epsilon:\top^* \vdash
%%   \Delta[\epsilon/x]$}
%% \alwaysSingleLine
%% \RightLabel{$Eq_2$}
%% \UnaryInfC{$(\epsilon,\epsilon\simp x);\Gamma;x:\top^* \vdash
%%   \Delta$}
%% \DisplayProof
%% \end{center}
 
%% By Lemma~\ref{topstar}, there is a derivation of the desired sequent $(\epsilon,\epsilon\simp \epsilon);\Gamma[\epsilon/x] \vdash
%%   \Delta[\epsilon/x]$.

%% \end{enumerate}

If the last rule is $\top^* L$,
then the derivation to the premise of the last rule yields the new derivation.

\end{enumerate}

The invertibility of $Eq_1$ and $Eq_2$ follows from the Substitution
Lemma, as the reverse versions of these two rules are only about replacing labels.\qed
\end{proof}

\subsection{Contraction admissibility of $\lsbbi$}
\label{app:ctr_f}

\begin{lemma}
\label{ctr_f}
For all structures $\Gamma,\Delta$, and labelled formula $w:A$, the following holds in $\lsbbi$:
\begin{enumerate}
\item If there is a cut-free derivation $\Pi$ of $\Gamma;w:A;w:A \vdash \Delta$, 
then there is a cut-free derivation $\Pi'$ of $\Gamma;w:A \vdash \Delta$ with $ht(\Pi') \leq ht(\Pi).$
\item If there is a cut-free derivation $\Pi$ of 
$\Gamma \vdash w:A;w:A;\Delta$, then there is a cut-free derivation $\Pi'$ of
$\Gamma \vdash w:A;\Delta$ with $ht(\Pi') \leq ht(\Pi).$
\end{enumerate}
\end{lemma}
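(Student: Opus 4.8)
The plan is to prove parts (1) and (2) \emph{simultaneously} by induction on $ht(\Pi)$, since the left- and right-hand cases are completely symmetric. The tools I will draw on are height-preserving invertibility (Lemma~\ref{lm:invert}), the substitution lemma (Lemma~\ref{subs}), contraction on relational atoms (Lemma~\ref{ctr_r}), and the fact that a spurious $\epsilon:\top^*$ in the antecedent is harmless (Lemma~\ref{topstar}). The overall shape follows the classical contraction argument for Negri-style labelled calculi; the only genuinely new work occurs where a rule consumes its principal multiplicative formula while introducing fresh eigenlabels, or where $\top^* L$ performs a global substitution.

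For the base case $ht(\Pi)=0$ the contracted sequent is the conclusion of one of $id$, $\bot L$, $\top R$, $\top^* R$. Each of these zero-premise rules carries an arbitrary side-context (weakening is built in), so deleting one of the two copies of $w:A$ still yields an instance of the same rule, and $\Pi'$ exists at height $0$. For the inductive step I case-split on the last rule $r$ of $\Pi$ and on whether a copy of $w:A$ is principal in $r$. If neither copy is principal, both copies survive in every premise; I apply the induction hypothesis premise-by-premise and re-apply $r$, using Lemma~\ref{subs} to rename the eigenlabels of $\mand L$ or $\mimp R$ away from $w$ when needed (the side conditions guarantee those labels differ from $w$, so no real clash arises). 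If one copy is principal but $r$ \emph{retains} its principal formula -- the cases $\mand R$ and $\mimp L$ -- then the surviving copy together with the retained principal still sits in the premises, so the induction hypothesis followed by a re-application of $r$ suffices. If one copy is principal for an additive rule that consumes it ($\land L$, $\land R$, $\limp L$, $\limp R$), I invert $r$ (Lemma~\ref{lm:invert}) against the surviving copy, replacing both copies by duplicated immediate subformulae all carrying the label $w$; a bounded number of applications of the induction hypothesis contracts these, and re-applying $r$ restores $w:A$, all at no greater height.

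The delicate cases, and the main obstacle, are the multiplicative left/right rules that consume the principal formula while creating fresh labels. Consider $\mand L$ with conclusion $\Gamma;z:A\mand B;z:A\mand B\vdash\Delta$, one copy principal, so the premise is $(x,y\simp z);\Gamma;z:A\mand B;x:A;y:B\vdash\Delta$. Inverting $\mand L$ on the surviving copy decomposes it with fresh $x',y'$, yielding $(x,y\simp z);(x',y'\simp z);\Gamma;x:A;y:B;x':A;y':B\vdash\Delta$ at no greater height. Since $x',y'$ are fresh, Lemma~\ref{subs} lets me rename $x'\mapsto x$ and $y'\mapsto y$, producing duplicated occurrences of the relational atom $(x,y\simp z)$ and of each of $x:A$ and $y:B$. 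I then contract the relational atom by Lemma~\ref{ctr_r} and the two labelled formulae by the induction hypothesis, and re-apply $\mand L$ to reach $\Gamma;z:A\mand B\vdash\Delta$. The rule $\mimp R$ is handled symmetrically on the right. This interleaving of invertibility, label renaming, and relational-atom contraction is exactly where the labelled setting departs from the purely propositional one, and is the step I expect to require the most care.

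Finally, $\top^* L$ is treated separately but turns out to be clean. With conclusion $\Gamma';w:\top^*;w:\top^*\vdash\Delta$ and one copy principal, the premise is $\Gamma'[\epsilon/w];\epsilon:\top^*\vdash\Delta[\epsilon/w]$, because the global substitution $[\epsilon/w]$ also rewrites the surviving copy into $\epsilon:\top^*$. By Lemma~\ref{topstar} this $\epsilon:\top^*$ may be discarded, giving a derivation of $\Gamma'[\epsilon/w]\vdash\Delta[\epsilon/w]$ of no greater height, and a single application of $\top^* L$ to $\Gamma';w:\top^*\vdash\Delta$ has precisely this as its premise. Collecting all cases, the contracted sequent is derivable with $ht(\Pi')\leq ht(\Pi)$, which completes the induction and yields Lemma~\ref{lm:ctr}.
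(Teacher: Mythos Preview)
Your proposal is correct and follows essentially the same approach as the paper's proof: simultaneous induction on height, with the non-principal and principal-retained cases handled directly by the induction hypothesis, the additive principal cases by invertibility plus the induction hypothesis on subformulae, the $\mand L$/$\mimp R$ cases by invertibility followed by label substitution, relational-atom contraction (Lemma~\ref{ctr_r}), and the induction hypothesis, and the $\top^* L$ case via Lemma~\ref{topstar}. The paper's write-up is terser but matches your outline step for step.
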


\begin{proof}
By simultaneous induction on the height of derivations for the left and
right contraction. Let $n = ht(\Pi).$

\noindent (Base case) If $n = 0$, the premise is one of the conclusions of $id$,
$\bot L$, $\top R$ and $\top^* R$, then the contracted sequent is also
the conclusion of the corresponding rules.

\noindent (Inductive case) If $n > 0$, consider the last rule applied to the
premise of the contraction.

(i) If the contracted formula is not principal in the last rule, then
we can apply the induction hypothesis on the premise(s) of the last rule,
then use the rule to get the derivation. 

(ii) If the contracted formula is the principal formula of the last
rule, we have several cases. For the additive rules the cases are reduced to contraction on
smaller formulae, cf.~\cite{Negri2001}.

For $\top^* L$, we have the following derivation:
\begin{center}
\alwaysNoLine
\AxiomC{$\Pi$}
\UnaryInfC{$\Gamma[\epsilon/x];\epsilon:\top^* \vdash
  \Delta[\epsilon/x]$}
\alwaysSingleLine
\RightLabel{$\top^* L$}
\UnaryInfC{$\Gamma;x:\top^*;x:\top^* \vdash \Delta$}
\DisplayProof
\end{center}

Note that the only case where $\top^*$ is useful on the left hand side
is when it is labelled with a world other than $\epsilon$. Since the substitution 
$[\epsilon/\epsilon]$ does not do anything to the sequent, $\Pi$
can also be the derivation for $\Gamma[\epsilon/x] \vdash \Delta[\epsilon/x]$, cf. Lemma~\ref{topstar}, which leads to $\Gamma;x:\top^* \vdash \Delta$.

For $\mand R$ and $\mimp L$, we can apply the induction hypothesis
directly on the premise of the corresponding rule since the rules carry the principal
formula into the premise(s).

For $\mand L$, we have a derivation as follows.
\begin{center}
\alwaysNoLine
\AxiomC{$\Pi$}
\UnaryInfC{$(x,y\simp z);\Gamma;z:A\mand B;x:A;y:B \vdash \Delta$}
\alwaysSingleLine
\RightLabel{$\mand L$}
\UnaryInfC{$\Gamma;z:A\mand B;z:A\mand B \vdash \Delta$}
\DisplayProof
\end{center}

Apply the Invertibility Lemma on the premise of $\mand L$, we have:
\begin{center}
\alwaysNoLine
\AxiomC{$\Pi'$}
\UnaryInfC{$(x,y\simp z);(x',y'\simp z);\Gamma;x':A;y':B;x:A;y:B \vdash \Delta$}
\DisplayProof
\end{center}

The Substitution Lemma yields a derivation for $(x,y\simp z);(x,y\simp
z);\Gamma;x:A;y:B;x:A;y:B \vdash \Delta$. Apply the induction hypothesis twice
and admissibility of contraction on relational atoms
on this sequent, to get a derivation for $(x,y\simp
z);\Gamma;x:A;y:B \vdash \Delta$. Apply $\mand L$ on this
sequent to get $\Gamma;z:A\mand B \vdash \Delta$.

The case for $\mimp R$ follows similarly. We have a derivation as follows.
\begin{center}
\alwaysNoLine
\AxiomC{$\Pi$}
\UnaryInfC{$(x,y\simp z);\Gamma;x:A \vdash z:B;y:A\mimp B;\Delta$}
\alwaysSingleLine
\RightLabel{$\mimp R$}
\UnaryInfC{$\Gamma \vdash y:A\mimp B;y:A\mimp B;\Delta$}
\DisplayProof
\end{center}

The Invertibility of $\mimp R$ in the premise yields:
\begin{center}
\alwaysNoLine
\AxiomC{$\Pi$}
\UnaryInfC{$(x,y\simp z);(x',y\simp z');\Gamma;x:A;x':A \vdash z:B;z':B;\Delta$}
\DisplayProof
\end{center}

We obtain $(x,y\simp z);(x,y\simp z);\Gamma;x:A;x:A \vdash
z:B;z:B;\Delta$ by the Substitution Lemma. Apply induction hypothesis
twice, and the admissibility of contraction on relations on this sequent, to get $(x,y\simp z);\Gamma;x:A \vdash z:B\Delta$. Finally, apply $\mimp R$, to derive $\Gamma \vdash y:A\mimp B;\Delta$ in the $n$th step.\qed
\end{proof}

\subsection{Cut elimination}
\label{app:cut_elim}

The proof for Theorem~\ref{cut}.

\begin{proof}
By induction on the complexity of the proof in $\lsbbi$. We show that
each application of $cut$ can either be eliminated, or be replaced by
one or more $cut$ rules of less complexity. The argument for termination is similar to the cut-elimination proof for $G3ip$~\cite{Negri2001}. We start to eliminate the topmost $cut$ first, and repeat this procedure until there is no $cut$ in the derivation. We first show that $cut$ can be eliminated when the \textit{cut height} is the lowest, i.e., at least one premise is of height 1. Then we show that the \textit{cut height} is reduced in all cases in which the cut formula is not principal in both premises of cut. If the cut formula is principal in both premises, then the $cut$ is reduced to one or more $cut$s on smaller formulae or shorter derivations. Since atoms cannot be principal in logical rules, finally we can either reduce all $cut$s to the case where the cut formula is not principal in both premises, or reduce those $cut$s on compound formulae until their \textit{cut height}s are minimal and then eliminate those $cut$s.  

\noindent (Base case) If at least one premise of the $cut$ rule is $id$, $\bot L$, $\top R$, or $\top^* R$, we consider the following cases:

\begin{enumerate}
\item The left premise of $cut$ is an application of $id$, and the cut formula is not principal, then the derivation is transformed as follows.
\begin{center}
\AxiomC{}
\RightLabel{$id$}
\UnaryInfC{$\Gamma;y:B\vdash y:B;x:A;\Delta$}
\AxiomC{$\Pi$}
\alwaysNoLine
\UnaryInfC{$\Gamma';x:A \vdash \Delta'$}
\alwaysSingleLine
\RightLabel{$cut$}
\BinaryInfC{$\Gamma;\Gamma';y:B\vdash y:B;\Delta;\Delta'$}
\DisplayProof $\leadsto$\\[10px]
\AxiomC{}
\RightLabel{$id$}
\UnaryInfC{$\Gamma;\Gamma';y:B\vdash y:B;\Delta;\Delta'$}
\DisplayProof
\end{center}

The same transformation works for $\bot L$, $\top R$, $\top^* R$ in this case.

\item The left premise of $cut$ is an application of $id$, and the cut formula is principal, then the derivation is transformed as follows.
\begin{center}
\AxiomC{}
\RightLabel{$id$}
\UnaryInfC{$\Gamma;x:A \vdash x:A;\Delta$}
\AxiomC{$\Pi$}
\alwaysNoLine
\UnaryInfC{$\Gamma';x:A \vdash \Delta'$}
\alwaysSingleLine
\RightLabel{$cut$}
\BinaryInfC{$\Gamma;\Gamma';x:A \vdash \Delta;\Delta'$}
\DisplayProof $\leadsto$\\[10px]
\AxiomC{$\Pi$}
\alwaysNoLine
\UnaryInfC{$\Gamma';x:A \vdash \Delta'$}
\alwaysSingleLine
\dashedLine
\RightLabel{Lemma~\ref{lm:weak}}
\UnaryInfC{$\Gamma;\Gamma';x:A\vdash \Delta;\Delta'$}
\DisplayProof
\end{center}

\item The left premise of $cut$ is an application of $\top R$, and the cut formula is principal, then the derivation is transformed as follows.
\begin{center}
\AxiomC{}
\RightLabel{$\top R$}
\UnaryInfC{$\Gamma \vdash x:\top;\Delta$}
\AxiomC{$\Pi$}
\alwaysNoLine
\UnaryInfC{$\Gamma';x:\top \vdash \Delta'$}
\alwaysSingleLine
\RightLabel{$cut$}
\BinaryInfC{$\Gamma;\Gamma' \vdash \Delta;\Delta'$}
\DisplayProof $\leadsto$\\[10px]
\AxiomC{$\Pi'$}
\alwaysNoLine
\UnaryInfC{$\Gamma' \vdash \Delta'$}
\alwaysSingleLine
\dashedLine
\RightLabel{Lemma~\ref{lm:weak}}
\UnaryInfC{$\Gamma;\Gamma' \vdash \Delta;\Delta'$}
\DisplayProof
\end{center}

As $x:\top$ cannot be a principal formula in the antecedent, by Lemma~\ref{r_weak_f} there is a derivation $\Pi'$ of $\Gamma' \vdash \Delta'$.

The same holds for $\top^* R$.

\item The right premise of $cut$ is an application of $id$, $\bot L$, $\top R$ or $\top^* R$, and the cut formula is not principal. This case is similar to case 1.

\item The right premise of $cut$ is an application of $id$, and the cut formula is principal. This case is similar to case 2.

\item The right premise of $cut$ is an application of $\bot L$, and the cut formula is principal. This case is similar to case 3.

\end{enumerate}

\noindent (Inductive case) If both premises are not in one of the base cases, we distinguish three cases here: the cut formula is not principal in the left premises; the cut formula is only principal in the left premise; and the cut formula is principal in both premises. 
\begin{enumerate}

\item The cut formula is not principal in the left premise. Suppose the left premise ends with
a rule $r$.
\begin{enumerate}
\item If $r$ is $\top^* L$, w.l.o.g. we assume the label of the principal formula is $y$ (which might be equal to $x$). The original derivation is as follows.
\begin{center}
\alwaysNoLine
\AxiomC{$\Pi_1$}
\UnaryInfC{$\Gamma[\epsilon/y] \vdash x:A;\Delta[\epsilon/y]$}
\alwaysSingleLine
\RightLabel{$\top^* L$}
\UnaryInfC{$\Gamma;y:\top^* \vdash x:A;\Delta$}
\alwaysNoLine
\AxiomC{$\Pi_2$}
\UnaryInfC{$\Gamma';x:A\vdash \Delta'$}
\alwaysSingleLine
\RightLabel{$cut$}
\BinaryInfC{$\Gamma;\Gamma';y:\top^*\vdash\Delta;\Delta'$}
\DisplayProof
\end{center}

By the Substitution lemma, 
there is a derivation $\Pi_2'$ of $\Gamma'[\epsilon/y];x:A \vdash \Delta[\epsilon/y]$. 
Thus we can transform the derivation into the following:
\begin{center}
\alwaysNoLine
\AxiomC{$\Pi_1$}
\UnaryInfC{$\Gamma[\epsilon/y] \vdash x:A;\Delta[\epsilon/y]$}
\AxiomC{$\Pi_2'$}
\UnaryInfC{$\Gamma'[\epsilon/y];x:A\vdash \Delta'[\epsilon/y]$}
\alwaysSingleLine
\RightLabel{$cut$}
\BinaryInfC{$\Gamma[\epsilon/y];\Gamma'[\epsilon/y] \vdash\Delta[\epsilon/y];\Delta'[\epsilon/y]$}
\RightLabel{$\top^* L$}
\UnaryInfC{$\Gamma;\Gamma';y:\top^*\vdash\Delta;\Delta'$}
\DisplayProof
\end{center}

If $x = y$ in the original derivation, then the new derivation cuts on $\epsilon:A$ instead. As substitution is height preserving, the cut height in this case is reduced as well.

\item If $r$ is $Eq_1$, and the label $x$ of the principal formula is not equal to $w'$, the original derivation is as follows.
\begin{center}
\alwaysNoLine
\AxiomC{$\Pi_1$}
\UnaryInfC{$(\epsilon,w\simp w);\Gamma[w/w'] \vdash x:A;\Delta[w/w']$}
\alwaysSingleLine
\RightLabel{$Eq_1$}
\UnaryInfC{$(\epsilon,w'\simp w);\Gamma \vdash x:A;\Delta$}
\alwaysNoLine
\AxiomC{$\Pi_2$}
\UnaryInfC{$\Gamma';x:A \vdash \Delta'$}
\alwaysSingleLine
\RightLabel{$cut$}
\BinaryInfC{$(\epsilon,w'\simp w);\Gamma;\Gamma' \vdash \Delta;\Delta'$}
\DisplayProof
\end{center}

This $cut$ is reduced in the same way as the $\top^* L$ case, where we get $\Pi_2'$ from the Substitution Lemma:
\begin{center}
\alwaysNoLine
\AxiomC{$\Pi_1$}
\UnaryInfC{$(\epsilon,w\simp w);\Gamma[w/w'] \vdash x:A;\Delta[w/w']$}
\AxiomC{$\Pi_2'$}
\UnaryInfC{$\Gamma'[w/w'];x:A \vdash \Delta'[w/w']$}
\alwaysSingleLine
\RightLabel{$cut$}
\BinaryInfC{$(\epsilon,w\simp w);\Gamma[w/w'];\Gamma'[w/w'] \vdash \Delta[w/w'];\Delta'[w/w']$}
\RightLabel{$Eq_1$}
\UnaryInfC{$(\epsilon,w'\simp w);\Gamma;\Gamma' \vdash \Delta;\Delta'$}
\DisplayProof
\end{center}

If $x = w'$, then we cut on $w:A$ instead in the reduced version.

\item If $r$ is $Eq_2$, the procedure follows similarly as the case for $Eq_1$ above.
%% , divided by two cases: the label $x$ of the cut formula is not equal to $w$; and $x = w$. In the latter case, the reduced $cut$ cuts on $\epsilon:A$.

\item If $r$ is a unary inference except for $\top^*L$, $Eq_1$, and $Eq_2$, then the original derivation is as follows.
\begin{center}
\alwaysNoLine
\AxiomC{$\Pi_1$}
\UnaryInfC{$\Gamma_1 \vdash x:A;\Delta_1$}
\alwaysSingleLine
\RightLabel{$r$}
\UnaryInfC{$\Gamma \vdash x:A;\Delta$}
\alwaysNoLine
\AxiomC{$\Pi_2$}
\UnaryInfC{$\Gamma';x:A\vdash \Delta'$}
\alwaysSingleLine
\RightLabel{$cut$}
\BinaryInfC{$\Gamma;\Gamma'\vdash\Delta;\Delta'$}
\DisplayProof
\end{center}

Then we can delay the application of $cut$ as follows.
\begin{center}
\alwaysNoLine
\AxiomC{$\Pi_1$}
\UnaryInfC{$\Gamma_1 \vdash x:A;\Delta_1$}
\AxiomC{$\Pi_2$}
\UnaryInfC{$\Gamma';x:A\vdash \Delta'$}
\alwaysSingleLine
\RightLabel{$cut$}
\BinaryInfC{$\Gamma_1;\Gamma'\vdash\Delta_1;\Delta'$}
\RightLabel{$r$}
\UnaryInfC{$\Gamma;\Gamma'\vdash\Delta;\Delta'$}
\DisplayProof
\end{center}

Note that as all our rules except $\top^* L$, $Eq_1$, and $Eq_2$ do
not modify side structures, $\Gamma'$ and $\Delta'$ in the premise of
$r$ are not changed. The complexity of the original $cut$ is
$(|x:A|,|\Pi_1|+1+|\Pi_2|)$, whereas the complexity of the new $cut$
is $(|x:A|,|\Pi_1|+|\Pi_2|)$, so the \textit{cut height} reduces.

\item If $r$ is a binary inference, we can transform the derivation similarly. 

\begin{center}
\alwaysNoLine
\AxiomC{$\Pi_1$}
\UnaryInfC{$\Gamma_1 \vdash x:A;\Delta_1$}
\AxiomC{$\Pi_2$}
\UnaryInfC{$\Gamma_2 \vdash x:A;\Delta_2$}
\alwaysSingleLine
\RightLabel{$r$}
\BinaryInfC{$\Gamma \vdash x:A;\Delta$}
\alwaysNoLine
\AxiomC{$\Pi_3$}
\UnaryInfC{$\Gamma';x:A\vdash \Delta'$}
\alwaysSingleLine
\RightLabel{$cut$}
\BinaryInfC{$\Gamma;\Gamma'\vdash\Delta;\Delta'$}
\DisplayProof $\leadsto$\\[15px]
\alwaysNoLine
\AxiomC{$\Pi_1$}
\UnaryInfC{$\Gamma_1 \vdash x:A;\Delta_1$}
\AxiomC{$\Pi_3$}
\UnaryInfC{$\Gamma';x:A\vdash \Delta'$}
\alwaysSingleLine
\RightLabel{$cut$}
\BinaryInfC{$\Gamma_1;\Gamma'\vdash\Delta_1;\Delta'$}
\alwaysNoLine
\AxiomC{$\Pi_2$}
\UnaryInfC{$\Gamma_2 \vdash x:A;\Delta_2$}
\AxiomC{$\Pi_3$}
\UnaryInfC{$\Gamma';x:A\vdash \Delta'$}
\alwaysSingleLine
\RightLabel{$cut$}
\BinaryInfC{$\Gamma_2;\Gamma'\vdash\Delta_2;\Delta'$}
\RightLabel{$r$}
\BinaryInfC{$\Gamma;\Gamma'\vdash\Delta;\Delta'$}
\DisplayProof
\end{center}

The complexity of the original $cut$ is $(|x:A|,max(|\Pi_1|,|\Pi_2|)+1+|\Pi_3|)$, and that of the new two $cut$s are $(|x:A|,|\Pi_1|+|\Pi_3|)$ and $(|x:A|,|\Pi_2|+|\Pi_3|)$ respectively. Thus the cut heights are reduced.

\end{enumerate}

\item The cut formula is only principal in the left premise.  We only consider the last rule in the right branch. The proof of this case is symmetric to those in Case 1.

\item The cut formula is principal in both premises. We do a case analysis on the main connective of the cut formula. If the main connective is additive, then there is no need to substitute any labels.

For $\land$,
\begin{center}
\alwaysNoLine
\AxiomC{$\Pi_1$}
\UnaryInfC{$\Gamma \vdash x:A;\Delta$}
\AxiomC{$\Pi_2$}
\UnaryInfC{$\Gamma \vdash x:B;\Delta$}
\alwaysSingleLine
\RightLabel{$\land R$}
\BinaryInfC{$\Gamma \vdash x:A\land B;\Delta$}
\alwaysNoLine
\AxiomC{$\Pi_3$}
\UnaryInfC{$\Gamma';x:A;x:B \vdash \Delta'$}
\alwaysSingleLine
\RightLabel{$\land L$}
\UnaryInfC{$\Gamma';x:A\land B \vdash \Delta'$}
\RightLabel{$cut$}
\BinaryInfC{$\Gamma;\Gamma' \vdash \Delta;\Delta'$}
\DisplayProof $\leadsto$\\[10px]
\alwaysNoLine
\AxiomC{$\Pi_1$}
\UnaryInfC{$\Gamma \vdash x:A;\Delta$}
\AxiomC{$\Pi_2$}
\UnaryInfC{$\Gamma \vdash x:B;\Delta$}
\AxiomC{$\Pi_3$}
\UnaryInfC{$\Gamma';x:A;x:B \vdash \Delta'$}
\alwaysSingleLine
\RightLabel{$cut$}
\BinaryInfC{$\Gamma;\Gamma';x:A \vdash \Delta;\Delta'$}
\RightLabel{$cut$}
\BinaryInfC{$\Gamma;\Gamma;\Gamma' \vdash \Delta;\Delta;\Delta'$}
\dashedLine
\RightLabel{Lemma~\ref{lm:ctr}}
\UnaryInfC{$\Gamma;\Gamma' \vdash \Delta;\Delta'$}
\DisplayProof
\end{center}

For $\limp$,
\begin{center}
\alwaysNoLine
\AxiomC{$\Pi_1$}
\UnaryInfC{$\Gamma';x:A \vdash x:B;\Delta'$}
\alwaysSingleLine
\RightLabel{$\limp R$}
\UnaryInfC{$\Gamma' \vdash x:A\limp B;\Delta'$}
\alwaysNoLine
\AxiomC{$\Pi_2$}
\UnaryInfC{$\Gamma \vdash x:A;\Delta$}
\AxiomC{$\Pi_3$}
\UnaryInfC{$\Gamma;x:B \vdash \Delta$}
\alwaysSingleLine
\RightLabel{$\limp L$}
\BinaryInfC{$\Gamma;x:A\limp B \vdash \Delta$}
\RightLabel{$cut$}
\BinaryInfC{$\Gamma;\Gamma' \vdash \Delta;\Delta'$}
\DisplayProof $\leadsto$\\[10px]
\alwaysNoLine
\AxiomC{$\Pi_2$}
\UnaryInfC{$\Gamma \vdash x:A;\Delta$}
\AxiomC{$\Pi_1$}
\UnaryInfC{$\Gamma';x:A \vdash x:B;\Delta'$}
\AxiomC{$\Pi_3$}
\UnaryInfC{$\Gamma;x:B \vdash \Delta$}
\alwaysSingleLine
\RightLabel{$cut$}
\BinaryInfC{$\Gamma;\Gamma';x:A \vdash \Delta;\Delta'$}
\RightLabel{$cut$}
\BinaryInfC{$\Gamma;\Gamma;\Gamma' \vdash \Delta;\Delta;\Delta'$}
\dashedLine
\RightLabel{Lemma~\ref{lm:ctr}}
\UnaryInfC{$\Gamma;\Gamma' \vdash \Delta;\Delta'$}
\DisplayProof
\end{center}

For both $\land$ and $\limp$, $cut$ is reduced to applications on smaller formulae, therefore the complexity of the $cut$ reduces.

There is an asymmetry in the rules for $\top^*$. That is, the left rule for $\top^*$ requires that the label $w$ of $\top^*$ cannot be $\epsilon$, whereas the right rule for $\top^*$ restricts the label of $\top^*$ to be $\epsilon$ only. As a consequence, when the cut formula is $\top^*$, it cannot be the principal formula of both premises at the same time. Therefore the cases for $\top^*$ are handled in the proof above.

When the main connective of the cut formula is $\mand$ or $\mimp$, the case is more complicated. For $\mand$, we have the following two derivations as the premises of the $cut$ rule:

\begin{center}
\small
\alwaysNoLine
\AxiomC{$\Pi_1$}
\UnaryInfC{$(x,y\simp z);\Gamma \vdash x:A;z:A\mand B;\Delta$}
\AxiomC{$\Pi_2$}
\UnaryInfC{$(x,y\simp z);\Gamma \vdash y:B;z:A\mand B;\Delta$}
\alwaysSingleLine
\RightLabel{\tiny $\mand R$}
\BinaryInfC{$(x,y\simp z);\Gamma \vdash z:A\mand B;\Delta$}
\DisplayProof
\end{center} \ \\
and
\begin{center}
\small
\alwaysNoLine
\AxiomC{$\Pi_3$}
\UnaryInfC{$(x',y'\simp z);\Gamma';x':A;y':B \vdash \Delta'$}
\alwaysSingleLine
\RightLabel{\tiny $\mand L$}
\UnaryInfC{$\Gamma';z:A\mand B \vdash \Delta'$}
\DisplayProof
\end{center}

And the $cut$ rule gives the end sequent $(x,y\simp z);\Gamma;\Gamma' \vdash \Delta;\Delta'$. The complexity of this $cut$ is $(|A\mand B|,max(|\Pi_1|,|\Pi_2|)+1+|\Pi_3|+1)$.

We use several $cut$s with less complexity to derive $(x,y\simp z);\Gamma;\Gamma' \vdash \Delta;\Delta'$ as follows.

Firstly, 
\begin{center}
\alwaysNoLine
\AxiomC{$\Pi_1$}
\UnaryInfC{$(x,y\simp z);\Gamma \vdash x:A;z:A\mand B;\Delta$}
\AxiomC{$\Pi_3$}
\UnaryInfC{$(x',y'\simp z);\Gamma';x':A;y':B \vdash \Delta'$}
\alwaysSingleLine
\RightLabel{$\mand L$}
\UnaryInfC{$\Gamma';z:A\mand B \vdash \Delta'$}
\RightLabel{$cut$}
\BinaryInfC{$(x,y\simp z);\Gamma;\Gamma' \vdash x:A;\Delta;\Delta'$}
\DisplayProof
\end{center}

The complexity of this $cut$ is $(|A\mand B|,|\Pi_1|+|\Pi_3|+1))$, thus is less than the original $cut$.

The second $cut$ works similarly.
\begin{center}
\alwaysNoLine
\AxiomC{$\Pi_2$}
\UnaryInfC{$(x,y\simp z);\Gamma \vdash y:B;z:A\mand B;\Delta$}
\AxiomC{$\Pi_3$}
\UnaryInfC{$(x',y'\simp z);\Gamma';x':A;y':B \vdash \Delta'$}
\alwaysSingleLine
\RightLabel{$\mand L$}
\UnaryInfC{$\Gamma';z:A\mand B \vdash \Delta'$}
\RightLabel{$cut$}
\BinaryInfC{$(x,y\simp z);\Gamma;\Gamma' \vdash y:B;\Delta;\Delta'$}
\DisplayProof
\end{center}

The third $cut$ works on a smaller formula.
\begin{center}
\AxiomC{$(x,y\simp z);\Gamma;\Gamma' \vdash x:A;\Delta;\Delta'$}
\alwaysNoLine
\AxiomC{$\Pi_3'$}
\UnaryInfC{$(x,y\simp z);\Gamma';x:A;y:B \vdash \Delta'$}
\alwaysSingleLine
\RightLabel{$cut$}
\BinaryInfC{$(x,y\simp z);(x,y\simp z);\Gamma;\Gamma';\Gamma';y:B \vdash \Delta;\Delta';\Delta'$}
\DisplayProof
\end{center}

The cut formula is $x:A$, thus the complexity of this $cut$ is less regardless of the height of the derivations.

Note that in the $\Pi_3$ branch, the $\mand L$ rule requires that the relation $(x',y'\simp z)$ is newly created, so $x'$ and $y'$ cannot be $\epsilon$ and they cannot be in $\Gamma'$ or $\Delta'$. Therefore we are allowed to use the substitution lemma to get a derivation $\Pi_3'$ of $(x,y\simp z);\Gamma';x:A;y:B \vdash \Delta'$ by just substituting $x'$ for $x$ and $y'$ for $y$.

Finally we cut on another smaller formula $y:B$.
\begin{center}
\small
\AxiomC{$(x,y\simp z);\Gamma;\Gamma' \vdash y:B;\Delta;\Delta'$}
\AxiomC{$(x,y\simp z);(x,y\simp z);\Gamma;\Gamma';\Gamma';y:B \vdash \Delta;\Delta';\Delta'$}
\RightLabel{$cut$}
\BinaryInfC{$(x,y\simp z);(x,y\simp z);(x,y\simp z);\Gamma;\Gamma;\Gamma';\Gamma';\Gamma' \vdash \Delta;\Delta;\Delta';\Delta';\Delta'$}
\DisplayProof
\end{center}

The complexity of this $cut$ is less than the original $cut$. We then apply the admissibility of contraction to derive $(x,y\simp z);\Gamma;\Gamma' \vdash \Delta;\Delta'$.

The case for $\mimp$ is similar. The two premises in the original $cut$ are as follows.
\begin{center}
\small
\alwaysNoLine
\AxiomC{$\Pi_1$}
\UnaryInfC{$(x',y\simp z');\Gamma';x':A \vdash z':B;\Delta'$}
\alwaysSingleLine
\RightLabel{\tiny $\mimp R$}
\UnaryInfC{$\Gamma' \vdash y:A\mimp B;\Delta'$}
\DisplayProof
\end{center} \ \\
and
\begin{center}
\small
\alwaysNoLine
\AxiomC{$\Pi_2$}
\UnaryInfC{$(x,y\simp z);\Gamma;y:A\mimp B \vdash x:A;\Delta$}
\AxiomC{$\Pi_3$}
\UnaryInfC{$(x,y\simp z);\Gamma;y:A\mimp B;z:B \vdash \Delta$}
\alwaysSingleLine
\RightLabel{\tiny $\mimp L$}
\BinaryInfC{$(x,y\simp z);\Gamma;y:A\mimp B \vdash \Delta$}
\DisplayProof
\end{center}

And the $cut$ rule yields the end sequent $(x,y\simp z);\Gamma;\Gamma' \vdash \Delta;\Delta'$. We use two cuts on the same formula, but with smaller derivation height.
\begin{center}
\small
\alwaysNoLine
\AxiomC{$\Pi_1$}
\UnaryInfC{$(x',y\simp z');\Gamma';x':A \vdash z':B;\Delta'$}
\alwaysSingleLine
\RightLabel{$\mimp R$}
\UnaryInfC{$\Gamma' \vdash y:A\mimp B;\Delta'$}
\alwaysNoLine
\AxiomC{$\Pi_2$}
\UnaryInfC{$(x,y\simp z);\Gamma;y:A\mimp B \vdash x:A;\Delta$}
\alwaysSingleLine
\RightLabel{$cut$}
\BinaryInfC{$(x,y\simp z);\Gamma;\Gamma' \vdash x:A;\Delta;\Delta'$}
\DisplayProof
\end{center}

\begin{center}
\small
\alwaysNoLine
\AxiomC{$\Pi_1$}
\UnaryInfC{$(x',y\simp z');\Gamma';x':A \vdash z':B;\Delta'$}
\alwaysSingleLine
\RightLabel{$\mimp R$}
\UnaryInfC{$\Gamma' \vdash y:A\mimp B;\Delta'$}
\alwaysNoLine
\AxiomC{$\Pi_3$}
\UnaryInfC{$(x,y\simp z);\Gamma;y:A\mimp B;z:B \vdash \Delta$}
\alwaysSingleLine
\RightLabel{$cut$}
\BinaryInfC{$(x,y\simp z);\Gamma;\Gamma';z:B \vdash \Delta;\Delta'$}
\DisplayProof
\end{center}

Then we cut on a smaller formula $x:A$.
\begin{center}
\alwaysNoLine
\AxiomC{$(x,y\simp z);\Gamma;\Gamma' \vdash x:A;\Delta;\Delta'$}
\AxiomC{$\Pi_1'$}
\UnaryInfC{$(x,y\simp z);\Gamma';x:A \vdash z:B;\Delta'$}
\alwaysSingleLine
\RightLabel{$cut$}
\BinaryInfC{$(x,y\simp z);(x,y\simp z);\Gamma;\Gamma';\Gamma' \vdash z:B;\Delta;\Delta';\Delta'$}
\DisplayProof
\end{center}

Again, in the original derivation, $x'$ and $z'$ are fresh in the premise of $\mimp R$ rule, thus by the Substitution Lemma we can have a derivation $\Pi_1'$ of the sequent $(x,y\simp z);\Gamma';x:A \vdash z:B;\Delta'$, with $x'$ substituted to $x$ and $z'$ substituted to $z$.

Then we cut on $z:B$.
\begin{center}
\small
\AxiomC{$(x,y\simp z);(x,y\simp z);\Gamma;\Gamma';\Gamma' \vdash z:B;\Delta;\Delta';\Delta'$}
\AxiomC{$(x,y\simp z);\Gamma;\Gamma';z:B \vdash \Delta;\Delta'$}
\RightLabel{$cut$}
\BinaryInfC{$(x,y\simp z);(x,y\simp z);(x,y\simp z);\Gamma;\Gamma;\Gamma';\Gamma';\Gamma' \vdash \Delta;\Delta;\Delta';\Delta';\Delta'$}
\DisplayProof
\end{center} 

In the end we use the theorem of admissibility of contraction to obtain the required sequent $(x,y\simp z);\Gamma;\Gamma' \vdash \Delta;\Delta'$.

\end{enumerate}
\qed
\end{proof}

\subsection{Permutation of structural rules in $\lsbbi$}
\label{app:str_permute}

Proof for Lemma~\ref{lem:str_permute}.
\begin{proof}
To prove this lemma, we need to show that if a derivation involves the
structural rules, we can always apply them exactly before $\mand R$
and $\mimp L$, or before zero-premise rules. We show this by an
induction on the height of the derivation. Since we do not permute
structural rules through zero-premise rules, the proof in the base
case and the inductive step are essentially the same. Here we give
some examples of the permutations. Assuming the lemma holds up to any
derivation of height $n-1$, consider a derivation of height $n$.

\begin{enumerate}
\item Permute the application of $Eq_1$ or $Eq_2$ through
  non-zero-premise logical rules except for $\mand R$ and $\mimp
  L$. Here we give some examples, the rest are similar.
\begin{enumerate}
\item Permute $Eq_2$ through additive logical rules is trivial, this
  is exemplified by $\land L$, assuming the label of the principal formula
  is modified by the $Eq_2$ application. The original derivation is as follows.
\begin{center}
\alwaysNoLine
\AxiomC{$\Pi$}
\UnaryInfC{$(\epsilon,\epsilon\simp
  \epsilon);\Gamma[\epsilon/w];\epsilon:A;\epsilon:B \vdash
  \Delta[\epsilon/w]$}
\alwaysSingleLine
\RightLabel{$\land L$}
\UnaryInfC{$(\epsilon,\epsilon\simp
  \epsilon);\Gamma[\epsilon/w];\epsilon:A\land B \vdash
  \Delta[\epsilon/w]$}
\RightLabel{$Eq_2$}
\UnaryInfC{$(\epsilon,\epsilon\simp
 w);\Gamma;w:A\land B \vdash
  \Delta$}
\DisplayProof
\end{center}

The derivation is changed to the following:
\begin{center}
\alwaysNoLine
\AxiomC{$\Pi$}
\UnaryInfC{$(\epsilon,\epsilon\simp
  \epsilon);\Gamma[\epsilon/w];\epsilon:A;\epsilon:B \vdash
  \Delta[\epsilon/w]$}
\alwaysSingleLine
\RightLabel{$Eq_2$}
\UnaryInfC{$(\epsilon,\epsilon\simp
 w);\Gamma;w:A; w:B \vdash
  \Delta$}
\RightLabel{$\land L$}
\UnaryInfC{$(\epsilon,\epsilon\simp
 w);\Gamma;w:A\land B \vdash
  \Delta$}
\DisplayProof
\end{center}

\item Permute $Eq_1$ through $\top^* L$, assuming the label of
  principal formula is $w$. The derivation is as follows.
\begin{center}
\alwaysNoLine
\AxiomC{$\Pi$}
\UnaryInfC{$(\epsilon,\epsilon\simp
  \epsilon);\Gamma[w/w'][\epsilon/w]\vdash \Delta[w/w'][\epsilon/w]$}
\alwaysSingleLine
\RightLabel{$\top^* L$}
\UnaryInfC{$(\epsilon,w\simp
 w);\Gamma[w/w'];w:\top^*\vdash \Delta[w/w']$}
\RightLabel{$Eq_1$}
\UnaryInfC{$(\epsilon,w'\simp
 w);\Gamma;w':\top^*\vdash \Delta$}
\DisplayProof
\end{center}

We modify the derivation as follows.
\begin{center}
\alwaysNoLine
\AxiomC{$\Pi$}
\UnaryInfC{$(\epsilon,\epsilon\simp
  \epsilon);\Gamma[\epsilon/w'][\epsilon/w]\vdash \Delta[\epsilon/w'][\epsilon/w]$}
\alwaysSingleLine
\RightLabel{$Eq_2$}
\UnaryInfC{$(\epsilon,\epsilon\simp
 w);\Gamma[\epsilon/w']\vdash \Delta[\epsilon/w']$}
\RightLabel{$\top^* L$}
\UnaryInfC{$(\epsilon,w'\simp
 w);\Gamma;w':\top^*\vdash \Delta$}
\DisplayProof
\end{center}

Notice that the premises of the two derivations below $\Pi$ are
exactly the same. The application of $Eq_1$ in the original derivation
is changed to an application of $Eq_2$ in the modified derivation. However,
this does not break the proof, as the induction hypothesis ensures that
either of them can be permuted upwards.

Also, the label of principal formula in the rule $\top^* L$ cannot be
the one that is replaced in the rule $Eq_2$ below it, this is the
reason we do not exemplify this situation using $Eq_2$.

\item Permute $Eq_2$ through $\mand L$, assuming the label of
  principal formula is $z$, and it is modified by the $Eq_2$ application.
\begin{center}
\alwaysNoLine
\AxiomC{$\Pi$}
\UnaryInfC{$(x,y\simp \epsilon);(\epsilon,\epsilon\simp \epsilon);\Gamma[\epsilon/z];x:A;y:B\vdash
  \Delta[\epsilon/z]$}
\alwaysSingleLine
\RightLabel{$\mand L$}
\UnaryInfC{$(\epsilon,\epsilon\simp \epsilon);\Gamma[\epsilon/z];\epsilon:A\mand B\vdash
  \Delta[\epsilon/z]$}
\RightLabel{$Eq_2$}
\UnaryInfC{$(\epsilon,\epsilon\simp z);\Gamma;z:A\mand B \vdash \Delta$}
\DisplayProof
\end{center}

Since $x$ and $y$ are fresh labels, they will not be affected by
$Eq_2$. Thus the derivation can be changed to the following:
\begin{center}
\alwaysNoLine
\AxiomC{$\Pi$}
\UnaryInfC{$(x,y\simp \epsilon);(\epsilon,\epsilon\simp \epsilon);\Gamma[\epsilon/z];x:A;y:B\vdash
  \Delta[\epsilon/z]$}
\alwaysSingleLine
\RightLabel{$Eq_2$}
\UnaryInfC{$(x,y\simp z);(\epsilon,\epsilon\simp z);\Gamma;x:A;y:B\vdash
  \Delta$}
\RightLabel{$\mand L$}
\UnaryInfC{$(\epsilon,\epsilon\simp y);\Gamma;z:A\mand B \vdash \Delta$}
\DisplayProof
\end{center}

Since $Eq_1$ and $Eq_2$ only globally replaces labels, their
action can be safely delayed through all the rules other than $\mand R$ and $\mimp L$. The applications of these two rules after the last $\mand R$ or $\mand L$ will be delayed until the zero-premise rule is necessary.
\end{enumerate}

\item Permute the applications of $E$, $U$, $A$, and $A_C$ through non-zero premise logical rules other than $\mand R$ and $\mimp L$. Again, we give some examples, the rest are similar.
\begin{enumerate} 
\item Permute $E$ through $\top^* L$, assuming the label of the principal
  formula is $y$.
The original derivation runs as follows.
\begin{center}
\alwaysNoLine
\AxiomC{$\Pi$}
\UnaryInfC{$(\epsilon,x\simp z);(x,\epsilon\simp z);\Gamma[\epsilon/y]
  \vdash \Delta[\epsilon/y]$}
\alwaysSingleLine
\RightLabel{$\top^* L$}
\UnaryInfC{$(y,x\simp z);(x,y\simp z);\Gamma;y:\top^*
  \vdash \Delta$}
\RightLabel{$E$}
\UnaryInfC{$(x,y\simp z);\Gamma;y:\top^*
  \vdash \Delta$}
\DisplayProof
\end{center}

The new derivation is as follows.
\begin{center}
\alwaysNoLine
\AxiomC{$\Pi$}
\UnaryInfC{$(\epsilon,x\simp z);(x,\epsilon\simp z);\Gamma[\epsilon/y]
  \vdash \Delta[\epsilon/y]$}
\alwaysSingleLine
\RightLabel{$E$}
\UnaryInfC{$(x,\epsilon\simp z);\Gamma[\epsilon/y]
  \vdash \Delta[\epsilon/y]$}
\RightLabel{$\top^* L$}
\UnaryInfC{$(x,y\simp z);\Gamma;y:\top^*
  \vdash \Delta$}
\DisplayProof
\end{center}

This shows that if the logical rule only does substitution, delaying
the application of structural rules makes no difference.

\item Permute $U$ through $\mand L$, assuming the label of the principal
  formula is $z$. The original derivation is as follows.
\begin{center}
\alwaysNoLine
\AxiomC{$\Pi$}
\UnaryInfC{$(x,y\simp z);(z,\epsilon\simp z);\Gamma;x:A;y:B \vdash
  \Delta$}
\alwaysSingleLine
\RightLabel{$\mand L$}
\UnaryInfC{$(z,\epsilon\simp z);\Gamma;z:A\mand B \vdash
  \Delta$}
\RightLabel{$U$}
\UnaryInfC{$\Gamma;z:A\mand B \vdash
  \Delta$}
\DisplayProof
\end{center}

The new derivation is as follows.
\begin{center}
\alwaysNoLine
\AxiomC{$\Pi$}
\UnaryInfC{$(z,\epsilon\simp z);(x,y\simp z);\Gamma;x:A;y:B \vdash
  \Delta$}
\alwaysSingleLine
\RightLabel{$U$}
\UnaryInfC{$(x,y\simp z);\Gamma;z:A\mand B \vdash
  \Delta$}
\RightLabel{$\mand L$}
\UnaryInfC{$\Gamma;z:A\mand B \vdash
  \Delta$}
\DisplayProof
\end{center}

Since the labels $x$ and $y$ are all fresh labels, it is safe to
change the order to rule applications as above.

Additive logical rules are totally independent on the relational
atoms, so those cases are similar as the one shown above, except that
those rules do not add relational atoms to the sequent.

\end{enumerate}
\end{enumerate}
\qed
\end{proof}

\subsection{Soundness of $\ilsbbi$}
\label{app:sound_ilsbbi}

\begin{theorem}
If there is a derivation $\Pi$ for a sequent $\Gamma\vdash \Delta$ in
$\ilsbbi$, then there is a derivation $\Pi'$ for the same sequent in
$\lsbbi$.
\end{theorem}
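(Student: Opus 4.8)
The plan is to argue by induction on the height of the $\ilsbbi$ derivation $\Pi$, doing a case analysis on its last rule and showing that each $\ilsbbi$ rule can be simulated by a block of $\lsbbi$ rules applied to the $\lsbbi$ derivations of the premises supplied by the induction hypothesis. The rules of $\ilsbbi$ that are literally the same as in $\lsbbi$ (the additive rules, $\mand L$, $\mimp R$, and the structural rules $E$, $U$) are dispatched immediately: apply the induction hypothesis to each premise and reapply the identical rule. The work is concentrated in the rules modified in Figure~\ref{fig:ILS_BBI}, each of which carries a side premise $\mathcal{G}\vdash_E(u=v)$, together with the reformulated $\top^* L$.

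The key observation is that, by definition, $\mathcal{G}\vdash_E(u=v)$ is witnessed by a sequence $\sigma$ of $Eq_1$ and $Eq_2$ instances for which $\Scal(\mathcal{G},\sigma)$ is defined and $u\theta=v\theta$ with $\theta=\subst(\sigma)$. Since every instance in $\sigma$ is a genuine $\lsbbi$ structural rule, reading $\sigma$ bottom-up is an $\lsbbi$ derivation block that turns any sequent whose relational atoms are exactly $\mathcal{G}$ into the sequent whose relational atoms are $\Scal(\mathcal{G},\sigma)$ and whose remaining labelled formulae have been $\theta$-substituted. I will therefore treat the modified rules uniformly: replay $\sigma$ as such a block placed \emph{below} the corresponding $\lsbbi$ rule, so that after the substitution $\theta$ the two labels that had to be matched become syntactically identical (namely $z\theta=z'\theta$ for $\mand R$, $y\theta=y'\theta$ for $\mimp L$, $x\theta=x'\theta$ for $A$ and $A_C$, $w\theta=\epsilon$ for $\top^* R$, and $w_1\theta=w_2\theta$ for $id$). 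At that point the original $\lsbbi$ rule, which demands syntactic matching, applies. For $id$ and $\top^* R$ there are no main premises, so the replay of $\sigma$ followed by the zero-premise $\lsbbi$ rule already closes the branch.

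For the premises that the $\lsbbi$ rule does generate (in $\mand R$, $\mimp L$, $A$, $A_C$), I first obtain $\lsbbi$ derivations of the corresponding $\ilsbbi$ premises by the induction hypothesis, then apply the Substitution Lemma (Lemma~\ref{subs}) with $\theta$ to move them onto the matched labels, and finally use Weakening admissibility (Lemma~\ref{lm:weak}) to insert the auxiliary relational atoms of $\Scal(\mathcal{G},\sigma)\setminus\mathcal{G}\theta$ that $\sigma$ introduced. The eigenvariable side conditions survive because the fresh labels created by $\mand L$, $\mimp R$, $A$, $A_C$ do not occur in $\mathcal{G}$, hence lie outside the domain of $\theta$, and may be renamed to coincide with those demanded by the $\lsbbi$ rule. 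The remaining nontrivial rule, $\top^* L$, carries no $\vdash_E$ premise: from an induction-hypothesis derivation of $(\epsilon,w\simp\epsilon);\Gamma\vdash\Delta$ I apply the Substitution Lemma with $[\epsilon/w]$ to get a derivation of $(\epsilon,\epsilon\simp\epsilon);\Gamma[\epsilon/w]\vdash\Delta[\epsilon/w]$, discharge the trivial atom $(\epsilon,\epsilon\simp\epsilon)$ by an instance of $U$ with $x:=\epsilon$, and conclude with the $\lsbbi$ rule $\top^* L$.

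The main obstacle is bookkeeping rather than conceptual: one must verify that after replaying $\sigma$ and applying the Substitution Lemma the relational-atom context required by each $\lsbbi$ premise is precisely $\Scal(\mathcal{G},\sigma)$ up to the extra atoms handled by weakening, and that $\theta$ never identifies an eigenvariable with a live label. Both points follow from the fact that $\Scal(\mathcal{G},\sigma)$ always contains $\mathcal{G}\theta$ and only adds further atoms, combined with the freshness side conditions of $\mand L$, $\mimp R$, $A$ and $A_C$. Note that, in contrast to the $\lsbbi$-internal lemmas, this translation is not height-preserving, since replaying $\sigma$ lengthens the derivation; this is harmless, as the statement claims only derivability.
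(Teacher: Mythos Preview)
Your proposal is correct and follows essentially the same approach as the paper: induction on the height of $\Pi$, with the $\vdash_E$ side conditions unpacked into a block of $Eq_1/Eq_2$ steps placed below the matching $\lsbbi$ rule, and the premises aligned via the Substitution Lemma. One minor remark: since $\sigma$ here consists only of $Eq_1$ and $Eq_2$ instances, one has $\Scal(\mathcal{G},\sigma)=\mathcal{G}\theta$, so the weakening step you mention (to add $\Scal(\mathcal{G},\sigma)\setminus\mathcal{G}\theta$) is vacuous; the paper accordingly omits it.
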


\begin{proof}
By induction on the height $n$ of $\Pi$.
\begin{enumerate}
\item Base case: $n = 1$. In this case the only rule must be a
  zero-premise rule. If the rule is $\bot L$ or $\top R$, then we can
  use the same rule in $\lsbbi$, since they are the same. Otherwise,
  suppose the rule is $id$, then $\Pi$ reads as follows.
\begin{center}
\AxiomC{$\mathcal{G}\vdash_E (w_1 = w_2)$}
\alwaysSingleLine
\RightLabel{\tiny $id$}
\UnaryInfC{$\Gamma;w_1:P \vdash w_2:P;\Delta$}
\DisplayProof
\end{center}

Since $\mathcal{G}\vdash_E (w_1 = w_2)$ is true, there is a sequence
$\sigma$ of $Eq_1,Eq_2$ applications such that $\Scal(\Gcal,\sigma)$ is
defined and $w_1\theta = w_2\theta$, where $\theta = subst(\sigma)$. Therefore we can construct $\Pi'$ are follows.
\begin{center}
\AxiomC{$$}
\alwaysSingleLine
\RightLabel{\tiny $id$}
\UnaryInfC{$\Gamma\theta;w_1\theta:P \vdash w_2\theta:P;\Delta\theta$}
\alwaysNoLine
\UnaryInfC{$\vdots \sigma$}
\UnaryInfC{$\Gamma;w_1:P \vdash w_2:P;\Delta$}
\DisplayProof
\end{center}

If the rule is $\top^* R$, $\Pi$ is:
\begin{center}
\AxiomC{$\mathcal{G}\vdash_E (w = \epsilon)$}
\RightLabel{\tiny $\top^* R$}
\UnaryInfC{$\Gamma \vdash w:\top^*;\Delta$}
\DisplayProof
\end{center}

We construct $\Pi'$ similarly, as $w\theta = \epsilon$ after the application of $\sigma$.
\begin{center}
\AxiomC{$$}
\RightLabel{\tiny $\top^* R$}
\UnaryInfC{$\Gamma\theta \vdash w\theta:\top^*;\Delta\theta$}
\alwaysNoLine
\UnaryInfC{$\vdots \sigma$}
\UnaryInfC{$\Gamma \vdash w:\top^*;\Delta$}
\DisplayProof
\end{center}

\item Inductive cases: suppose every sequent that is derivable in
  $\ilsbbi$ with height less than $n$ is also derivable in $\lsbbi$,
  consider a $\ilsbbi$ derivation of height $n$. We do a case analysis
  on the bottom rule in the derivation.
\begin{enumerate}
\item If the rule is $\land L$, $\land R$, $\limp L$, $\limp R$, $\mand L$, $\mand R$, $E$ or $U$, we can use the same rule in $\lsbbi$, since nothing is changed. 

\item If the rule is $\top^* L$, then $\Pi$ must be the following:
\begin{center}
\AxiomC{$\Pi_1$}
\alwaysNoLine
\UnaryInfC{$(\epsilon,w\simp \epsilon);\Gamma \vdash \Delta$}
\alwaysSingleLine
\RightLabel{\tiny $\top^* L$}
\UnaryInfC{$\Gamma;w:\top^* \vdash \Delta$}
\DisplayProof
\end{center}

By the induction hypothesis, $(\epsilon,w\simp \epsilon);\Gamma \vdash
\Delta$ is derivable in $\lsbbi$. Applying Lemma~\ref{subs}
(substitution for labels in $\lsbbi$) with
$[\epsilon/w]$, we obtain $(\epsilon,\epsilon\simp
\epsilon);\Gamma[\epsilon/w] \vdash \Delta[\epsilon/w]$. Thus we
construct $\Pi'$ as follows.
\begin{center}
\AxiomC{$\Pi_1'$}
\alwaysNoLine
\UnaryInfC{$(\epsilon,\epsilon\simp \epsilon);\Gamma[\epsilon/w] \vdash \Delta[\epsilon/w]$}
\alwaysSingleLine
\RightLabel{\tiny $\top^* L$}
\UnaryInfC{$(w,\epsilon\simp w);\Gamma;w:\top^* \vdash \Delta$}
\RightLabel{\tiny $U$}
\UnaryInfC{$\Gamma;w:\top^* \vdash \Delta$}
\DisplayProof
\end{center}

\item If the rule is $\mand R$, $\Pi$ runs as follows.
\begin{center}
\small
\AxiomC{$\Pi_1$}
\alwaysNoLine
\UnaryInfC{$(x,y \simp z');\Gamma \vdash x:A;z:A\mand B;\Delta$}
\AxiomC{$\Pi_2$}
\UnaryInfC{$(x,y \simp z');\Gamma \vdash y:B;z:A\mand B;\Delta$}
\alwaysSingleLine
\RightLabel{\tiny $\mand R$}
\BinaryInfC{$(x,y \simp z');\Gamma \vdash z:A\mand B;\Delta$}
\DisplayProof
\end{center}

The condition on the $\mand R$ rule is $\mathcal{G} \vdash_E (z = z')$. Let $\sigma$ be the sequence of $Eq_1,Eq_2$ applications such that
$\Scal(\Gcal,\sigma)$ is defined and, $z\theta = z'\theta$ holds,
where $\theta = subst(\sigma)$. Also, applying the
induction hypothesis on $\Pi_1$ and $\Pi_2$, we obtain the $\lsbbi$
derivations for each branch respectively. Then with the help of the
Substitution lemma, we get two derivations as follows. Note that
we use dashed lines when applying the Substitution lemmas.
\begin{center}
\small
\AxiomC{$\Pi_1'$}
\alwaysNoLine
\UnaryInfC{$(x,y \simp z');\Gamma \vdash x:A;z:A\mand B;\Delta$}
\alwaysSingleLine
\dashedLine
\RightLabel{\tiny Lemma~\ref{subs}}
\UnaryInfC{$(x\theta,y\theta \simp z'\theta);\Gamma\theta \vdash x\theta:A;z\theta:A\mand B;\Delta\theta$}
\DisplayProof
\end{center} \ \\
and
\begin{center}
\small
\AxiomC{$\Pi_2'$}
\alwaysNoLine
\UnaryInfC{$(x,y \simp z');\Gamma \vdash y:B;z:A\mand B;\Delta$}
\alwaysSingleLine
\dashedLine
\RightLabel{\tiny Lemma~\ref{subs}}
\UnaryInfC{$(x\theta,y\theta \simp z'\theta);\Gamma\theta \vdash y\theta:B;z\theta:A\mand B;\Delta\theta$}
\DisplayProof
\end{center}

Then we can apply $\mand R$ and obtain $(x\theta,y\theta \simp z'\theta);\Gamma \vdash z\theta:A\mand B;\Delta\theta$. Then by applying $\sigma$ we obtain the end sequent as follows.
\begin{center}
\small
\AxiomC{$(x\theta,y\theta \simp z'\theta);\Gamma \vdash z\theta:A\mand B;\Delta\theta$}
\alwaysNoLine
\UnaryInfC{$\vdots \sigma$}
\UnaryInfC{$(x,y \simp z');\Gamma \vdash z:A\mand B;\Delta$}
\DisplayProof
\end{center}

The case for $\mimp L$ is treated similarly.

\item If the rule is $A$, the treatment for the equality entailment is the same. $\Pi$ is in the following form:
\begin{center}
\AxiomC{$\Pi_1$}
\alwaysNoLine
\UnaryInfC{$(u,w \simp z);(y,v \simp w);(x,y \simp z);(u,v \simp x');\Gamma \vdash \Delta$}
\AxiomC{$\mathcal{G}\vdash_E (x = x')$}
\alwaysSingleLine
\RightLabel{\tiny $A$}
\BinaryInfC{$(x,y \simp z);(u,v \simp x');\Gamma \vdash \Delta$}
\DisplayProof
\end{center}

Let $\Scal(\Gcal,\sigma)$ yield $x\theta = x'\theta$, where $\theta=subst(\sigma)$, we obtain $\Pi'$ as follows.
\begin{center}
\AxiomC{$\Pi_1'$}
\alwaysNoLine
\UnaryInfC{$(u,w \simp z);(y,v \simp w);(x,y \simp z);(u,v \simp x');\Gamma \vdash \Delta$}
\alwaysSingleLine \dashedLine
\RightLabel{\tiny Lemma~\ref{subs}}
\UnaryInfC{$(u\theta,w\theta \simp z\theta);(y\theta,v\theta \simp w\theta);(x\theta,y\theta \simp z\theta);(u\theta,v\theta \simp x'\theta);\Gamma\theta \vdash \Delta\theta$}
\alwaysSingleLine
\RightLabel{\tiny $A$}
\UnaryInfC{$(x\theta,y\theta \simp z\theta);(u\theta,v\theta \simp x'\theta);\Gamma\theta \vdash \Delta\theta$}
\alwaysNoLine
\UnaryInfC{$\vdots \sigma$}
\UnaryInfC{$(x,y \simp z);(u,v \simp x');\Gamma \vdash \Delta$}
\DisplayProof
\end{center}

The case for $A_C$ is similar.

\end{enumerate}
\end{enumerate}
\qed
\end{proof}

\subsection{Completeness of $\ilsbbi$}
\label{app:comp_ilsbbi}

To prove the completeness of $\ilsbbi$, firstly we add $Eq_1$ and $Eq_2$ in $\ilsbbi$ and show that the resultant system has the same power as $\lsbbi$. Then we prove the admissibility of $Eq_1$ and $Eq_2$ in $\ilsbbi$.

\begin{lemma}
\label{lem:comp_ilsbbi_eq12}
If a sequent $\Gamma\vdash \Delta$ is derivable in $\lsbbi$, then it is derivable in $\ilsbbi + Eq_1 + Eq_2$.  
\end{lemma}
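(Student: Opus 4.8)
The plan is to prove Lemma~\ref{lem:comp_ilsbbi_eq12} by induction on the height $n$ of the given $\lsbbi$ derivation $\Pi$ of $\Gamma \vdash \Delta$, showing directly that each $\lsbbi$ rule application can be simulated in $\ilsbbi + Eq_1 + Eq_2$. The key observation driving the whole argument is that $\ilsbbi$ and $\lsbbi$ share \emph{exactly the same} additive rules ($\bot L$, $\top R$, $\land L$, $\land R$, $\limp L$, $\limp R$) and the same pure structural rules $E$, $U$; moreover, once we \emph{add back} $Eq_1$ and $Eq_2$, the only genuine differences between the two systems lie in the rules $id$, $\top^* L$, $\top^* R$, $\mand R$, $\mimp L$, $A$, and $A_C$, where $\ilsbbi$ replaces a label substitution (or an implicit label equality) by a side premise of the form $\Gcal \vdash_E (u = v)$.

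First I would dispose of the base case: if $n=1$ the derivation ends in a zero-premise rule. For $\bot L$ and $\top R$ the rules coincide, so we copy them. For $id$ in $\lsbbi$ the conclusion is $\Gamma; w:P \vdash w:P; \Delta$, which matches the $\ilsbbi$ conclusion with the trivial equality $\Gcal \vdash_E (w = w)$ (witnessed by the empty sequence $\sigma = [~]$, so $\Scal(\Gcal,[~]) = \Gcal$ is defined and $w\cdot\mathrm{id} = w\cdot\mathrm{id}$). Similarly $\top^* R$ in $\lsbbi$ always labels $\top^*$ with the constant $\epsilon$, so in $\ilsbbi$ the side condition $\Gcal \vdash_E (\epsilon = \epsilon)$ holds via the empty sequence. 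In each case the $\ilsbbi + Eq_1 + Eq_2$ derivation is a single axiom.

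For the inductive step I would do a case analysis on the last rule $r$ of $\Pi$, applying the induction hypothesis to each premise and then re-deriving the conclusion. The routine cases ($\land L$, $\land R$, $\limp L$, $\limp R$, $E$, $U$, and also $Eq_1$, $Eq_2$ themselves, which are now available on both sides) simply re-apply the identical rule after invoking the induction hypothesis on the premises. The rules with substitutions require the characteristic trick: for the $\lsbbi$-rule $\top^* L$ with premise $\Gamma[\epsilon/w] \vdash \Delta[\epsilon/w]$, I would first use the induction hypothesis to obtain an $\ilsbbi+Eq_1+Eq_2$ derivation of that premise, then apply the \emph{$\ilsbbi$} rule $\top^* L$ (whose premise is $(\epsilon,w\simp\epsilon);\Gamma \vdash \Delta$) followed by $Eq_1$ or $Eq_2$ to realise the substitution $[\epsilon/w]$ — exactly the pattern used in the soundness direction (Appendix~\ref{app:sound_ilsbbi}) run backwards. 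For $\mand R$, $\mimp L$, $A$, $A_C$, whose $\ilsbbi$ versions carry a side premise $\Gcal \vdash_E (u = v)$, I would note that the $\lsbbi$ rule forces $u$ and $v$ to be \emph{syntactically identical} labels in the relevant relational atom, so $\Gcal \vdash_E (u = v)$ is again discharged by the empty sequence, and the principal premises are handled by the induction hypothesis.

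\textbf{The main obstacle} I anticipate is the careful bookkeeping of labels when the substitution in a $\lsbbi$ rule interacts with fresh eigenvariables or with the global nature of $[\epsilon/w]$: I must ensure that when I reintroduce the substitution via $Eq_1/Eq_2$ applications on top of the inductively-obtained $\ilsbbi$ subderivation, I do not accidentally capture or rename the fresh labels introduced by $\mand L$ or $\mimp R$, and that the relational atom $(\epsilon,w\simp\epsilon)$ inserted by the $\ilsbbi$ version of $\top^* L$ can indeed be eliminated by the equality rules without disturbing the rest of the sequent. This is precisely the delicate interplay already flagged as the non-trivial part of the Substitution Lemma (Lemma~\ref{subs}), and I expect to lean on Lemma~\ref{subs} together with the height-preserving admissibility of $Eq_1,Eq_2$ to close these cases cleanly.
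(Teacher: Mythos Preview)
Your proposal is correct and follows essentially the same route as the paper: induction on the height of the $\lsbbi$ derivation, with the observation that every $\ilsbbi$ rule whose extra side premise has the form $\Gcal \vdash_E (u=v)$ is trivially discharged (empty $\sigma$) whenever the corresponding $\lsbbi$ rule forces $u$ and $v$ to coincide syntactically, leaving $\top^*L$ as the only non-trivial case.

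Two small corrections. First, in the $\top^*L$ case the auxiliary lemma you actually need is \emph{weakening admissibility} (Lemma~\ref{lm:weak}), not the Substitution Lemma: from the inductively obtained derivation of $\Gamma[\epsilon/w]\vdash\Delta[\epsilon/w]$ you weaken in the relational atom $(\epsilon,\epsilon\simp\epsilon)$, then apply $Eq_1$ (forward) to obtain $(\epsilon,w\simp\epsilon);\Gamma\vdash\Delta$, and finally the $\ilsbbi$ version of $\top^*L$. Your description reverses the order of $\top^*L$ and $Eq_1$; read bottom-up it is $\top^*L$ then $Eq_1$ then weakening. Second, the ``main obstacle'' you flag --- interaction with eigenvariables of $\mand L$ and $\mimp R$ --- does not arise: those two rules are literally identical in $\lsbbi$ and $\ilsbbi$, so they are handled by the induction hypothesis alone, and the $\top^*L$ simulation above introduces no fresh labels that could be captured.
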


\begin{proof}
By induction on the height of the $\lsbbi$ derivation. Since with $Eq_1$ and $Eq_2$, most of other rules become identical, the only non-trivial case is $\top^* L$.

In $\lsbbi$, the derivation runs as follows.
\begin{center}
\AxiomC{$\Pi$}
\alwaysNoLine
\UnaryInfC{$\Gamma[\epsilon/w] \vdash \Delta[\epsilon/w]$}
\alwaysSingleLine
\RightLabel{\tiny $\top^* L$}
\UnaryInfC{$\Gamma;w:\top^* \vdash \Delta$}
\DisplayProof
\end{center}

By the induction hypothesis, there is a derivation for $\Gamma[\epsilon/w] \vdash \Delta[\epsilon/w]$ in $\ilsbbi + Eq_1 + Eq_2$. Therefore we construct the derivation as follows.
\begin{center}
\AxiomC{$\Pi'$}
\alwaysNoLine
\UnaryInfC{$\Gamma[\epsilon/w] \vdash \Delta[\epsilon/w]$}
\alwaysSingleLine \dashedLine
\RightLabel{\tiny Lemma~\ref{lm:weak}}
\UnaryInfC{$(\epsilon,\epsilon\simp\epsilon);\Gamma[\epsilon/w] \vdash \Delta[\epsilon/w]$}
\RightLabel{\tiny$Eq_1$}
\UnaryInfC{$(\epsilon,w\simp\epsilon);\Gamma \vdash \Delta$}
\RightLabel{\tiny $\top^* L$}
\UnaryInfC{$\Gamma;w:\top^* \vdash \Delta$}
\DisplayProof
\end{center}
\qed
\end{proof}

\begin{lemma}
\label{lem:subsg_g}
If $\mathcal{G}[x/y];(\epsilon,x\simp x) \vdash_E (w_1[x/y] = w_2[x/y])$ then $\mathcal{G};(\epsilon,y\simp x)\vdash_E (w_1 = w_2)$.
\end{lemma}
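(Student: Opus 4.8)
The plan is to notice that the single relational atom $(\epsilon,y\simp x)$ appearing on the left of the desired conclusion is exactly the principal atom of one $Eq_1$ step, and that this step transforms $\mathcal{G};(\epsilon,y\simp x)$ precisely into the antecedent $\mathcal{G}[x/y];(\epsilon,x\simp x)$ of the hypothesis. So the whole argument reduces to prepending a single structural step to the witnessing sequence already guaranteed by the hypothesis; no induction is needed.

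Concretely, I would first unfold the hypothesis via the definition of $\vdash_E$: there is a sequence $\sigma$ of $Eq_1$/$Eq_2$ rules such that $\Scal(\mathcal{G}[x/y]\cup\{(\epsilon,x\simp x)\},\sigma)$ is defined and $(w_1[x/y])\theta = (w_2[x/y])\theta$, where $\theta=\subst(\sigma)$. Next I would introduce the $Eq_1$ instance $r_0$ whose principal atom is $(\epsilon,y\simp x)$; reading the rule $Eq_1$ with $w:=y$ and $w':=x$, its substitution is $[x/y]$ and it introduces the atom $(\epsilon,x\simp x)$. Its side condition $w\neq\epsilon$ holds because $y\neq\epsilon$ (which is implicit, since $[x/y]$ is a legal substitution only when $y\neq\epsilon$).

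The key computation is that applying $r_0$ to $\mathcal{G}\cup\{(\epsilon,y\simp x)\}$ returns exactly $\mathcal{G}[x/y]\cup\{(\epsilon,x\simp x)\}$: indeed $(\mathcal{G}\cup\{(\epsilon,y\simp x)\})[x/y]=\mathcal{G}[x/y]\cup\{(\epsilon,x\simp x)\}$, and the atom $(\epsilon,x\simp x)$ freshly introduced by $r_0$ is absorbed into this set. Hence, setting $\sigma'=[r_0;\sigma]$, the recursive clause of $\Scal$ gives $\Scal(\mathcal{G}\cup\{(\epsilon,y\simp x)\},\sigma')=\Scal(\mathcal{G}[x/y]\cup\{(\epsilon,x\simp x)\},\sigma)$, which is defined by the hypothesis. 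Moreover $\subst(\sigma')=[x/y]\circ\theta$, so that $w_i\,\subst(\sigma')=(w_i[x/y])\theta$ for $i\in\{1,2\}$; the equality in the hypothesis then yields $w_1\,\subst(\sigma')=w_2\,\subst(\sigma')$. By the definition of $\vdash_E$ with witness $\sigma'$, this establishes $\mathcal{G};(\epsilon,y\simp x)\vdash_E (w_1=w_2)$. The only points requiring care are the verification of the $Eq_1$ side condition $y\neq\epsilon$ and the set-level computation $(\mathcal{G}\cup\{(\epsilon,y\simp x)\})[x/y]=\mathcal{G}[x/y]\cup\{(\epsilon,x\simp x)\}$ with absorption of the duplicate atom; both are routine, so I do not expect a genuine obstacle.
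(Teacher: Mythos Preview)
Your proposal is correct and follows essentially the same approach as the paper: both arguments prepend a single $Eq_1$ step with principal atom $(\epsilon,y\simp x)$ and substitution $[x/y]$ to the witnessing sequence $\sigma$ from the hypothesis. Your write-up is in fact more explicit than the paper's---you spell out the $\Scal$ computation, the composite substitution $[x/y]\circ\theta$, and the side condition $y\neq\epsilon$, whereas the paper simply displays the one-step extension as a proof-tree diagram.
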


\begin{proof}
Let $\mathcal{G}' = \mathcal{G};(\epsilon,y\simp x)$ and
$\Scal(\mathcal{G}'[x/y],\sigma)$ yield $(w_1[x/y]\theta = w_2[x/y]\theta)$, we show that $\mathcal{G}'\vdash_E (x = y)$ by following:
\begin{center}
\AxiomC{$\mathcal{G}'[x/y]\theta\vdash_E (w_1[x/y]\theta = w_2[x/y]\theta)$}
\alwaysNoLine
\UnaryInfC{$\vdots \sigma$}
\UnaryInfC{$\mathcal{G}'[x/y]\vdash_E (w_1[x/y] = w_2[x/y])$}
\alwaysSingleLine
\RightLabel{\tiny $Eq_1$}
\UnaryInfC{$\mathcal{G};(\epsilon,y\simp x)\vdash_E (x = y)$}
\DisplayProof
\end{center}
\qed
\end{proof}

Now we show that $Eq_1$ is admissible in $\ilsbbi$.

\begin{lemma}
\label{lem:eq1_adms}
If $(\epsilon,x\simp x);\Gamma[x/y] \vdash \Delta[x/y]$ is derivable in $\ilsbbi$, then $(\epsilon,y\simp x);\Gamma\vdash \Delta$ is derivable in $\ilsbbi$.
\end{lemma}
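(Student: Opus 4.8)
The plan is to prove this by induction on the height $n$ of the $\ilsbbi$ derivation $\Pi$ of the premise $(\epsilon,x\simp x);\Gamma[x/y] \vdash \Delta[x/y]$, following the same pattern as the substitution lemma (Lemma~\ref{subs}) but adapted to the substitution-free system $\ilsbbi$. The governing idea is that the relational atom $(\epsilon,y\simp x)$ occurring in the conclusion entails $y = x$ through $\vdash_E$ (a single $Eq_1$ step in the definition of the entailment), so it can replay any use of the equality $x = y$ on which the substituted premise relied. The crucial tool for transferring every side condition is the already-established Lemma~\ref{lem:subsg_g}, which states precisely that $\mathcal{G}[x/y];(\epsilon,x\simp x) \vdash_E (w_1[x/y] = w_2[x/y])$ implies $\mathcal{G};(\epsilon,y\simp x) \vdash_E (w_1 = w_2)$.

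For the base case, $\Pi$ ends in a zero-premise rule. For $\bot L$ and $\top R$ the conclusion is immediately an instance of the same rule, since neither carries a $\vdash_E$ side condition. For $id$ and $\top^* R$, whose side conditions are $\mathcal{G}_p \vdash_E (w_1 = w_2)$ and $\mathcal{G}_p \vdash_E (w = \epsilon)$ respectively (where $\mathcal{G}_p$ is the relational context of the premise, containing $(\epsilon,x\simp x)$), I apply Lemma~\ref{lem:subsg_g} to recover the corresponding entailment over the conclusion's relational atoms, which contain $(\epsilon,y\simp x)$ in place of $(\epsilon,x\simp x)$; here I use that $\epsilon[x/y] = \epsilon$ because $y \not= \epsilon$. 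This re-establishes the side condition and hence the rule instance at the conclusion.

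In the inductive step I do a case analysis on the last rule $r$ of $\Pi$. For the additive rules and for $\mand L$, $\mimp R$, $\top^* L$, the side structures (and the relational atom added by $\top^* L$) are carried to the premises essentially unchanged, so I apply the induction hypothesis to each sub-derivation and reapply $r$; for the two rules that introduce eigenvariables ($\mand L$, $\mimp R$) I first rename those fresh labels to avoid $y$, which their freshness condition permits. For the rules carrying a $\vdash_E$ side condition ($\mand R$, $\mimp L$, $A$, $A_C$) I additionally invoke Lemma~\ref{lem:subsg_g} to lift the condition from the premise's relational context to the conclusion's. The only genuinely delicate situation, and the step I expect to be the main obstacle, is when $r$ is a structural rule ($E$, $U$, $A$, $A_C$) and the residual atom $(\epsilon,x\simp x)$ is itself principal, since then the atoms $r$ introduces into its premise are built from $(\epsilon,x\simp x)$ rather than from an atom already present in $\Gamma[x/y]$. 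Here I exploit the non-injectivity of $[x/y]$: I regard each newly introduced atom as the $[x/y]$-image of the corresponding atom built from $(\epsilon,y\simp x)$, fold it into an enlarged context $\Gamma'$, and apply the induction hypothesis with residue $(\epsilon,x\simp x)$ and context $\Gamma'[x/y]$. The hypothesis then yields exactly the premise of $r$ applied to $(\epsilon,y\simp x)$, after which reapplying $r$ (discharging its $\vdash_E$ side condition again by Lemma~\ref{lem:subsg_g}) gives the desired conclusion. Since the lemma makes no claim about derivation height, the extra structural steps introduced during this reconstruction are harmless, and the only thing requiring real care is this consistent bookkeeping of pre-images and eigenvariables in the structural cases.
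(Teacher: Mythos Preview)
Your proposal is correct and follows essentially the same strategy as the paper: induct on the height of the $\ilsbbi$ derivation, permute the implicit $Eq_1$ step upward through each rule, and use Lemma~\ref{lem:subsg_g} to transfer every $\vdash_E$ side condition from the substituted context to the original one containing $(\epsilon,y\simp x)$. The paper invokes Lemma~\ref{lem:str_permute} to dismiss the negative rules and then spells out representative permutations through $id$, $\top^* R$, $E$, $U$, $\mand R$, and $A$; your treatment is in fact slightly more careful, since you explicitly flag and handle the sub-case where the residual atom $(\epsilon,x\simp x)$ is itself principal in a structural rule, which the paper merely waves off as ``similar.''
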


\begin{proof}
We show that $Eq_1$ can always permute up through all other rules, and
eventually disappear when it hits the zero-premise rule. Since
Lemma~\ref{lem:str_permute} is sufficient to show the permutations
through nagative rules, here we particularly show the cases for
positive rules.
\begin{enumerate}
\item First let us show the cases for the zero-premise rules. $\bot L$ and $\top R$ are trivial, as they are applicable for an arbitrary label. The permutation for $id$ runs as follows, where $\mathcal{G}$ is the set of relational atoms in $(\epsilon,y\simp x);\Gamma$.
\begin{center}
\AxiomC{$\mathcal{G}[x/y]\vdash_E (w_1[x/y] = w_2[x/y])$}
\RightLabel{\tiny $id$}
\UnaryInfC{$(\epsilon,x\simp x);\Gamma[x/y];w_1[x/y]:P \vdash w_2[x/y]:P;\Delta$}
\RightLabel{\tiny $Eq_1$}
\UnaryInfC{$(\epsilon,y\simp x);\Gamma;w_1:P \vdash w_2:P;\Delta$}
\DisplayProof
\end{center} 

%let $\mathcal{G}$ be the set of relational atoms on the left hand side of the bottom sequent, we know that there is a solution $(\mathcal{G},Eq_1((\epsilon,y\simp x)), [x/y])$ for the equality entailment $\mathcal{G}\vdash_E (w_1 = w_2)$. 
By Lemma~\ref{lem:subsg_g}, if $\mathcal{G}[x/y]\vdash_E (w_1[x/y] = w_2[x/y])$ then $\mathcal{G}\vdash_E (w_1 = w_2)$ (note that this is because $(\epsilon,y\simp x)\in \mathcal{G}$). Therefore we can apply $id$ directly on the bottom sequent, and eliminate the $Eq_1$ application.

The case for $\top^* R$ is treated similarly. As we have shown, structural rules can permute through $\top^* L$, $\land L$, $\land R$, $\limp L$, $\limp R$, $\mand L$ and $\mimp R$, so these cases are left out here.

\item Permute $Eq_1$ through $E$, assuming the label being replaced is
  $y$. The original derivation is as follows.
\begin{center}
\alwaysNoLine 
\AxiomC{$\Pi$}
\UnaryInfC{$(w,x,\simp z);(x,w\simp z);(\epsilon,w\simp w);\Gamma[w/y]
  \vdash \Delta[w/y]$}
\alwaysSingleLine
\RightLabel{\tiny $E$}
\UnaryInfC{$(x,w\simp z);(\epsilon,w\simp w);\Gamma[w/y]
  \vdash \Delta[w/y]$}
\RightLabel{\tiny $Eq_1$}
\UnaryInfC{$(x,y\simp z);(\epsilon,y\simp w);\Gamma
  \vdash \Delta$}
\DisplayProof
\end{center}

The permuted derivation is as follows.
\begin{center}
\alwaysNoLine 
\AxiomC{$\Pi$}
\UnaryInfC{$(w,x,\simp z);(x,w\simp z);(\epsilon,w\simp w);\Gamma[w/y]
  \vdash \Delta[w/y]$}
\alwaysSingleLine
\RightLabel{\tiny $Eq_1$}
\UnaryInfC{$(y,x\simp z);(x,y\simp z);(\epsilon,y\simp w);\Gamma
  \vdash \Delta$}
\RightLabel{\tiny $E$}
\UnaryInfC{$(x,y\simp z);(\epsilon,y\simp w);\Gamma
  \vdash \Delta$}
\DisplayProof
\end{center}

\item Premute $Eq_1$ through $U$, assuming the replaced label is
  $x$. Then the derivation runs as follows.
\begin{center}
\alwaysNoLine
\AxiomC{$\Pi$}
\UnaryInfC{$(w,\epsilon \simp w);(\epsilon,w\simp w);\Gamma[w/x]
  \vdash \Delta[w/x]$}
\alwaysSingleLine
\RightLabel{\tiny $U$}
\UnaryInfC{$(\epsilon,w\simp w);\Gamma[w/x]
  \vdash \Delta[w/x]$}
\RightLabel{\tiny $Eq_1$}
\UnaryInfC{$(\epsilon,x\simp w);\Gamma
  \vdash \Delta$}
\DisplayProof
\end{center}

We modify the derivation as follows.
\begin{center}
\alwaysNoLine
\AxiomC{$\Pi$}
\UnaryInfC{$(w,\epsilon \simp w);(\epsilon,w\simp w);\Gamma[w/x]
  \vdash \Delta[w/x]$}
\alwaysSingleLine
\RightLabel{\tiny $Eq_1$}
\UnaryInfC{$(x,\epsilon\simp x);(\epsilon,x\simp w);\Gamma
  \vdash \Delta$}
\RightLabel{\tiny $U$}
\UnaryInfC{$(\epsilon,x\simp w);\Gamma
  \vdash \Delta$}
\DisplayProof
\end{center}

Note that we can also generate $(w,\epsilon\simp w)$ directly using
the $U$ rule, but the effect is the same.

\item Permute $Eq_1$ through $\mand R$. Suppose the principal relational atom of $Eq_1$ is not the same as the one used in $\mand R$, let $\mathcal{G}$ be the set of relational atoms in $(\epsilon,w\simp w')(x,y \simp z');\Gamma$, the derivation runs as follows. Here we write $(\Gamma\vdash \Delta)[x/y]$ to mean that replace every $y$ by $x$ in the entire sequent. The equality entailment is $\mathcal{G}[w'/w] \vdash_E (z[w'/w] = z'[w'/w])$ (to save space, we do not write the constraint in the derivation). 
\begin{center}
\AxiomC{}
\RightLabel{\tiny $\mand R$}
\UnaryInfC{$((\epsilon,w'\simp w')(x,y \simp z');\Gamma \vdash z:A\mand B;\Delta)[w'/w]$}
\RightLabel{\tiny $Eq_1$}
\UnaryInfC{$(\epsilon,w\simp w')(x,y \simp z');\Gamma \vdash z:A\mand B;\Delta$}
\DisplayProof
\end{center}

The two premises of the $\mand R$ rule application are listed below.
\begin{center}
$((\epsilon,w'\simp w');(;x,y \simp z');\Gamma \vdash x:A;z:A\mand B;\Delta)[w'/w]$
$((\epsilon,w'\simp w');(x,y \simp z');\Gamma \vdash y:B;z:A\mand B;\Delta)[w'/w]$
\end{center}

By Lemma~\ref{lem:subsg_g}, since $\mathcal{G}[w'/w] \vdash_E (z[w'/w] = z'[w'/w])$, and $(\epsilon,w\simp w')\in\mathcal{G}$, $\mathcal{G}\vdash_E (z = z')$ holds. Therefore we have the following two derivations:
\begin{center}
\AxiomC{$((\epsilon,w'\simp w');(;x,y \simp z');\Gamma \vdash x:A;z:A\mand B;\Delta)[w'/w]$}
\RightLabel{\tiny $Eq_1$}
\UnaryInfC{$(\epsilon,w\simp w');(;x,y \simp z');\Gamma \vdash x:A;z:A\mand B;\Delta$}
\DisplayProof
\end{center}

and 
\begin{center}
\AxiomC{$((\epsilon,w'\simp w');(x,y \simp z');\Gamma \vdash y:B;z:A\mand B;\Delta)[w'/w]$}
\RightLabel{\tiny $Eq_1$}
\UnaryInfC{$(\epsilon,w\simp w');(x,y \simp z');\Gamma \vdash y:B;z:A\mand B;\Delta$}
\DisplayProof
\end{center}

then we use the $\mand R$ rule, where the equality entailment is $\mathcal{G} \vdash_E (z = z')$, to obtain the end sequent $(\epsilon,w\simp w')(x,y \simp z');\Gamma \vdash z:A\mand B;\Delta$.

If the principal relational atom is used in the $\mand R$ rule, the permutation is analogous. The permutation through $\mimp L$ is similar.

\item Permutation through $A$. We show the case where the principal relational atom in $Eq_1$ is not in $A$, the other cases are similar. The original derivation is as follows.
\begin{center}
\footnotesize
\AxiomC{$((\epsilon,w\simp w);(u,w \simp z);(y,v \simp w);(x,y \simp z);(u,v \simp x');\Gamma \vdash \Delta)[w/w']$}
\RightLabel{\tiny $A$}
\UnaryInfC{$((\epsilon,w\simp w);(x,y \simp z);(u,v \simp x');\Gamma \vdash \Delta)[w/w']$}
\RightLabel{\tiny $Eq_1$}
\UnaryInfC{$(\epsilon,w'\simp w);(x,y \simp z);(u,v \simp x');\Gamma \vdash \Delta$}
\DisplayProof
\end{center}

The condition on the $A$ rule is $\mathcal{G}[w/w']\vdash_E
  (x[w/w'] = x'[w/w'])$. By Lemma~\ref{lem:subsg_g}, $\mathcal{G}\vdash_E (x = x')$ holds. Therefore the derivation is transformed into the following:
\begin{center}
\AxiomC{$((\epsilon,w'\simp w);(u,w \simp z);(y,v \simp w);(x,y \simp z);(u,v \simp x');\Gamma \vdash \Delta)[w/w']$}
\RightLabel{\tiny $Eq_1$}
\UnaryInfC{$(\epsilon,w'\simp w);(u,w \simp z);(y,v \simp w);(x,y \simp z);(u,v \simp x');\Gamma \vdash \Delta$}
\RightLabel{\tiny $A$}
\UnaryInfC{$(\epsilon,w'\simp w);(x,y \simp z);(u,v \simp x');\Gamma \vdash \Delta$}
\DisplayProof
\end{center}
The condition on the $A$ rule is $\mathcal{G}\vdash_E (x = x')$.
$A_C$ is treated similarly.
\qed
\end{enumerate}
\end{proof}

\begin{lemma}
\label{lem:eq2_adms}
If $(\epsilon,y\simp y);\Gamma[y/x] \vdash \Delta[y/x]$ is derivable in $\ilsbbi$, then $(\epsilon,y\simp x);\Gamma\vdash \Delta$ is derivable in $\ilsbbi$.
\end{lemma}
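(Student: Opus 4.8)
The plan is to mirror the proof of Lemma~\ref{lem:eq1_adms}. I regard the passage from the premise $(\epsilon,y\simp y);\Gamma[y/x]\vdash\Delta[y/x]$ to the conclusion $(\epsilon,y\simp x);\Gamma\vdash\Delta$ as a single application of $Eq_2$ sitting at the root of the given $\ilsbbi$ derivation, and I show by induction on the height of that derivation that this $Eq_2$ instance permutes upward through every rule until it is absorbed into a zero-premise rule. Since $\ilsbbi$ itself contains no $Eq$ rules, the tracked instance must ultimately vanish, leaving a pure $\ilsbbi$ derivation of the conclusion.

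First I would establish the $Eq_2$-analogue of Lemma~\ref{lem:subsg_g}: if $\mathcal{G}[y/x];(\epsilon,y\simp y)\vdash_E (w_1[y/x]=w_2[y/x])$ then $\mathcal{G};(\epsilon,y\simp x)\vdash_E (w_1=w_2)$. This follows by taking the sequence $\sigma$ of $Eq_1,Eq_2$ steps witnessing the hypothesis and prepending one $Eq_2$ step on the principal atom $(\epsilon,y\simp x)$, which discharges the equality $x=y$ and introduces exactly the substitution $[y/x]$. This entailment lemma is precisely what lets the tracked $Eq_2$ disappear at the leaves: at an $id$ or $\top^* R$ leaf the side condition is an entailment over the substituted context, and the analogue rewrites it into an entailment over the original context $\mathcal{G}$ (which now contains $(\epsilon,y\simp x)$), so the same zero-premise rule applies directly to the unsubstituted conclusion.

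For the inductive step I would, as in Lemma~\ref{lem:eq1_adms}, invoke Lemma~\ref{lem:str_permute} for the negative rules and treat the positive rules ($\mand R$, $\mimp L$) and the remaining structural rules ($E$, $U$, $A$, $A_C$) explicitly. Each of these carries its principal relational atoms into the premise, so the tracked $Eq_2$ simply commutes above the rule; where the rule has an entailment side condition, the $Eq_2$-analogue above guarantees the condition is preserved after the swap. The cases that introduce fresh labels ($\mand L$, $\mimp R$, $\top^* L$) are handled by systematically renaming the eigenvariables so that they avoid $x$ and $y$, exactly as in the earlier proof.

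The main obstacle, and the reason this is not a verbatim copy of Lemma~\ref{lem:eq1_adms}, is that permuting $Eq_2$ upward through $\top^* L$ (when the principal $\top^*$ label is the one being replaced) turns the $Eq_2$ instance into an $Eq_1$ instance, just as the symmetric permutation in the proof of Lemma~\ref{lem:str_permute} turns $Eq_1$ into $Eq_2$. Consequently the tracked rule alternates between $Eq_1$ and $Eq_2$ as it rises, and the cleanest way to close the argument is to read Lemmas~\ref{lem:eq1_adms} and~\ref{lem:eq2_adms} as a single simultaneous induction on ``one $Eq_1$ or $Eq_2$ instance above a subderivation'': each permutation step either keeps the rule or switches to the other, and in both cases the (now combined) induction hypothesis applies. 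The only delicate bookkeeping is tracking which labels are affected when the substitution $[y/x]$ interacts with a rule's own label manipulation, and this becomes routine once the simultaneous-induction framing is fixed.
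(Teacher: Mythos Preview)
Your proposal is correct and matches the paper's own proof, which simply says ``Symmetric to the proof in Lemma~\ref{lem:eq1_adms}.'' Your observation that the $Eq_1$/$Eq_2$ instances may swap when permuted through $\top^*L$, and hence that the two lemmas are most cleanly established by a single simultaneous induction, is a useful clarification of a point the paper handles only implicitly (by deferring the negative-rule cases to Lemma~\ref{lem:str_permute}, whose proof already remarks that such a swap ``does not break the proof'').
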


\begin{proof}
Symmetric to the proof in Lemma~\ref{lem:eq1_adms}.\qed
\end{proof}

\begin{theorem}
If a sequent is derivable in $\lsbbi$, then it is also derivable in $\ilsbbi$.
\end{theorem}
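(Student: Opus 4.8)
The plan is to chain together the three results already established in this subsection. First I would invoke Lemma~\ref{lem:comp_ilsbbi_eq12}: given a derivation of $\Gamma \vdash \Delta$ in $\lsbbi$, we obtain a derivation $\Pi$ of the same sequent in $\ilsbbi + Eq_1 + Eq_2$. The only task remaining is to remove every application of $Eq_1$ and $Eq_2$ from $\Pi$, turning it into a genuine $\ilsbbi$ derivation. This is exactly what the admissibility lemmas (Lemma~\ref{lem:eq1_adms} and Lemma~\ref{lem:eq2_adms}) provide, since their statements match the premise/conclusion shapes of the $Eq_1$ and $Eq_2$ rules verbatim (taking $x = w'$, $y = w$ for $Eq_1$, and symmetrically for $Eq_2$).

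The elimination proceeds by induction on the number $k$ of $Eq_1$/$Eq_2$ rule instances occurring in $\Pi$. If $k = 0$, then $\Pi$ is already an $\ilsbbi$ derivation and we are done. If $k > 0$, I would select a \emph{topmost} occurrence of an $Eq_1$ or $Eq_2$ instance in $\Pi$. Because it is topmost, the entire subderivation above its premise contains no $Eq$ applications, hence is a derivation purely in $\ilsbbi$. The premise of this instance therefore qualifies as ``derivable in $\ilsbbi$'', so Lemma~\ref{lem:eq1_adms} (respectively Lemma~\ref{lem:eq2_adms}) yields an $\ilsbbi$ derivation of its conclusion. Grafting this new subderivation back into $\Pi$ in place of the subtree rooted at the chosen $Eq$ instance produces a derivation of $\Gamma \vdash \Delta$ in $\ilsbbi + Eq_1 + Eq_2$ with exactly $k-1$ occurrences of $Eq_1$/$Eq_2$. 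By the induction hypothesis, this transformed derivation can be converted to a pure $\ilsbbi$ derivation, completing the step.

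The genuinely hard part of this equivalence is \emph{not} in the present theorem but in the admissibility lemmas it relies upon: proving that $Eq_1$ and $Eq_2$ permute upward through every rule of $\ilsbbi$ (in particular through the positive rules $id$, $\top^* R$, $\mand R$, $\mimp L$, which depend on the relational atoms via $\vdash_E$), ultimately vanishing at the zero-premise rules. That permutation analysis, using Lemma~\ref{lem:subsg_g} to transport equality entailments across the label substitution induced by an $Eq$ step, was carried out in Lemma~\ref{lem:eq1_adms}. Given those lemmas, the remaining argument here is purely combinatorial: I only need to check that the elimination terminates, which follows immediately from the strictly decreasing count $k$, and that grafting preserves derivability, which is automatic since we replace a subderivation of a sequent by another derivation of the \emph{same} sequent. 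No control over derivation height is required, so the bookkeeping is light.

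\begin{proof}
Suppose $\Gamma \vdash \Delta$ is derivable in $\lsbbi$. By
Lemma~\ref{lem:comp_ilsbbi_eq12}, there is a derivation $\Pi$ of
$\Gamma \vdash \Delta$ in $\ilsbbi + Eq_1 + Eq_2$. We show, by
induction on the number $k$ of $Eq_1$ and $Eq_2$ rule instances in
$\Pi$, that $\Gamma \vdash \Delta$ is derivable in $\ilsbbi$.

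If $k = 0$, then $\Pi$ uses no $Eq_1$ or $Eq_2$ applications and is
already a derivation in $\ilsbbi$.

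If $k > 0$, pick a topmost instance of $Eq_1$ or $Eq_2$ in $\Pi$.
Since it is topmost, the subderivation above its premise contains no
$Eq_1$ or $Eq_2$ applications, and hence is a derivation in $\ilsbbi$.
If the chosen instance is $Eq_1$, then its premise has the form
$(\epsilon,x\simp x);\Gamma'[x/y] \vdash \Delta'[x/y]$ and its
conclusion has the form $(\epsilon,y\simp x);\Gamma' \vdash \Delta'$;
by Lemma~\ref{lem:eq1_adms} there is an $\ilsbbi$ derivation of the
conclusion. If the chosen instance is $Eq_2$, then
Lemma~\ref{lem:eq2_adms} gives the analogous $\ilsbbi$ derivation of
its conclusion. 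Replacing the subtree of $\Pi$ rooted at the chosen
instance by this new derivation yields a derivation of $\Gamma \vdash
\Delta$ in $\ilsbbi + Eq_1 + Eq_2$ with $k-1$ instances of $Eq_1$ and
$Eq_2$. By the induction hypothesis, $\Gamma \vdash \Delta$ is
derivable in $\ilsbbi$.\qed
\end{proof}
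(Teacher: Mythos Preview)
Your proposal is correct and follows exactly the route the paper takes: the paper's proof is the single line ``Immediate by Lemma~\ref{lem:comp_ilsbbi_eq12},~\ref{lem:eq1_adms},~\ref{lem:eq2_adms}'', and your argument is a faithful unpacking of that sentence, making explicit the topmost-instance induction that eliminates the $Eq$ rules one at a time.
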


\begin{proof}
Immediate by Lemma~\ref{lem:comp_ilsbbi_eq12},~\ref{lem:eq1_adms},~\ref{lem:eq2_adms}.\qed
\end{proof}

\subsection{Substitution lemma for $\ilsbbi$}
\label{app:subs_ilsbbi}

This section proves the substitution lemma for the intermediate system
$\ilsbbi$, as this will be used in some proofs.

\begin{lemma}
\label{lem:eqe_subs}
If $\mathcal{G}\vdash_E (x = y)$ then for any substitution $[s/t]$, where $t \not = \epsilon$, $\mathcal{G}[s/t]\vdash_E (x[s/t] = y[s/t])$. 
\end{lemma}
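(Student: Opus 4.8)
The plan is to induct on the length $n$ of a sequence $\sigma$ of $Eq_1$/$Eq_2$ applications witnessing $\mathcal{G}\vdash_E(x=y)$, i.e. with $\Scal(\mathcal{G},\sigma)$ defined and $x\theta = y\theta$ for $\theta = \subst(\sigma)$. Simultaneously I would construct a witnessing sequence $\tau$ for $\mathcal{G}[s/t]$ and maintain the invariant that, writing $\theta' = \subst(\tau)$, for every label $u$ we have $u[s/t]\theta' = (u\theta)[s/t]\theta'$. Granting this invariant the lemma follows at once: from $x\theta = y\theta$ we get $(x\theta)[s/t]\theta' = (y\theta)[s/t]\theta'$, hence $x[s/t]\theta' = y[s/t]\theta'$, and since $\Scal(\mathcal{G}[s/t],\tau)$ is defined this is exactly $\mathcal{G}[s/t]\vdash_E (x[s/t]=y[s/t])$. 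The invariant is deliberately weaker than outright commutation $[s/t]\circ\theta' = \theta\circ[s/t]$, which actually fails; the trailing $\theta'$ is what absorbs the discrepancies created when the substituted label reappears as a merge target.

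In the base case $n=0$ we have $\theta = \mathit{id}$, so $x=y$, and we take $\tau=[~]$. For the inductive step, write $\sigma = [r(\{(\epsilon,p\simp q)\},\theta_1);\sigma'']$, where the first step merges the labels of an atom $(\epsilon,p\simp q)\in\mathcal{G}$; here $\theta_1$ is $[q/p]$ (if $r=Eq_1$, eliminating $p$ under the side condition $p\neq\epsilon$) or $[p/q]$ (if $r=Eq_2$, eliminating $q$ under $q\neq\epsilon$). Since $\epsilon[s/t]=\epsilon$, the atom $(\epsilon,p[s/t]\simp q[s/t])$ lies in $\mathcal{G}[s/t]$, and I would choose the first step of $\tau$ by cases on which of $p[s/t]$, $q[s/t]$ equals $\epsilon$: if the component eliminated by $\theta_1$ is still non-$\epsilon$ we apply the same rule; if it has collapsed to $\epsilon$ (which happens precisely when that component equals $t$ and $s=\epsilon$) we apply the opposite rule, $Eq_1\leftrightarrow Eq_2$, eliminating the other component instead; and if both components map to $\epsilon$ the two labels are already identified and the first step is empty. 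The remaining steps of $\tau$ mirror $\sigma''$ under the same case-construction. Because the substituted first step need not produce literally $\Scal(\mathcal{G},[r(\dots)])[s/t]$, the cleanest organisation is to thread the invariant through the whole sequence rather than to recurse on a genuinely substituted subproblem; closing the induction is then a direct verification that the single-step invariant for the chosen first step composes with the one carried forward.

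The main obstacle is exactly this single-step verification in the clash cases. The benign case $\{p,q\}\cap\{s,t\}=\emptyset$ reduces to ordinary commutation of two disjoint substitutions; the delicate cases are (i) $s=\epsilon$ with $p=t$ or $q=t$, where a component collapses to $\epsilon$ and forces the $Eq_1\leftrightarrow Eq_2$ switch above, and (ii) $s$ coinciding with the surviving merge label, where $[s/t]$ and $\theta_1$ do not commute on the nose. Case (ii) is where the weaker invariant earns its keep: the trailing $\theta'$ identifies the two candidate images (the survivor of the merge and the reintroduced copy of $s$) because that merge is performed again in $\tau$. These are precisely the $Eq_1$/$Eq_2$-under-substitution subcases already dissected in the proof of Lemma~\ref{subs} (the $x=w$, $x=w'$, $y=w'$ and $y=\epsilon$ splits), so I would discharge them by the analogous case analysis, using the hypothesis $t\neq\epsilon$ to guarantee that every intermediate $Eq$ application respects its side condition.
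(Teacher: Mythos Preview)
Your outline and the paper's proof share the induction on $|\sigma|$ and the same case split on how the first $Eq$-step's labels interact with $s,t$; you also correctly anticipate the $Eq_1\leftrightarrow Eq_2$ switch when a component collapses to $\epsilon$. Where the two diverge is in how the inductive step is closed. The paper does \emph{not} try to build $\tau$ step-for-step alongside $\sigma$ under a threaded invariant. Instead, having peeled off the first rule (say $Eq_1$ on $(\epsilon,u\simp v)$, substitution $[v/u]$), it applies the induction hypothesis to the premise with a \emph{different} substitution chosen case-by-case, and then uses a simple composition identity to rewrite the result. For example, in the case $u=t$, $v\neq s$ it invokes the IH with $[v/s]$, observes $[v/u]\circ[v/s]=[s/u]\circ[v/s]$, and finishes with a single $Eq_1$ on $(\epsilon,s\simp v)$. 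The other clash cases are handled by analogous identities ($[v/t]$, $[s/v]$, etc.).

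This matters for your plan because after one step of $\sigma$ and one step of $\tau$ the two sides are generally \emph{not} related by $[s/t]$: you have $(\mathcal G[s/t])\theta_1'$ on the $\tau$-side but the natural IH target is $(\mathcal G\theta_1)[s/t]$, and these differ precisely in your clash cases. Your weaker invariant $u[s/t]\theta'=(u\theta)[s/t]\theta'$ is meant to absorb this, but to actually apply the IH after one step you still need $(\mathcal G[s/t])\theta_1'$ to be of the form $(\mathcal G\theta_1)[s'/t']$ for some single substitution $[s'/t']$---and identifying that $[s'/t']$ is exactly the paper's move. So the ``direct verification that the single-step invariant composes with the one carried forward'' is not routine bookkeeping: once you make it precise you recover the paper's per-case choice of substitution. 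Your route is not wrong, but it is a detour that lands back at the same case analysis; the paper's organisation is shorter because it uses the lemma statement itself as the IH (with the substitution as a parameter to be varied) rather than a strengthened invariant.
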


\begin{proof}
Let $(\mathcal{G},\sigma,\phi)$ be the solution to $\mathcal{G}\vdash_E (x = y)$, we prove this lemma by induction on the length of $\sigma$.

\begin{enumerate}
\item Base case, $\sigma$ is an empty sequence. In this case, the sequence of substitutions $\phi$ is also empty, therefore $x = y$. As a result, it must be the case that $x[s/t] = y[s/t]$, so $\mathcal{G}[s/t]\vdash_E (x[s/t] = y[s/t])$ trivially holds.

\item Inductive case, assume $|\sigma| = n$. Let us look at the first rule application in $\sigma$. Assume this rule is $Eq_1$ (the case for $Eq_2$ is symmetric), and the principal relational atom is $(\epsilon,u\simp v)$, then $\sigma$ is as follows.
\begin{center}
\AxiomC{$\mathcal{G}\phi\vdash_E (x\phi = y\phi)$}
\alwaysNoLine
\UnaryInfC{$\vdots \sigma'$}
\UnaryInfC{$\mathcal{G}'[v/u];(\epsilon,v\simp v)\vdash_E (x[v/u] = y[v/u])$}
\alwaysSingleLine
\RightLabel{\tiny $Eq_1$}
\UnaryInfC{$\mathcal{G}';(\epsilon,u\simp v)\vdash_E (x = y)$}
\DisplayProof
\end{center}

\begin{enumerate}
\item If $u = t$ and $v = s$, then the premise of the last rule application is already what we need.
\item If $u = t$ and $v \not = s$, we obtain the desired entailment as follows ($IH[x/y]$ stands for applying the induction hypothesis with the substitution $[x/y]$, we use double line to mean that the premise and the conclusion are equivalent).
\begin{center}
\AxiomC{$IH[v/s]$}
\alwaysNoLine
\UnaryInfC{$\mathcal{G}'[v/u][v/s];(\epsilon,v\simp v)\vdash_E (x[v/u][v/s] = y[v/u][v/s])$}
\alwaysSingleLine
\doubleLine
\UnaryInfC{$\mathcal{G}'[s/u][v/s];(\epsilon,v\simp v)\vdash_E (x[s/u][v/s] = y[s/u][v/s])$}
\RightLabel{\tiny $Eq_1$}
\UnaryInfC{$\mathcal{G}'[s/u];(\epsilon,s\simp v)\vdash_E (x[s/u] = y[s/u])$}
\DisplayProof
\end{center}

\item If $u = s$, we prove the substituted entailment as follows.
\begin{center}
\AxiomC{$IH[v/t]$}
\alwaysNoLine
\UnaryInfC{$\mathcal{G}'[v/u][v/t];(\epsilon,v\simp v)\vdash_E (x[v/u][v/t] = y[v/u][v/t])$}
\alwaysSingleLine
\doubleLine
\UnaryInfC{$\mathcal{G}'[u/t][v/u];(\epsilon,v\simp v)\vdash_E (x[u/t][v/u] = y[u/t][v/u])$}
\RightLabel{\tiny $Eq_1$}
\UnaryInfC{$\mathcal{G}'[u/t];(\epsilon,u\simp v)\vdash_E (x[u/t] = y[u/t])$}
\DisplayProof
\end{center}

Note that under this case if $v = t$, the proof is just a special case of the one above. 

\item If $v = t$, the case is shown below.
\begin{center}
\AxiomC{$IH[s/v]$}
\alwaysNoLine
\UnaryInfC{$\mathcal{G}'[v/u][s/v];(\epsilon,s\simp s)\vdash_E (x[v/u][s/v] = y[v/u][s/v])$}
\alwaysSingleLine
\doubleLine
\UnaryInfC{$\mathcal{G}'[s/v][s/u];(\epsilon,s\simp s)\vdash_E (x[s/v][s/u] = y[s/v][s/u])$}
\RightLabel{\tiny $Eq_1$}
\UnaryInfC{$\mathcal{G}'[s/v];(\epsilon,u\simp s)\vdash_E (x[s/v] = y[s/v])$}
\DisplayProof
\end{center}

\item If $v = s$, the proof is as follows.
\begin{center}
\AxiomC{$IH[v/t]$}
\alwaysNoLine
\UnaryInfC{$\mathcal{G}'[v/u][v/t];(\epsilon,v\simp v)\vdash_E (x[v/u][v/t] = y[v/u][v/t])$}
\alwaysSingleLine
\doubleLine
\UnaryInfC{$\mathcal{G}'[v/t][v/u];(\epsilon,v\simp v)\vdash_E (x[v/t][v/u] = y[v/t][v/u])$}
\RightLabel{\tiny $Eq_1$}
\UnaryInfC{$\mathcal{G}'[v/t];(\epsilon,u\simp v)\vdash_E (x[v/t] = y[v/t])$}
\DisplayProof
\end{center}

\item If $[s/t]$ and $[u/v]$ are independent, then we can switch the order of substitution, and derive the entailment as follows. 
\begin{center}
\AxiomC{$IH[s/t]$}
\alwaysNoLine
\UnaryInfC{$\mathcal{G}'[v/u][s/t];(\epsilon,v\simp v)\vdash_E (x[v/u][s/t] = y[v/u][s/t])$}
\alwaysSingleLine
\doubleLine
\UnaryInfC{$\mathcal{G}'[s/t][v/u];(\epsilon,v\simp v)\vdash_E (x[s/t][v/u] = y[s/t][v/u])$}
\RightLabel{\tiny $Eq_1$}
\UnaryInfC{$\mathcal{G}'[s/t];(\epsilon,u\simp v)\vdash_E (x[s/t] = y[s/t])$}
\DisplayProof
\end{center}

\end{enumerate}

\end{enumerate}
\qed 
\end{proof}
 
Since substitution does not break the equality entailment, we can show a substitution lemma for the system $\ilsbbi$.

\begin{lemma}[Substitution in $\ilsbbi$]
\label{lem:subs_ilsbbi}
If there is a derivation for the sequent $\Gamma \vdash \Delta$ in $\ilsbbi$ then there is a derivation of the same height for the sequent $\Gamma[y/x] \vdash \Delta[y/x]$ in $\ilsbbi$, where every occurrence of label $x$ ($x \not = \epsilon$) is replaced by label $y$.
\end{lemma}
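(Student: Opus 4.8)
The plan is to induct on $ht(\Pi)$ and perform a case analysis on the last rule of $\Pi$, following the structure of the proof of Lemma~\ref{subs} for $\lsbbi$, but now exploiting the crucial simplification that the delicate rules $Eq_1$ and $Eq_2$—the only genuinely hard cases there—have been removed from $\ilsbbi$. The one new technical ingredient is Lemma~\ref{lem:eqe_subs}, which guarantees that the equality entailment $\vdash_E$ is stable under substitution; this will discharge every rule whose side condition is an instance of $\vdash_E$. A key point is that the $\vdash_E$ side derivations are not part of the sequent derivation tree, so applying Lemma~\ref{lem:eqe_subs} does not affect $ht$, and the ``same height'' claim is maintained throughout.

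For the base case ($ht(\Pi)=1$) the last rule is a zero-premise rule. For $\bot L$ and $\top R$ the conclusion holds for an arbitrary label, so substitution is harmless. For $id$ and $\top^* R$, whose side conditions are $\mathcal{G}\vdash_E(w_1=w_2)$ and $\mathcal{G}\vdash_E(w=\epsilon)$ respectively, I apply Lemma~\ref{lem:eqe_subs} with the substitution $[y/x]$ to obtain $\mathcal{G}[y/x]\vdash_E(w_1[y/x]=w_2[y/x])$ (resp.\ $\mathcal{G}[y/x]\vdash_E(w[y/x]=\epsilon)$, using $\epsilon[y/x]=\epsilon$), and then re-apply the rule to the substituted sequent.

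For the inductive step I split the rules into three groups. (i) The additive rules, $E$, $U$, and the relabelling-free rule $\top^* L$: since none of these performs a global substitution in $\ilsbbi$—in particular the new $\top^* L$ merely inserts $(\epsilon,w\simp\epsilon)$ into its premise—I apply the induction hypothesis to each premise and re-apply the rule, renaming the principal label to $y$ when it happens to be $x$. (ii) The rules carrying a $\vdash_E$ side condition, namely $\mand R$, $\mimp L$, $A$, and $A_C$: here I again apply the induction hypothesis to the premises and invoke Lemma~\ref{lem:eqe_subs} to transport the side condition through $[y/x]$; for $A$ and $A_C$ I additionally rename the freshly introduced label $w$ so that it avoids $y$. (iii) The eigenvariable rules $\mand L$ and $\mimp R$: if the principal label differs from both $x$ and $y$ the substitution does not touch the created relational atom; otherwise I first use the induction hypothesis to $\alpha$-rename the eigenvariables to labels distinct from $y$, and then apply it again with $[y/x]$, mirroring the double application of the induction hypothesis in Lemma~\ref{subs}. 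Both substitutions are height preserving, so the result has height $\le ht(\Pi)$.

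The main obstacle is essentially confined to group (iii): I must ensure that substituting $y$ for $x$ never collides with the freshness requirements of $\mand L$ and $\mimp R$, which is handled purely by renaming. No deeper subtlety remains once Lemma~\ref{lem:eqe_subs} is in hand; in fact, removing $Eq_1$ and $Eq_2$ makes this substitution lemma strictly simpler than its $\lsbbi$ counterpart, since the awkward interaction between global substitutions and label equalities has been fully absorbed into the entailment relation $\vdash_E$.
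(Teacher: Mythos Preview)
Your proposal is correct and follows essentially the same approach as the paper: induct on height, reuse the case analysis of Lemma~\ref{subs} for the shared rules, treat the new $\top^* L$ like an additive rule since it no longer performs a global substitution, and discharge all the $\vdash_E$ side conditions ($id$, $\top^* R$, $\mand R$, $\mimp L$, $A$, $A_C$) via Lemma~\ref{lem:eqe_subs}. Your write-up is in fact more explicit than the paper's one-paragraph sketch, particularly in flagging the eigenvariable renaming needed for $\mand L$, $\mimp R$, $A$, and $A_C$, but the underlying argument is identical.
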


\begin{proof}
The proof is basically the same as the one for $\lsbbi$, since there are a lot of common rules. For the rules that are changed, the case for $\top^* L$ is similar to those cases for additive rules. The proof for the rest of changed rules are straightforward with the help of Lemma~\ref{lem:eqe_subs}.\qed
\end{proof}

\subsection{Soundness of $\iilsbbi$}
\label{app:sound_iilsbbi}

%% \subsection{Soundness}

\begin{theorem}
If there is a derivation $\Pi$ for a sequent $\Gcal||\Gamma\vdash \Delta$ in
$\iilsbbi$, then there is a derivation $\Pi'$ for the sequent
$\Gcal;\Gamma\vdash \Delta$ in
$\ilsbbi$.
\end{theorem}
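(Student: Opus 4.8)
The plan is to proceed by induction on the height of the $\iilsbbi$ derivation $\Pi$, with a case analysis on the lowest rule. The only rules that differ between $\iilsbbi$ and $\ilsbbi$ are $id$, $\top^* R$, $\mand R^\sharp$ and $\mimp L^\natural$ (Figure~\ref{fig:IILS_BBI}); every other rule is literally an $\ilsbbi$ rule acting on the merged antecedent $\Gcal;\Gamma$, so in those cases I simply apply the induction hypothesis to the premises and re-apply the same rule. Throughout I use the fact that each $\ilsbbi$ structural rule ($E$, $U$, $A$, $A_C$) leaves the formula part of a sequent untouched and only enlarges the set of relational atoms; hence any sequence $\sigma$ of such rules witnessing a $\vdash_R$ fact can be replayed, read bottom-up, on a full $\ilsbbi$ sequent, carrying its relational atoms from $\Gcal$ up to $\Scal(\Gcal,\sigma)$.

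For $id$ and $\top^* R$ the translation is immediate. The side condition $\Gcal\vdash_R(w_1=w_2)$ (resp. $\Gcal\vdash_R(w=\epsilon)$) unfolds, by the definition of $\vdash_R$, into a sequence $\sigma$ of $E,U,A,A_C$ applications with $\Scal(\Gcal,\sigma)\vdash_E(w_1=w_2)$ (resp. $\Scal(\Gcal,\sigma)\vdash_E(w=\epsilon)$). Starting from the target sequent $\Gcal;\Gamma;w_1:P\vdash w_2:P;\Delta$, I replay $\sigma$ upward to reach $\Scal(\Gcal,\sigma);\Gamma;w_1:P\vdash w_2:P;\Delta$ (the side conditions on $A,A_C$ depend only on relational atoms and so still hold), and then close the branch with the $\ilsbbi$ rule $id$, whose side condition is exactly the surviving $\vdash_E$-fact. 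The case $\top^* R$ is identical.

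The substantive cases are $\mand R^\sharp$ and $\mimp L^\natural$; I describe $\mand R^\sharp$, the other being symmetric. Here the two logical premises carry relational part $\Scal(\Gcal,\sigma)$ and succedent formulas $x:A$ and $y:B$, while the side condition is $\Gcal\vdash_R(x,y\simp w)$. Unfolding $\vdash_R$ gives the same $\sigma$, an atom $(x',y'\simp w')\in\Scal(\Gcal,\sigma)$, and the $\vdash_E$-facts $\Scal(\Gcal,\sigma)\vdash_E(x=x')$, $(y=y')$ and $(w=w')$. Applying the induction hypothesis yields $\ilsbbi$ derivations of $\Scal(\Gcal,\sigma);\Gamma\vdash x:A;w:A\mand B;\Delta$ and $\Scal(\Gcal,\sigma);\Gamma\vdash y:B;w:A\mand B;\Delta$. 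I then combine them with the $\ilsbbi$ rule $\mand R$ using the atom $(x',y'\simp w')$: its built-in $\vdash_E$ flexibility absorbs the principal-formula discrepancy $w=w'$, and I finally interpose the structural rules $\sigma$ between the goal $\Gcal;\Gamma\vdash w:A\mand B;\Delta$ and the sequent $\Scal(\Gcal,\sigma);\Gamma\vdash w:A\mand B;\Delta$ produced by $\mand R$.

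The main obstacle is precisely the alignment of the premise labels: the $\ilsbbi$ rule $\mand R$ forces the succedent labels of its premises to coincide syntactically with the first two components $x',y'$ of the chosen relational atom, whereas the induction hypothesis supplies derivations with labels $x,y$ that are equal to $x',y'$ only modulo $\vdash_E$. To bridge this, I first apply the substitutions witnessing $\Scal(\Gcal,\sigma)\vdash_E(x=x')$ and $(y=y')$ to the premise derivations via the $\ilsbbi$ substitution lemma (Lemma~\ref{lem:subs_ilsbbi}), identifying $x$ with $x'$ and $y$ with $y'$ so that the labels match the atom; I apply $\mand R$; and I then recover the original labels using the admissibility of $Eq_1$ and $Eq_2$ in $\ilsbbi$ (Lemmas~\ref{lem:eq1_adms} and~\ref{lem:eq2_adms}), whose identity atoms are exactly those supplied by the $\vdash_E$ derivations. (Alternatively, the whole theorem can be routed through $\lsbbi$, establishing $\iilsbbi\Rightarrow\lsbbi$ with the explicit $Eq_1,Eq_2$ and the substitution lemma of $\lsbbi$ and then invoking the already-proved equivalence $\lsbbi\equiv\ilsbbi$.) Checking that these substitution and $Eq$ manipulations leave the relational set $\Scal(\Gcal,\sigma)$ intact, so that the subsequent replay of $\sigma$ still applies, is the one bookkeeping point that needs care.
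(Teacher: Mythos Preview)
Your proposal is correct and follows essentially the same route as the paper's proof in Appendix~\ref{app:sound_iilsbbi}: induction on the height of $\Pi$, replaying the witnessing structural sequence $\sigma$ on the $\ilsbbi$ side, and for $\mand R$/$\mimp L$ bridging the label mismatch $x\leftrightarrow x'$, $y\leftrightarrow y'$ via the $\ilsbbi$ substitution lemma (Lemma~\ref{lem:subs_ilsbbi}) together with the $\vdash_E$ witnesses. In fact your account is slightly more explicit than the paper's: where the paper writes ``$\vdots\,\sigma'$'' to undo the global substitution, this step really is an appeal to the admissibility of $Eq_1$/$Eq_2$ in $\ilsbbi$ (Lemmas~\ref{lem:eq1_adms},~\ref{lem:eq2_adms}), exactly as you identify, and your flagged bookkeeping point about recovering $\Scal(\Gcal,\sigma)$ after the substitution round-trip is precisely the content of those lemmas.
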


\begin{proof}
The soundness proof for this system is rather straightforward. To prove this, we show that each rule in $\iilsbbi$ can be simulated in $\ilsbbi$. To do this, one just need to unfold the structural rule applications into the derivation. For instance, we can simulate the $id$ rule in $\iilsbbi$ by using the following rules in $\ilsbbi$:
\begin{center}
\AxiomC{$\mathcal{S}(\mathcal{G},\sigma)\vdash_E (w_1 = w_2)$}
\RightLabel{\tiny $id$}
\alwaysSingleLine
\UnaryInfC{$\mathcal{S}(\mathcal{G},\sigma);\Gamma;w_1:P \vdash w_2:P;\Delta$}
\alwaysNoLine
\UnaryInfC{$\vdots \sigma$}
\UnaryInfC{$\mathcal{G};\Gamma;w_1:P \vdash w_2:P;\Delta$}
\DisplayProof
\end{center}

The above works because the $id$ rule in $\iilsbbi$ requires
$\mathcal{G}\vdash_R (w_1 = w_2)$, which by definition ensures that
$\mathcal{S}(\mathcal{G},\sigma)\vdash_E (w_1 = w_2)$ holds. The case
for $\top^* R$ works similarly. One thing to notice is that structural
rules only add relational atoms into the current set, so except for
$\mathcal{G}$ is becoming a bigger set, all the other structures in
the sequent remain the same after the sequence $\sigma$ of
applications. Let us examine the simulation of $\mand R$ in $\ilsbbi$. 

\begin{center}
\footnotesize
\AxiomC{$\mathcal{S}(\mathcal{G},\sigma);\Gamma \vdash x':A;w:A\mand B;\Delta$}
\AxiomC{$\mathcal{S}(\mathcal{G},\sigma);\Gamma \vdash y':B;w:A\mand B;\Delta$}
\RightLabel{\tiny $\mand R$}
\BinaryInfC{$\mathcal{S}(\mathcal{G},\sigma);\Gamma \vdash w:A\mand B;\Delta$}
\alwaysNoLine
\UnaryInfC{$\vdots \sigma$}
\UnaryInfC{$\mathcal{G};\Gamma \vdash w:A\mand B;\Delta$}
\DisplayProof
\end{center}

The condition of the $\mand R$ rule is $\mathcal{S}(\mathcal{G},\sigma) \vdash_E (w = w')$. Since the $\iilsbbi$ rule requires $\mathcal{G} \vdash_R (x,y\simp w)$, which by definition ensures that there is a solution $(\mathcal{G},\sigma)$ such that $(x',y'\simp w')\in \mathcal{S}(\mathcal{G},\sigma)$, and the following holds:
\begin{center}
$\mathcal{S}(\mathcal{G},\sigma) \vdash_E (x = x')$\\
$\mathcal{S}(\mathcal{G},\sigma) \vdash_E (y = y')$\\
$\mathcal{S}(\mathcal{G},\sigma) \vdash_E (w = w')$\\
\end{center}

The last relation entailment is enough to guarantee that the $\mand R$
rule is applicable. To restore each branch, we need the
Lemma~\ref{lem:subs_ilsbbi} (Substitution lemma for $\ilsbbi$). Here
we use double line to indicate the premise and the conclusion are
equivalent. Let us look at the left branch. By the first relation
entailment, there is a sequence $\sigma'$ of $Eq_1,Eq_2$ applications so that $x\theta = x'\theta$. Therefore we can construct a proof for the left branch as follows.
\begin{center}
\AxiomC{$\mathcal{S}(\mathcal{G},\sigma);\Gamma \vdash x:A;w:A\mand B;\Delta$}
\dashedLine
\RightLabel{\tiny Lemma~\ref{lem:subs_ilsbbi}}
\UnaryInfC{$\mathcal{S}(\mathcal{G},\sigma)\theta;\Gamma\theta \vdash x\theta:A;w\theta:A\mand B;\Delta\theta$}
\doubleLine
\UnaryInfC{$\mathcal{S}(\mathcal{G},\sigma)\theta;\Gamma\theta \vdash x'\theta:A;w\theta:A\mand B;\Delta\theta$}
\alwaysNoLine
\UnaryInfC{$\vdots \sigma'$}
\UnaryInfC{$\mathcal{S}(\mathcal{G},\sigma);\Gamma \vdash x':A;w:A\mand B;\Delta$}
\DisplayProof
\end{center}

The case for $\mimp L$ is analogous. The rest rules are the same as in
$\ilsbbi$, thus we conclude that the rules in $\iilsbbi$ are
sound.\qed
\end{proof}

\subsection{Completeness of $\iilsbbi$}
\label{app:comp_iilsbbi}

%% \subsection{Completeness}

The completeness proof runs the same as in $\ilsbbi$: if we add the structural rules $E$, $U$, $A$, $A_C$ in $\iilsbbi$, then it becomes a superset of $\ilsbbi$. Then we prove that these rules are admissible in $\iilsbbi$ by showing they can permute through $\mand R$, $\mimp L$, $id$, and $\top^* R$.

First of all, let us show that when we add $E$, $U$, $A$, $A_C$ (from $\ilsbbi$) to $\iilsbbi$, its rules can simulate those ones in $\ilsbbi$. As most of the rules are identical, the key part is the show the relation entailment is as powerful as the equality entailment. This is ``built-in'' the definition, so there is no surprise.

\begin{lemma}
If $\mathcal{G}\vdash_E (w_1 = w_2)$, then $\mathcal{G}\vdash_R (w_1 = w_2)$.
\end{lemma}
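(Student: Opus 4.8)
The plan is to unfold the definition of $\vdash_R$ and observe that it specialises to $\vdash_E$ when the structural prefix is empty. By the definition of the relation entailment, $\mathcal{G}\vdash_R (w_1 = w_2)$ holds precisely when there exists a sequence $\sigma$ of $E$, $U$, $A$, $A_C$ applications such that $\Scal(\mathcal{G},\sigma)$ is defined and $\Scal(\mathcal{G},\sigma)\vdash_E (w_1 = w_2)$. Since the empty sequence $[~]$ is a (degenerate) sequence of such applications, it is available as a witness.

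First I would take $\sigma = [~]$. By the first clause in the inductive definition of $\Scal$, we have $\Scal(\mathcal{G},[~]) = \mathcal{G}$, so in particular $\Scal(\mathcal{G},[~])$ is defined. It then remains to check the required $\vdash_E$ entailment on this set. But $\Scal(\mathcal{G},[~]) = \mathcal{G}$, and by hypothesis $\mathcal{G}\vdash_E (w_1 = w_2)$ already holds, so $\Scal(\mathcal{G},[~])\vdash_E (w_1 = w_2)$ holds immediately. Hence the witness $\sigma = [~]$ satisfies every condition required by the definition of $\vdash_R$, and $\mathcal{G}\vdash_R (w_1 = w_2)$ follows.

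There is no genuine obstacle here: the statement is essentially a definitional observation, reflecting the fact that $\vdash_R$ was designed to subsume $\vdash_E$ by allowing an (optional) prefix of $E$, $U$, $A$, $A_C$ applications before appealing to equality reasoning. This lemma is exactly the special case in which that prefix is taken to be empty, and it formalises the earlier remark that ``the entailment $\vdash_R$ is stronger than $\vdash_E$.'' The only point worth stating carefully is that the empty sequence counts as an admissible sequence of structural rule applications and that $\Scal$ leaves $\mathcal{G}$ unchanged on it, both of which are immediate from the defining equation of $\Scal$.
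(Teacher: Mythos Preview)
Your proof is correct and matches the paper's argument exactly: both take the empty sequence $\sigma = [~]$ of structural rule applications, observe that $\Scal(\mathcal{G},[~]) = \mathcal{G}$, and conclude directly from the definition of $\vdash_R$. Your write-up is simply more detailed than the paper's two-line version.
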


\begin{proof}
Let $\sigma$ be an empty list of rule applications, then $\mathcal{S}(\mathcal{G},\emptyset) = \mathcal{G}$. Therefore by definition $\mathcal{G}\vdash_R (w_1 = w_2)$.\qed  
\end{proof}

If we change $\vdash_R$ to $\vdash_E$ in $\iilsbbi$, every rule is the same as the one in $\ilsbbi$. Therefore $\iilsbbi + E +U + A + A_C$ is at least as powerful as $\ilsbbi$.

\begin{lemma}
\label{lem:str_adms}
The rules $E$, $U$, $A$, and $A_C$ are admissible in $\iilsbbi$. 
\end{lemma}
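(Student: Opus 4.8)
The plan is to prove Lemma~\ref{lem:str_adms} exactly as admissibility of $Eq_1$ and $Eq_2$ was proved for $\ilsbbi$ (Lemma~\ref{lem:eq1_adms} and Lemma~\ref{lem:eq2_adms}): given a cut-free $\iilsbbi$ derivation $\Pi$ of the premise of an application of one of $E$, $U$, $A$, $A_C$, I would show by induction on $ht(\Pi)$ that the principal structural rule can be permuted upwards through every rule of $\iilsbbi$ until it is absorbed at the leaves. The crucial simplification compared with $Eq_1,Eq_2$ is that the substitution $\theta$ associated with each of $E,U,A,A_C$ is the identity, so such a rule merely \emph{adds} relational atoms to the left-hand side. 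Writing $r(\Gcal_p)$ for the atoms added by an instance $r$ with principal atoms $\Gcal_p \subseteq \Gcal$, the whole argument rests on the single observation that $\Scal(\Gcal,[r(\Gcal_p);\sigma]) = \Scal(\Gcal\cup r(\Gcal_p),\sigma)$, i.e.\ that $\vdash_R$ is closed under prepending a structural-rule application --- which is exactly what the relation entailment $\vdash_R$ was designed to encode.

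For the negative rules ($\land L$, $\land R$, $\limp L$, $\limp R$, $\top^* L$, $\mand L$, $\mimp R$) the relational atoms are merely carried along, so the permutations are routine and mirror Lemma~\ref{lem:str_permute}: I would apply the induction hypothesis to the premise(s) --- whose height is smaller and whose left-hand side still contains $\Gcal_p$, since none of these rules deletes relational atoms --- and then reapply the rule. The eigenvariables of $\mand L$/$\mimp R$ stay fresh because the context only shrinks when $r(\Gcal_p)$ is removed, and they are already distinct from the fresh label $w$ of $A$/$A_C$ in the original derivation. For the zero-premise rules $\bot L$, $\top R$ the contracted sequent is immediately an axiom, while for the positive rules $id$, $\top^* R$, $\mand R$, $\mimp L$ the structural rule is absorbed into the $\vdash_R$ side condition: if the side condition reads $\Gcal\cup r(\Gcal_p)\vdash_R C$ via a witnessing sequence $\sigma$, then $\Gcal\vdash_R C$ holds via $[r(\Gcal_p);\sigma]$ by the displayed identity, so the same rule applies to the contracted sequent. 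For $\mand R$ and $\mimp L$ this is even cleaner: since $\Scal(\Gcal,[r(\Gcal_p);\sigma])$ and $\Scal(\Gcal\cup r(\Gcal_p),\sigma)$ denote the very same set of relational atoms, the premise sub-derivations of $\Pi$ are reused verbatim and only the witnessing sequence changes.

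The main obstacle I anticipate is purely bookkeeping rather than conceptual: keeping the fresh labels introduced by $A$ and $A_C$ disjoint from the eigenvariables of the logical rules as the structural rule travels up the derivation, and checking that under the absorption the accumulated $\Scal$-sets occurring in the premises of $\mand R$ and $\mimp L$ are literally unchanged. Both points are discharged by the freshness side conditions of the rules together with the $\Scal$ identity above, so no genuinely new idea is needed beyond what already appears in the proofs for $Eq_1$ and $Eq_2$. Combining this admissibility result with the preceding simulation --- that $\iilsbbi$ extended with $E$, $U$, $A$, $A_C$ derives everything $\ilsbbi$ does --- then yields completeness of $\iilsbbi$ with respect to $\ilsbbi$.
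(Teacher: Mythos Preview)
Your proposal is correct and follows essentially the same approach as the paper: permute the structural rule upwards through the negative rules (where the paper simply invokes Lemma~\ref{lem:str_permute}) and absorb it into the $\vdash_R$ side condition when a positive rule ($id$, $\top^* R$, $\mand R$, $\mimp L$) is reached, using precisely your identity $\Scal(\Gcal,[r(\Gcal_p);\sigma]) = \Scal(\Gcal\cup r(\Gcal_p),\sigma)$. The paper's proof only spells out the $id$ cases for $E$ and $A$ as examples, whereas you also make explicit the pleasant fact that for $\mand R$ and $\mimp L$ the premise sub-derivations are reused verbatim; this is a harmless elaboration of the same argument.
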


\begin{proof}
We show that the said rules can permute upwards through $id$,
$\top^* R$, $\mand R$ and $\mimp L$, the other cases are cover by
Lemma~\ref{lem:str_permute}. We only give some examples here, the others are similar. The heart of the argument is that the application of structural rules are hidden inside the relation entailment, so we do not have to apply them explicitly.

Permute $E$ through $id$, the suppose the original derivation runs as follows.
\begin{center}
\AxiomC{$\mathcal{G};(y,x\simp z);(x,y\simp z) \vdash_R (w_1 = w_2)$}
\RightLabel{\tiny $id$}
\UnaryInfC{$\mathcal{G};(y,x\simp z);(x,y\simp z)||\Gamma;w_1:P \vdash w_2:P;\Delta$}
\RightLabel{\tiny $E$}
\UnaryInfC{$\mathcal{G};(x,y\simp z)||\Gamma;w_1:P \vdash w_2:P;\Delta$}
\DisplayProof
\end{center}

The permuted derivation is:
\begin{center}
\AxiomC{$\mathcal{G};(x,y\simp z) \vdash_R (w_1 = w_2)$}
\RightLabel{\tiny $id$}
\UnaryInfC{$\mathcal{G};(x,y\simp z)||\Gamma;w_1:P \vdash w_2:P;\Delta$}
\DisplayProof
\end{center}

Assume $\mathcal{G};(y,x\simp z);(x,y\simp z) \vdash_R (w_1 = w_2)$ is
derived by applying a sequence $\sigma$ of structural rules. Then
$\Scal((\mathcal{G};(x,y\simp z)),\sigma')$ can prove
$\mathcal{G};(x,y\simp z) \vdash_R (w_1 = w_2)$, where $\sigma'$ is $E(\{(x,y\simp z)\},\emptyset)$ followed by $\sigma$. That is, the application of $E$ is absorbed in $\vdash_R$. 

Permute $A$ through $id$, the argument is similar. The original derivation is:
\begin{center}
\footnotesize
\AxiomC{$\mathcal{G};(u,w\simp z);(y,v\simp w);(x,y\simp z);(u,v\simp x') \vdash_R (w_1 = w_2)$}
\RightLabel{\tiny $id$}
\UnaryInfC{$\mathcal{G};(u,w\simp z);(y,v\simp w);(x,y\simp z);(u,v\simp x')||\Gamma;w_1:P \vdash w_2:P;\Delta$}
\RightLabel{\tiny $A$}
\UnaryInfC{$\mathcal{G};(x,y\simp z);(u,v\simp x')||\Gamma;w_1:P \vdash w_2:P;\Delta$}
\DisplayProof
\end{center}

The condition on the rule $A$ is $\mathcal{G};(x,y\simp z);(u,v\simp x')\vdash_E (x = x')$. Then we can omit the application of $A$, since $\mathcal{G};(u,w\simp
z);(y,v\simp w);(x,y\simp z);(u,v\simp x') \vdash_R (w_1 = w_2)$
implies $\mathcal{G};(x,y\simp z);(u,v\simp x') \vdash_R (w_1 = w_2)$,
one just need to add the $A$ application ahead to the sequence of structural
rules that derives the former relation entailment to get a new
sequence of rules to derive the latter one.\qed
\end{proof}

\subsection{Soundness of $\fvlsbbi$}
\label{app:sound_fvlsbbi}

Proof of Theorem~\ref{thm:sound_fvlsbbi}.

\begin{proof}
By induction on the height $n$ of derivation $\Pi$.

\begin{enumerate}

\item Base case: $n = 1$. In this case, we can only use a zero-premise rule
to prove the sequent. Since the sequent is ground, there are no free
variables. Thus the constraint generated by the rule application 
is a simple constraint, of the form
$
\Gcal \vdash^?_R (a = b)
$
or 
$\Gcal \vdash^?_R (a = \epsilon).$
A solution of this constraint is simply a derivation $\sigma$ of
$\Gcal \vdash_R (a = b)$ (resp. $\Gcal \vdash (a = \epsilon).$
In either case, this translates straightforwardly into a derivation
in $\iilsbbi$ with the same rule.

\item 
Inductive case: $n > 1$. This can be done by a case analysis of the
last rule application in $\Pi$. We demonstrate the case for $\mand R$,
where a constraint is generated. The case for $\mimp L$ is analogous,
and the other cases are easy since we can use the induction hypothesis
directly. Suppose $\Pi$ runs as follows.

\begin{center}
\AxiomC{$\Pi_1$}
\alwaysNoLine
\UnaryInfC{$\mathcal{G}||\Gamma \vdash \fvx:A; w:A\mand B;\Delta$}
\AxiomC{$\Pi_2$}
\UnaryInfC{$\mathcal{G}||\Gamma \vdash \fvy:B; w:A\mand B;\Delta$}
\alwaysSingleLine
\RightLabel{\tiny $\mand R$}
\BinaryInfC{$\mathcal{G}||\Gamma \vdash w:A\mand B;\Delta$}
\DisplayProof
\end{center}

Suppose $\Cbb(\Pi) = (\{\cfr_1,\dots,\cfr_k\}, \preceq)$, for some $k \geq 1.$
Suppose that the constraint generated by this rule application is $\Gcal \vdash^?_R (\fvx,\fvy\simp w)$
and it corresponds to $\cfr_i$ for some $i \in \{1,\dots,k\}$. 
By the assumption, there is a solution $(\theta,\{\sigma_1,\cdots,\sigma_k\})$ for the constraint
system $\Cbb = (\Ccal(\Pi),\preceq^{\Pi})$. Now $\cfr_i$ must be a simple
constraint in $\Cbb$, as the end sequent is ground. Let
$(\theta_i,\sigma_i)$ be the solution to $\cfr_i$, where $\theta_i$ is a
restriction to $\theta$ containing $\fvx$ and $\fvy$, and
$\sigma_i\in \{\sigma_1,\cdots,\sigma_k\}$. By definition of the
solution to a simple constraint, $\sigma_i$ is a derivation of
$\Gcal \vdash_R (\fvx\theta_i,\fvy\theta_i\simp w)$. Therefore in
$\iilsbbi$, to derive the end sequent, we apply $\mand R$ backwards:

\begin{center}
\footnotesize
\AxiomC{$\Scal(\mathcal{G},\sigma_i)||\Gamma \vdash \fvx\theta_i:A; w:A\mand B;\Delta$}
\AxiomC{$\Scal(\mathcal{G},\sigma_i)||\Gamma \vdash \fvy\theta_i:B; w:A\mand B;\Delta$}
\RightLabel{\tiny $\mand R$}
\BinaryInfC{$\mathcal{G}||\Gamma \vdash w:A\mand B;\Delta$}
\DisplayProof
\end{center}

The condition on this rule is $\Gcal \vdash_R (\fvx\theta_i,\fvy\theta_i\simp w)$. Now we construct the derivation for both branches in the following
way. Firstly we substitute $\fvx$ and $\fvy$ with $\fvx\theta_i$ and
$\fvy\theta_i$ respectively in $\Pi_1$ and $\Pi_2$, making the end
sequents in the two derivations ground. Let us refer to the modified
derivations as $\Pi_1'$ and $\Pi_2'$ respectively. Then for each
sequent in $\Pi_1'$ and $\Pi_2'$ and each constraint in
$\Ccal(\Pi_1')\cup\Ccal(\Pi_2')$, we change the set of relational
atoms to be the union of $\Scal(\Gcal,\sigma_i)$ and the original
one. This is harmless because we can use weakening to obtain the same
sequents as in $\Pi_1'$ and $\Pi_2'$, and weakening is
height-preserving admissible. Let the resultant derivations be
$\Pi_1''$ and $\Pi_2''$ respectively.  Now the end sequents of
$\Pi_1''$ and $\Pi_2''$ are respectively just the same as the two
branches we created in the $\iilsbbi$ derivation. Moreover, each
constraint in $\Ccal(\Pi_1'')\cup\Ccal(\Pi_2'')$ is in the restricted
constraint system $\Cbb' =
(\Ccal(\Pi),\preceq^{\Pi})\uparrow(\cfr_i,\theta_i,\sigma_i)$, which has
a solution
$(\theta\setminus\theta_i,\{\sigma_1,\cdots,\sigma_k\}\setminus \sigma_i)$,
and obeys the partial order $\preceq'$. Further, as $\Pi''_1$
(resp. $\Pi''_2$) uses the same rule applications as in $\Pi_1$
(resp. $\Pi_2$), the order of constraints is preserved. That is, in
the constraints system $\Cbb_1 = (\Ccal(\Pi''_1),\preceq^{\Pi''_1})$
(resp. $\Cbb_2 = (\Ccal(\Pi''_2),\preceq^{\Pi''_2})$), if
$\cfr \preceq^{\Pi''_1} \cfr'$ (resp. $\cfr \preceq^{\Pi''_2} \cfr'$) then
$\cfr \preceq' \cfr'$ in $\Cbb'$. Therefore we can construct the solution
$(\theta_1'',\Sigma_1)$ to $\Cbb_1$ (and analogously to $\Cbb_2$) as
follows.
\begin{align*}
\theta_1'' & = (\theta\setminus\theta_i)\uparrow fv(\Ccal(\Pi''_1))\\
\Sigma_1 & = \{\sigma \mid \cfr \in \Ccal(\Pi''_1), \sigma \in \{\sigma_1,\cdots,\sigma_k\}\setminus\sigma_i, 
\ and \ \sigma = dev(\cfr) \}
\end{align*}

By the induction hypothesis, we can obtain a $\iilsbbi$ derivation for each branch.\qed

\end{enumerate}
\end{proof}

\subsection{Completeness of $\fvlsbbi$}
\label{app:comp_fvlsbbi}

Proof of Theorem~\ref{thm:comp_fvlsbbi}.

\begin{proof}
We describe the construction from 
a $\iilsbbi$ derivation $\Pi$ to a $\fvlsbbi$ derivation $\Pi'$. 
We need to prove a stronger invariant: for each sequent $\Gcal_E ; \Gcal_S || \Gamma \vdash \Delta$ in $\Pi$, if there exists a triple
consisting of: 
\begin{itemize}
\item a symbolic sequent 
$\Gcal_E' || \Gamma' \vdash \Delta'$, 
\item a well-formed constraint system $\Cbb = (\Ccal,\preceq)$, 
\item and a solution $S = (\theta, \{\overset{\rightharpoonup}{\sigma}\})$ to $\Cbb$
\end{itemize} 
such that
\begin{itemize}
\item $X$ is a thread of $\Cbb$ consisted of $fv(\Gcal_E' || \Gamma' \vdash \Delta')$, 
\item $\Gcal_E'\theta = \Gcal_E$, $\Gamma'\theta =\Gamma$, $\Delta'\theta = \Delta$ and 
\item $\Gcal_E\cup\Gcal_S = \Scal^*(\Cbb, S, X)$, 
\end{itemize}
%where $X$ is the linearly ordered list of free variables in the symbolic sequent 
%(that is, $\Gcal_S$ can be recovered from the solution to the constraint system), 
then there is a symbolic derivation $\Psi$ of $\Gcal_E' || \Gamma' \vdash \Delta'$
such that $\Cbb \circ^X \Cbb(\Psi)$ is well-formed and solvable. 

First of all, by Lemma~\ref{lem:cons_merge}, since the end sequent in $\Psi$ only contains the free variables occur in $X$, the composition $\Cbb \circ^X \Cbb(\Psi)$ must be well-formed. Thus we only need to show that there is a solution to this constraint system. We prove this by case analysis on the last rule in $\Pi$, and show that in each
case, for each premise of the rule, one can find a triple satisfying
the above property, such that the symbolic sequent(s) in the premise(s),
together with the one in the conclusion form a valid inference in $\fvlsbbi.$
We illustrate it here with a case when $\Pi$ ends with $\mand R$:

Suppose $\Pi$ ends with $\mand R$, where the conclusion, the premises, and the relational entailment are
respectively:
\begin{itemize}
\item $\mathcal{G}_E;\mathcal{G}_S||\Gamma \vdash w:A\mand B;\Delta$
\item $\mathcal{S}((\mathcal{G}_E;\mathcal{G}_S),\sigma)||\Gamma \vdash w_1:A;w:A\mand B;\Delta$
\item $\mathcal{S}((\mathcal{G}_E;\mathcal{G}_S),\sigma)||\Gamma \vdash w_2:B;w:A\mand B;\Delta$
\item $\mathcal{G}_E;\mathcal{G}_S \vdash_R (w_1,w_2\simp w)$
\end{itemize}
and suppose that the relation in the last item is derived via $\sigma.$
Suppose that we can find a triple consisting of
\begin{itemize}
\item a symbolic sequent $\Gcal_E' || \Gamma' \vdash \fvw : A \mand B ; \Delta'$
\item a well-formed constraint system $\Cbb = (\Ccal,\preceq)$, and
\item a solution $S = (\theta, \{\sigma_1,\dots,\sigma_n\})$ to $\Cbb$
\end{itemize}
satisfying the following:
\begin{itemize}
\item $X$ is a thread of $\Cbb$ consisted of $fv(\Gcal_E' || \Gamma' \vdash \fvw: A\mand B;\Delta')$, 
\item $\Gcal_E'\theta = \Gcal_E$, $\Gamma'\theta =\Gamma$, $\Delta'\theta = \Delta$, $w = \fvw\theta$ and 
\item $\Gcal_E\cup\Gcal_S = \Scal^*(\Cbb, S, X)$. 
\end{itemize}
We need to show that we can find such triples for the premises, and more importantly,
the symbolic sequents in the premises are related to the symbolic sequent in the conclusion 
via $\mand R$.
In this case, the symbolic sequents are simply the following:
\begin{enumerate}
\item $\Gcal_E' || \Gamma' \vdash \fvx : A ; \fvw : A \mand B ; \Delta'$, for the left premise, 
\item $\Gcal_E' || \Gamma' \vdash \fvy : A ; \fvw : A \mand B ; \Delta'$, for the right premise. 
\end{enumerate}
The constraint systems are: 
$\Cbb' = (\Ccal \cup \{\cfr_j\},\preceq')$ for both premises, where $\cfr_j = \Gcal_E' \vdash_R^? (\fvx, \fvy \simp \fvw)$ and $\preceq'$ is $\preceq$ extended with $\cfr(end(X)) \preceq' \cfr_j$. 
The solutions, for both premises, are the tuple $S' = (\theta', \Sigma)$
where $\theta' = \theta \cup \{ \fvx \mapsto w_1, ~ \fvy \mapsto
w_2\}$ and $\Sigma = \{\sigma_1,\dots,\sigma_n, \sigma\}$. It is guaranteed that $\theta'$ is enough to make both premises grounded, as $\fvx$ and $\fvy$ are the only two new free variables. The threads of free variables $X_1$ and $X_2$ for the two premises are naturally $X@[\fvx]$ and $X@[\fvy]$ respectively. By Proposition~\ref{prop:scalstar}, in each premise, the following holds:
\begin{center}
$\Gcal_E\cup\Gcal_S' =
\Scal(\Gcal_E\cup\Gcal_S,\sigma) = 
\Scal(\Gcal_E\cup\Gcal_S\cup\Gcal_E'\theta,\sigma) = \Scal^*(\Cbb',S',X_1) = \Scal^*(\Cbb',S',X_2)$. 
\end{center}
So by the induction hypothesis
we have a symbolic derivation $\Pi_1'$ for sequent (1) and a symbolic derivation $\Pi_2'$ for
sequent (2), such that $\Cbb_{\beta1} = \Cbb' \circ^{X_1} \Cbb(\Pi_1')$ and $\Cbb_{\beta2} = \Cbb' \circ^{X_2} \Cbb(\Pi_2')$ are both
solvable. Suppose the solutions are respectively $(\theta'\cup\theta_1,\Sigma\cup\Sigma_1)$ and $(\theta'\cup\theta_2,\Sigma\cup\Sigma_2)$. Then construct $\Pi'$ by applying the $\mand R$ rule to $\Pi_1'$ and $\Pi_2'$.
Note that the variables created in $\Pi_1'$ are $\Pi_2'$ are distinct so their constraints
are independent of each other. So we can construct $\Cbb_p = \Cbb(\Pi_1')\circ^{\emptyset}\Cbb(\Pi_2') = (\Ccal_p,\preceq_p)$, along an empty thread $\emptyset$. Now $\Cbb(\Pi')$ is obtained as $(\Ccal_p\cup\{\cfr_j\},\preceq^{\Pi'})$, where $\preceq^{\Pi'}$ is derived as follows.
\begin{itemize}
\item If $\cfr \preceq_p \cfr'$ in $\Cbb_p$, then $\cfr \preceq^{\Pi'} \cfr'$ in $\Cbb(\Pi')$
\item For any minimum constraint $\cfr_m$ in $\Cbb_p$, $\cfr_j\preceq^{\Pi'} \cfr_m$ in $\Cbb(\Pi')$
\end{itemize}

The solution to $\Cbb_{\alpha} = \Cbb\circ^X\Cbb(\Pi')$ is constructed as the combination of the solutions to $\Cbb_{\beta1}$ and $\Cbb_{\beta2}$: $(\theta'\cup\theta_1\cup\theta_2,\Sigma\cup\Sigma_1\cup\Sigma_2)$. This construction of the solution is indeed valid, because the symbolic derivation that gives $\Cbb_{\alpha}$ also yields exactly $\Cbb_{\beta1}$ and $\Cbb_{\beta2}$ (respectively on its two branches created by the $\mand R$ rule).

%% By definition, $\Ccal \cup \Ccal(\Pi') = \Ccal \cup \Ccal(\Pi_1') \cup \Ccal(\Pi_2') \cup 
%% \{\Gcal_E' \vdash_R^? (\fvx, \fvy \simp \fvw) \} = \Ccal' \cup \Ccal(\Pi_1') \cup \Ccal(\Pi_2')$.
%% Note that the variables created in $\Pi_1'$ are $\Pi_2'$ are distinct so their constraints
%% are independent of each other. Therefore simply combining the solutions of 
%% $\Ccal' \cup \Ccal(\Pi_1')$ and $\Ccal' \cup \Ccal(\Pi_2')$ gives us a solution to 
%% $\Ccal \cup \Ccal(\Pi').$
\qed
\end{proof}

\subsection{The Proof of the Heuristic Method}
\label{app:heuristics}

In the following proofs we use the tree representation of a set of relational atoms. Given a labelled binary tree $tr$ as defined in Section~\ref{sec:heur-proof-search}, we say another labelled binary tree $tr'$ is a permutation of $tr$ if they have the same root and same multiset of leaves. A permutation on $tr$ is generally done by applying the rules $E,A$ on $\Rel(tr)$. Figure~\ref{fig:tree_permute} gives some examples on tree permutations. In Figure~\ref{fig:tree_permute}, (b) is permuted from (a) by using $E$ on $(d,e\simp b)$, whereas (c) is permuted from (a) by using $A$ on the two relational atoms in the original tree.
\begin{figure}[ht!]
\begin{center}
\footnotesize
\begin{tabular}{c@{\qquad}c@{\qquad}c@{\qquad}}
\xymatrix{
a \ar@{-}[d] \ar@{-}[dr]\\ 
b \ar@{-}[d] \ar@{-}[dr] & c\\
d & e
}
&
\xymatrix{
a \ar@{-}[d] \ar@{-}[dr]\\ 
b \ar@{-}[d] \ar@{-}[dr] & c\\
e & d
}
&
\xymatrix{
a \ar@{-}[d] \ar@{-}[dr]\\ 
d & f \ar@{-}[d] \ar@{-}[dr]\\
& c & e
}
\\
(a) & (b) & (c)
\end{tabular}
\end{center}
\caption{Examples of tree permutations.}
\label{fig:tree_permute}  
\end{figure}

\begin{lemma}
\label{lem:btree_permute}
Let $tr$ be a labelled binary tree with a root labelled with $r$ and a
multiset of labels $L$ for the leaves. If there is a labelled binary
tree $tr'$ with the same root and leaves labels respectively, then
there is a variant $tr''$ of $tr'$ and a sequence $\sigma$ of $E,A$ rule applications such that
$\Rel(tr'')\subseteq \Scal(\Rel(tr),\sigma).$ 
%and $tr''$ is a variant of 
%$tr'$.
\end{lemma}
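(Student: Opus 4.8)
The plan is to prove Lemma~\ref{lem:btree_permute} by induction on the \emph{width} of $tr$ (the number of leaves), exploiting the recursive structure of $\Rel$ and the fact that the structural rules $E$ and $A$ let us rearrange how the root $r$ is partitioned among its leaves. The statement should be read as: any two labelled binary trees with the same root and the same multiset of leaves describe ``the same splitting'' of $r$, and the associativity/commutativity rules suffice to transform one into (a variant of) the other. Since internal node labels are immaterial, we only ever need to match $tr'$ up to a renaming of internal nodes, which is exactly what passing to a variant $tr''$ permits.

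First I would handle the base case, width $2$: here both $tr$ and $tr'$ are single relational atoms $(a,b\simp r)$ and $(a',b'\simp r)$ with $\{a,b\}=\{a',b'\}$ as multisets. If $(a,b)=(a',b')$ take $\sigma=[\,]$ and $tr''=tr'$; if they are swapped, apply $E$ once to get $(b,a\simp r)\in\Scal(\Rel(tr),[E])$ and set $tr''=tr'$. For the inductive step, suppose $tr$ has root $r$ with children subtrees $tr_a,tr_b$ (so $\Rel(tr)=\Rel(tr_a)\cup\Rel(tr_b)\cup\{(a,b\simp r)\}$), and $tr'$ has root $r$ with children subtrees of leaf-multisets $L_1',L_2'$. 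The key move is to reshape the top split of $tr$ so that one child's leaf-multiset agrees with $L_1'$. Concretely, I would pick any leaf $\ell$ that must move across the top partition and push it from one side to the other by a single application of $A$ (which, read bottom-up, introduces the fresh intermediate node $w$ witnessing re-association) possibly composed with $E$ to fix orientation; iterating this finitely many times realigns the leaf-multisets of the two top children of $tr$ with $L_1'$ and $L_2'$. Once the top-level split matches, the two subtrees have strictly smaller width and the same respective root/leaf data as the corresponding subtrees of $tr'$, so the induction hypothesis yields sequences $\sigma_1,\sigma_2$ and subtree variants; concatenating these with the reshaping prefix gives the required $\sigma$, and assembling the subtree variants under the (re-associated) root gives $tr''$, a variant of $tr'$, with $\Rel(tr'')\subseteq\Scal(\Rel(tr),\sigma)$.

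The soundness of the concatenation relies on the monotone behaviour of $\Scal$: by the definition of $\Scal(\Gcal,[r(\Gcal',\theta);\sigma'])$, each structural-rule step only \emph{adds} relational atoms (the $\theta$ here is identity since $E,A$ carry no substitution), so $\Scal(\Rel(tr),\sigma)$ accumulates every atom produced along the way, and in particular contains all the atoms of the reshaped subtrees. This is why the conclusion is phrased as an inclusion $\Rel(tr'')\subseteq\Scal(\Rel(tr),\sigma)$ rather than equality: the associativity rule leaves behind auxiliary atoms (e.g.\ the old partition witnesses) that need not appear in $tr''$.

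The main obstacle I anticipate is making the ``push one leaf across the top split'' step precise and verifying that a single $A$ (plus possibly $E$) application genuinely achieves it while keeping the fresh intermediate labels globally fresh, as the side condition on $A$ requires. The rule $A$ re-associates $(x,y\simp z);(u,v\simp x)$ into a form exposing a new node $w$ combining $v$ and $y$; I must check that the leaf I wish to relocate sits in the correct position to be separated out by exactly this pattern, which may require first using $E$ to orient the relevant atoms and, in awkward cases, a short bounded sequence of $A$-applications rather than one. Managing these orientations and the freshness bookkeeping uniformly across all tree shapes—rather than just on the illustrative examples in Figure~\ref{fig:tree_permute}—is the delicate part; the rest is routine structural induction combined with the monotonicity of $\Scal$.
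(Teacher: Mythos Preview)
Your overall strategy---induction on width, matching the top split, then recursing into the two subtrees---is sound and matches the paper's high-level plan, but the inductive step is organised differently and your version has a point that needs tightening.

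The paper does not move leaves one at a time. Instead, given the target top split $L_3,L_4$ of $tr'$, it first applies the induction hypothesis \emph{inside} each subtree of $tr$: the subtree rooted at $w_1$ (with leaves $L_1$) is rearranged so its two children carry exactly $L_1\cap L_3$ and $L_1\setminus L_3$; similarly for $w_2$. This produces four grandchildren $w_5,w_6,w_7,w_8$ of $r$, and then a fixed, bounded sequence of $E$ and $A$ applications (displayed explicitly in the paper) regroups them so that $w_5,w_7$ sit under one child of $r$ and $w_6,w_8$ under the other, yielding the correct top split in one shot. A second use of the induction hypothesis on each of the two new subtrees then matches the internal shape of $tr'$. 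So the paper's recursion pattern is ``IH on subtrees to four-way split, fixed reassembly, IH on the two new subtrees'', with no iteration at the top level.

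Your ``move one leaf at a time'' route can be made to work, but as written it is incomplete. Pushing a leaf $\ell$ across the top partition with a single $A$ requires $\ell$ to be an immediate child of $a$ (or $b$); if $\ell$ is deeper you must first invoke the induction hypothesis on the subtree of smaller width to surface it. You gesture at this in your last paragraph but do not commit to it. Moreover, after one such move the tree still has width $n$, so you cannot apply the outer induction hypothesis to the whole tree; you need a secondary induction on the number of misplaced leaves, with each step consisting of (i) IH on a subtree to bring a leaf to depth two, (ii) one $A$ plus possibly $E$ to move it across. Spelling out this nested induction and the termination measure is what is missing. The paper's bulk approach sidesteps the secondary induction entirely, at the cost of invoking the IH up to four times in the inductive step.

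Your remarks on the monotonicity of $\Scal$ and on why the conclusion is an inclusion rather than an equality are correct and match the paper's implicit reasoning.
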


\begin{proof}
Prove by induction on the width of the tree $tr$. 
%% Since we require the
%% labels for leaves in $tr'$ gives the same multiset as in $tr$, there
%% must be same numbers of leaves in these two trees, so we only consider
%% the distinct permutations of $tr$. 
We show that any distinct permutation(i.e., they
are not variants of each other) of a tree can be achieved by using the rules $E$ and $A$.
Base case is when
there are only two leaves in $tr$. In this case, there is only one relational atom in $\Rel(tr)$, thus clearly there is only one distinct permutation of $tr$, which can be obtained by applying $E$ on $\Rel(tr)$.

The next case is when there are 3 leaves in the tree, meaning $\Rel(tr)$ contains two relational atoms. In this case, it can be easily 
checked that there are 12 distinct permutations of $tr$, all of which can be derived by using $E$ and $A$.

Inductive case, suppose the lemma holds for all trees with width less
than $n$, consider a tree $tr$ with width $n$. Suppose further that
the root label of $tr$ is $r$, it's two children are in the
relational atom $(w_1,w_2\simp r)$, and the multisets of leaves labels
for the subtrees of $w_1$ and $w_2$ are $L_1$, $L_2$ respectively. Let $tr'$ be a
permutation of $tr$ with the same root label and leaves labels, and in
$tr'$ the two children of the root label are in the relational atom
$(w_3,w_4\simp r)$. Suppose the multisets of leave labels for the
subtrees of $w_3,w_4$ are $L_3,L_4$ respectively. Apparently, since
$L_1\cup L_2 = L_3\cup L_4 = L$, every label in $L_3$ is either in
$L_1$ or in $L_2$. Let $L' = L_1\cap L_3$ and $L'' = L_2\cap L_3$,
then $L'\cup L'' = L_3$ and $(L_1\setminus L')\cup (L_2\setminus L'')
= L_4$. By
the induction hypothesis on the subtrees of $w_1$ and $w_2$, there
exist $w_5,w_6,w_7,w_8$ s.t. $(w_5,w_6\simp w_1),(w_7,w_8\simp w_2)$
hold, and the subtrees of $w_5,w_6,w_7,w_8$ give the multisets of
leaves $L', (L_1\setminus L'), L'',(L_2\setminus L'')$
respectively. Then we use the following derivation to permute the tree:
\begin{center}
\AxiomC{$(w'',w'''\simp r);(w_6,w_8\simp w''');(w_5,w_7\simp w'');\cdots$}
\RightLabel{\tiny $A$}
\UnaryInfC{$(w',w_6\simp r);(w'',w_8\simp w');(w_5,w_7\simp w'');\cdots$}
\RightLabel{\tiny $E \times 2$}
\UnaryInfC{$(w_6,w'\simp r);(w_8,w''\simp w');(w_5,w_7\simp w'');\cdots$}
\RightLabel{\tiny $A$}
\UnaryInfC{$(w_6,w'\simp r);(w_2,w_5\simp w');(w_8,w_7\simp w_2);\cdots$}
\RightLabel{\tiny $E$}
\UnaryInfC{$(w_6,w'\simp r);(w_2,w_5\simp w');(w_7,w_8\simp w_2);\cdots$}
\RightLabel{\tiny $A$}
\UnaryInfC{$(w_6,w_5\simp w_1);(w_7,w_8\simp w_2);(w_1,w_2\simp r);\cdots$}
\RightLabel{\tiny $E$}
\UnaryInfC{$(w_5,w_6\simp w_1);(w_7,w_8\simp w_2);(w_1,w_2\simp r);\cdots$}
\DisplayProof
\end{center}
Now the subtrees of $w''$ and $w'''$ has the same multisets of leaves
as $w_3$ and $w_4$ respectively. Again by the induction hypothesis on
the subtrees of $w''$ and $w'''$, we obtain a tree $tr''$ which is
a variant of $tr'$.\qed
\end{proof}

Proof of Lemma~\ref{lem:heuristics}.
\begin{proof}
The lemma restricts the labels of internal nodes to be free variables
that are created after all the labels on the left hand side. Additionally, each free variable is only allowed to occur once in a tree. Therefore
given a set $\Gcal$ of relational atoms as the left hand side of those
constraints, and any sequence $\sigma$ of structural rule
applications, the free variable labels for internal nodes can be
assigned to any labels occur in $\Scal(\Gcal,\sigma)$. By
Lemma~\ref{lem:btree_permute}, there exists a sequence $\sigma$ of
$E,A$ applications which converts
the tree on the left hand side to a tree which is a variant of the one
on the right hand side,
thus those constraints can be solved by assigning the free variables
in the internal nodes to the corresponding labels.\qed
\end{proof}

\subsection{Proof of Formulae in the Conclusion}
\label{app:formulae_conc}
In this section we show the proofs of the four formulae in the
conclusion. We extend $\lsbbi$ in the obvious way to handle the
additive connectives $\lnot$ and $\lor$, where $\lnot p = p\limp \bot$ and $p\lor q = \lnot(\lnot p\land \lnot q)$. Thus we obtain the left and right rules for $\lnot,\lor$ as in the classical setting. To save space, we shall write $r^n$ to mean the rule $r$ is applied $n$ times, and write $r_1;r_2$ to mean apply $r_1$ then apply $r_2$ on a sequent, when the order of rule applications does not matter.
\begin{enumerate}
\item To prove the formula $(F * F) \limp F$, where $F =
\lnot(\top \mimp \lnot \top^*)$, we use the following derivation in
$\lsbbi$: 
\begin{center}
\small
\AxiomC{$(w',w''\simp \epsilon);(b',c'\simp w'');(b,c\simp w');(b,c\simp a);\cdots$}
\RightLabel{$A$}
\UnaryInfC{$(w',c'\simp w);(w,b'\simp \epsilon);\cdots$}
\RightLabel{$E^2$}
\UnaryInfC{$(c',w'\simp w);(b,c\simp w');(b',w\simp \epsilon);\cdots$}
\RightLabel{$A$}
\UnaryInfC{$(b',w\simp \epsilon);(\epsilon,b\simp w);(c',c\simp \epsilon);\cdots$}
\RightLabel{$A$}
\UnaryInfC{$(b,c\simp a);(b',b\simp \epsilon);(c',c\simp \epsilon);(\epsilon,\epsilon\simp \epsilon);\cdots$}
\RightLabel{$U$}
\UnaryInfC{$(b,c\simp a);(b',b\simp \epsilon);(c',c\simp \epsilon);a:\top\mimp\lnot\top^*;b':\top,c':\top\vdash$}
\RightLabel{$\top^* L^2$}
\UnaryInfC{$(b,c\simp a);(b',b\simp b'');(c',c\simp c'');a:\top\mimp\lnot\top^*;b':\top,c':\top;b":\top^*;c'':\top^*\vdash$}
\RightLabel{$\lnot R^2$}
\UnaryInfC{$(b,c\simp a);(b',b\simp b'');(c',c\simp c'');a:\top\mimp\lnot\top^*;b':\top,c':\top\vdash\b'':\lnot\top^*;c'':\lnot\top^*$}
\RightLabel{$\mimp R^2$}
\UnaryInfC{$(b,c\simp a); a:\top\mimp\lnot\top^*\vdash b:\top\mimp\lnot\top^*;c:\top\mimp\lnot\top^*$}
\RightLabel{$\lnot L^2;\lnot R$}
\UnaryInfC{$(b,c\simp a);b:\lnot(\top\mimp\lnot\top^*);c:\lnot(\top\mimp\lnot\top^*)\vdash a:\lnot(\top\mimp\lnot\top^*)$}
\RightLabel{$\mand L$}
\UnaryInfC{$a:F\mand F\vdash a:F$}
\RightLabel{$\limp R$}
\UnaryInfC{$\vdash a:(F\mand F)\limp F$}
\DisplayProof
\end{center}

The correct relational atom that is required to split $a:\top\mimp
\lnot \top^*$ is $(w'',a\simp \epsilon)$. However, in the labelled
sequent calculus we can only obtain $w'',w'\simp \epsilon$. Although
$w'$ and $a$ both have exactly the same children, but the
non-deterministic monoid allows the composition $b\circ c$ to be
multiple elements, or even $\emptyset$ in $\mathcal{M}$. Thus we
cannot conclude that $w' = a$. This can be solved by using $P$ to
replace $w'$ by $a$, then use $E$ to obtain $(w'',a\simp \epsilon)$ on the left hand side of the sequent, then the derivation can go through:
\begin{center}
\small
\AxiomC{}
\RightLabel{$\top^* R$}
\UnaryInfC{$(w'',a\simp \epsilon);\cdots;\vdash \epsilon:\top^*$}
\RightLabel{$\lnot L$}
\UnaryInfC{$(w'',a\simp \epsilon);\cdots;\epsilon:\lnot\top^*\vdash$}
\AxiomC{}
\RightLabel{$\top R$}
\UnaryInfC{$(w'',a\simp \epsilon);\cdots\vdash w'':\top$}
\RightLabel{$\mimp L$}
\BinaryInfC{$(w'',a\simp \epsilon);\cdots;a:\top\mimp\lnot\top^*;b':\top,c':\top\vdash$}
\DisplayProof
\end{center}

\item The trick to prove $(\lnot \top^* \mimp \bot) \limp \top^*$ is to
create a relational atom $(w,w\simp w')$, as shown below.
\begin{center}
\small
\AxiomC{$$}
\RightLabel{$\top^* R$}
\UnaryInfC{$(\epsilon,\epsilon\simp w');\cdots\vdash \epsilon:\top^*$}
\RightLabel{$\top^* L$}
\UnaryInfC{$(w,w\simp w');\cdots;w:\top^*\vdash w:\top^*$}
\RightLabel{$\lnot R$}
\UnaryInfC{$(w,w\simp w');\cdots\vdash w:\lnot
  \top^*;w:\top^*$}
\AxiomC{$$}
\RightLabel{$\bot L$}
\UnaryInfC{$(w,w\simp w');\cdots;w':\bot\vdash w:\top^*$}
\RightLabel{$\mimp L$}
\BinaryInfC{$(w,w\simp w');w:\lnot \top^* \mimp \bot\vdash w:\top^*$}
\RightLabel{$T$}
\UnaryInfC{$w:\lnot \top^* \mimp \bot\vdash w:\top^*$}
\RightLabel{$\limp R$}
\UnaryInfC{$\vdash w: (\lnot \top^* \mimp \bot) \limp \top^*$}
\DisplayProof
\end{center}

\item The proof for $(\top^* \land ((p \mand q)
\mimp \bot)) \limp ((p \mimp \bot) \lor (q \mimp \bot))$ is as
follows.
\begin{center}
\small
\AxiomC{}
\RightLabel{$id$}
\UnaryInfC{$\cdots;c:q\vdash c:q;\cdots$}
\AxiomC{}
\RightLabel{$id$}
\UnaryInfC{$\cdots;a:p\vdash a:p;\cdots$}
\RightLabel{$\mand R$}
\BinaryInfC{$(a,c\simp e);\cdots;a:p;c:q \vdash e:p\mand q;\cdots$}
\AxiomC{}
\RightLabel{$\bot L$}
\UnaryInfC{$\cdots e:\bot\vdash\cdots$}
\RightLabel{$\mimp L$}
\BinaryInfC{$(e,\epsilon\simp e);(a,c\simp e);(a,\epsilon\simp b);(c,\epsilon\simp d);\epsilon:(p \mand q)
\mimp \bot;a:p;c:q \vdash b:\bot;d:\bot$}
\RightLabel{$U$}
\UnaryInfC{$(a,c\simp e);(a,\epsilon\simp b);(c,\epsilon\simp d);\epsilon:(p \mand q)
\mimp \bot;a:p;c:q \vdash b:\bot;d:\bot$}
\RightLabel{$T$}
\UnaryInfC{$(a,\epsilon\simp b);(c,\epsilon\simp d);\epsilon:(p \mand q)
\mimp \bot;a:p;c:q \vdash b:\bot;d:\bot$}
\RightLabel{$\mimp R^2$}
\UnaryInfC{$\epsilon:(p \mand q)
\mimp \bot \vdash \epsilon:p \mimp \bot; \epsilon:q \mimp \bot$}
\RightLabel{$\top^* L$}
\UnaryInfC{$w:\top^*; w:(p \mand q)
\mimp \bot \vdash w:p \mimp \bot; w:q \mimp \bot$}
\RightLabel{$\land L;\lor R$}
\UnaryInfC{$w:\top^* \land ((p \mand q)
\mimp \bot) \vdash w:(p \mimp \bot) \lor (q \mimp \bot)$}
\RightLabel{$\limp R$}
\UnaryInfC{$\vdash w:(\top^* \land ((p \mand q)
\mimp \bot)) \limp ((p \mimp \bot) \lor (q \mimp \bot))$}
\DisplayProof
\end{center}

\item The proof for $\lnot (\top^* \land A \land (B \mand \lnot (C \mimp
(\top^* \limp A))))$ in $\lsbbi$ is as follows.
\begin{center}
\small
\AxiomC{}
\RightLabel{$id$}
\UnaryInfC{$(c,b\simp \epsilon);(a,b\simp \epsilon);\epsilon:A;a:B;c:C\vdash \epsilon:A$}
\RightLabel{$\top^* L$}
\UnaryInfC{$(c,b\simp d);(a,b\simp \epsilon);\epsilon:A;a:B;c:C;d:\top^*\vdash d:A$}
\RightLabel{$\limp R$}
\UnaryInfC{$(c,b\simp d);(a,b\simp \epsilon);\epsilon:A;a:B;c:C\vdash d:
\top^* \limp A$}
\RightLabel{$\mimp R$}
\UnaryInfC{$(a,b\simp \epsilon);\epsilon:A;a:B\vdash b:C \mimp
(\top^* \limp A)$}
\RightLabel{$\lnot L$}
\UnaryInfC{$(a,b\simp \epsilon);\epsilon:A;a:B; b:\lnot (C \mimp
(\top^* \limp A))\vdash$}
\RightLabel{$\mand L$}
\UnaryInfC{$\epsilon:A;\epsilon:B \mand \lnot (C \mimp
(\top^* \limp A))\vdash$}
\RightLabel{$\top^* L$}
\UnaryInfC{$w:\top^*;w:A;w:B \mand \lnot (C \mimp
(\top^* \limp A))\vdash$}
\RightLabel{$\land L$}
\UnaryInfC{$w:\top^* \land A \land (B \mand \lnot (C \mimp
(\top^* \limp A)))\vdash$}
\RightLabel{$\lnot R$}
\UnaryInfC{$\vdash w:\lnot (\top^* \land A \land (B \mand \lnot (C \mimp
(\top^* \limp A))))$}
\DisplayProof
\end{center}
\end{enumerate}

\end{document}